\definecolor{darkgreen}{rgb}{0,0.45,0}
\newcommand{\blurb}[1]{}
\newcommand\footnoteref[1]{\protected@xdef\@thefnmark{\ref{#1}}\@footnotemark}
\tikzset{todim/.style = {decoration={markings, mark=at position .5 with %
      {\draw (-1pt,-1pt) rectangle (1pt,1pt);}},postaction={decorate}}}
\DeclareMathOperator{\ob}{ob}
\DeclareMathOperator{\el}{el}
\DeclareMathOperator{\colim}{colim}
\DeclareMathOperator{\im}{Im}
\newcommand{\Aut}{\mathrm{Aut}}
\renewcommand{\emptyset}{\varnothing}
\newcommand{\bigcat}[1]{\ensuremath{\mathsf{#1}}}
\newcommand{\cat}[1]{\mathbf{#1}}
\newcommand{\dbr}[1]{\llbracket {#1} \rrbracket}
\newcommand{\cd}[2][]{\vcenter{\hbox{\xymatrix#1{#2}}}}
\newcommand{\commentthis}[1]{}
\renewcommand{\P}{\maji{P}}
\def\framed{%
\setbox0=\vbox\bgroup%
\advance\hsize by -2\fboxsep\advance\hsize by -2\fboxrule%
\linewidth=\hsize%
}
\def\endframed{%
\egroup\noindent\framebox[\textwidth]{\box0}\vspace*{1mm}}
\definecolor{dkgreen}{rgb}{0,0.2,0}
\newcommand{\A}{\maji{A}}
\newcommand{\B}{\maji{B}}
\newcommand{\C}{\maji{C}}
\newcommand{\D}{\maji{D}}
\newcommand{\E}{\maji{E}}
\newcommand{\F}{\maji{F}}
\newcommand{\G}{\maji{G}}
\newcommand{\I}{\maji{I}}
\newcommand{\J}{\maji{J}}
\newcommand{\K}{\maji{K}}
\renewcommand{\L}{\maji{L}}
\newcommand{\M}{\maji{M}}
\newcommand{\yoneda}{\bigcat{y}}
\newcommand{\PP}{\bigcat{P}}
\newcommand{\T}{\maji{T}} %
\newcommand{\Nat}{\mathbb{N}} 
\newcommand{\abs}{\ell}
\newcommand{\thg}{{\mathord{\text{--}}}}
\newcommand{\rond}{\circ}
\newcommand{\id}{\mathit{id}}
\newcommand{\ens}[1]{\{ #1 \}}
\newcommand{\bureaucratic}[1]{}
\newcommand{\extafun}[1]{\overline{(-)}}
\tikzset{history/.style = {-open triangle 45}}
\tikzset{port/.style={draw,thick,-}}
\newcommand{\icomp}[2]{\mathbin{\leftidx{_{#1}}\circ_{#2}}}
\newcommand{\ifcomp}[2]{\mathbin{\leftidx{_{#1}}\bullet_{#2}}}
\def\matrixobject@{%
  \edef \next@{={\DirectionfromtheDirection@ }}%
  \expandafter \toks@ \next@ \plainxy@
  \let\xy@@ix@=\xyq@@toksix@
  \xyFN@ \OBJECT@}
\let\xy@entry@@norm=\entry@@norm
\def\entry@@norm@patched{%
  \let\object@=\matrixobject@
  \xy@entry@@norm }
\newcommand{\twocong}[2][0.5]{\ar@{}[#2] \save ?(#1)*{\cong}\restore}
\newcommand{\twoeq}[2][0.5]{\ar@{}[#2] \save ?(#1)*{=}\restore}
\newcommand{\rtwocell}[3][0.5]{\ar@{}[#2] \ar@{=>}?(#1)+/l 0.2cm/;?(#1)+/r 0.2cm/^{#3}}
\newcommand{\ltwocell}[3][0.5]{\ar@{}[#2] \ar@{=>}?(#1)+/r 0.2cm/;?(#1)+/l 0.2cm/^{#3}}
\newcommand{\ltwocello}[3][0.5]{\ar@{}[#2] \ar@{=>}?(#1)+/r 0.2cm/;?(#1)+/l 0.2cm/_{#3}}
\newcommand{\dtwocell}[3][0.5]{\ar@{}[#2] \ar@{=>}?(#1)+/u  0.2cm/;?(#1)+/d 0.2cm/^{#3}}
\newcommand{\dltwocell}[3][0.5]{\ar@{}[#2] \ar@{=>}?(#1)+/ur  0.2cm/;?(#1)+/dl 0.2cm/^{#3}}
\newcommand{\drtwocell}[3][0.5]{\ar@{}[#2] \ar@{=>}?(#1)+/ul  0.2cm/;?(#1)+/dr 0.2cm/^{#3}}
\newcommand{\dthreecell}[3][0.5]{\ar@{}[#2] \ar@3{->}?(#1)+/u  0.2cm/;?(#1)+/d 0.2cm/^{#3}}
\newcommand{\utwocell}[3][0.5]{\ar@{}[#2] \ar@{=>}?(#1)+/d 0.2cm/;?(#1)+/u 0.2cm/_{#3}}
\newcommand{\dtwocelltarg}[3][0.5]{\ar@{}#2 \ar@{=>}?(#1)+/u  0.2cm/;?(#1)+/d 0.2cm/^{#3}}
\newcommand{\utwocelltarg}[3][0.5]{\ar@{}#2 \ar@{=>}?(#1)+/d  0.2cm/;?(#1)+/u 0.2cm/_{#3}}
\newcommand{\pushoutcorner}[1][dr]{\save*!/#1+1.2pc/#1:(1,-1)@^{|-}\restore}
\newcommand{\sh}[2]{**{!/#1 -#2/}}
\renewcommand{\abs}[1]{{\left|{#1}\right|}}
\newcommand{\spn}[1]{{\langle{#1}\rangle}}
\newcommand{\defeq}{\mathrel{\mathop:}=}
\newcommand{\xtor}[1]{\cdl[@1]{{} \ar[r]|-{\object@{|}}^{#1} & {}}}
\def\hookleftarrowfill@{\arrowfill@\leftarrow\relbar{\relbar\joinrel\rhook}}
\def\twoheadleftarrowfill@{\arrowfill@\twoheadleftarrow\relbar\relbar}
\def\leftbararrowfill@{\arrowdoublefill@{\leftarrow\mkern-5mu}\relbar\mapstochar\relbar\relbar}
\def\Leftbararrowfill@{\arrowdoublefill@{\Leftarrow\mkern-2mu}\Relbar\Mapstochar\Relbar\Relbar}
\def\leftringarrowfill@{\arrowdoublefill@{\leftarrow\mkern-3mu}\relbar{\mkern-3mu\circ\mkern-2mu}\relbar\relbar}
\def\lefttriarrowfill@{\arrowfill@{\mathrel\triangleleft\mkern0.5mu\joinrel\relbar}\relbar\relbar}
\def\Lefttriarrowfill@{\arrowfill@{\mathrel\triangleleft\mkern1mu\joinrel\Relbar}\Relbar\Relbar}
\def\hookrightarrowfill@{\arrowfill@{\lhook\joinrel\relbar}\relbar\rightarrow}
\def\twoheadrightarrowfill@{\arrowfill@\relbar\relbar\twoheadrightarrow}
\def\rightbararrowfill@{\arrowdoublefill@{\relbar\mkern-0.5mu}\relbar\mapstochar\relbar\rightarrow}
\def\Rightbararrowfill@{\arrowdoublefill@{\Relbar\mkern-2mu}\Relbar\Mapstochar\Relbar\Rightarrow}
\def\rightringarrowfill@{\arrowdoublefill@\relbar\relbar{\mkern-2mu\circ\mkern-3mu}\relbar{\mkern-3mu\rightarrow}}
\def\righttriarrowfill@{\arrowfill@\relbar\relbar{\relbar\joinrel\mkern0.5mu\mathrel\triangleright}}
\def\Righttriarrowfill@{\arrowfill@\Relbar\Relbar{\Relbar\joinrel\mkern1mu\mathrel\triangleright}}
\def\leftrightarrowfill@{\arrowfill@\leftarrow\relbar\rightarrow}
\def\mapstofill@{\arrowfill@{\mapstochar\relbar}\relbar\rightarrow}
\renewcommand*\xleftarrow[2][]{\ext@arrow 20{20}0\leftarrowfill@{#1}{#2}}
\providecommand*\xLeftarrow[2][]{\ext@arrow 60{22}0{\Leftarrowfill@}{#1}{#2}}
\providecommand*\xhookleftarrow[2][]{\ext@arrow 10{20}0\hookleftarrowfill@{#1}{#2}}
\providecommand*\xtwoheadleftarrow[2][]{\ext@arrow 60{20}0\twoheadleftarrowfill@{#1}{#2}}
\providecommand*\xleftbararrow[2][]{\ext@arrow 10{22}0\leftbararrowfill@{#1}{#2}}
\providecommand*\xLeftbararrow[2][]{\ext@arrow 50{24}0\Leftbararrowfill@{#1}{#2}}
\providecommand*\xleftringarrow[2][]{\ext@arrow 10{26}0\leftringarrowfill@{#1}{#2}}
\providecommand*\xlefttriarrow[2][]{\ext@arrow 80{24}0\lefttriarrowfill@{#1}{#2}}
\providecommand*\xLefttriarrow[2][]{\ext@arrow 80{24}0\Lefttriarrowfill@{#1}{#2}}
\renewcommand*\xrightarrow[2][]{\ext@arrow 01{20}0\rightarrowfill@{#1}{#2}}
\providecommand*\xRightarrow[2][]{\ext@arrow 04{22}0{\Rightarrowfill@}{#1}{#2}}
\providecommand*\xhookrightarrow[2][]{\ext@arrow 00{20}0\hookrightarrowfill@{#1}{#2}}
\providecommand*\xtwoheadrightarrow[2][]{\ext@arrow 03{20}0\twoheadrightarrowfill@{#1}{#2}}
\providecommand*\xrightbararrow[2][]{\ext@arrow 01{22}0\rightbararrowfill@{#1}{#2}}
\providecommand*\xRightbararrow[2][]{\ext@arrow 04{24}0\Rightbararrowfill@{#1}{#2}}
\providecommand*\xrightringarrow[2][]{\ext@arrow 01{26}0\rightringarrowfill@{#1}{#2}}
\providecommand*\xrighttriarrow[2][]{\ext@arrow 07{24}0\righttriarrowfill@{#1}{#2}}
\providecommand*\xRighttriarrow[2][]{\ext@arrow 07{24}0\Righttriarrowfill@{#1}{#2}}
\providecommand*\xmapsto[2][]{\ext@arrow 01{20}0\mapstofill@{#1}{#2}}
\providecommand*\xleftrightarrow[2][]{\ext@arrow 10{22}0\leftrightarrowfill@{#1}{#2}}
\providecommand*\xLeftrightarrow[2][]{\ext@arrow 10{27}0{\Leftrightarrowfill@}{#1}{#2}}
\newcommand{\atwo}{{\mathbf 2}}
\newtheorem{thm}{Theorem}[section]
\newtheorem{Thm}[thm]{Theorem}
\newtheorem{Lemma}[thm]{Lemma} 
\newtheorem{Prop}[thm]{Proposition} 
\newtheorem{Cor}[thm]{Corollary} 
\theoremstyle{definition}
\newtheorem{Defn}[thm]{Definition} 
\newtheorem{Exs}[thm]{Examples} 
\newtheorem{Ex}[thm]{Example}
\newtheorem{Rk}[thm]{Remark} 
\numberwithin{equation}{section}
\renewcommand{\Aut}{\mathrm{Aut}}
\renewcommand{\cat}[1]{\mathbf{#1}}
\newcommand{\psh}[1][\C]{{\mathscr P {#1}}}
\newcommand{\fampt}{\cat{FAM}_\mathrm{pt}}
\newcommand{\nfampt}{\cat{NFAM}_\mathrm{pt}}
\newcommand{\anpt}{\cat{AN}_\mathrm{pt}}
\tikzset{
medge/.style={draw,shape=isosceles triangle,isosceles triangle apex angle=#1,shape border rotate=90},
medge/.default={90},
mtype/.style={rectangle,draw=none}
}
\newcommand{\trees}{\T}
\newcommand{\PS}{{\PP\!{}_s}}
\newcommand{\uan}{\cat{AN}_U}
\newcommand{\umd}{\cat{MND}_U}
\newcommand{\shp}{\cat{SH}_U}
\newcommand{\shpmon}{\cat{SHM}_U}
\renewcommand{\A}{{\mathscr A}}
\renewcommand{\B}{{\mathscr B}}
\renewcommand{\C}{{\mathscr C}}
\renewcommand{\D}{{\mathscr D}}
\renewcommand{\E}{{\mathscr E}}
\renewcommand{\F}{{\mathscr F}}
\renewcommand{\G}{{\mathscr G}}
\renewcommand{\I}{{\mathscr I}}
\renewcommand{\J}{{\mathscr J}}
\renewcommand{\K}{{\mathscr K}}
\renewcommand{\L}{{\mathscr L}}
\renewcommand{\M}{{\mathscr M}}
\renewcommand{\O}{{\mathscr O}}
\renewcommand{\P}{{\mathscr P}}
\let\sec=\S
\renewcommand{\S}{{\mathscr S}}
\renewcommand{\T}{{\mathscr T}}
\newcommand{\aut}{\mathfrak S}
\renewcommand{\id}{\mathrm{id}}
\DeclareMathAlphabet{\mathbf}{OT1}{cmr}{b}{n}
\renewcommand{\vec}{\boldsymbol}
\tikzset{circuit/.style={draw,minimum width=2cm,minimum height=1.5em}}
\begin{document}
 \leftmargini=2em

\author[R.\ Garner]{Richard Garner}
\address{Department of Mathematics, Macquarie University, NSW 2109, Australia} 
\email{richard.garner@mq.edu.au}

\author[T.\ Hirschowitz]{Tom Hirschowitz}
\address{Univ. Savoie Mont Blanc, CNRS, LAMA, F-73000 Chambéry, France}
\email{tom.hirschowitz@univ-smb.fr}

\thanks{Hirschowitz acknowledges the support of the French ANR projet
  blanc R\'ecr\'e ANR-11-BS02-0010; Garner acknowledges the support of
  the Australian Research Council Discovery Projects DP110102360 and
  DP130101969. This research was partially undertaken during a visit
  by Hirschowitz to Sydney funded by Macquarie University.}%

\title{Shapely monads and analytic functors}

\begin{abstract}
  In this paper, we give precise mathematical form to the idea of a
  structure whose data and axioms are faithfully represented by a
  graphical calculus; some prominent examples are operads,
  polycategories, properads, and PROPs. Building on the established
  presentation of such structures as algebras for monads on presheaf
  categories, we describe a characteristic property of the associated
  monads---the \emph{shapeliness} of the title---which says that ``any
  two operations of the same shape agree''.

  An important part of this work is the study of \emph{analytic}
  functors between presheaf categories, which are a common
  generalisation of Joyal's analytic endofunctors on sets and of the
  parametric right adjoint functors on presheaf categories introduced
  by Diers and studied by Carboni--Johnstone, Leinster and Weber. Our
  shapely monads will be found among the analytic endofunctors, and
  may be characterised as the submonads of a \emph{universal} analytic
  monad with ``exactly one operation of each shape''.

  In fact, shapeliness also gives a way to \emph{define} the data and
  axioms of a structure directly from its graphical calculus, by
  generating a free shapely monad on the basic operations of the
  calculus. In this paper we do this for some of the examples listed
  above; in future work, we intend to use this to obtain canonical
  notions of denotational model for graphical calculi such as Milner's
  bigraphs, Lafont's interaction nets, or Girard's multiplicative
  proof nets. \looseness=-1
\end{abstract}
\maketitle

\section{Introduction}
\label{sec:introduction}

In mathematics and computer science, we often encounter structures
which are faithfully encoded by a graphical calculus of the following
sort. The basic data of the structure are depicted as certain
diagrams; the basic operations of the structure act by glueing
together these diagrams along certain parts of their boundaries; and
the axioms of the structure are just those necessary to ensure that
``every two ways of glueing a compound diagram together agree''.

Commonly, such calculi depict structures wherein ``functions'',
``arrows'' or ``processes'' are wired together along input or output
``ports''. For instance, we have
\emph{multicategories}~\cite{Lambek1969Deductive}, whose arrows have
many inputs but only one output;
\emph{polycategories}~\cite{Szabo1977The-logic}, whose arrows have
multiple inputs and outputs, with composition subject to a linear
wiring discipline; and \emph{coloured
  properads}~\cite{Vallette2007A-Koszul} and
\textsc{prop}s~\cite{Mac-Lane1965Categorical}, which are like
polycategories but allow for non-linear wirings.

Mathematical structures such as these are important in algebraic
topology and homological algebra---encoding, for example, operations
arising on infinite loop spaces~\cite{May1972The-geometry} or on
Hochschild cochains~\cite{McClure2002A-solution}---but also in logic
and computer science. For example, polycategories encode the
underlying semantics of a linear sequent
calculus~\cite{Lambek1993From}, while \textsc{prop}s have recently
been used as an algebraic foundation for notions of computational
network such as signal flow graphs~\cite{Bonchi2015} and Bayesian
networks~\cite{Causaltheories}. Other kinds of graphical structures
arising in computer science include \emph{proof nets}~\cite[\sec
2]{Girard1987Linear}, \emph{interaction nets}~\cite{Lafont90} and
\emph{bigraphs}~\cite{Milner:bigraphs}.

There is an established approach to describing structures of the above
kind using monads on presheaf categories. The presheaf category
captures the essential topology of the underlying graphical calculus,
while the monad encodes both the wiring operations of the structure
and the axioms that they obey; the algebras for the monad are the
instances of the structure. One aspect which this approach does not
account for is that the axioms should be determined by the requirement
that ``every two ways of wiring a compound diagram together agree''.
The first main contribution of this paper is to rectify this: we
explain the observed form of the axioms as a property of the
associated monad---which we term \emph{shapeliness}---stating that
``every two operations of the same shape coincide''.

In fact, shapeliness gives not just a way of \emph{characterising} the
monads encoding graphical structures, but also a systematic way of
\emph{generating} them. For a given graphical calculus, it is
typically easy to find a presheaf category encoding the basic
diagram-shapes of the calculus, and an endofunctor thereon encoding
the basic wiring operations; we then obtain the desired monad as the
\emph{free shapely monad} on the given endofunctor. The algebras for
this monad can be seen as denotational models of the graphical
calculus in question; and though we do not do this here, one can
envisage this being used to attach workable denotational semantics to,
for example, interaction nets; the syntactic part of bigraphs; or
\textsc{mll} proof nets without units.
 
Formalising the notion of shapely monad turns out to be a delicate
task. In the end, we will define a monad on
$\psh = [\C^\mathrm{op}, \cat{Set}]$ to be \emph{shapely} just when it
is a submonad of a \emph{universal} shapely monad $\mathsf U$ with
``exactly one operation of each shape''. We will find $\mathsf U$ by
seeking a terminal object in a suitable monoidal category of
\emph{endofunctors} of $\psh$; once found, terminality will
automatically endow this object with a monad structure, so giving the
desired~$\mathsf U$.

This leaves the problem of choosing a suitable monoidal category of
endofunctors. An obvious but wrong choice would be the whole functor
category $[\psh, \psh]$: with this choice, $\mathsf U$ would be the
monad constant at $1$, and a general monad would be shapely just when
it took values in subobjects of $1$. This is manifestly not what we
want; the problem is that terminality in the full functor category
encodes the property of having ``exactly one operation of each shape''
for what are overly crude notions of ``operation'' and ``operation
shape''.

Refining these notions, as we shall do, means looking for a terminal
object in some smaller category of endofunctors of $\psh$. Choosing
this category turns out to be an interesting design problem: some
natural candidates have a terminal object, but are not closed under
composition, while others are closed under composition, but fail to
have a terminal object. Our eventual solution will triangulate between
these failures, but we make no claims to its definitiveness; in fact,
we consider the value of our work to lie as much in the exploration of
the problem's design space as in the particular solution we adopt.

The technical foundation of our approach will be a theory of
\emph{analytic functors} between presheaf categories, and the second
main contribution of this paper is to develop such a theory. Analytic
endofunctors of $\cat{Set}$ were introduced by Joyal
in~\cite{Joyal1986Foncteurs} as a categorical foundation for
enumerative combinatorics; their importance for computer science has
been recognised in work such
as~\cite{Abbott2004Constructing,Fiore2008The-cartesian,Hasegawa2002Two-applications}.
An endofunctor $F$ of $\cat{Set}$ is \emph{analytic} in Joyal's sense
when it can be written in the form:
\begin{equation*}
  FX = \textstyle \sum_{i \in I} {X^{\alpha_i}}_{/ G_i}
\end{equation*}
for an $I$-indexed family of natural numbers $\alpha_i$ and subgroups
$G_i \leqslant \aut_{\alpha_i}$; the quotients are by the permutation
actions of $G_i$ on the factors of $X^{\alpha_i}$.

Various authors have considered various ways of extending Joyal's
notion to arbitrary presheaf categories, as, for example,
in~\cite{Abbott2004Constructing,Fiore2008The-cartesian}. However, it
turns out that these prior notions are not appropriate to our needs,
since the notions of analyticity they describe are not general enough
to capture the functors and monads of interest (see Remark~\ref{rk:6}
below). Instead, guided by our applications, we choose to call a
functor $F \colon \psh[\D] \rightarrow \psh$ \emph{analytic} if it
takes the form:
\begin{equation}\label{eq:40}
  FX(c) = \textstyle \sum_{i \in I_c} \psh[\D](\alpha_i, X)_{/ G_i}
\end{equation}
for some family of presheaves $\alpha_i \in \psh[\D]$ and subgroups
$G_i \leqslant \mathrm{Aut}(\alpha_i)$ of the automorphism group of
each $\alpha_i$. Just as in the case studied
in~\cite{Joyal1986Foncteurs}, these generalised analytic functors have
a strongly combinatorial flavour; moreover, all of the monads derived
from graphical calculi that we will consider turn out to be analytic
in this
sense.

It is therefore reasonable that we should look for a universal shapely
monad among analytic endofunctors of a presheaf category. While we do
not succeed in doing this in full generality, our attempts to do so
lead us to develop various results of independent interest, including
the following:
\begin{enumerate}[(i)]
\item We give a combinatorial representation of the category of
  finitary analytic functors and transformations
  $\psh \rightarrow \psh[\D]$ (Proposition~\ref{prop:4}) and show that
  this category always has a terminal object
  (Proposition~\ref{prop:9}).\vskip0.25\baselineskip
\item We show that analytic functors between presheaf categories need
  \emph{not} be closed under composition (Proposition~\ref{prop:6});
  this is by contrast to analytic endofunctors of $\cat{Set}$, which
  \emph{are} composition-closed.\vskip0.25\baselineskip
\item We introduce a condition on analytic functors which we call
  \emph{cellularity}, that is sufficient to ensure that they do
  compose (Proposition~\ref{prop:13}). \vskip0.25\baselineskip
\item We see that, unfortunately, the introduction of cellularity also
  destroys the terminal object among finitary analytic endofunctors
  (Proposition~\ref{prop:no-terminal-object}).
\end{enumerate}

This last, negative, result delineates the boundary of our knowledge
in the general case; but a small adaptation allows us to obtain
positive results in the specific cases necessary to talk about
\textsc{prop}s, properads and polycategories. As we recall, each of
these notions may be encoded by a monad on a suitable presheaf
category of \emph{polygraphs} (Definitions~\ref{def:36}
and~\ref{def:35}). What we further show is that:
\begin{itemize}
\item For endofunctors of the category of polygraphs, the notion of
  cellularity can be augmented with simple \emph{ad hoc} conditions to
  obtain a class of analytic functors which is composition-closed and
  admits a terminal object (Propositions~\ref{prop:23}
  and~\ref{prop:31}).\vskip0.25\baselineskip
\item In this context, therefore, there exists a universal shapely
  monad, and moreover, we may describe concretely the free shapely
  monad on any shapely endofunctor
  (Proposition~\ref{prop:28}).\vskip0.25\baselineskip
\item The monads for polycategories, properads and \textsc{prop}s are
  the free shapely monads on the endofunctors encoding their basic
  wiring operations (Theorems~\ref{thm:polycats} and~\ref{thm:2}).
\end{itemize}

We conclude this introduction with a brief overview of the rest of the
paper. We start in Section~\ref{sec:motivating-examples} by developing
some motivating examples of graphical calculi and the algebraic
structures they describe. These graphical calculi include that for
symmetric monoidal categories introduced
in~\cite{Joyal1991The-geometry} (but see
also~\cite{Selinger2011A-survey}); the associated algebraic structures
include the \emph{polycategories} of~\cite{Szabo1975Polycategories},
the (coloured) \emph{properads} of~\cite{Vallette2007A-Koszul}, and
the \textsc{prop}s of~\cite{Mac-Lane1965Categorical}. We then explain
how these algebraic structures can be described as the algebras for a
monad on a presheaf category.

In Section~\ref{sec:famil-funct-shap}, we begin our pursuit of the
notion of universal shapely monad. We do not immediately consider the
analytic functors discussed above, but rather the narrower class of
\emph{familially representable} or \emph{familial} functors,
introduced by Diers~\cite{Diers1978Spectres} and studied by Johnstone,
Leinster and Weber~\cite{Carboni1995Connected,
  Weber2004Generic,Leinster2004Higher}; these are precisely the
analytic functors whose expression~\eqref{eq:40} involves only
\emph{trivial} groups $G_i$. We recall basic aspects of the theory of
familial functors, including closure under composition, but show that
there is typically no terminal object among familial endofunctors, and
hence no universal shapely monad among them.

In Section~\ref{sec:analyt-funct-shap}, we attempt to fix up the lack
of a terminal object among familial endofunctors by passing to the
more general analytic functors. As is visible from (i) and (ii) above,
we succeed in doing this, but only at the cost of breaking the
composability of familial endofunctors.
Section~\ref{sec:cell-analyt-funct} attempts to fix this new problem
by introducing the more restricted class of \emph{cellular} analytic
functors; as in (iii) and (iv) above, this does indeed resolve the problem of
composability but at the same time reintroduces the problem of the
existence of a terminal object.

At this point, in Section~\ref{sec:free-shapely-monads}, we declare
ourselves unable to find a further refinement of the notion of
cellularity that, in full generality, fixes both composability and
existence of a terminal object. However, in the presheaf categories
relevant to the motivating examples of
Section~\ref{sec:motivating-examples}, we are able to impose an
additional \emph{ad hoc} condition on top of cellularity which is
sufficient to ensure that the cellular functors in this class both
compose \emph{and} admit a terminal object: using this, we finally
obtain the desired notion of shapely monad, and are able to exhibit
the monads encoding the graphical structures of interest as free
shapely monads on the basic wiring operations of the structure.

\section{Motivating examples}
\label{sec:motivating-examples}

\subsection{Some examples of graphical calculi}
\label{sec:some-exampl-graph}

Before developing our general theory of shapeliness, we describe some
of the examples of monads derived from graphical calculi that our
theory is intended to capture. The graphical calculi which we consider
will involve diagrams built out of labelled boxes
\begin{equation}\label{eq:21}
  \begin{tikzpicture}[y=0.80pt, x=0.80pt, yscale=-1.000000, xscale=1.000000, inner sep=0pt, outer sep=0pt]
    \path[draw=black,line join=miter,line cap=butt,even odd rule,line
    width=0.500pt,rounded corners=0.0000cm] (60.0000,902.3622) rectangle
    (105.0000,932.3622);
    \path[draw=black,line join=miter,line cap=butt,even odd rule,line width=0.500pt]
    (60.0000,872.3622) .. controls (60.0000,887.2619) and (75.2525,887.9675) ..
    (75.0000,902.3622) node[above=0.1cm,at start] {$A_1$};
    \path[draw=black,line join=miter,line cap=butt,even odd rule,line width=0.500pt]
    (90.0000,902.3622) .. controls (90.0000,887.7150) and (105.0000,887.3622) ..
    (105.0000,872.3622) node[above=0.1cm,at end] {\,$A_n$};
    \path[draw=black,line join=miter,line cap=butt,even odd rule,line width=0.500pt]
    (75.0000,932.3622) .. controls (75.5051,946.7568) and (59.7475,947.4624) ..
    (60.0000,962.3622) node[below=0.1cm,at end] {$B_1$};
    \path[draw=black,line join=miter,line cap=butt,even odd rule,line width=0.500pt]
    (90.0000,932.3622) .. controls (90.2525,947.7670) and (104.7475,947.2099) ..
    (105.0000,962.3622) node[below=0.1cm,at end] {\,$B_m$};
    \path (81.9188,916.8571) node {$f$};
    \path (81.9188,862) node {\,$\cdots$};
    \path (81.9188,972) node {\,$\cdots$};
  \end{tikzpicture}
\end{equation}
with a finite number of ``input'' wires (positioned above the box) and
``output'' wires (positioned below). There are various interpretations
we could give to such a box, for example:
\begin{enumerate}[(i)]
\item As a derivation in a linear sequent calculus of
  $A_1, \dots, A_n \mathbin{\vdash} B_1, \dots, B_m$;
\item As a linear map
  $A_1 \otimes \dots \otimes A_n \rightarrow B_1 \otimes \dots \otimes
  B_m$ between $k$-vector spaces;
\item As a program in the typed $\lambda$-calculus of type
  $A_1 \times \dots \times A_n \rightarrow B_1 \times \dots \times
  B_m$.
\end{enumerate}
Each of these interpretations will be associated to a different
graphical calculus; the difference between them is in the rules
governing how boxes can be plugged together to form larger diagrams.
For example:
\begin{enumerate}[(i)]
\item Given proofs $f$ of $C, D \vdash E, F, Z, I, J$ and $g$ of
  $A, B, Z \vdash G, H$ in the linear sequent calculus, we can cut
  along the proposition $Z$ to obtain a proof of
  $A,B,C,D \vdash E,F,G,H,I,J$. Thus, in the corresponding graphical
  calculus, we can plug together the boxes representing $f$ and $g$ to
  obtain a diagram:
  \begin{equation*}
    \begin{tikzpicture}[y=0.70pt, x=0.70, yscale=-1.000000, xscale=1.000000, inner sep=0pt, outer sep=0pt]
      \path[draw=black,line join=miter,line cap=butt,even odd rule,line
      width=0.500pt,rounded corners=0.0000cm] (60.0000,902.3622) rectangle
      (105.0000,924.8622);
      \path[draw=black,line join=miter,line cap=butt,even odd rule,line width=0.500pt]
      (60.0000,872.3622) .. controls (60.0000,887.2619) and (75.2525,887.9675) ..
      (75.0000,902.3622) node[above=0.1cm,at start] {$C$};
      \path[draw=black,line join=miter,line cap=butt,even odd rule,line width=0.500pt]
      (90.0000,902.3622) .. controls (90.0000,887.7150) and (105.0000,887.3622) ..
      (105.0000,872.3622) node[above=0.1cm,at end] {$D$};
      \path[draw=black,line join=miter,line cap=butt,even odd rule,line width=0.500pt]
      (67.5000,924.8622) .. controls (68.0051,939.2568) and (-10.0000,977.3621) ..
      (-10.0000,1022.3622) node[below=0.1cm,at end] {$E$}; 
      \path[draw=black,line join=miter,line cap=butt,even odd rule,line width=0.500pt]
      (97.5000,924.8622) .. controls (97.7525,940.2670) and (175.0000,977.2098) ..
      (175.0000,1022.3622) node[below=0.1cm,at end] {$J$\rlap{ .}};
      \path[fill=black,line join=miter,line cap=butt,line width=0.500pt]
      (82.6331,913.2857) node {$f$};
      \path[draw=black,line join=miter,line cap=butt,even odd rule,line width=0.500pt]
      (75.0000,924.8622) .. controls (75.5051,939.2568) and (30.0000,977.3621) ..
      (30.0000,1022.3622) node[below=0.1cm,at end] {$F$};
      \path[draw=black,line join=miter,line cap=butt,even odd rule,line width=0.500pt]
      (90.0000,924.8622) .. controls (90.2525,940.2670) and (135.0000,977.3621) ..
      (135.0000,1022.3622) node[below=0.1cm,at end] {$I$};
      \path[draw=black,line join=miter,line cap=butt,even odd rule,line width=0.500pt]
      (82.5000,924.8622) -- (82.5000,969.8622) node[midway, right=0.05cm] {$Z$};
      \path[draw=black,line join=miter,line cap=butt,even odd rule,line
      width=0.500pt,rounded corners=0.0000cm] (60.0000,969.8622) rectangle
      (105.0000,992.3622);
      \path[draw=black,line join=miter,line cap=butt,even odd rule,line width=0.500pt]
      (-10.0000,872.3622) .. controls (-10.0000,917.3622) and (67.2475,954.9624) ..
      (67.5000,969.8622) node[above=0.1cm,at start] {$A$};
      \path[draw=black,line join=miter,line cap=butt,even odd rule,line width=0.500pt]
      (30.0000,872.3622) .. controls (30.0000,917.3622) and (74.7475,954.9624) ..
      (75.0000,969.8622) node[above=0.1cm,at start]
      {$B$};
      \path[draw=black,line join=miter,line cap=butt,even odd rule,line width=0.500pt]
      (75.0000,992.3622) .. controls (75.0000,1007.3622) and (60.0000,1007.3622) ..
      (60.0000,1022.3622) node[below=0.1cm,at end] {$G$};
      \path[draw=black,line join=miter,line cap=butt,even odd rule,line width=0.500pt]
      (90.0000,992.3622) .. controls (90.0000,1007.3622) and (105.0000,1007.3622) ..
      (105.0000,1022.3622) node[below=0.1cm,at end] {$H$};
      \path[fill=black,line join=miter,line cap=butt,line width=0.500pt]
      (82.2759,981.1428) node{$g$};
    \end{tikzpicture}
  \end{equation*}
\item Given $k$-linear maps $f \colon A \otimes B \rightarrow C$,
  $g \colon E \rightarrow F \otimes G$ and
  $h \colon C \otimes F \otimes G \rightarrow K$, we can consider the
  $k$-linear map $A \otimes E \otimes B \rightarrow K$ which sends
  $a \otimes e \otimes b$ to $h(f(a \otimes b) \otimes g(e))$. Thus,
  in the corresponding graphical calculus, we can plug together the
  boxes representing $f$, $g$ and $h$ to obtain a diagram:
  \begin{equation*}
    \begin{tikzpicture}[y=0.70pt, x=0.70, yscale=-1.000000, xscale=1.000000, inner sep=0pt, outer sep=0pt]
      \path[draw=black,line join=miter,line cap=butt,even odd rule,line width=0.500pt,rounded corners=0.0000cm]
      (60.0000,909.8622) rectangle (105.0000,932.3622);
      \path[draw=black,line join=miter,line cap=butt,even odd rule,line width=0.500pt]
      (75.0000,887.3622) .. controls (75.0000,902.2619) and
      (75.2525,895.4675) .. (75.0000,909.8622) node[above=0.1cm,at start] {$A$};
      \path[draw=black,line join=miter,line cap=butt,even odd rule,line width=0.500pt]
      (90.0000,909.8622) .. controls (90.0000,895.2150) and
      (150.0000,902.3622) .. (150.0000,887.3622) node[above=0.1cm,at end] {$B$};
      \path[fill=black,line join=miter,line cap=butt,line width=0.500pt]
      (77.3731,926.9180) node[above right] (text4430) {$f$};
      \path[draw=black,line join=miter,line cap=butt,even odd rule,line width=0.500pt,rounded corners=0.0000cm]
      (120.0000,909.8622) rectangle (165.0000,932.3622);
      \path[draw=black,line join=miter,line cap=butt,even odd rule,line width=0.500pt]
      (135.0000,932.3622) .. controls (135.5051,946.7569) and
      (112.2475,939.9624) .. (112.5000,954.8622) node[left=0.2cm,midway] {$F$};
      \path[draw=black,line join=miter,line cap=butt,even odd rule,line width=0.500pt]
      (150.0000,932.3622) .. controls (150.2525,947.7670) and
      (149.7475,939.7099) .. (150.0000,954.8622) node[right=0.1cm,midway] {$G$};
      \path[fill=black,line join=miter,line cap=butt,line width=0.500pt]
      (138.6155,925.9180) node[above right] (text4430-5) {$g$};
      \path[draw=black,line join=miter,line cap=butt,even odd rule,line width=0.500pt]
      (142.5000,909.8622) .. controls (142.5000,895.2150) and
      (112.5000,902.3622) .. (112.5000,887.3622) node[above=0.1cm,at end] {$E$};
      \path[draw=black,line join=miter,line cap=butt,even odd rule,line width=0.500pt]
      (82.5000,932.3622) .. controls (83.0051,946.7569) and
      (74.7475,939.9624) .. (75.0000,954.8622) node[left=0.13cm,midway] {$C$};
      \path[draw=black,line join=miter,line cap=butt,even odd rule,line width=0.500pt,rounded corners=0.0000cm]
      (67.5000,954.8622) rectangle (157.5000,977.3622);
      \path[draw=black,line join=miter,line cap=butt,even odd rule,line width=0.500pt]
      (112.5000,977.3622) -- (112.5000,999.8622) node[below=0.1cm,at
      end] {$K$\rlap{ .}};
      \path[fill=black,line join=miter,line cap=butt,line width=0.500pt]
      (107.3731,971.3470) node[above right] (text4430-9) {$h$};
    \end{tikzpicture}
  \end{equation*}
\item Given programs $f \colon A \rightarrow B$ and
  $g \colon B \times A \rightarrow C$, there is a composite program
  $\lambda a.\, g(f(a),a) \colon A \rightarrow C$; thus, in the
  corresponding graphical calculus, we can plug together the boxes for
  $f$ and $g$ to obtain a diagram:
  \begin{equation*}
    \begin{tikzpicture}[y=0.70pt, x=0.70, yscale=-1.000000, xscale=1.000000, inner sep=0pt, outer sep=0pt]
      \path[draw=black,line join=miter,line cap=butt,even odd rule,line width=0.500pt]
      (67.5000,902.3622) rectangle (97.5000,924.8622);
      \path[draw=black,line join=miter,line cap=butt,even odd rule,line width=0.500pt]
      (90.0000,902.3622) .. controls (90.0000,887.3622) and
      (105.0000,879.8622) .. (105.0000,864.8622) .. controls
      (105.0000,849.8622) and (75.0000,842.3622) .. (75.0000,827.3622);
      \path[fill=black,line join=miter,line cap=butt,line width=0.800pt]
      (76.6155,918.8774) node[above right] (text4430) {$g$};
      \path[draw=black,line join=miter,line cap=butt,even odd rule,line width=0.500pt]
      (60.0000,857.3622) rectangle (90.0000,879.8622);
      \path[draw=black,line join=miter,line cap=butt,even odd rule,line width=0.500pt]
      (82.5000,924.8622) .. controls (82.7525,940.2670) and
      (82.2475,939.7099) .. (82.5000,954.8622) node[below=0.1cm,at end] {$C$\rlap{ .}};
      \path[fill=black,line join=miter,line cap=butt,line width=0.500pt]
      (69.3680,873.4485) node[above right] (text4430-5) {$f$};
      \path[draw=black,line join=miter,line cap=butt,even odd rule,line width=0.500pt]
      (75.0000,827.3622) -- (75.0000,857.3622) node[above=0.1cm,at start] {$A$};
      \path[draw=black,line join=miter,line cap=butt,even odd rule,line width=0.500pt]
      (75.0000,879.8622) -- (75.0000,902.3622) node[left=0.1cm,midway] {$B$};
    \end{tikzpicture}
  \end{equation*}
\end{enumerate}

With a little further thought, we can derive from the intended
interpretations of the boxes a description of the associated wiring
discipline:
\begin{enumerate}[(i)]
\item In the linear sequent calculus, we can only cut
  along a single formula, so that in the corresponding graphical
  calculus, we can only plug two boxes together along a single wire
  (output to input);\vskip0.25\baselineskip
\item In the case of linear maps between vector spaces,
  we can compose maps together over \emph{multiple} tensor components,
  so that we can now plug multiple outputs of one box into multiple
  inputs of a second. We can also form the tensor product of two maps,
  corresponding to composing two boxes by placing them alongside each
  other.\vskip0.25\baselineskip
\item In the case of programs, we have the possibility
  of duplicating or discarding values; thus the corresponding
  graphical calculus will augment the rules from (ii) by
  allowing wires to split and terminate as they go down the page.
\end{enumerate}
There are other possibilities; for example, intermediate between (i)
and (ii) we have (ii)' which allows for plugging multiple inputs as
in (ii) but does not allow for placing boxes alongside each
other.

\subsection{Algebraic structures from graphical calculi}
\label{sec:algebr-struct-from}

In general, the purpose of graphical calculi is to provide a
denotation system for elements in a semantic structure. For example,
the graphical calculus in (ii) can be used to describe compound
morphisms in the category of $k$-vector spaces, but more generally, in
any symmetric monoidal category~\cite{Kelly1971Coherence}; it is
essentially the calculus of string diagrams
in~\cite{Joyal1991The-geometry}. However, the calculus in~(iii), with
its more permissive wiring discipline, cannot be interpreted into
$k$-vector spaces as there is no $k$-linear correlate to the operation
of splitting or terminating wires.

There is a particularly canonical class of semantic structures into
which a given graphical calculus can be interpreted; the structures in
this class are built out of families of sets representing the wires
and boxes of the graphical calculus, together with operations on those
sets encoding the wiring discipline. For the graphical calculus
in~(i) above, these structures are the
\emph{polycategories} of~\cite{Szabo1975Polycategories}. These were
explicitly introduced as semantic models for a two-sided propositional
sequent calculus; although originally this was the classical Genzten
calculus, it later became clear~\cite{Lambek1993From} that they encode
precisely the sequent calculus of multiplicative linear logic.

\begin{Defn}
  \label{def:33}
  A small (symmetric) polycategory $\C$ comprises a set $\ob(\C)$ of
  \emph{objects}; sets $\C(\vec A; \vec B)$ of \emph{morphisms} for
  each pair of lists $\vec A = (A_1,\ldots,A_n)$ and
  $\vec B = (B_1, \dots, B_m)$ of objects; and the following further
  data:
  \begin{itemize}
  \item \emph{Identity} morphisms $\id_A \in \C(A; A)$ for each object
    $A$.\vskip0.25\baselineskip
  \item \emph{Composition} operations giving for each
    $f \in \C(\vec A; \vec B)$ and $g \in \C(\vec C; \vec D)$ and
    indices $i,j$ with $B_i = C_j$, a morphism
    \begin{equation*}
      g \icomp{j}{i} f \in \C(\vec C_{\scriptscriptstyle < j}, \vec A,
      \vec C_{\scriptscriptstyle > j}; \vec B_{\scriptscriptstyle <i},
      \vec D, \vec B_{\scriptscriptstyle >i})\rlap{ ;}
    \end{equation*}
    here we use comma to denote concatenation of lists, and write
    $\vec C_{\scriptscriptstyle < j}$ for the list
    $(C_1, \dots, C_{j-1})$, and so on.\vskip0.25\baselineskip
  \item \emph{Exchange} operations giving for each
    $f \in \C(\vec A; \vec B)$ and permutations $\varphi \in \aut_n$
    (the symmetric group on $n$ letters) and $\psi \in \aut_m$ an
    element
    \begin{equation*}
      \psi \cdot f \cdot \varphi \in \C(\vec A_{\varphi}; {\vec B}_{\psi^{-1}})
    \end{equation*}
    where $\vec A_{\varphi}$ denotes the list
    $(A_{\varphi(1)}, \dots, A_{\varphi(n)})$ and likewise for
    $\vec B_{\psi^{-1}}$.
  \end{itemize}
  These data are required to satisfy the axioms of
  Definition~\ref{def:44} below.
\end{Defn}

If $\C$ is a polycategory, then we think of elements of $\ob(\C)$ as
wire-labels, and elements of $\C(\vec A; \vec B)$ as boxes of the
form~\eqref{eq:21}. The operations of a polycategory now correspond to
the elementary wiring operations on such boxes. The identity morphisms
can be depicted as bare wires; composition $g \icomp{j}{i} f$ as the
plugging of the $i$th output of $f$ into the $j$th input of $g$, as on
the left below; and exchange as the rearrangement of input or output
wires, as on the right below.

\begin{equation}
  \begin{tikzpicture}[y=0.80pt, x=0.80pt, yscale=-1.000000, xscale=1.000000, inner sep=0pt, outer sep=0pt]
    \path[draw=black,line join=miter,line cap=butt,even odd rule,line
    width=0.500pt,rounded corners=0.0000cm] (60.0000,902.3622) rectangle
    (105.0000,924.8622);
    \path[draw=black,line join=miter,line cap=butt,even odd rule,line width=0.500pt]
    (60.0000,872.3622) .. controls (60.0000,887.2619) and (75.2525,887.9675) ..
    (75.0000,902.3622) node[above=0.1cm,at start] {$A_1$};
    \path (82.500,863.3622) node {$\cdots$};
    \path (8.000,863.3622) node {$\cdots$};
    \path (158.000,863.3622) node {$\cdots$};
    \path (82.500,1032.3622) node {$\cdots$};
    \path (8.000,1032.3622) node {$\cdots$};
    \path (158.000,1032.3622) node {$\cdots$};
    \path[draw=black,line join=miter,line cap=butt,even odd rule,line width=0.500pt]
    (90.0000,902.3622) .. controls (90.0000,887.7150) and (105.0000,887.3622) ..
    (105.0000,872.3622) node[above=0.1cm,at end] {$A_n$};
    \path[draw=black,line join=miter,line cap=butt,even odd rule,line width=0.500pt]
    (67.5000,924.8622) .. controls (68.0051,939.2568) and (-10.0000,977.3621) ..
    (-10.0000,1022.3622) node[below=0.1cm,at end] {$B_1$}; 
    \path[draw=black,line join=miter,line cap=butt,even odd rule,line width=0.500pt]
    (97.5000,924.8622) .. controls (97.7525,940.2670) and (175.0000,977.2098) ..
    (175.0000,1022.3622) node[below=0.1cm,at end] {$B_{m}$};
    \path[fill=black,line join=miter,line cap=butt,line width=0.500pt]
    (82.6331,913.2857) node {$f$};
    \path[draw=black,line join=miter,line cap=butt,even odd rule,line width=0.500pt]
    (75.0000,924.8622) .. controls (75.5051,939.2568) and (30.0000,977.3621) ..
    (30.0000,1022.3622) node[below=0.1cm,at end] {$B_{\scriptscriptstyle
        i-1}$};
    \path[draw=black,line join=miter,line cap=butt,even odd rule,line width=0.500pt]
    (90.0000,924.8622) .. controls (90.2525,940.2670) and (135.0000,977.3621) ..
    (135.0000,1022.3622) node[below=0.1cm,at end] {$B_{\scriptscriptstyle i+1}$};
    \path[draw=black,line join=miter,line cap=butt,even odd rule,line width=0.500pt]
    (82.5000,924.8622) -- (82.5000,969.8622);
    \path[draw=black,line join=miter,line cap=butt,even odd rule,line
    width=0.500pt,rounded corners=0.0000cm] (60.0000,969.8622) rectangle
    (105.0000,992.3622);
    \path[draw=black,line join=miter,line cap=butt,even odd rule,line width=0.500pt]
    (-10.0000,872.3622) .. controls (-10.0000,917.3622) and (67.2475,954.9624) ..
    (67.5000,969.8622) node[above=0.1cm,at start] {$C_1$};
    \path[draw=black,line join=miter,line cap=butt,even odd rule,line width=0.500pt]
    (30.0000,872.3622) .. controls (30.0000,917.3622) and (74.7475,954.9624) ..
    (75.0000,969.8622) node[above=0.06cm,at start]
    {$C_{\scriptscriptstyle j-1}$};
    \path[draw=black,line join=miter,line cap=butt,even odd rule,line width=0.500pt]
    (135.0000,872.3622) .. controls (135.0000,917.3622) and (89.7475,954.9624) ..
    (90.0000,969.8622) node[above=0.08cm,at start]
    {$C_{\scriptscriptstyle j+1}$};
    \path[draw=black,line join=miter,line cap=butt,even odd rule,line width=0.500pt]
    (175.0000,872.3622) .. controls (175.0000,917.3622) and (97.2475,954.9624) ..
    (97.5000,969.8622) node[above=0.07cm,at start]
    {$C_p$};
    \path[draw=black,line join=miter,line cap=butt,even odd rule,line width=0.500pt]
    (75.0000,992.3622) .. controls (75.0000,1007.3622) and (60.0000,1007.3622) ..
    (60.0000,1022.3622) node[below=0.1cm,at end] {$D_1$};
    \path[draw=black,line join=miter,line cap=butt,even odd rule,line width=0.500pt]
    (90.0000,992.3622) .. controls (90.0000,1007.3622) and (105.0000,1007.3622) ..
    (105.0000,1022.3622) node[below=0.1cm,at end] {$D_q$};
    \path[fill=black,line join=miter,line cap=butt,line width=0.500pt]
    (82.2759,981.1428) node{$g$};
  \end{tikzpicture}
  \qquad \qquad
  \begin{tikzpicture}[y=0.80pt, x=0.80pt, yscale=-1.000000, xscale=1.400000, inner sep=0pt, outer sep=0pt]
    \path[draw=black,line join=miter,line cap=butt,even odd rule,line
    width=0.500pt,rounded corners=0.0000cm] (60.0000,902.3622) rectangle
    (120.0000,932.3622);
    \path[draw=black,line join=miter,line cap=butt,even odd rule,line width=0.500pt]
    (67.5000,857.3622) .. controls (67.5000,872.2619) and (90.2525,887.9675) ..
    (90.0000,902.3622) node[above=0.1cm,at start] {$A_2$};
    \path[draw=black,line join=miter,line cap=butt,even odd rule,line width=0.500pt]
    (112.5000,902.3622) .. controls (112.5000,887.7150) and (90.0000,872.3622) ..
    (90.0000,857.3622) node[above=0.1cm,at end] {$A_3$};
    \path[draw=black,line join=miter,line cap=butt,even odd rule,line width=0.500pt]
    (67.5000,932.3622) .. controls (68.0051,946.7568) and (82.2475,962.4624) ..
    (82.5000,977.3622) node[below=0.1cm,at end] {$B_1$};
    \path[draw=black,line join=miter,line cap=butt,even odd rule,line width=0.500pt]
    (97.5000,932.3622) .. controls (97.7525,947.7670) and (97.2475,962.2099) ..
    (97.5000,977.3622) node[below=0.1cm,at end] {$B_3$};
    \path[fill=black,line join=miter,line cap=butt,line width=0.500pt]
    (90,916.8571) node {$f$};
    \path[draw=black,line join=miter,line cap=butt,even odd rule,line width=0.500pt]
    (67.5000,902.3622) .. controls (67.5000,879.8622) and (112.5000,879.8622) ..
    (112.5000,857.3622) node[above=0.1cm,at end] {$A_1$};
    \path[draw=black,line join=miter,line cap=butt,even odd rule,line width=0.500pt]
    (82.5000,932.3622) .. controls (82.2575,950.2937) and (113.0051,958.9028) ..
    (112.5000,977.3622)node[below=0.1cm,at end] {$B_2$};
    \path[draw=black,line join=miter,line cap=butt,even odd rule,line width=0.500pt]
    (112.5000,932.3622) .. controls (112.5000,954.8622) and (67.2475,950.3165) ..
    (67.5000,977.3622)node[below=0.1cm,at end] {$B_4$};
  \end{tikzpicture}
  \label{eq:multicomp}
\end{equation}
[Note that the identities of a polycategory involve only a single
object rather than a list. A geometric explanation for
this is that all the graphs occurring in polycategorical composition
are \emph{connected}, whereas the identity on a list of objects would
be an unconnected graph.]

In terms of the graphical calculus, the axioms for a polycategory can
be seen simply as asserting that various ways of wiring together a
diagram of boxes coincide. We now give these axioms in full, mainly to
show how unpalatable they are when presented algebraically, and
without any real expectation that the reader should work through the
details.

\begin{Defn}
  \label{def:44}
  The axioms for a polycategory $\C$ are:
  \begin{itemize}
  \item The \emph{unit} axioms:
    \begin{equation*}
      f \icomp{i}{1} \id_{A_i} = f = \id_{B_j} \, \icomp{1}{j} f
    \end{equation*}
    for all $f \in \C(\vec A; \vec B)$ and valid indices
    $i,j$.\vskip0.25\baselineskip
  \item The \emph{associativity} axiom:
    \begin{equation}\label{eq:43}
      (h \icomp{\ell}{k} g) \icomp{\bar \jmath}{i} f = h
      \icomp{\ell}{\bar k} (g
      \icomp{j}{i} f)
    \end{equation}
    for all $f \in \C(\vec A; \vec B)$, $g \in \C(\vec C; \vec D)$ and
    $h \in \C(\vec E; \vec F)$ and all indices $i,j,k,\ell$ with
    $B_i = C_j$ and $D_k = E_\ell$. Here, $\bar \jmath = j + \ell -1$ and
    $\bar k = k + i -1$.\vskip0.25\baselineskip
  \item The \emph{left interchange} axiom:
    \begin{equation*}
      (h \icomp{k_2}{j} g) \icomp{k_1}{i} f = 
      \psi \cdot ((h \icomp{k_1}{i} f) \icomp{{\bar k}_2}{j} g)\rlap{ .}
    \end{equation*}
    for all $f \in \C(\vec A; \vec B)$, $g \in \C(\vec C; \vec D)$ and
    $h \in \C(\vec E; \vec F)$ and all indices $i, j$ and $k_1 < k_2$
    such that $B_i = E_{k_1}$ and $D_j = E_{k_2}$. Here,
    $\bar{k}_2 = k_2 + n -1$ where $n$ is the length of the list
    $\vec A$, and $\psi$ is the permutation for which
    $(\vec B_{\scriptscriptstyle < i}, \vec D_{\scriptscriptstyle < j},
    \vec F, \vec D_{\scriptscriptstyle > j}, \vec B_{\scriptscriptstyle >
      i})_\psi = (\vec D_{\scriptscriptstyle < j}, \vec
    B_{\scriptscriptstyle < i}, \vec F, \vec B_{\scriptscriptstyle > i},
    \vec D_{\scriptscriptstyle > j})$.\vskip0.25\baselineskip
  \item The \emph{right interchange} axiom:
    \begin{equation*}
      g \icomp{j}{i_1} (h \icomp{k}{i_2} f) =
      (h \icomp{k}{{\bar \imath}_2} (g \icomp{j}{i_1} f)) \cdot \varphi\rlap{ .}
    \end{equation*}
    for all $f \in \C(\vec A; \vec B)$, $g \in \C(\vec C; \vec D)$ and
    $h \in \C(\vec E; \vec F)$ and all indices $i_1 < i_2$ and $j$, $k$
    such that $B_{i_1} = C_{j}$ and $B_{i_2} = E_{k}$. Here,
    $\bar \imath = i + m - 1$ where $m$ is the length of the list
    $\vec D$, and $\varphi$ is the permutation for which
    $(\vec E_{\scriptscriptstyle < k}, \vec C_{\scriptscriptstyle < j},
    \vec A, \vec C_{\scriptscriptstyle > j}, \vec E_{\scriptscriptstyle >
      k})_\varphi = (\vec C_{\scriptscriptstyle < j}, \vec
    E_{\scriptscriptstyle < k}, \vec A, \vec E_{\scriptscriptstyle > k},
    \vec C_{\scriptscriptstyle > j})$.\vskip0.25\baselineskip
  \item The \emph{action} axioms:
    \begin{equation*}
      \id_m \cdot f \cdot \id_n = f \quad \text{and} \quad 
      (\psi_2\psi_1) \cdot f \cdot (\varphi_1 \varphi_2) = \psi_2 \cdot
      (\psi_1 \cdot f \cdot \varphi_1) \cdot \varphi_2
    \end{equation*}
    for all $f \in \C(\vec A; \vec B)$ and suitable permutations
    $\varphi_1$, $\varphi_2$, $\psi_1$ and $\psi_2$.
    \vskip0.25\baselineskip
  \item The \emph{equivariance of composition} axiom:
    \begin{equation*}
      (\psi_2 \cdot g \cdot \varphi_2) \icomp{j}{i} (\psi_1 \cdot f \cdot \varphi_1) = \bar \psi \cdot (g \icomp{\varphi_2(j)}{\psi_1^{-1}(i)} f) \cdot \bar\varphi
    \end{equation*}
    for all $f \in \C(\vec A; \vec B)$, $g \in \C(\vec C; \vec D)$, and
    suitable permutations $\varphi_1, \varphi_2, \psi_1$ and $\psi_2$.
    Here, $\bar \varphi$ is determined by
    $(\vec C_{\scriptscriptstyle < \varphi_2(j)}, \vec A, \vec
    C_{\scriptscriptstyle > \varphi_2(j)})_{\bar \varphi} = ((\vec
    C_{\varphi_2})_{\scriptscriptstyle < j}, \vec A_{\varphi_1}, (\vec
    C_{\varphi_2})_{\scriptscriptstyle > j})$ and $\bar \psi$ by
    $(\vec B_{\scriptscriptstyle < \psi_1^{-1}(i)}, \vec D, \vec
    B_{\scriptscriptstyle > \psi_1^{-1}(i)})_{\bar\psi^{-1}} = ((\vec
    B_{\psi_1^{-1}})_{\scriptscriptstyle < i}, \vec D_{\psi_2^{-1}},
    (\vec B_{\psi_1^{-1}})_{\scriptscriptstyle > i})$.
  \end{itemize}
\end{Defn}

In a similar way, we can associate algebraic structures to the other
graphical calculi described above; since these calculi extend (i) with
more permissive wiring disciplines, the associated structures extend
polycategories with more permissive composition operations. For the
calculi with wiring disclipines~(ii)',~(ii) and (iii), the structures
obtained are, respectively (coloured)
\emph{properads}~\cite{Vallette2007A-Koszul}, (coloured)
\textsc{prop}s~\cite{Mac-Lane1965Categorical}, and (many-sorted)
\emph{Lawvere theories}~\cite{Lawvere1963Functorial}. We now sketch
the definitions for (ii)' and (ii), leaving (iii) as an exercise to
the reader.

\begin{Defn}
  \label{def:45}
  A \emph{coloured properad} $\C$ is given by the same data as for a
  polycategory, except for the operation of composition, which is
  generalised as follows. Given morphisms $f \in \C(\vec A; \vec B)$
  and $g \in \C(\vec C; \vec D)$, and non-empty sequences of indices
  $I = \{i, \dots, i+k\}$ and $J = \{j, \dots, j+k\}$ such that
  $B_{i+\ell} = C_{j+\ell}$ for each $0 \leqslant \ell \leqslant k$,
  there is a composite morphism
  \begin{equation}\label{eq:38}
    g \icomp{J}{I} f \in \C(\vec C_{< j}, \vec A, \vec C_{> j+k}; \vec
    B_{< i}, \vec D, \vec B_{> i+k})\rlap{ .}
  \end{equation}
  These data satisfy axioms identical in form to
  Definition~\ref{def:44}.

  A \emph{coloured \textsc{prop}} $\C$ is a coloured properad augmented
  with a morphism $0 \in \C(\ \mathord; \ )$ (representing the empty
  string diagram) and an operation which to morphisms
  $f \in \C(\vec A; \vec B)$ and $g \in \C(\vec C; \vec D)$, associates
  a morphism
  \begin{equation}\label{eq:39}
    g \icomp{\emptyset}{\emptyset} f \in \C(\vec A, \vec C; \vec B, \vec D)\rlap{ ,}
  \end{equation}
  (representing $f$ and $g$ placed alongside each other), all subject
  to suitable axioms.
\end{Defn}

Of course, for each of these algebraic structures there is an
associated notion of structure-preserving map giving the morphisms of
a category:

\begin{Defn}
  \label{def:46}
  If $\C$ and $\D$ are polycategories, then a \emph{polyfunctor}
  $F \colon \C \rightarrow \D$ comprises an assignation on objects
  $F \colon \ob(\C) \rightarrow \ob(\D)$ and assignations
  \begin{align*}
    \C(\vec A; \vec B) &\rightarrow \D(F\vec A; F\vec
    B) \\
    f & \mapsto Ff
  \end{align*}
  on morphisms (where $F\vec A = (FA_1, \dots, FA_n)$ and similarly for
  $F \vec B$), such that for all suitable $A, f, g, i, j, \psi$ and
  $\varphi$ we have
  \begin{equation*}
    F(\id_A) = \id_{FA}\ \text, \quad F(g \icomp{j}{i} f) =
    Fg \icomp{j}{i} Ff \quad \text{and} \quad F(\psi \cdot f \cdot
    \varphi) = \psi \cdot Ff \cdot \varphi\rlap{ .}
  \end{equation*}
  We write $\cat{Polycat}$ for the category of small polycategories
  and polyfunctors. In a similar manner we have categories
  $\cat{Properad}$ and $\cat{Prop}$ of properads and \textsc{prop}s.
\end{Defn}

\subsection{Monads from graphical calculi}
\label{sec:monads-from-graph}

We now explain how the algebraic structures of the preceding section
can be represented as algebras for suitable monads on a presheaf
category. The presheaf category in question encodes the objects and
morphisms of a polycategory, properad or \textsc{prop}:

\begin{Defn}
  \label{def:36}
  Let $\PP$ be the category with object-set
  $\ens{\star} + \Nat \times \Nat$ and non-identity maps
  $\sigma^{n,m}_1, \dots, \sigma^{n,m}_n, \tau^{n,m}_1, \dots,
  \tau^{n,m}_m \colon \star \to (n,m)$ (we omit superscripts in the
  sequel for readability). A presheaf $X \in \psh[\PP]$ is called a
  \emph{polygraph}\footnote{Our usage follows~\cite{Blute1996Natural};
    note that these polygraphs are completely unrelated to those
    of~\cite{Burroni1993Higher-dimensional}.}, with elements of
  $X(\star)$ being called \emph{vertices}, and elements of $X(n,m)$
  \emph{edges}. We write $s_1, \dots, s_n$ and $t_1, \dots, t_m$ for
  $X(\sigma_1), \dots, X(\sigma_n)$ and $X(\tau_1), \dots, X(\tau_m)$,
  and call the images of $e \in X(n,m)$ under these maps its
  \emph{sources} and \emph{targets} respectively.
\end{Defn}

There are forgetful functors from $\cat{Polycat}$, $\cat{Properad}$ or
$\cat{Prop}$ to $\psh[\PP]$ sending a polycategory, properad, or
\textsc{prop} to its underlying polygraph of objects and morphisms,
and these functors are monadic, so allowing us to identify the
structures at issue with algebras for the induced monad on
$\psh[\PP]$. In fact, we may describe these monads explicitly; we now
do this in detail for polycategories, and indicate how this should be
adapted in the other cases.

The key observation is that objects of $\psh[\PP]$ can be seen as
combinatorial representations of wiring diagrams of the kind drawn
above. For instance, the diagram~\eqref{eq:21} for a box with $n$
inputs and $m$ outputs corresponds to the representable presheaf
$\yoneda_{(n,m)} = \PP(\thg, (n,m))$, while the diagram
in~\eqref{eq:multicomp} for a composite $g \icomp{j}{i} f$ corresponds
to a pushout
\begin{equation}\label{eq:36}
  \cd[@-0.3em]{
    {\yoneda_\star} \ar[r]^-{\yoneda_{\tau_i}} \ar[d]_{\yoneda_{\sigma_j}} &
    {\yoneda_{(n,m)}} \ar[d]^{u} \\
    {\yoneda_{(p,q)}} \ar[r]_-{v} & 
    \pushoutcorner {\yoneda_{(p,q)} \ifcomp{j}{i} \yoneda_{(n,m)}} 
  }
\end{equation}
in $\psh[\PP]$. Writing
$A = \yoneda_{(p,q)} \ifcomp{j}{i} \yoneda_{(n,m)}$, the further
composite $h \icomp{\ell}{\bar k} (g \icomp{j}{i} f)$ in the
associativity axiom~\eqref{eq:43} corresponds to the pushout in
$\psh[\PP]$ as to the left in:
\begin{equation*}
  \cd[@!C@C-3em@-0.3em]{
    {\yoneda_\star} \ar[r]^-{\yoneda_{\tau_k}} \ar[d]_{\yoneda_{\sigma_\ell}} &
    {\yoneda_{(p,q)}} \ar[r]^-{v} & 
    {A} \ar[d]^-{} \\
    {\yoneda_{(r,s)}} \ar[rr]_-{} & &
    \pushoutcorner {\yoneda_{(r,s)} \ifcomp{\ell}{\bar k} A}
  } \qquad \ 
  \cd[@-0.4em@C-1em]{
    & \yoneda_\star \ar[dl]_-{\yoneda_{\sigma_\ell}} \ar[dr]_-{\yoneda_{\tau_k}} & & 
    \yoneda_\star \ar[dl]^-{\yoneda_{\sigma_j}} \ar[dr]^-{\yoneda_{\tau_i}} \\
    \yoneda_{(r,s)} & & \yoneda_{(p,q)} & & \yoneda_{(n,m)}\rlap{ ,}
  }
\end{equation*}
which is isomorphic to the polygraph representing
$(h \icomp{\ell}{k} g) \icomp{\bar \jmath}{i} f$, since both are
colimits for the diagram above right. Iteratively taking pushouts of
the preceding kind yields the following class of polygraphs describing
the compound wiring operations of a polycategory; eventually, in
Section~\ref{sec:first-appl-polyc} below, we will be able to generate
all of these shapes from those for the basic wiring operations, but
for the moment we give a more hands-on construction.

\begin{Defn}
  \label{def:43}
  A \emph{finite polygraph} is one with finitely many vertices and
  edges. An \emph{$(n,m)$-labelling} of a finite polygraph is given by
  choices of vertices $\ell_1, \dots, \ell_n$ and $r_1, \dots, r_m$,
  called the \emph{leaves} and \emph{roots} respectively. An
  \emph{isomorphism} of labelled polygraphs is one respecting the
  labellings. Let $\L(n,m)$ be a set of isomorphism-class
  representatives of $(n,m)$-labelled finite polygraphs; we write
  $\abs X$ for the underlying polygraph of $X \in \L(n,m)$ and
  $\ell^X$ and $r^X$ for the labellings. Now:
  \begin{enumerate}[(a)]
  \item Let $\id \in \L(1,1)$ be $\yoneda_\star$ labelled in the
    unique possible way;\vskip0.25\baselineskip
  \item Given $X \in \L(n,m)$, $Y \in \L(p,q)$ and indices
    $1 \leqslant i \leqslant m$ and $1 \leqslant j \leqslant p$, let
    $Y \ifcomp{j}{i} X \in \L(n+p-1,m+q-1)$ be such that there is a
    pushout of underlying polygraphs
    \begin{equation}\label{eq:37}
      \cd[@-0.3em]{
        {\yoneda_\star} \ar[r]^-{r^X_i} \ar[d]_{\ell^Y_j} &
        {\abs{X}} \ar[d]^{u} \\
        {\abs{Y}} \ar[r]_-{v} & 
        \abs{Y \ifcomp{j}{i} X} 
      }
    \end{equation}
    with the labelling of the leaves and roots given respectively by:
    \begin{align*}
      &v\ell^Y_1, \dots, 
      v\ell^Y_{j-1}, \ 
      u\ell^X_1, \dots, 
      u\ell^X_n, \ 
      v\ell^Y_{j+1}, \dots, 
      v\ell^Y_{p}\\
      \text{and} \quad 
      &u r^X_1, \dots,
      u r^X_{i - 1},\ 
      v r^Y_1, \dots,
      v r^Y_q, \ 
      u r^X_{i+1}, \dots,
      u r^X_m\rlap{ .}
    \end{align*}
  \item For any $X \in \L(n,m)$, $\varphi \in \aut_n$ and
    $\psi \in \aut_m$, let $\psi \cdot X \cdot \varphi \in \L(n,m)$ be
    $\abs{X}$ labelled by $\ell_{\varphi(1)}, \dots, \ell_{\varphi(n)}$
    and
    $r_{\psi^{-1}(1)}, \dots, r_{\psi^{-1}(m)}$.\vskip0.25\baselineskip
  \item For each $n,m \in \mathbb N$, let $\spn{n,m} \in \L(n,m)$ be
    $\yoneda_{(n,m)}$ labelled by
    $\yoneda_{\sigma_1}, \dots, \yoneda_{\sigma_n}$ and
    $\yoneda_{\tau_1}, \dots, \yoneda_{\tau_m}$.
  \end{enumerate}
  Now let $\T(n,m) \subset \L(n,m)$ be the subsets obtained by closing
  the elements $\spn{n,m}$ in (d) under the operations in
  (a)--(c).\footnote{Note that implicit in these definitions are the
    assumptions that the elements $\id$ and $\spn{n,m}$ of (a) and (d)
    are the chosen representatives of their isomorphism-classes, and
    that relabelling a representative $X \in \L(n,m)$ as in (c) yields
    another such; we are clearly at liberty to make these assumptions.}
\end{Defn}

As noted above, the labelled polygraphs in the sets $\T(n,m)$
represent the compound wiring operations of a polycategory;
following~\cite{Kock2011Polynomial,Kock2016}, we may characterise them
in a direct combinatorial manner.

\begin{Defn}
  \label{def:42}
  For any polygraph $X$, we write $G_X$ for the undirected multigraph
  obtained as follows: the nodes are the disjoint union of the sets of
  edges and of vertices of $X$, and there is an arc $v \frown e$ for
  each way that $v$ is a source or a target of $e$. A polygraph $X$ is
  called a \emph{polycategorical tree} if it has finitely many edges
  and vertices, and moreover:
  \begin{itemize}
  \item Each vertex of $X$ is a source of at most one edge;
  \item Each vertex of $X$ is a target of at most one edge;
  \item $G_X$ is acyclic (in particular without multiple edges) and
    connected (in particular non-empty).
  \end{itemize}
  A \emph{labelled polycategorical tree} is a polycategorical tree
  equipped with an $(n,m)$-labelling for which $\ell_1, \dots, \ell_n$
  enumerate the \emph{inputs}, i.e., the vertices which are not the
  targets of any edge, and $r_1, \dots, r_m$ enumerate \emph{outputs},
  i.e., the vertices which are not the source of any edge.
\end{Defn}

\begin{Lemma}
  \label{lem:21}
  $\T(n,m)$ is the set of isomorphism-class representatives of
  $(n,m)$-labelled polycategorical trees.
\end{Lemma}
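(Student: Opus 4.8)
The plan is to prove the two inclusions separately. Since distinct elements of $\L(n,m)$ are by construction non-isomorphic, it suffices to show that $\T(n,m)$ consists of exactly those representatives in $\L(n,m)$ that happen to be $(n,m)$-labelled polycategorical trees; that is, I would show that every element of $\T(n,m)$ is such a tree, and that every such tree is equal to its chosen representative in $\T(n,m)$.

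For the inclusion of $\T(n,m)$ into the labelled polycategorical trees, I would argue by induction on the generation of $\T(n,m)$ from the data (a)--(d). The base cases are the single vertex $\id$ (an isolated node, hence a $(1,1)$-labelled polycategorical tree whose unique vertex is simultaneously its only input and only output) and the stars $\spn{n,m}$ (one edge with $n$ distinct sources and $m$ distinct targets, for which $G_{\yoneda_{(n,m)}}$ is a star, with the sources as inputs and the targets as outputs). For the inductive step I treat the two closure operations. The permutation action (c) leaves $\abs{X}$ unchanged and merely permutes the leaf/root enumeration, so it preserves the property and adjusts the labelling as required. For composition (b), the pushout~\eqref{eq:37} is computed pointwise and, since $\yoneda_\star$ carries no edges, amounts to forming the disjoint union of $\abs{X}$ and $\abs{Y}$ and identifying the single vertex $r^X_i = \ell^Y_j$, which I write $w$. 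Because $r^X_i$ is a root of $X$ (a source of no edge of $X$) and $\ell^Y_j$ is a leaf of $Y$ (a target of no edge of $Y$), the merged vertex $w$ remains a source of at most one edge and a target of at most one edge, so those two conditions are inherited; and $G_{Y \ifcomp{j}{i} X}$ is obtained by gluing the trees $G_X$ and $G_Y$ at the single node $w$, which is again connected and acyclic. A short case analysis on whether $w$ is, in the glued polygraph, the target of an edge and/or the source of an edge then confirms that the leaves and roots prescribed in (b) enumerate exactly the inputs and outputs of the composite.

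For the reverse inclusion I would induct on the number of edges of a labelled polycategorical tree $X$. If $X$ has no edges, then connectedness, acyclicity and non-emptiness of $G_X$ force $X$ to be a single isolated vertex, whence $n = m = 1$ and $X = \id$. If $X$ has exactly one edge, then its sources are precisely the inputs and its targets precisely the outputs, so $X = \psi \cdot \spn{n,m} \cdot \varphi$ for suitable $\varphi \in \aut_n$, $\psi \in \aut_m$, and $X$ lies in $\T(n,m)$ by (c) and (d). For the inductive step I first note that a connected tree $G_X$ with at least two edge-nodes contains a vertex $w$ on the path between two of them; such a $w$ is adjacent to two distinct edges, hence, since no vertex is a source or a target more than once and $G_X$ has no multiple arcs, it is the unique target of one edge $e_1$ and the unique source of another edge $e_2 \neq e_1$, and has degree exactly two. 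I then split $X$ at $w$: duplicating $w$ into a new root $w_\uparrow$ (keeping the arc to $e_1$) and a new leaf $w_\downarrow$ (keeping the arc to $e_2$) disconnects $G_X$ into an ``upper'' piece $Z$ and a ``lower'' piece $Y$, each a polycategorical tree with strictly fewer edges, and realises $\abs{X}$ as the pushout~\eqref{eq:37} of $\abs{Z}$ and $\abs{Y}$ along $w$. By the induction hypothesis the representatives of $Z$ and $Y$ lie in $\T$, so $Y \ifcomp{j}{i} Z$ lies in $\T$ and is isomorphic to $X$ as a labelled polygraph up to a permutation of leaves and roots; applying (c) to correct that permutation identifies $X$ itself as an element of $\T(n,m)$.

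The main obstacle is the inductive step of the reverse inclusion: one must check carefully that splitting $X$ at the internal vertex $w$ genuinely realises $\abs{X}$ as the pushout of (b), and that the leaf/root enumerations induced on $Z$ and $Y$ differ from those demanded by the composition formula only by a permutation, so that the action (c) lands exactly on the chosen representative $X$ rather than merely on an isomorph. By comparison, the forward inclusion is essentially bookkeeping, the only mildly delicate point being the case analysis that identifies the inputs and outputs of a composite in (b).
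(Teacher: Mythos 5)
Your proposal is correct, and while its overall skeleton (closure argument for one inclusion, induction on the number of edges for the other) matches the paper's, the key inductive step is genuinely different. The forward inclusion is the same closure argument the paper dispatches in one sentence; you spell out the verification, including the case analysis at the glued vertex, which is indeed the only delicate point there. For the reverse inclusion, the paper's inductive step removes an edge $e$ whose sources are targets of no edge, and then splits into cases according to the number $p$ of connected components that remain: if $p=1$ the tree is a composite of $T_1$ with the relabelled star on $e$, and if $p>1$ one peels off the branch $T_p$ and composes it with the rest $T'$. Your step instead locates an \emph{internal vertex} $w$ --- one which is the target of one edge and the source of another; as you observe, such a vertex exists on the path in $G_X$ between any two edge-nodes, and the source/target constraints force it to have degree exactly two --- and splits the tree at $w$ into an upper and a lower piece, each containing at least one edge. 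Your decomposition is the more symmetric one: it always yields a binary split into two strictly smaller polycategorical trees and so avoids the paper's case analysis on $p$, at the price of having to check that the split genuinely realises $\abs{X}$ as the pushout \eqref{eq:37} and that a final application of operation (c) lands on the chosen representative $X$ itself rather than an isomorph --- both points you correctly flag, and both go through (the pushout is computed pointwise, edges are partitioned between the two pieces, and isomorphic chosen representatives in $\L(n,m)$ are equal). The paper's decomposition, by contrast, makes the generator $\spn{n,m}$ appear explicitly as one of the two factors in the $p=1$ case, hewing closer to how $\T(n,m)$ was generated, but it is terser about the relabelling bookkeeping that you make explicit.
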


\begin{proof}
  Each $\spn{n,m}$ is a labelled polycategorical tree, and labelled
  polycategorical trees are closed under (a)--(c) above. Thus,
  $(n,m)$-labelled polycategorical trees contain all of $\T(n,m)$.
  Conversely, we may show that any labelled polycategorical tree $T$
  is in $\T(n,m)$ by induction on the number of edges in $T$. If $T$
  contains no edge, then it must be $\id \in \T(1,1)$. If $|T| = e$,
  then $T \in \T(n,m)$ by (c)--(d). Otherwise, by acyclicity and
  finiteness, we may find an edge $e$ whose sources are not targets of
  any edge. Removing $e$ from $T$, together with any isolated vertices
  that this creates, leaves us with a positive number of connected
  components $T_1,\ldots,T_p$ which all contain at least one edge. By
  acyclicity, $e$ is linked to each $T_j$ through exactly one vertex.
  If $p = 1$, then $T$ is a composite of $T_1$ and $e$. Otherwise, let
  $T'$ denote $T$ with $T_p$ and any consequent isolated vertices
  removed; by the inductive hypothesis, $T'$ is in $\T(n',m')$, where
  $n'$ and $m'$ respectively denote the numbers of inputs and outputs
  of $T'$, and $T$ is a composite of $T_p$ and $T'$. This completes
  the proof.
\end{proof}

The labelled polycategorical trees may now be used to provide a
concrete description of the left adjoint $F_\PP$ to
$U \colon \cat{Polycat} \rightarrow \psh[\PP]$. Given
Lemma~\ref{lem:21}, the proof of the following result is
straightforward, if tedious; the reader may reconstruct it by suitably
adapting~\cite[Proposition~1.9.2]{Markl2002Operads} or~\cite[\sec
2.2.7]{Kock2011Polynomial}.

\begin{Prop}
  \label{prop:multicat}
  The free polycategory $F_\PP X$ on a polygraph $X \in \psh[\PP]$ has
  object set $X(\star)$ and morphism sets
  $(F_\PP X)(v_1, \dots, v_n; w_1, \dots, w_m)$ given by
  \begin{equation}\label{eq:1}
    \sum_{T \in \trees(n,m)}
    \{\,f \colon \abs T \rightarrow X \text{ in } \psh[\PP] : f(\ell^T_i) = v_i \text{ and } f(r^T_j) = w_j\}\rlap{ .}
  \end{equation}
  The identity morphism in $(F_\PP X)(v; v)$ is the pair
  $(\id \in \trees(1,1), v \colon \yoneda_\star \rightarrow X)$; the
  composite $(S, f) \icomp{j}{i} (T, g)$ is given by
  $(S \ifcomp{j}{i} T, f \ifcomp{j}{i} g)$, where $f \ifcomp{j}{i} g$
  is the unique map out of the pushout~\eqref{eq:37} induced by $f$ and
  $g$; and the exchange operation is defined by
  $\psi \cdot (T, f) \cdot \varphi = (\psi \cdot T \cdot \varphi, f)$.
\end{Prop}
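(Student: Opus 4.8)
The plan is to exhibit the stated data as a value of the left adjoint to $U \colon \cat{Polycat} \to \psh[\PP]$, which requires three things: checking that the prescribed identity, composition and exchange operations make the sets~\eqref{eq:1} into a polycategory; producing the unit $\eta_X \colon X \to U F_\PP X$; and verifying the universal property of $(\eta_X, F_\PP X)$. Since $U$ is already known to be monadic, establishing this adjunction is all that is needed.

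For the first task, I would observe that each axiom of Definition~\ref{def:44} reduces, through the definitions in Definition~\ref{def:43}, to a purely combinatorial identity between pushouts of polygraphs and relabellings, together with the matching identity between the induced structure maps into $X$. The unit axioms hold because gluing $\id = \yoneda_\star$ onto a leaf or root of a tree via~\eqref{eq:37} returns that same tree with unchanged structure map. Associativity and the two interchange axioms follow because both sides of each equation arise as colimits of a common diagram of representables---as in the discussion after Definition~\ref{def:43}---so the underlying polygraphs are canonically isomorphic, the interchange permutations merely recording the induced reindexing of leaves and roots, and the two structure maps into $X$ then agree by uniqueness of the map induced out of that colimit. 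The action axioms are immediate from Definition~\ref{def:43}(c), since relabelling is literally an action of $\aut_m \times \aut_n$, and equivariance of composition follows from the compatibility of relabelling with the pushout~\eqref{eq:37}.

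The unit $\eta_X$ is the identity on vertices, $X(\star) = \ob(F_\PP X)$, and sends an edge $e \in X(n,m)$ to the pair $(\spn{n,m}, \tilde e)$, where $\tilde e \colon \yoneda_{(n,m)} \to X$ classifies $e$ under the Yoneda lemma; since $\spn{n,m}$ is labelled by the $\yoneda_{\sigma_i}$ and $\yoneda_{\tau_j}$ and $\tilde e \circ \yoneda_{\sigma_i} = s_i(e)$, $\tilde e \circ \yoneda_{\tau_j} = t_j(e)$, this respects sources and targets, so is a map of polygraphs. For the universal property, given a polycategory $\C$ and a polygraph map $h \colon X \to U\C$, I would set $\bar h(x) = h(x)$ on objects and define $\bar h(T, f)$ by \emph{evaluating the tree $T$ in $\C$}: each edge of $\abs T$ is carried by $f$ into $X$, thence by $h$ to a morphism of $\C$, and these morphisms are composed along the internal wires of $T$. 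By Lemma~\ref{lem:21}, $T$ is a connected, acyclic labelled polycategorical tree, which is exactly what guarantees that this composite exists and is a single morphism of the prescribed arity; concretely one makes the definition by induction on the number of edges, using the decomposition from the proof of Lemma~\ref{lem:21}, with $\bar h$ sending a generator $(\spn{n,m}, \tilde e)$ to $h(e)$ and the identity $(\id, v)$ at a vertex $x$ to $\id_{h(x)}$.

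The one genuinely delicate point, and the main obstacle, is the \emph{well-definedness} of this evaluation: the composite in $\C$ must be shown independent of the order in which edges are contracted, equivalently of the chosen decomposition of $T$. This is precisely where the axioms of $\C$ are consumed and where Lemma~\ref{lem:21} does the real work, since it identifies the abstract $\trees(n,m)$ with concrete combinatorial trees on which any two decompositions differ by a finite sequence of the moves codified in associativity, interchange and equivariance. Granting this, $\bar h$ is manifestly a polyfunctor satisfying $U\bar h \circ \eta_X = h$; and uniqueness is immediate, because by Lemma~\ref{lem:21} the image $\eta_X(X)$ together with the identities generates $F_\PP X$ under composition and exchange, so any polyfunctor restricting to $h$ along $\eta_X$ is forced to coincide with $\bar h$.
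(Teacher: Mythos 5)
Your proposal is correct and follows essentially the route the paper intends: the paper offers no proof of Proposition~\ref{prop:multicat} beyond declaring it ``straightforward, if tedious'' given Lemma~\ref{lem:21} and pointing to the free-operad constructions of Markl and of Kock, and what you reconstruct---verifying the polycategory axioms via the colimit description of the composite shapes, then establishing the adjunction to $U \colon \cat{Polycat} \to \psh[\PP]$ by inductively evaluating each tree in $\C$, with well-definedness of that evaluation as the tedious core---is precisely the argument those references supply. One small calibration: Lemma~\ref{lem:21} by itself gives only the \emph{existence} of a decomposition of each labelled polycategorical tree, not the comparison of two decompositions, so the invariance of evaluation under the choice of edge removed (the step you pass over with ``granting this'') is a separate, routine induction in which the associativity, interchange and equivariance axioms of $\C$ are actually consumed---this is exactly the tedium the paper elects to omit, and attributing it to Lemma~\ref{lem:21} alone slightly overstates what that lemma proves.
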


We may proceed in a similar manner to obtain explicit descriptions of
the monads for properads and for \textsc{prop}s on $\psh[\PP]$. In the
case of properads, we generalise the labelled polycategorical trees to
labelled \emph{properadic graphs}, by modifying clause (b) of
Definition~\ref{def:42} so as to allow for pushouts of the form
\begin{equation*}
  \cd{
    {\yoneda_\star + \dots + \yoneda_\star} \ar[r]^-{\spn{r^X_i,
        \dots, r^X_{i+k}}} \ar[d]_{\spn{\ell^X_j,
        \dots, \ell^X_{j+k}}} &
    {\abs{X}} \ar[d]^{u} \\
    {\abs{Y}} \ar[r]_-{v} & 
    \abs{Y \ifcomp{J}{I} X} \rlap{ .}
  }
\end{equation*}
We may now modify Lemma~\ref{lem:21} to characterise the properadic
graphs by requiring $G_X$ to be acyclic as a \emph{directed} graph,
though still connected as an \emph{undirected} graph. With this
modification, Proposition~\ref{prop:multicat} carries through, to give
an explicit description of the monad for properads on $\psh[\PP]$.

\textsc{Prop}s may almost be treated in the same way. We can augment
Definition~\ref{def:43} suitably to obtain the class $\G$ of
\emph{graphs for }\textsc{prop}s, and then modify Lemma~\ref{lem:21}
to obtain a combinatorial characterisation of these graphs; in this
case $G_X$ need only be acyclic as a directed graph, as
in~\cite{Joyal1991The-geometry}. However, the formula for the morphism
sets of the free \textsc{prop} is not quite given by the obvious
adaptation of~\eqref{eq:1}. This is essentially because the axioms for
a $\textsc{prop}$ are susceptible to the well-known
\emph{Eckmann--Hilton argument}~\cite{EckmannHilton}; among other
things, this implies that for any \textsc{prop} $\C$ and any
$f, g \in \C(\ ; \ )$, we have
$g \icomp{\emptyset}{\emptyset} f = f \icomp{\emptyset}{\emptyset} g$:
\begin{equation*}
  \begin{tikzpicture}
    \node[rectangle,draw,minimum height=3em,minimum width=3em] (a) {$f$} ;
    \node[rectangle,draw,minimum height=3em,minimum width=3em] (b) at (2,0) {$g$} ;
  \end{tikzpicture}
  \qquad = \qquad
  \begin{tikzpicture}
    \node[rectangle,draw,minimum height=3em,minimum width=3em] (a) {$g$} ;
    \node[rectangle,draw,minimum height=3em,minimum width=3em] (b) at (2,0) {$f$} ;
  \end{tikzpicture}\quad \rlap{ .}
\end{equation*}

Since the shape of the composition just depicted is encoded by the
graph for \textsc{prop}s
$R = \spn{0,0} \ifcomp{\emptyset}{\emptyset} \spn{0,0} \in \G(0,0)$
with underlying presheaf $\abs{R} = \yoneda_{(0,0)}+\yoneda_{(0,0)}$,
the formula for the free \textsc{prop} $F_\PP X$ on a polygraph $X$
must differ from~\eqref{eq:1} in identifying, among other things, the
pair of elements $\spn{f,g} \colon \abs R \rightarrow X$ and
$\spn{g,f} \colon \abs R \rightarrow X$ in $(F_\mathsf{P}X)(\ ;\ )$
for any $f,g \in X(\ ;\ )$. This is an instance of a more general
phenomenon: each graph for \textsc{prop}s $T \in \G(n,m)$ may admit a
non-trivial group $\aut_T$ of label-preserving automorphisms
(permuting unlabelled isomorphic connected components), and the
construction of the free \textsc{prop} must quotient out by the action
of these automorphisms.
\begin{Prop}
  The free \textsc{prop} $F_{\PP} X$ on $X \in \psh[\PP]$ has object
  set $X(\star)$ and morphism sets
  $(F_\PP X)(v_1, \dots, v_n; w_1, \dots, w_m)$ given by
  \begin{equation*}
    \sum_{T \in \G(n,m)} \{\,f \colon \abs{T} \rightarrow X \text{ in } \psh[\PP] : f(\ell^T_i) = v_i \text{, } f(r^T_j) = w_j\}_{/ \aut_{T}}
  \end{equation*}
  with remaining structure defined analogously to
  Proposition~\ref{prop:multicat} above.
\end{Prop}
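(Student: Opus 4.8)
The plan is to follow the proof of Proposition~\ref{prop:multicat} almost verbatim, isolating the three features that are genuinely new to the \textsc{prop} case: the empty morphism, the juxtaposition operation~\eqref{eq:39}, and above all the quotient by $\aut_T$. Write $\Phi(X)$ for the candidate structure defined by the displayed formula, with object set $X(\star)$. On the properad operations—identity, composition along sequences of wires, and exchange—I would simply reuse the descriptions established for properads in the discussion following Proposition~\ref{prop:multicat}, now read through the \textsc{prop} analogue of Lemma~\ref{lem:21}; the empty morphism is the class of $(\varnothing, {!})$ on the empty graph, and the juxtaposition of $(T,f)$ and $(S,g)$ is $(T \sqcup S, [f,g])$, using that $\G$ is closed under disjoint union. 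The first task is to check that these operations descend to the $\aut_T$-quotients and that $\Phi(X)$ satisfies the \textsc{prop} axioms.

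The axioms inherited from the properad structure go through exactly as before, since the relevant composites are computed from the same pushouts~\eqref{eq:37}; for properads $\aut_T$ is trivial (connected graphs with rigidifying labellings), so there the formula reduces to the unquotiented properad formula, and nothing is lost. The genuinely new axioms are those forcing commutativity of juxtaposition, for instance $g \icomp{\emptyset}{\emptyset} f = f \icomp{\emptyset}{\emptyset} g$ for $f,g \in \C(\ ;\ )$. In $\Phi(X)$ these two composites are the classes of $(R,[f,g])$ and $(R,[g,f])$ for $R = \spn{0,0} \ifcomp{\emptyset}{\emptyset} \spn{0,0}$; the component-swapping automorphism of $R$ lies in $\aut_R$ and carries $[f,g]$ to $[g,f]$, so the two classes coincide. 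More generally, each such identity is witnessed by an element of $\aut_T$ permuting isomorphic connected components, so the quotient is precisely what the \textsc{prop} axioms demand—and, as the universal property will confirm, no more.

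To exhibit $\Phi(X) \cong F_\PP X$ I would verify the universal property directly. The unit $\eta_X \colon X \to U\Phi(X)$ is the identity on $X(\star)$ and sends an edge $e \in X(n,m)$ to the class of $(\spn{n,m}, \bar e)$, where $\bar e \colon \yoneda_{(n,m)} \to X$ names $e$ by Yoneda (and $\aut_{\spn{n,m}}$ is trivial, so this is unambiguous). Given a \textsc{prop} $\C$ and a polygraph map $h \colon X \to U\C$, the extension $\hat h \colon \Phi(X) \to \C$ sends $(T,f)$ to the composite in $\C$ obtained by interpreting each vertex of $T$ as the $\C$-morphism named by $h \circ f$ and contracting along the edges of $T$ according to its wiring. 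Uniqueness is then immediate: by the \textsc{prop} analogue of Lemma~\ref{lem:21}, every $T \in \G(n,m)$ is generated from the basic shapes $\spn{n',m'}$ under the \textsc{prop} operations, so $\hat h$ is forced on generators and hence everywhere.

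The main obstacle is the well-definedness of $\hat h$, where two points must be settled. First, its value must not depend on the chosen decomposition of $T$ into basic operations; this is the \textsc{prop} counterpart of the argument implicit in Proposition~\ref{prop:multicat}, and it goes through because any two decompositions are related by the associativity, interchange, and equivariance axioms, which hold in $\C$. Second—and this is the genuinely new difficulty—$\hat h$ must be invariant under the action of $\aut_T$, i.e.\ replacing $f$ by $f \circ \theta$ for $\theta \in \aut_T$ must not change the resulting composite. This is exactly a general Eckmann--Hilton statement: a label-preserving automorphism permuting isomorphic unlabelled connected components of $T$ can be realised as a finite sequence of transpositions of juxtaposed subdiagrams, each of which leaves the composite fixed by the commutativity of juxtaposition in $\C$. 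Checking that every $\theta \in \aut_T$ decomposes in this way, compatibly with the recursive evaluation, is the crux of the argument; once it is in hand, the remaining bookkeeping—that $\hat h$ preserves all \textsc{prop} operations and that $\eta_X$ is natural—is routine, following Proposition~\ref{prop:multicat}.
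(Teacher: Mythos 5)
Your proposal is correct and takes essentially the route the paper intends: the paper offers no separate proof of this proposition, deriving it from the preceding discussion—the modified graph class $\G$, the Eckmann--Hilton argument, and the observation that $\aut_T$ consists of permutations of isomorphic \emph{unlabelled} (hence closed) connected components—which are exactly the ingredients your plan develops on top of the template of Proposition~\ref{prop:multicat}. At the crux you flag, just keep sharp that juxtaposition in a \textsc{prop} is commutative only for morphisms in $\C(\,;\,)$ (in general it commutes only up to exchange), so the transpositions realising $\theta \in \aut_T$ must be—and, since unlabelled components have empty boundary, are—swaps of \emph{closed} subdiagrams, whose centrality among arbitrary morphisms likewise follows from the equivariance axioms.
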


\subsection{An alternative presentation}
\label{sec:an-altern-pres}

There is another way of presenting polycategories, properads and
\textsc{prop}s as the algebras for a monad on a presheaf category.
Though it is further away from the graphical intuition, it is quite a
common approach in mathematical practice, and still fits into the
general framework we will develop; it therefore seems to be worth
describing here. The idea is to incorporate the exchange operations
into the underlying presheaf:

\begin{Defn}
  \label{def:35}
  Let $\PS$ be the category obtained from $\PP$ by adjoining arrows
  $\xi_{\varphi, \psi} \colon (n,m) \rightarrow (n,m)$ for each
  $\varphi \in \aut_n$ and $\psi \in \aut_m$, subject to the
  equations:
  \begin{align*}
    \xi_{\id_n, \id_m} &= \id_{(n,m)}&
    \xi_{\varphi_1, \psi_1} \circ \xi_{\varphi_2, \psi_2} &=
    \xi_{\varphi_1\varphi_2, \psi_2\psi_1}\\ 
    \xi_{\varphi, \psi} \circ \sigma_i &= \sigma_{\varphi(i)}
    & 
    \xi_{\varphi, \psi} \circ \tau_{\psi(j)} &= \tau_{j}\rlap{ .}
  \end{align*}
  A presheaf $X \in \psh[\PS]$ is called a \emph{symmetric polygraph}.
\end{Defn}
Like before, there are monadic forgetful functors from
$\cat{Polycat}$, $\cat{Properad}$ and $\cat{Prop}$ to $\psh[\PS]$. In
the case of polycategories, the left adjoint can now be described
using labelled \emph{symmetric} polycategorical trees. Let $\L_s(n,m)$
and $\T_s(n,m)$ be defined exactly like $\L(n,m)$ and $\T(n,m)$ in
Definition~\ref{def:43} but working over the category $\psh[\PS]$ of
symmetric polygraphs. Since each symmetric polygraph $\yoneda_{(n,m)}$
has \emph{free} action by $\aut_n \times \aut_m^\mathrm{op}$, the same
will be true of the underlying symmetric polygraph of any
$T \in \T_s(n,m)$; in fact, these $T$'s are precisely the
images\footnote{Though note that non-isomorphic elements of $\T(n,m)$
  may become isomorphic in $\T_s(n,m)$.} of the non-symmetric trees in
$\T(n,m)$ under the free functor $\psh[\PP] \rightarrow \psh[\PS]$.
Just as we saw when considering graphs for \textsc{prop}s in the
previous section, symmetric labelled trees $T \in \T_s(n,m)$ may admit
non-trivial groups $\aut_T$ of label-preserving automorphisms, and the
construction of the free polycategory must quotient out by these in
the same way.

\begin{Prop}
  \label{prop:22}
  The free polycategory $F_{\PS} X$ on $X \in \psh[\PS]$ has object
  set $X(\star)$ and morphism sets
  $(F_\PS X)(v_1, \dots, v_n; w_1, \dots, w_m)$ given by
  \begin{equation*}
    \sum_{T \in \T_s(n,m)} \{\,f \colon \abs{T} \rightarrow X \text{ in } \psh[\PS] : f(\ell^T_i) = v_i \text{, } f(r^T_j) = w_j\}_{/ \aut_{T}}
  \end{equation*}
  with remaining structure defined analogously to
  Proposition~\ref{prop:multicat} above.
\end{Prop}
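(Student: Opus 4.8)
The plan is to verify directly that the polycategory $GX$ defined by the displayed formula, equipped with a suitable unit, has the universal property of $F_{\PS}X$; this runs in close parallel to the proof of Proposition~\ref{prop:multicat}, the only genuinely new ingredient being the quotient by $\aut_T$, which is handled exactly as in the construction of the free \textsc{prop} sketched above. As a guiding sanity check, note that restriction $R \colon \psh[\PS] \to \psh[\PP]$ along $\PP \hookrightarrow \PS$ satisfies $R \circ U_{\PS} = U_{\PP}$, since discarding the $\xi_{\varphi,\psi}$-action of a polycategory's underlying symmetric polygraph returns its polygraph; passing to left adjoints gives $F_{\PS} \circ L = F_{\PP}$, where $L \dashv R$ is the free functor $\psh[\PP] \to \psh[\PS]$. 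Since $L$ carries each $\abs{T}$ with $T \in \T(n,m)$ to the underlying symmetric polygraph of the corresponding element of $\T_s(n,m)$, the formula of Proposition~\ref{prop:multicat} is forced to reappear over $\psh[\PS]$, with the $\aut_T$-quotient now doing the bookkeeping that, on the non-symmetric side, was absorbed into the rigidity of the trees in $\T(n,m)$.

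First I would equip $GX$ with its polycategory structure. Composition $(S,f) \icomp{j}{i} (T,g) \defeq (S \ifcomp{j}{i} T,\, f \ifcomp{j}{i} g)$ is defined through the pushout~\eqref{eq:37} formed now in $\psh[\PS]$, exactly as in Proposition~\ref{prop:multicat}; the identity is the class of $(\id, v)$ and the exchange is $\psi \cdot (T,f) \cdot \varphi = (\psi \cdot T \cdot \varphi, f)$. The new point is that each operation must descend to the $\aut_T$-quotients. For the identity and each generator this is automatic: since the action of $\aut_n \times \aut_m^{\mathrm{op}}$ on $\yoneda_{(n,m)}$ is free, $\id$ and each $\spn{n,m}$ have trivial $\aut_T$. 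For composition one checks that any $\alpha \in \aut_S$ and $\beta \in \aut_T$ fix the glued vertices $\ell^S_j$ and $r^T_i$, hence $\alpha \sqcup \beta$ descends through the pushout to a label-preserving automorphism of $S \ifcomp{j}{i} T$; thus replacing $(f,g)$ by $(f\alpha, g\beta)$ leaves the class of $(S \ifcomp{j}{i} T,\, f \ifcomp{j}{i} g)$ unchanged, so the operation is well defined on quotients.

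Next I would construct the unit $\eta_X \colon X \to U_{\PS}(GX)$. By Yoneda an element $x \in X(n,m)$ is a map $\name{x} \colon \yoneda_{(n,m)} = \abs{\spn{n,m}} \to X$, and $\eta_X$ sends $x$ to the class of $(\spn{n,m}, \name{x})$, which lands in $(GX)(s_1 x, \dots, s_n x;\, t_1 x, \dots, t_m x)$ precisely because $\name{x}(\yoneda_{\sigma_i}) = s_i x$ and $\name{x}(\yoneda_{\tau_j}) = t_j x$. Naturality against the $\xi_{\varphi,\psi}$ then amounts exactly to the defining equations of $\PS$ in Definition~\ref{def:35} relating $\xi_{\varphi,\psi}$ to the $\sigma_i$ and $\tau_j$, matched against the exchange clause of $GX$. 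Universality requires, for every polycategory $\C$ and symmetric-polygraph map $h \colon X \to U_{\PS}\C$, a unique polyfunctor $\bar h \colon GX \to \C$ with $U_{\PS}(\bar h)\,\eta_X = h$; one defines $\bar h$ on objects by $h$ and on a class $[(T,f)]$ by \emph{evaluating} the tree $T$ in $\C$, sending each vertex through $hf$ to a morphism of $\C$ and composing and permuting these according to the combinatorics of $T$. Uniqueness is immediate since, by Definition~\ref{def:43}, every element of $\T_s(n,m)$ is built from the $\spn{n,m}$ under the operations (a)--(c), all of which a polyfunctor preserves.

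The main obstacle is the well-definedness of this tree evaluation, which has two parts. The first is independence of the order in which the pushouts (a)--(b) are performed; this is the familiar coherence statement and follows from the associativity, interchange and action axioms of Definition~\ref{def:44}, being the symmetric analogue of the bracketing-independence used implicitly in Proposition~\ref{prop:multicat}. The second, and genuinely new, difficulty is invariance under $\aut_T$: a label-preserving $\gamma \in \aut_T$ permutes isomorphic unlabelled subtrees of $T$, and one must check that reindexing the evaluation along $\gamma$ returns the same morphism of $\C$. This is exactly what the equivariance-of-composition axiom, together with the action axioms, guarantees, since $\gamma$ is realised by the exchange action of $\C$ absorbing the induced permutation. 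Once both invariances are in hand, $\bar h$ is a well-defined polyfunctor; its compatibility with identities, composition and exchange follows because the structure on $GX$ was defined through the very same pushouts and relabellings, completing the verification of the universal property.
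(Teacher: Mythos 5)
Your proposal is correct and is essentially the proof the paper intends but leaves unwritten: the paper derives Proposition~\ref{prop:22} from the discussion preceding it (symmetric trees as images of the non-symmetric ones under the free functor $\psh[\PP] \to \psh[\PS]$, with the quotient by $\aut_T$ absorbing the label-preserving automorphisms) together with the same adaptation of Proposition~\ref{prop:multicat}---itself delegated to Markl and Kock---that you carry out directly, and your handling of the two well-definedness issues (descent of the operations to the $\aut_T$-quotients via automorphisms glued along the fixed leaf and root labels, and invariance of the tree evaluation under $\aut_T$ via the equivariance-of-composition and action axioms) matches that outline exactly. The only slip is terminological: in defining $\bar h$ you should evaluate each \emph{edge} of $T$ through $hf$ (the edges of a polygraph are the boxes, hence the morphisms of $\C$), the vertices being the wires carrying objects.
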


Exactly the same considerations apply to the cases of properads and
\textsc{prop}s; note that, in the properadic case, the description of
the free properad monad so obtained is that of~\cite{Kock2016}.

\section{Familial functors and shapeliness}
\label{sec:famil-funct-shap}

Now that we have described various ``graphically specified''
structures as algebras for monads on presheaf categories, we begin our
attempts to obtain these monads via a notion of shapeliness. As in the
introduction, our approach will be to seek on the appropriate presheaf
category a \emph{universal} shapely monad $\mathsf U$ with ``exactly
one operation of each shape'', and to generate the monad encoding the
given structure as a suitable submonad of $\mathsf{U}$. In this
section, we look for $\mathsf U$ as a terminal object among
\emph{familially representable}, or more shortly \emph{familial},
endofunctors---ones which pointwise are coproducts of representables.
While this turns out not quite to work, the techniques we develop will
be crucial to our subsequent efforts.

\subsection{Linear operations and familial functors}
\label{sec:line-oper}

The key concept underlying the notion of familial functor is that of a
\emph{linear operation}.

\begin{Defn}
  \label{def:2}
  Given a functor $F \colon \A \rightarrow \B$ and objects $A \in \A$
  and $B \in \B$, an \emph{$F$-operation of input arity $A$ at stage
    $B$} is a map $t \colon B \rightarrow FA$. An $F$-operation $t$ is
  \emph{linear} if it is initial in its connected component of the
  comma category $B \downarrow F$.
\end{Defn}

An operation $t \colon B \rightarrow TA$ of a monad $\mathsf T$ on
$\A$ corresponds to a family of interpretation functions
$\dbr{t} \colon \A(A, X) \rightarrow \A(B,X)$, one for each
$\mathsf T$-algebra $(X,x)$; maps of $B \downarrow T$ account for
reindexing such $\mathsf T$-operations so as to act only on part of
their input arity, so that linearity expresses the idea of an
operation which ``consumes all its input arity''.

\begin{Lemma}
  \label{lem:5}
  An operation $t \colon B \rightarrow FA$ is linear if and only if
  for every square of the following form, there is a unique
  $h \colon A \rightarrow A'$ with $Fh.t = u$; it then follows also
  that $fh = g$.
  \begin{equation}\label{eq:2}
    \cd[@C-0.5em]{
      {B} \ar[r]^-{u} \ar[d]_{t} &
      {FA'} \ar[d]^{Ff} \\
      {FA} \ar[r]^-{Fg} \ar@{.>}[ur]|-{Fh}
      &
      {FA''}
    }
  \end{equation}
\end{Lemma}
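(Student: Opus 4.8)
The plan is to unwind the definition of linearity and match it against the lifting property in the statement. Recall that the objects of the comma category $B \downarrow F$ are pairs $(A, t)$ with $t \colon B \to FA$, and that a morphism $(A, t) \to (A', t')$ is a map $h \colon A \to A'$ with $Fh \cdot t = t'$; thus \emph{linear} means that $(A,t)$ is initial in its connected component, i.e.\ it admits a unique morphism to every object lying in the same connected component. I would prove the two implications separately.

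For the ``only if'' direction, assume $t$ is linear and take a commuting square as in~\eqref{eq:2}, so that $Ff \cdot u = Fg \cdot t =: w$. Then $g \colon (A,t) \to (A'', w)$ and $f \colon (A', u) \to (A'', w)$ are both morphisms of $B \downarrow F$, so $(A,t)$, $(A', u)$ and $(A'', w)$ all lie in one connected component. Initiality of $(A,t)$ then yields a unique $h \colon A \to A'$ with $Fh \cdot t = u$, which is exactly the required lift; and since $fh$ and $g$ are both morphisms $(A,t) \to (A'', w)$, uniqueness of maps out of the initial object forces $fh = g$.

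For the ``if'' direction I would first record that uniqueness of the lift is automatic: given any two maps $h_1, h_2 \colon (A,t) \to (A', t')$, applying the hypothesis with $u = t'$, $f = \id_{A'}$ and $g = h_1$ produces a \emph{unique} $h$ with $Fh \cdot t = t'$, whence $h_1 = h_2$. It remains to produce, for each $(A', t')$ in the connected component of $(A,t)$, at least one morphism from $(A,t)$. I would argue by induction along a zig-zag connecting the two objects, showing that the class $S$ of objects admitting a map from $(A,t)$ is closed under the two kinds of zig-zag step. A forward step $k \colon (A', t') \to (A'', t'')$ is handled by composition. A backward step, given by $k \colon A'' \to A'$ with $Fk \cdot t'' = t'$ together with an existing $h \colon A \to A'$ with $Fh \cdot t = t'$, is handled by the hypothesis: apply~\eqref{eq:2} with the role of $A'$ played by $A''$, taking $u = t''$, $f = k$ and $g = h$; the square commutes because $Fk \cdot t'' = t' = Fh \cdot t$, so we obtain $h'' \colon A \to A''$ with $Fh'' \cdot t = t''$, i.e.\ a morphism $(A,t) \to (A'', t'')$. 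Since $S$ contains $(A,t)$ and is closed under both steps, it contains the whole connected component, so $(A,t)$ is initial there.

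The only genuinely delicate point is this backward step: the content of the lemma is precisely that the square condition lets one ``turn around'' a morphism that points the wrong way along the zig-zag, and the main thing to get right is the bookkeeping of which object of~\eqref{eq:2} plays which role (here $A''$, not $A'$, is the new target). Everything else is formal manipulation in the comma category.
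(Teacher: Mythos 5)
Your proof is correct. Note that the paper itself gives no argument here: its ``proof'' of Lemma~\ref{lem:5} is a bare citation to Proposition~0 of Diers' \emph{Spectres}, so your self-contained argument supplies exactly what the paper delegates to the reference. The structure is the natural one and all the delicate points check out: in the ``only if'' direction, the cospan $g \colon (A,t) \to (A'',w) \leftarrow (A',u) \colon f$ with $w = Ff\cdot u = Fg\cdot t$ correctly places all three objects in one component, so initiality yields the unique $h$, and $fh = g$ follows since both are maps $(A,t) \to (A'',w)$ out of the initial object; in the ``if'' direction, your observation that uniqueness is automatic (via the square with $u = t'$, $f = \id_{A'}$, $g = h_1$, whose commutativity is just the comma-category condition $Fh_1\cdot t = t'$) is right, and the zig-zag induction handles the genuinely nontrivial backward step correctly — the square with top edge $t''$, right edge $Fk$ and bottom edge $Fh$ commutes precisely because $Fk\cdot t'' = t' = Fh\cdot t$, producing the needed map $(A,t) \to (A'',t'')$. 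One could shave the argument slightly by noting that the uniqueness clause of the hypothesis depends only on the top edge $u$ and not on $f,g$, so any two morphisms $(A,t) \to (A',t')$ coincide as soon as \emph{some} commuting square with top edge $t'$ exists, but this is a cosmetic remark; your bookkeeping of which vertex of~\eqref{eq:2} plays which role is exactly right, and that is where a careless proof would go wrong.
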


\begin{proof}
  This is~\cite[Proposition~0]{Diers1978Spectres}.
\end{proof}

Now a \emph{familial} functor is one whose operations are all
reindexings of linear ones. In giving the definition, we say that
\emph{$Y$ covers $X$} if there is a map $Y \rightarrow X$.
 
\begin{Defn}
  \label{def:3}
  A functor $F \colon \A \rightarrow \B$ is \emph{familial at stage
    $B \in \B$} if each operation in $B \downarrow F$ is covered by a
  linear one; a transformation $\alpha \colon F \Rightarrow G$ is
  \emph{familial at stage $B$} if $F$ and $G$ are so, and the induced
  functor $B \downarrow F \rightarrow B \downarrow G$ preserves linear
  operations. We write simply \emph{familial} to mean ``familial at
  every stage''.
\end{Defn}

Familial functors were introduced by Diers~\cite{Diers1978Spectres};
his terminology is that familial functors are those ``having a left
multiadjoint''. Our name is a shortening of the term ``familially
representable'' used---for the special case
$\B = \cat{Set}$---in~\cite{Johnson1992Algebra}.

\begin{Lemma}
  \label{lem:8}
  A functor $F \colon \A \rightarrow \B$ is familial at stage
  $B \in \B$ if and only if the functor
  $\B(B, F\thg) \colon \A \rightarrow \cat{Set}$ is a (possibly large)
  coproduct of representables.
\end{Lemma}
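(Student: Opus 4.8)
The plan is to reduce this to a standard characterisation of the $\cat{Set}$-valued functors that are coproducts of representables, phrased in terms of their category of elements. The first step is to notice that the comma category $B \downarrow F$ is exactly the category of elements $\el(\B(B, F\thg))$: an object is a pair $(A, t)$ with $t \colon B \to FA$, that is, an element $t \in \B(B, FA)$, and a morphism $(A, t) \to (A', t')$ is a map $h \colon A \to A'$ with $Fh \circ t = t'$, that is, with $\B(B, Fh)(t) = t'$. Under this identification, the linear operations are precisely the objects that are initial in their connected component, and familiality at stage $B$ asserts that every object of $\el(\B(B, F\thg))$ is covered by such an initial object.

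The core of the argument is then an auxiliary fact that I would isolate as a lemma: a functor $G \colon \A \to \cat{Set}$ is a (possibly large) coproduct of representables if and only if every connected component of $\el(G)$ has an initial object. For the forward direction I would use that the category-of-elements construction preserves coproducts, so that $\el(\coprod_i \A(A_i, \thg)) \cong \coprod_i \el(\A(A_i, \thg))$; since $\el(\A(A_i, \thg))$ is the coslice $A_i \downarrow \A$, which is connected with initial object $(A_i, \id_{A_i})$, each summand is a connected component carrying an initial object. For the converse, I would index the connected components of $\el(G)$ by a (possibly large) set $I$ and choose an initial object $(A_i, x_i)$ in each. Initiality gives, for every $(A, y)$ in the $i$-th component, a unique $h \colon A_i \to A$ with $G(h)(x_i) = y$; assembling these yields for each $A$ a bijection $\coprod_i \A(A_i, A) \cong GA$, $(i, h) \mapsto G(h)(x_i)$, which one checks is natural in $A$, whence $G \cong \coprod_i \A(A_i, \thg)$.

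Combining the two steps finishes the proof. Familiality at stage $B$ says that every operation in $B \downarrow F$ is covered by a linear one; since a linear operation is initial in its connected component and any morphism of $B \downarrow F$ stays within a single component, the covering linear operation is automatically initial in the component of the object it covers. Hence familiality at stage $B$ holds exactly when every connected component of $B \downarrow F = \el(\B(B, F\thg))$ has an initial object, which by the auxiliary lemma is equivalent to $\B(B, F\thg)$ being a coproduct of representables.

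I expect the only delicate points to be bookkeeping rather than conceptual: verifying naturality of the bijection in the converse half of the auxiliary lemma, and keeping track of the fact that the set $I$ of connected components may be a proper class, so that the relevant coproducts are understood in the ``possibly large'' sense already built into the statement. Everything else is a direct unwinding of the definitions of linear operation and of the comma category.
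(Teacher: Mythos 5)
Your proposal is correct and is essentially the paper's own argument spelled out in more detail: the paper likewise observes that familiality at stage $B$ says $B \downarrow F$ is a coproduct of categories with initial objects, and translates the resulting unique-factorisation property of the chosen initial operations $t_i \in \B(B, FA_i)$ directly into the isomorphism $\B(B, F\thg) \cong \sum_i \A(A_i,\thg)$. Your identification of $B \downarrow F$ with $\el(\B(B,F\thg))$ and your auxiliary lemma about categories of elements are just a careful unpacking of that same reasoning, with all the bookkeeping (naturality, possible largeness of the index) handled correctly.
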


\begin{proof}
  $F$ is familial at stage $B$ just when $B \downarrow F$ is a
  coproduct of categories with initial objects. This is to say that
  there is an $I$-indexed family of elements
  $\{t_i \in \B(B, FA_i) : i \in I\}$ such that any $f \in \B(B,FA)$
  factors as $f = F\bar f \circ t_i$ for a unique $i \in I$ and
  $\bar f \colon A_i \rightarrow A$; or equally, that
  $\B(B,F\thg) \cong \sum_i \A(A_i, \thg)$.
\end{proof}

\subsection{Pointwise familiality}
\label{sec:pointw-famil-funct}

We will be interested in familial endofunctors of presheaf categories;
later, we will need more general familial functors with
\emph{codomain} a presheaf category. The most relevant kind of
familiality for these is:

\begin{Defn}
  \label{def:1}
  A functor $F \colon \A \rightarrow \psh$ or transformation
  $\alpha \colon F \Rightarrow G \colon \A \rightarrow \psh$ is
  \emph{pointwise familial} if it is familial at all representable
  stages $\yoneda_c \in \psh$; while $F$ is called \emph{small} if
  $\yoneda_c \downarrow F$ has a mere \emph{set} of connected
  components for each $c \in \C$. We write $\fampt(\A, \psh)$ for the
  category of small pointwise familial functors and pointwise familial
  transformations.
\end{Defn}

By Lemma~\ref{lem:8}, $F \colon \A \rightarrow \psh$ is (small)
pointwise familial just when each functor
$(F\thg)c \in [\A, \cat{Set}]$ can be expressed as
\begin{equation}
  \label{eq:41}
  (F\thg)c \cong \textstyle\sum_{t \in Sc} \A(Et, \thg)
\end{equation}
for some set $Sc$ and family of objects $(Et \in \A)_{t \in Sc}$. So,
for example, the ``free polycategory'' endofunctor on the category
$\psh[\PP]$ of polygraphs as in Proposition~\ref{prop:multicat} is
pointwise familial, but the corresponding endofunctor on the category
$\psh[\PS]$ of symmetric polygraphs is \emph{not} so, as it involves
not just coproducts of representables but also quotients by group
actions. We will be able to handle the latter example when we consider
\emph{analytic} functors in the following section.

We now explain how~\eqref{eq:41} allows us to give a compact
represention for pointwise familial functors. Such a functor $F$ is
determined to within isomorphism by the sets $Sc$ and objects
$(Et \in \A)_{t \in Sc}$ as in~\eqref{eq:41}, as $c$ ranges over
$\ob(\C)$, together with information about how these transform under
each $(F\thg)f \colon (F\thg)d \rightarrow (F\thg)c$. More precisely,
$(F\thg)f$ may be identified with a transformation
\begin{equation}\label{eq:42}
  \textstyle\sum_{t \in Sd} \A(Et, \thg) \rightarrow \sum_{s \in Sc} \A(Eu, \thg)\rlap{ ;}
\end{equation}
and, by the Yoneda lemma, to specify this is equally to specify a
function $Sf \colon Sd \rightarrow Sc$, together with a family of maps
$(E(Sf(t)) \rightarrow E(t) \in \A)_{t \in Sd}$.

Functoriality in $f$ of the maps~\eqref{eq:42} tell us that these data
determine, firstly, a presheaf $S \in \psh$; and secondly, a functor
$E \colon \el S \rightarrow \psh$. Here $\el S$ is the
\emph{category of elements} of the presheaf $S$, whose object-set is
given by $\sum_{c \in \C} Sc$ and whose maps from $s \in Xc$ to
$t \in Xd$ are maps $f \in \C(c,d)$ with $s = (Sf)(t)$.
 
In fact, giving $S$ and $E$ of this form is equivalent to giving a
small pointwise familial functor $\A \rightarrow \psh$; this can be
made precise by constructing an equivalence between $\fampt(\A, \psh)$
and the following category, which is the ``$\C\text-\cat{Fam}(\A)$''
of~\cite[Definition~2.10]{Weber2001Symmetric}.

\begin{Defn}
  \label{def:14}
  For any $\A$ and small $\C$, the category $\el_\C /\!\!/\, \A$ has
  as objects, pairs of a presheaf $S \in \psh$ and a functor
  $E \colon \el S \rightarrow \A$, and as maps
  $(S, E) \rightarrow (T, D)$, pairs of a presheaf map
  $p \colon S \rightarrow T$ and a natural isomorphism $\varphi$ of
  the form:
  \begin{equation}\label{eq:4}
    \cd[@-1em]{
      {\el S} \ar[rr]^-{\el p} \ar[dr]_-{E} & \ltwocello{d}{\varphi} &
      {\el T}\rlap{ .} \ar[dl]^-{D} \\ &
      {\A}
    }
  \end{equation}
\end{Defn}

A little care is necessary in order to extract the pair $(S,E)$ from a
small pointwise familial $F$. Observe first that \emph{choosing} an
isomorphism~\eqref{eq:41} is equivalent to \emph{choosing} a linear
operation in each connected component of $\yoneda_c \downarrow F$.
Having done so, $Sc$ can be taken to be the set of these chosen linear
operations, and $(Et \in \A)_{t \in Sc}$ to be the family of input
arities of these operations. Henceforth, we assume that each
$F \in \fampt(\A, \psh)$ is equipped with such choices of linear
operation; for any $t \in \yoneda_c \downarrow F$, we write $\tilde t$
for the chosen linear operation which covers it. In light of the
preceding discussion, it is now natural to define:

\begin{Defn}
  \label{def:5}
  The \emph{spectrum}~\cite[Definition~3]{Diers1978Spectres} of a
  small pointwise familial $F \colon \A \rightarrow \psh$ is the
  presheaf $S_F \in \psh$ given by:
  \begin{equation*}
    S_F(c) = \{ t \in \yoneda_c \downarrow F : \tilde t = t\}
    \qquad \text{and} \qquad
    S_F(f \colon d \rightarrow c) \colon t \mapsto \widetilde{t \yoneda_f}\rlap{ .}
  \end{equation*}
  The \emph{canonical diagram} of $F$ is the functor
  $D_F \colon \el S_F \rightarrow \psh \downarrow F$ with:
  \begin{align*}
    D_F(t) & = \cd{\yoneda_c \ar[d]^-{t} \\ FA} & \text{and} \quad \quad
    D_F(f \colon \widetilde{t\yoneda_f} \rightarrow t) \ \ & = \ \ 
    \cd{
      {\yoneda_c} \ar[r]^-{\yoneda_f} \ar[d]_{\widetilde{t\yoneda_f}} &
      {\yoneda_d} \ar[d]^{t} \\
      {FA'} \ar[r]^-{Fu_f} &
      {FA}\rlap{ ,}
    }
  \end{align*}
  where the lower right map is uniquely induced by linearity of
  $\widetilde{t\yoneda_f}$. The
  \emph{exponent}~\cite[Definition~7.1]{Weber2004Generic} of $F$ is the
  functor $E_F \colon \el S_F \rightarrow \A$ obtained by composing the
  canonical diagram with the second projection
  $\pi_2 \colon \psh \downarrow F \rightarrow \A$.
\end{Defn}

As elements of $S_F(c)$ are in bijection with connected components of
$\yoneda_c \downarrow F$, the presheaf $S_F$ is equally the colimit of
$F$; smallness is just what is needed to ensure this colimit exists.
In particular, smallness is vacuous when either $\A$ is small or $\A$
has a terminal object, and in the latter case, we may take $S_F = F1$.

\begin{Prop}
  \label{prop:1}
  The assignation $F \mapsto (S_F, E_F)$ is the action on objects of
  an equivalence of categories between $\fampt(\A, \psh)$ and
  $\el_\C /\!\!/\, \A$.
\end{Prop}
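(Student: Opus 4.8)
The plan is to promote the object-assignation $\Phi\colon F \mapsto (S_F, E_F)$ to a functor and then show it is an equivalence, by constructing a quasi-inverse $\Psi$ (giving essential surjectivity) and a Yoneda computation on hom-sets (giving full faithfulness). First I would define $\Phi$ on a pointwise familial transformation $\alpha \colon F \Rightarrow G$. Postcomposition with $\alpha$ sends an operation $t \colon \yoneda_c \to FA$ to $\alpha_A \circ t$, and by Definition~\ref{def:3} the induced functor $\yoneda_c \downarrow F \to \yoneda_c \downarrow G$ preserves linear operations. Hence for each chosen linear $t \in S_F(c)$ the operation $\alpha_{E_F t}\circ t$ is again linear; letting $p_\alpha(t) := \widetilde{\alpha_{E_F t}\circ t} \in S_G(c)$ be the distinguished linear operation in its connected component, both $p_\alpha(t)$ and $\alpha_{E_F t}\circ t$ are linear in the same component, so Lemma~\ref{lem:5} supplies a unique isomorphism $\varphi_{\alpha,t}\colon E_G(p_\alpha(t)) \to E_F(t)$ with $G(\varphi_{\alpha,t})\circ p_\alpha(t) = \alpha_{E_F t}\circ t$. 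Naturality of $\alpha$ in both variables then forces $p_\alpha$ to be a presheaf map $S_F \to S_G$ and $\varphi_\alpha$ to be a natural isomorphism of the shape in Definition~\ref{def:14}; functoriality of $\Phi$ is routine.

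For essential surjectivity I would read off $\Psi$ from the representation~\eqref{eq:41}. Given $(S,E)$ in $\el_\C /\!\!/\, \A$, set
\[
  (\Psi(S,E)\,X)(c) \;=\; \textstyle\sum_{t \in S(c)} \A(E(c,t), X),
\]
with functoriality in $c$ induced by the maps $E(d, Sf\,t) \to E(c,t)$ that $E$ assigns to the morphisms $(d, Sf\,t) \to (c,t)$ of $\el S$, exactly as in the discussion around~\eqref{eq:42}. By Lemma~\ref{lem:8} the functor $F = \Psi(S,E)$ is small and pointwise familial, its linear operations at stage $\yoneda_c$ being precisely the summand inclusions at the identities, indexed by $S(c)$. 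Choosing these as the distinguished linear operations of Definition~\ref{def:5} yields canonical isomorphisms $S_F \cong S$ and $E_F \cong E$, natural in $(S,E)$, so that $\Phi\Psi \cong \id$.

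Full faithfulness is the crux, and the cleanest route is again Yoneda. Writing $F$ and $G$ in the form~\eqref{eq:41}, a natural transformation $F \Rightarrow G$ decomposes over the coproduct summands, and by the Yoneda lemma a map $\A(E_F t, \thg) \Rightarrow \sum_{s} \A(E_G s, \thg)$ is the same as a choice of $p(t) \in S_G(c)$ together with a morphism $\psi_{c,t}\colon E_G(p(t)) \to E_F(t)$; naturality in the presheaf variable $c$ is exactly the statement that $p$ is a map $S_F \to S_G$ in $\psh$ and that $\psi$ is natural over $\el S_F$. The hard part will be identifying pointwise familiality of $\alpha$ with the condition that every $\psi_{c,t}$ be invertible: under the above identification $\alpha$ sends the distinguished operation $t$ to $G(\psi_{c,t})\circ p(t)$, which is linear precisely when $\psi_{c,t}$ is an isomorphism (the forward implication via preservation of linear operations, the converse because an operation isomorphic to a chosen linear one is itself linear by Lemma~\ref{lem:5}). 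Thus pointwise familial transformations $F \Rightarrow G$ correspond bijectively to pairs $(p, \varphi)$, i.e.\ to morphisms $(S_F, E_F) \to (S_G, E_G)$ in $\el_\C /\!\!/\, \A$, recovering exactly the $(p_\alpha, \varphi_\alpha)$ defined above. The only genuine subtlety to watch is the variance convention, ensuring that $\varphi_\alpha$ fills the triangle of Definition~\ref{def:14} in the direction $D\,\el p \Rightarrow E$; with that fixed, $\Phi$ is fully faithful and essentially surjective, hence an equivalence.
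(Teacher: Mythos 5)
Your proposal is correct and follows essentially the same route as the paper's proof: morphisms $\alpha$ are encoded by factoring each $\alpha_{E_F t}\circ t$ through the chosen linear cover $p(t)$ with comparison map $\varphi_t$, pointwise familiality is identified with invertibility of the $\varphi_t$ (the paper likewise uses that linear operations are closed under isomorphism and that maps between them are invertible), and essential surjectivity is established by the same explicit coproduct-of-representables construction $\sum_{t\in Sc}\A(Et,\thg)$ with the summand inclusions at identities as the chosen linear operations. The only cosmetic difference is that you treat the morphism-level definition and full faithfulness as two passes where the paper does them in one stroke.
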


\begin{proof}
  This is a generalisation of~\cite[Theorem~2.18]{Weber2001Symmetric},
  and we argue largely as there. First let
  $\alpha \colon F \Rightarrow G$ in $\fampt(\A, \psh)$. To give the
  transformation $\alpha$ is to give transformations
  $\alpha_c \colon (F\thg)c \Rightarrow (G\thg)c$ naturally in $c$;
  since by Lemma~\ref{lem:8} the functor $(F\thg)c$ is a coproduct of
  representables, giving each $\alpha_c$ is equivalent to giving the
  $G$-linear operation $\alpha_A.t \colon \yoneda_c \rightarrow GA$
  obtained by acting $\alpha$ on each chosen $F$-linear operation
  $t \colon \yoneda_c \rightarrow FA$. But $\alpha_A.t$ factorises as
  on the left in:
  \begin{equation}\label{eq:8}
    \cd[@-1em]{
      & {\yoneda_c\mathstrut} \ar[dl]_-{p(t) \,\defeq\, \widetilde{\alpha_A.t}} \ar[dr]^-{\alpha_A.t} \\
      {GA'} \ar[rr]^-{G\varphi_t} & &
      {GA}
    } \qquad \qquad \cd[@-1em]{
      {\el S_F} \ar[rr]^-{\el p} \ar[dr]_-{E_F} & \ltwocello{d}{\varphi} &
      {\el S_G} \ar[dl]^-{E_G} \\ &
      {\A}
    }
  \end{equation}
  using the chosen linear operations of $G$, and linearity of
  $\alpha_A.t$ is equivalent to each $\varphi_t$ being invertible:
  indeed, linear operations are closed under isomorphism in
  $\yoneda_c \downarrow G$, and any morphism between linear operations
  is invertible. Thus, to give the pointwise familial $\alpha$ is
  equally to specify for each chosen linear
  $t \in \yoneda_c \downarrow F$ a chosen linear
  $p(t) \in \yoneda_c \downarrow G$ together with an isomorphism
  $\varphi_t \colon E_G(p(t)) \rightarrow E_F(t)$. All this must be
  done naturally in $c$ so that to give $\alpha$ is equally to give a
  pair $(p, \varphi)$ as right above with $\varphi$ invertible.

  This defines $\fampt(\A, \psh) \rightarrow \el_\C /\!\!/\, \A$ on
  morphisms and simultaneously shows that it is fully faithful
  (functoriality follows from uniqueness in Lemma~\ref{lem:5}). It
  remains to prove essential surjectivity. Given $S \in \psh$ and
  $E \colon \el S \rightarrow \A$, define a functor
  $F \colon \A \rightarrow \psh$ by taking
  $(F\thg)c = \sum_{t \in S c} \A(Et, \thg)$ and taking
  $(F\thg)(f \colon d \rightarrow c)$ to be the unique natural
  transformation rendering commutative each diagram:
  \begin{gather*}
    \cd[@C+1.7em@-0.2em]{
      {\A(Et, \thg)} \ar[r]^-{\A(Ef, \thg)} \ar[d]_{\iota} &
      {\A(E(tf), \thg)} \ar[d]^{\iota} \\
      {\sum_{t \in S c} \A(Et, \thg)} \ar[r]^-{(F\thg)f} &
      {\sum_{u \in S d} \A(Eu, \thg)}\rlap{ .}
    }
  \end{gather*}
  By Lemma~\ref{lem:8}, $F$ is pointwise familial, and is moreover
  small since the coproduct defining $(F\thg)c$ is so; now by choosing
  the linear operations in $\yoneda_c \downarrow F$ to be those
  $\gamma_t \colon \yoneda_c \rightarrow FEt$ picking out the pairs
  $(t, 1_{Et})$, we have a bijection $S \rightarrow S_F$ sending $t$ to
  $\gamma_t$, which, since $E_F(\gamma_t) = Et$, commutes
  \emph{strictly} with the functors to $\A$.
\end{proof}

\subsection{Composition of familial functors}
\label{sec:comp-famil-funct}

In seeking a universal shapely monad among the class of familial
endofunctors, we must consider both composability and existence of a
terminal object; we start with composability. The following lemma
gives the properties of linear operations necessary to
establish our results.

\begin{Lemma}
  \label{lem:1}
  Let $F \colon \A \rightarrow \B$ and $G \colon \B \rightarrow \C$ be
  functors.
  \begin{enumerate} [(i)]
  \item \label{item:linear:composite} If $s \colon C \rightarrow GB$
    is $G$-linear and $t \colon B \rightarrow FA$ is $F$-linear, then
    the composite $Gt . s \colon C \rightarrow GB \rightarrow GFA$ is
    $GF$-linear.\vskip0.25\baselineskip
  \item The full subcategory
    $\cat{Lin}(\B \downarrow F) \subset \B \downarrow F$ on the linear
    operations is closed under pointwise colimits (ones created by the
    projection $\B \downarrow F \rightarrow \B \times \A$).
  \end{enumerate}
\end{Lemma}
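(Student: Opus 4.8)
The plan is to reduce both statements to the pointwise lifting criterion for linearity established in Lemma~\ref{lem:5}: an operation $t \colon B \to FA$ is linear precisely when, for every commuting square with left edge $t$, top edge $u \colon B \to FA'$, right edge $Ff$ and bottom edge $Fg$, there is a unique $h \colon A \to A'$ with $Fh.t = u$ (and then automatically $fh = g$). Verifying this criterion for the two operations at hand is in each case a matter of threading a test square through the given linear data.

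For part~(\ref{item:linear:composite}), I would check the criterion for $Gt.s$ by passing an arbitrary test square through the two given operations in turn. Given a commuting square for $GF$ with left edge $Gt.s \colon C \to GFA$, top edge $w \colon C \to GFA'$, right edge $GFf$ and bottom edge $GFg$, I first reinterpret it as a square for $G$ whose linear edge is $s \colon C \to GB$, with top edge $w$, right edge $G(Ff)$ and bottom edge $G(Fg.t) \colon GB \to GFA''$; its commutativity is exactly that of the original square. Linearity of $s$ then yields a unique $k \colon B \to FA'$ with $Gk.s = w$ and $Ff.k = Fg.t$. This last equation says precisely that $k$ fits into a square for $F$ with linear edge $t$, top edge $k$, right edge $Ff$ and bottom edge $Fg$, so linearity of $t$ produces a unique $h \colon A \to A'$ with $Fh.t = k$ and $fh = g$. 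Then $GFh.(Gt.s) = G(Fh.t).s = Gk.s = w$ gives existence, and uniqueness follows by running the two uniqueness clauses in sequence: any competitor $h'$ gives $G(Fh'.t).s = w$, so $Fh'.t = k$ by uniqueness of $k$, whence $h' = h$ by uniqueness of $h$.

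For part~(ii), I would take a diagram $(B_i, A_i, t_i)_i$ in $\cat{Lin}(\B \downarrow F)$ whose colimit in $\B \downarrow F$ is created by the projection, so that the colimit is $(B, A, t)$ with $B = \colim_i B_i$, $A = \colim_i A_i$, and $t \colon B \to FA$ the unique map satisfying $t.\beta_i = F\alpha_i.t_i$ for the colimit cocones $\beta_i, \alpha_i$. To see $t$ is linear, I again test against an arbitrary square with linear-candidate edge $t$, top edge $u \colon B \to FA'$, right $Ff$, bottom $Fg$. Precomposing with each $\beta_i$ gives a commuting square whose linear edge is $t_i$, so linearity of $t_i$ delivers a unique $h_i \colon A_i \to A'$ with $Fh_i.t_i = u.\beta_i$ and $fh_i = g\alpha_i$. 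The crux is that the $h_i$ assemble into a cocone under the diagram of $A_i$'s: for a connecting map $(\beta_\delta, \alpha_\delta)$ one computes, using the connecting-map identity $F\alpha_\delta.t_i = t_j.\beta_\delta$, that $F(h_j\alpha_\delta).t_i = Fh_j.t_j.\beta_\delta = u.\beta_j.\beta_\delta = u.\beta_i$, whence uniqueness in Lemma~\ref{lem:5} forces $h_j\alpha_\delta = h_i$. The resulting cocone induces $h \colon A \to A'$ with $h\alpha_i = h_i$; precomposing the identities $Fh.t.\beta_i = u.\beta_i$ and $fh.\alpha_i = g.\alpha_i$ with the colimit cocones and using their joint epimorphy yields $Fh.t = u$ and $fh = g$, and uniqueness of $h$ follows in the same way from uniqueness of the $h_i$.

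I expect the main obstacle to lie entirely in the bookkeeping of part~(ii): one must track carefully which colimit cocone is in play at each stage, and the one genuinely load-bearing point is that cocone-compatibility of the $h_i$ is not automatic but must be extracted from the \emph{uniqueness} half of Lemma~\ref{lem:5}. Part~(\ref{item:linear:composite}), by contrast, is a routine two-step diagram chase once the test square is correctly read first as a $G$-square and then as an $F$-square.
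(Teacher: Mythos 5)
Your proof is correct, and it is exactly the argument the paper intends: the paper's entire proof of Lemma~\ref{lem:1} reads ``A short calculation using Lemma~\ref{lem:5}'', and your two diagram chases---threading the test square first through $s$ and then through $t$ for part~(i), and extracting cocone-compatibility of the $h_i$ from the uniqueness clause of Lemma~\ref{lem:5} for part~(ii)---supply precisely that calculation. The one point you flag as load-bearing (that compatibility of the $h_i$ is not automatic but follows from uniqueness) is indeed the only non-routine step, and you handle it correctly.
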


\begin{proof}
  A short calculation using Lemma~\ref{lem:5}.
\end{proof}

Now, in order to show that the composite of $F\colon \A \to \B$ and
$G\colon \B \to \C$ is familial at stage $C$, we should like to take
the linear operations of the composite to be of the form $Gt.s$ for
$s$ and $t$ as in~\eqref{item:linear:composite} above. In order for
this to work, we need to be able to cover any operation $C \to GFX$ by
some operation of this form. This suggests that, if $GF$ is to be
familial at stage $C$, then the input arity of each $G$-linear
operation at stage $C$ should be a stage of familiality for $F$, as in
the following definition.

\begin{Defn}
  \label{def:18}
  Given $\A' \subset \A$ and $\B' \subset \B$ full replete
  subcategories, we say that $F \colon \A \rightarrow \B$ is
  \emph{$(\A',\B')$-familial} if it is familial at each stage
  $B \in \B'$ and each linear $t \in B \downarrow F$ has input arity
  in $\A'$. A transformation $\alpha \colon F \Rightarrow G$ between
  such functors is \emph{$(\A', \B')$-familial} if it is familial at
  every stage $B \in \B'$.
\end{Defn}

In this terminology, a familial functor $F \colon \A \rightarrow \B$
is equally $(\A, \B)$-familial, while a pointwise familial functor
$\A \rightarrow \psh$ is equally an $(\A, \yoneda \C)$-familial one. The
next result improves in very mild ways
on~\cite[p.~985]{Diers1978Spectres}
and~\cite[Corollary~5.15]{Weber2004Generic}.

\begin{Prop}
  \label{prop:10}
  If $F \colon \A \rightarrow \B$ and $G \colon \B \rightarrow \C$ are
  $(\A', \B')$- and $(\B', \C')$-familial, then their composite is
  $(\A', \C')$-familial, and has as linear operations at stage
  $C \in \C'$ precisely the composites
  $Gt.s \colon C \rightarrow GB \rightarrow GFA$ of $G$- and
  $F$-linear operations. The correspondingly familial transformations
  between these functors are likewise composable; in particular, there
  is a $2$-category $\cat{FAM}$ of categories, familial functors and
  familial transformations.
\end{Prop}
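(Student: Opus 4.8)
The plan is to build everything on Lemma~\ref{lem:1}(i), which already tells us that a composite $Gt.s$ of a $G$-linear $s$ and an $F$-linear $t$ is $GF$-linear; the real work is to show that such composites \emph{cover} all operations and \emph{exhaust} the linear ones at each stage $C \in \C'$. First I would prove familiality of $GF$ at a stage $C \in \C'$. Given any $GF$-operation $w \colon C \to GFA$, regard it as a $G$-operation of input arity $FA$ at stage $C$; since $G$ is familial at $C$, there is a $G$-linear $s \colon C \to GB$ covering it, say $w = Gg.s$ for some $g \colon B \to FA$. Because $G$ is $(\B',\C')$-familial, the arity $B$ lies in $\B'$, so $F$ is familial at stage $B$; regarding $g$ as an $F$-operation of arity $A$ at stage $B$, there is then an $F$-linear $t \colon B \to FA'$ covering it, say $g = Fh.t$ for some $h \colon A' \to A$, and $A' \in \A'$ since $F$ is $(\A',\B')$-familial. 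Now $Gt.s \colon C \to GFA'$ is $GF$-linear by Lemma~\ref{lem:1}(i), has arity $A' \in \A'$, and covers $w$, since $GFh.(Gt.s) = G(Fh.t).s = Gg.s = w$. This exhibits $GF$ as $(\A',\C')$-familial in the sense of Definition~\ref{def:18}.

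Next I would identify the linear operations at stage $C$. Every composite $Gt.s$ is linear by Lemma~\ref{lem:1}(i); conversely, for any $GF$-linear $w$ at stage $C$, the construction above produces a linear composite $Gt.s$ together with a map $Gt.s \to w$ in $C \downarrow GF$. As any morphism between linear, i.e.\ connected-component-initial, operations is invertible, $w \cong Gt.s$; hence the linear operations at stage $C$ are, up to isomorphism, precisely the composites $Gt.s$, as claimed.

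Finally, the transformations. For vertical composites and identities the induced functors on comma categories compose (resp.\ are identities) and so preserve linearity, whence familiality is immediate from Definition~\ref{def:3}; identity \emph{functors} are familial too, with identity operations as their linear operations. The substance is whiskering. For a familial $\alpha \colon F \Rightarrow F'$ and familial $G$, the induced functor $C \downarrow GF \to C \downarrow GF'$ sends a linear $Gt.s$ to $G(\alpha_{A'}.t).s$; since $\alpha$ is familial at the stage $B \in \B'$ carrying $t$, the operation $\alpha_{A'}.t$ is $F'$-linear, whence $G(\alpha_{A'}.t).s$ is $GF'$-linear by Lemma~\ref{lem:1}(i). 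Symmetrically, for familial $F$ and familial $\beta \colon G \Rightarrow G'$, naturality of $\beta$ rewrites the image of $Gt.s$ as $G't.(\beta_B.s)$, and $\beta_B.s$ is $G'$-linear because $\beta$ is familial at $C$; again Lemma~\ref{lem:1}(i) yields linearity. Hence both whiskerings, and therefore every horizontal composite $\beta * \alpha$ (a vertical composite of two whiskerings), are familial.

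Taking $\A' = \A$, $\B' = \B$, $\C' = \C$ throughout, these closure properties exhibit $\cat{FAM}$ as a sub-$2$-category of the $2$-category $\CAT$ of categories, functors and natural transformations, the associativity, unit and interchange axioms being inherited from $\CAT$. I expect the main obstacle to be the bookkeeping in the covering step: one must track arities through both comma categories and check at each stage that the relevant object lands in $\B'$ (resp.\ $\A'$), so that the next familiality hypothesis is applicable — the rest is routine manipulation of Lemmas~\ref{lem:5} and~\ref{lem:1}.
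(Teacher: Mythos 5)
Your proposal is correct and follows essentially the same route as the paper's proof: successive factorisations through a $G$-linear and then an $F$-linear operation (tracking arities through $\B'$ and $\A'$), Lemma~\ref{lem:1}(i) for linearity of the composite, invertibility of maps between linear operations for the converse, and naturality for the transformation part, which you merely spell out in more detail than the paper's one-line remark. The only cosmetic difference is your hedge ``up to isomorphism'' when identifying the linear operations: since $\A'$ is replete and an isomorphism $h$ with $GFh.(Gt.s) = w$ lets you absorb $h$ into the $F$-linear operation $Fh.t$, every $GF$-linear operation is literally of the form $Gt.s$, exactly as the statement claims.
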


\begin{proof}
  Because all linear coverings of an operation are isomorphic and
  $\A'$ is replete, to show that $GF$ is $(\A', \C')$-familial it
  suffices to show that any $s \colon C \rightarrow GFX$ with
  $C \in \C'$ is covered in $C \downarrow GF$ by some linear operation
  with input arity in $\A'$. But we have successive factorisations
  \begin{equation*}
    \cd[@!C@C-2em@-0.5em]{
      & {C} \ar[dl]_-{\tilde s} \ar[dr]^-{s} & &
      & & {B} \ar[dl]_-{\tilde t} \ar[dr]^-{t} \\
      {GB} \ar[rr]^-{Gt} & & 
      {GFX} & &
      {FA} \ar[rr]^-{Ff} & &
      {FX}
    }
  \end{equation*}
  with $B \in \B'$ and $A \in \A'$ by applying familiality of $G$ to
  $s$ and of $F$ to $t$. By Lemma~\ref{lem:1}(i), the composite
  $G \tilde t.\tilde s$ is $GF$-linear so that
  $f \colon G\tilde t.\tilde s \rightarrow s$ is the required cover.
  That all linear operations have this shape follows by
  Lemma~\ref{lem:5}. Stability under composition follows from the
  previous points and naturality.
\end{proof}

Since we are really interested in pointwise familial functors between
presheaf categories, we should like to know that these, too, are
closed under composition. The key to showing this is the following
result.

\begin{Prop}
  \label{prop:18}
  Let $\A$ be cocomplete. For any functor $F \colon \A \rightarrow \B$
  or transformation $\alpha \colon F \Rightarrow G$, the full
  subcategory $\B' \subset \B$ whose objects are those stages
  $B \in \B$ at which $F$ (respectively $\alpha$) is familial is
  closed in $\B$ under colimits.
\end{Prop}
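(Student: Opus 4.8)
The plan is to establish the functor case first and then bootstrap to transformations, the two workhorses being the characterisation of linearity in Lemma~\ref{lem:5} and the closure property in part~(ii) of Lemma~\ref{lem:1}. Write the given colimit as $B = \colim_i B_i$ with cocone $\kappa_i \colon B_i \to B$, every $B_i$ lying in $\B'$, and recall from Definition~\ref{def:3} that $B \in \B'$ means exactly that every operation $u \colon B \to FA$ is covered by a linear one. So I would fix such a $u$, restrict it along the cocone to get operations $u\kappa_i \colon B_i \to FA$, and, using $B_i \in \B'$, choose for each $i$ a linear $t_i \colon B_i \to FA_i$ in the connected component of $u\kappa_i$, with induced $f_i \colon A_i \to A$ satisfying $Ff_i \cdot t_i = u\kappa_i$. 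For a map $\phi \colon i \to j$ with transition $b_\phi \colon B_i \to B_j$, the operation $t_j b_\phi$ also lies over $u\kappa_i$, so linearity of $t_i$ supplies, via Lemma~\ref{lem:5}, a unique $a_\phi \colon A_i \to A_j$ with $Fa_\phi \cdot t_i = t_j b_\phi$; the supplementary clause of that lemma gives $f_j a_\phi = f_i$. Uniqueness makes $\phi \mapsto a_\phi$ functorial, so $i \mapsto (B_i, A_i, t_i)$, $\phi \mapsto (b_\phi, a_\phi)$ is a diagram in $\cat{Lin}(\B \downarrow F)$.

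Next I would take its colimit. Cocompleteness of $\A$ provides $\colim_i A_i$, which together with $B = \colim_i B_i$ gives a colimit of the projected diagram in $\B \times \A$; and the projection $\B \downarrow F \to \B \times \A$ creates it, since the maps $F\lambda_i \cdot t_i$ (with $\lambda_i \colon A_i \to \colim_i A_i$ the colimit legs) form a cocone under $(B_i)_i$ using $\lambda_j a_\phi = \lambda_i$, hence factor through a unique $t_\infty \colon B \to F(\colim_i A_i)$ which one checks to be universal. By part~(ii) of Lemma~\ref{lem:1}, $t_\infty$ is thus linear at stage $B$. Finally the $f_i$ form a cocone (as $f_j a_\phi = f_i$), inducing $f_\infty \colon \colim_i A_i \to A$; precomposing with each $\kappa_i$ and using $B = \colim_i B_i$ shows $Ff_\infty \cdot t_\infty = u$, so $t_\infty$ is a linear operation covering $u$. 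Hence $F$ is familial at $B$.

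For a transformation $\alpha \colon F \Rightarrow G$, the functor case already gives $F$ and $G$ familial at $B$, so it remains to show that the induced functor $B \downarrow F \to B \downarrow G$, $t \mapsto \alpha_A \cdot t$, preserves linear operations. Any $F$-linear $t \colon B \to FA$ is covered by the $t_\infty$ produced by the construction above applied to $u = t$; since $t$ is already linear, $t$ and $t_\infty$ are both initial in their common component, so I may assume $t = t_\infty$ with $A = \colim_i A_i$, $t_\infty\kappa_i = F\lambda_i \cdot t_i$, and each $t_i$ an $F$-linear cover of $t\kappa_i$. Naturality of $\alpha$ then rewrites $(\alpha_A t_\infty)\kappa_i = G\lambda_i \cdot (\alpha_{A_i}\cdot t_i)$, and familiality of $\alpha$ at each $B_i$ makes each $\alpha_{A_i}\cdot t_i$ a $G$-linear cover of $(\alpha_A t_\infty)\kappa_i$. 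Feeding these into the $G$-version of the construction yields a $G$-linear $w$ with $w\kappa_i = (\alpha_A t_\infty)\kappa_i$, whence $w = \alpha_A t_\infty$; so $\alpha_A t_\infty$ is itself $G$-linear, as required.

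I expect the main obstacle to be the middle step: checking cleanly that the locally chosen covers $t_i$ assemble into a genuine diagram in $\cat{Lin}(\B \downarrow F)$ and that its colimit is \emph{created} by the projection to $\B \times \A$, so that part~(ii) of Lemma~\ref{lem:1} can be invoked. This is precisely where cocompleteness of $\A$ and the uniqueness clause of Lemma~\ref{lem:5} do the real work; once $t_\infty$ is in hand, both the covering of $u$ and the transformation case are comparatively formal consequences of $B = \colim_i B_i$ and naturality.
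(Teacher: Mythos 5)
Your proof is correct and follows essentially the same route as the paper's: linear covers of the restricted operations assemble, via the uniqueness clause of Lemma~\ref{lem:5}, into a diagram in $\cat{Lin}(\B \downarrow F)$, whose pointwise colimit is linear by Lemma~\ref{lem:1}(ii) and provides the desired cover of the given operation. Your handling of the transformation case merely spells out what the paper leaves implicit---that a linear operation at the colimit stage is, up to isomorphism, the pointwise colimit of the linear covers of its restrictions, so that naturality and familiality of $\alpha$ at each stage identify its image with a $G$-linear colimit.
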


\begin{proof}
  Suppose given $F \colon \A \rightarrow \B$, a diagram
  $D \colon \I \rightarrow \B$ such that $F$ is familial at each
  $DI \in \B$, and a colimiting cocone
  $(p_I \colon DI \rightarrow V)_{I \in \I}$; we must show that $F$ is
  also familial at $V$. So let $t \colon V \rightarrow FA$, and
  consider the diagram of linear operations
  $D_t \colon \I \rightarrow \B \downarrow F$ defined by:
  \begin{align*}
    D_t(I) & = \cd[@-0.25em]{{DI} \ar[d]_{\widetilde{tp_I}} \\ {FA_{I}}}
    & \quad D_t(f \colon I \rightarrow J) \ \ & = \ \ 
    \cd[@-0.25em]{
      {DI} \ar[r]^-{Df} \ar[d]_{\widetilde{tp_I}} &
      {DJ} \ar[d]^{\widetilde{tp_J}} \\
      {FA_{I}} \ar[r]^-{Fu_f} &
      {FA_{J}}
    }
  \end{align*}
  where the map $u_f$ is the unique one induced by linearity of
  $\widetilde{tp_I}$. Since $\A$ is cocomplete, the diagram $D_t$
  admits a pointwise colimit $u \colon V \rightarrow FW$, which by
  Lemma~\ref{lem:1}(ii) is itself linear. There is a cocone
  $D_t \Rightarrow \Delta t$ with components
  $(p_I, q_I) \colon \widetilde{tp_I} \rightarrow t$ where the maps
  $q_I$ are, again, induced by linearity of $\widetilde{tp_I}$, and
  this now induces a map $u \rightarrow t$ in $V \downarrow F$
  providing the desired linear cover of $t$. This shows $F$ is
  familial at $V$, and also that $t \in V \downarrow F$ is linear just
  when its cocone $(p,q) \colon D_t \Rightarrow \Delta t$ is
  colimiting; this last fact entails the part of the proposition
  concerned with transformations $\alpha$.
\end{proof}

As every presheaf is a colimit of representables, we immediately
conclude from the preceding two results that:

\begin{Cor}
  \label{cor:3}
  If $\A$ is cocomplete, then each pointwise familial functor or
  transformation in $\fampt(\A, \psh)$ is familial; whence there
  is a $2$-category $\fampt$ of presheaf categories and pointwise
  familial functors and transformations.
\end{Cor}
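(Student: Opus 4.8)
The plan is to read off both claims from Propositions~\ref{prop:18} and~\ref{prop:10}, as the remark preceding the statement anticipates. For the functor part, I would start with $F \in \fampt(\A, \psh)$ and $\A$ cocomplete. By Definition~\ref{def:1}, $F$ is familial at every representable stage $\yoneda_c$. Proposition~\ref{prop:18} then applies: the full subcategory of $\psh$ on the stages at which $F$ is familial is closed under colimits. Since every presheaf is a colimit of representables, this subcategory contains all of $\psh$; hence $F$ is familial at every stage, i.e.\ familial.

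The transformation part runs along identical lines. Given a pointwise familial $\alpha \colon F \Rightarrow G$, the functor case just proved shows $F$ and $G$ are familial at every stage, so the only remaining ingredient of $\alpha$-familiality is preservation of linear operations; and the transformation clause of Proposition~\ref{prop:18} tells us the stages at which $\alpha$ is familial are again closed under colimits, hence (containing all the representable stages) exhaust $\psh$. This settles the first sentence of the statement.

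For the second sentence I would argue that pointwise familiality and familiality coincide for functors and transformations between presheaf categories: familial trivially implies pointwise familial, since representables are among all stages, and the converse is exactly what was just shown, using that each presheaf category is cocomplete. Consequently the pointwise familial functors and transformations between presheaf categories are precisely the familial ones, and they inherit their compositional structure from the $2$-category $\cat{FAM}$ of Proposition~\ref{prop:10}. In other words, $\fampt$ is the full sub-$2$-category of $\cat{FAM}$ on the presheaf categories: composites of pointwise familial $1$-cells are familial by Proposition~\ref{prop:10}, hence pointwise familial; the various composites of $2$-cells restrict likewise; and the identity $\id_\psh$ is familial since, by Lemma~\ref{lem:8}, it is familial at each stage $X$, the functor $\psh(X, \id\thg) = \psh(X, \thg)$ being representable.

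I expect no genuine obstacle here: the two cited propositions carry the argument, and the fact that every presheaf is a colimit of representables is standard. The only point needing a line of care is verifying that the $2$-category claim is a mere restriction of $\cat{FAM}$, which amounts to checking that the coincidence of pointwise familiality with familiality is stable under the composition operations---and this is immediate from Proposition~\ref{prop:10} once that coincidence is in hand.
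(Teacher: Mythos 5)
Your proposal is correct and takes essentially the same route as the paper, which deduces the corollary immediately from Propositions~\ref{prop:18} and~\ref{prop:10} together with the observation that every presheaf is a colimit of representables---exactly your argument, spelled out. The one small caveat is that the $1$-cells of $\fampt$ also carry the smallness condition of Definition~\ref{def:1}, so $\fampt$ is not quite the \emph{full} sub-$2$-category of $\cat{FAM}$ on presheaf categories; but smallness is preserved by composition (the linear operations of a composite are the composites $Gt.s$ of Proposition~\ref{prop:10}, of which there is only a set up to isomorphism over each representable stage), and the paper leaves this equally implicit.
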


In the next section, size considerations will force us to bound the
input arities of the pointwise familial functors we consider. As we
would still like such functors to compose, we introduce the relevant
notions and prove composability here.

\begin{Defn}
  \label{def:19}
  We write $\F \C \subset \psh$ for the full, replete subcategory of
  \emph{finitely presentable} presheaves: those expressible as a
  finite colimit of representables. A pointwise familial functor
  $\psh \rightarrow \psh[\D]$ is called \emph{finitary} if it is
  $(\F\C, \yoneda\D)$-familial.
\end{Defn}

The modifier ``finitary'' typically refers to a functor which
preserves filtered colimits; that our usage agrees with this follows
from Lemma~\ref{lem:8} and the fact that a representable
$\psh(A, \thg) \colon \psh \rightarrow \cat{Set}$ is finitary just
when $A$ is in $\F\C$.

To see that finitary pointwise familial functors and transformations
compose, we appeal to Proposition~\ref{prop:10} and the following
result:

\begin{Prop}
  \label{prop:11}
  The pointwise familial $F \colon \psh \rightarrow \psh[\D]$ is
  finitary if and only if it is $(\F\C, \F\D)$-familial; whence there
  is a $2$-category $\fampt^\omega$ of presheaf categories, finitary
  pointwise familial functors and pointwise familial transformations.
\end{Prop}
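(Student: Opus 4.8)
The plan is to prove the biconditional first and then read off the $2$-category structure from the composability results already in hand. One direction is immediate: every representable $\yoneda_d$ is finitely presentable, so $\yoneda\D \subseteq \F\D$, and hence any $(\F\C, \F\D)$-familial functor is in particular familial at every representable stage with linear arities in $\F\C$, i.e.\ finitary in the sense of Definition~\ref{def:19}. So I would dispatch this direction in a line.

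For the converse, suppose $F$ is finitary (that is, $(\F\C, \yoneda\D)$-familial) and fix $B \in \F\D$, presented as a finite colimit $B \cong \colim_{I \in \I} \yoneda_{d_I}$ of representables with $\I$ finite. Since presheaf categories are cocomplete, Proposition~\ref{prop:18} applies with $\A = \psh$: the stages at which $F$ is familial are closed under colimits, and $F$ is familial at each $\yoneda_{d_I}$, so $F$ is familial at $B$. The remaining point is the arity condition, and for this I would read off the explicit cover built in the proof of Proposition~\ref{prop:18}. Given an arbitrary linear $t \colon B \to FA$, that proof covers $t$ by the pointwise colimit $u \colon B \to FW$ of the diagram of linear operations $\widetilde{tp_I} \colon \yoneda_{d_I} \to FA_I$, so that $W \cong \colim_{I \in \I} A_I$. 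By finitarity each $A_I \in \F\C$; since $\I$ is finite and $\F\C$ is closed under finite colimits (a finite colimit of finite colimits of representables is again such), we conclude $W \in \F\C$. Finally, $u$ and $t$ lie in the same connected component of $B \downarrow F$ and are both linear, hence isomorphic, giving $A \cong W$; repleteness of $\F\C$ (Definition~\ref{def:19}) then yields $A \in \F\C$. Thus every linear operation at stage $B$ has arity in $\F\C$, so $F$ is $(\F\C, \F\D)$-familial.

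For the $2$-category $\fampt^\omega$, I would invoke Proposition~\ref{prop:10} with $(\A', \B', \C') = (\F\C, \F\D, \F\E)$: the composite of a $(\F\C,\F\D)$-familial $F \colon \psh \to \psh[\D]$ with a $(\F\D,\F\E)$-familial $G \colon \psh[\D] \to \psh[\E]$ is $(\F\C,\F\E)$-familial, hence finitary by the equivalence just proved, and the correspondingly familial transformations compose likewise. The identity endofunctor is $(\F\C,\F\C)$-familial (each $1_B$ with $B \in \F\C$ is linear of arity $B$), and any pointwise familial transformation between finitary functors is familial at every $\F\D$-stage by the transformation clause of Proposition~\ref{prop:18}. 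Together with the inherited vertical composition, these observations exhibit $\fampt^\omega$ as a sub-$2$-category of the $\fampt$ of Corollary~\ref{cor:3}.

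The main obstacle is precisely the arity bookkeeping in the converse: familiality at $B$ is essentially free from Proposition~\ref{prop:18}, but bounding the input arities forces one to extract from its proof that the constructed linear cover has arity $\colim_I A_I$, and then to combine finiteness of the chosen presentation of $B$ with closure of $\F\C$ under finite colimits. Everything else—including the identification of $u$ with $t$ via invertibility of morphisms between linear operations—is formal.
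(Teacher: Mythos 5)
Your proof is correct and follows essentially the same route as the paper: the easy direction via $\yoneda\D \subseteq \F\D$, and the converse by extracting from the proof of Proposition~\ref{prop:18} the linear cover whose arity is $\colim_I A_I$, which lies in $\F\C$ by finitarity and closure of $\F\C$ under finite colimits, with the $2$-category structure then supplied by Proposition~\ref{prop:10}. Your additional care in passing from the constructed cover $u$ to an arbitrary linear $t$ (invertibility of maps between linear operations plus repleteness of $\F\C$) just makes explicit a step the paper leaves implicit.
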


\begin{proof}
  For the non-trivial direction, let $t \colon B \rightarrow FA$ with
  $B \in \F \D$. On expressing $B$ as a finite colimit of
  representables, the proof of Proposition~\ref{prop:18} yields a
  cover of $t$ by a linear operation $u \colon B \rightarrow FA$
  obtained as a finite colimit in $\B \downarrow F$ of linear
  operations of the form $\yoneda_{d_I} \rightarrow FA_I$. By
  assumption, each $A_I$ is in $\F \C$, whence $A = \colim_{I} A_I$ is
  too.
\end{proof}

\subsection{Universal familial endofunctors}
\label{sec:univ-famil-endof}

We now have all the ingredients we require for our first attempt at
constructing a universal shapely monad $\mathsf U$ on $\psh$. As
anticipated in the previous section, a naive attempt to construct it
as a terminal object in the monoidal category
$\fampt(\psh, \psh)$ fails for size reasons.

\begin{Prop}
  \label{prop:12}
  If $\C \neq 0$ and the category $\A$ has a proper class of
  non-isomorphic objects, then $\fampt(\A, \psh)$ has no terminal
  object; in particular, if $\C \neq 0$ then $\fampt(\psh, \psh)$ has
  no terminal object.
\end{Prop}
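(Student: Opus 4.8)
The plan is to exploit the equivalence of Proposition~\ref{prop:1}, which identifies $\fampt(\A, \psh)$ with the category $\el_\C /\!\!/\, \A$ of pairs $(S, E)$ consisting of a presheaf $S \in \psh$ and a functor $E \colon \el S \rightarrow \A$. Since an equivalence of categories preserves and reflects terminal objects, it suffices to show that $\el_\C /\!\!/\, \A$ has no terminal object whenever $\C \neq 0$ and $\A$ has a proper class of non-isomorphic objects.

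First I would record the single finiteness fact on which everything turns: for any candidate terminal object $(T, D)$, the category of elements $\el T$ has object-set $\sum_{c \in \C} T(c)$, and this is a genuine \emph{set} because $\C$ is small and $T$ is a set-valued presheaf. Consequently the family $\{\, D(t) : t \in \ob(\el T)\,\}$ realises only a \emph{set} of isomorphism classes of objects of $\A$.

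Next I would manufacture a proper class of ``test objects'' which any terminal $(T,D)$ would be forced to absorb. Fixing an object $c_0 \in \C$ (available since $\C \neq 0$), for each object $A \in \A$ I would form the pair $(\yoneda_{c_0}, \Delta_A)$, where $\Delta_A \colon \el \yoneda_{c_0} \rightarrow \A$ is the functor constant at $A$; this is a legitimate object of $\el_\C /\!\!/\, \A$ since $\yoneda_{c_0}$ is set-valued. A map $(\yoneda_{c_0}, \Delta_A) \rightarrow (T, D)$ consists of a presheaf map $p \colon \yoneda_{c_0} \rightarrow T$ together with a natural isomorphism $\varphi \colon \Delta_A \cong D \circ \el p$. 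By the Yoneda lemma $p$ is named by the element $t_0 := p_{c_0}(\id_{c_0}) \in T(c_0)$, and the component of $\varphi$ at the terminal object $\id_{c_0} \in \el \yoneda_{c_0}$ is an isomorphism $A \cong D(t_0)$.

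Finally I would assemble the contradiction: if $(T, D)$ were terminal then every $A \in \A$ would receive such a map, hence would be isomorphic to some $D(t)$ with $t \in \ob(\el T)$; this embeds a proper class of pairwise non-isomorphic objects into a set-indexed family of isomorphism classes, which is impossible. The argument is essentially a cardinality count, so there is no serious obstacle; the only point requiring care is the bookkeeping around Proposition~\ref{prop:1} and the variance of $\varphi$, but as $\varphi$ is invertible only the bare existence of the isomorphism $A \cong D(t_0)$ is used. For the ``in particular'' clause I would take $\A = \psh$ and verify that it too has a proper class of non-isomorphic objects: the coproducts $\kappa \cdot \yoneda_{c_0}$ indexed by cardinals $\kappa$ satisfy $\abs{(\kappa \cdot \yoneda_{c_0})(c_0)} \geqslant \kappa$ (since $\id_{c_0} \in \C(c_0,c_0)$), so are pairwise non-isomorphic for distinct $\kappa$ and form a proper class, whence the main statement applies.
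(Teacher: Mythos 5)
Your proposal is correct and follows essentially the same route as the paper: both pass through Proposition~\ref{prop:1} to $\el_\C /\!\!/\, \A$ and test a putative terminal object against the constant-functor objects $(\yoneda_{c_0}, \Delta_A)$, the only cosmetic difference being that the paper counts the induced elements $t_A \in T(c_0)$ directly (noting $D(t_A) \cong A$ forces a proper class of distinct elements of the set $T(c_0)$) where you count isomorphism classes in the image of $D$ over all of $\el T$. One tiny caveat in your final clause: $\abs{(\kappa \cdot \yoneda_{c_0})(c_0)} \geqslant \kappa$ does not by itself make these coproducts pairwise non-isomorphic when $\C(c_0,c_0)$ is infinite, but restricting to cardinals $\kappa$ exceeding the size of $\C$ (or counting connected components, of which $\kappa \cdot \yoneda_{c_0}$ has exactly $\kappa$) immediately yields the required proper class, so nothing is lost.
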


\begin{proof}
  By Proposition~\ref{prop:1}, it suffices to show that
  $\el_\C /\!\!/\, \A$ has no terminal object. Suppose that $(S, E)$
  were terminal; fixing some $c \in \C$, we would then have for each
  $A \in \A$ a unique map
  \begin{equation*}
    \cd[@-1em]{
      {\el \yoneda_c} \ar[rr]^-{\el t_A} \ar[dr]_-{\Delta A} & \ltwocello{d}{\varphi_A} &
      {\el S} \ar[dl]^-{E} \\ &
      {\A}
    }
  \end{equation*}
  where $\Delta A$ is the constant functor at $A$. Note that
  $t_A \in S c$ satisfies $Et_A \cong A$; since there are a proper
  class of non-isomorphic $A$'s, there must be a proper class of
  distinct $t_A$'s, contradicting the fact that $S c$ is a set.
\end{proof}

What permits the above negative argument is the fact that a pointwise
familial functor may have linear operations of arbitrarily large input
arity; this suggests restricting attention to the \emph{finitary}
pointwise familial functors whose linear input arities lie in the
essentially small\footnote{A category is \emph{essentially small} if
  it is equivalent to a small category.} $\F\C$. We first note that:

\begin{Lemma}
  \label{lem:2}
  Precomposition with the inclusion $J \colon \F \C \rightarrow \psh$
  induces an equivalence between the categories
  $\fampt^\omega(\psh, \psh[\D])$ and $\fampt(\F \C, \psh[\D])$.
\end{Lemma}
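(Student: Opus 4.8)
The plan is to deduce the equivalence from the combinatorial description of pointwise familial functors established in Proposition~\ref{prop:1}, which already absorbs the delicate analysis of linear operations and so lets me argue almost entirely at the level of the categories $\el_\D /\!\!/\, \A$. Write $R \colon \fampt^\omega(\psh, \psh[\D]) \to \fampt(\F\C, \psh[\D])$ for the precomposition functor $F \mapsto FJ$, $\alpha \mapsto \alpha J$; I must show it is well defined and an equivalence.

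First I would record the two instances of Proposition~\ref{prop:1} relevant here,
$$\fampt(\psh, \psh[\D]) \simeq \el_\D /\!\!/\, \psh \qquad\text{and}\qquad \fampt(\F\C, \psh[\D]) \simeq \el_\D /\!\!/\, \F\C,$$
the object part sending $F$ to its spectrum/exponent pair $(S_F, E_F)$. By Definition~\ref{def:19}, the full subcategory $\fampt^\omega(\psh, \psh[\D]) \subseteq \fampt(\psh, \psh[\D])$ of finitary functors consists of exactly those $F$ all of whose linear input arities lie in $\F\C$; since $\F\C$ is replete, under the equivalence this corresponds precisely to the full subcategory $(\el_\D /\!\!/\, \psh)_\F \subseteq \el_\D /\!\!/\, \psh$ of those $(S, E)$ for which $E \colon \el S \to \psh$ factors through $\F\C$. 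Now the inclusion $J$ is fully faithful and replete, so it induces $J_* \colon \el_\D /\!\!/\, \F\C \to \el_\D /\!\!/\, \psh$, $(S, E) \mapsto (S, JE)$, $(p, \varphi) \mapsto (p, J\varphi)$; full faithfulness of $J$ makes $J_*$ fully faithful (a natural isomorphism of functors landing in $\F\C$ is the same as one after applying $J$), and repleteness makes its essential image exactly $(\el_\D /\!\!/\, \psh)_\F$. Hence $J_*$ corestricts to an equivalence $\el_\D /\!\!/\, \F\C \simeq (\el_\D /\!\!/\, \psh)_\F$, and chaining the three equivalences produces an equivalence between the two categories in the statement.

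It then remains to identify this chain with $R$, for which I would compare object assignments. Given finitary $F$ with pair $(S_F, E_F)$, its preimage under $J_*$ is $(S_F, E_F')$ where $E_F'$ is the corestriction of $E_F$ to $\F\C$; and the functor $\F\C \to \psh[\D]$ reconstructed from $(S_F, E_F')$ by the essential-surjectivity recipe in the proof of Proposition~\ref{prop:1} sends $A$ at stage $d$ to $\sum_{t \in S_F(d)} \F\C(E_F' t, A) \cong \sum_{t \in S_F(d)} \psh(E_F t, JA) = (F(JA))(d)$, using that $J$ is fully faithful and $J E_F' = E_F$. This is visibly $FJ = RF$, naturally in everything, and an analogous computation handles transformations.

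The main obstacle is precisely this last bookkeeping---confirming that restriction along $J$ matches the abstractly constructed equivalence---together with the underlying fact that an $F$-linear operation of finitely presentable arity is the \emph{same thing} as an $FJ$-linear operation, so that $R$ genuinely lands in pointwise familial functors and is fully faithful. Both reduce to the observation that, $J$ being full, faithful and replete, the comma category $\yoneda_d \downarrow FJ$ is the full subcategory of $\yoneda_d \downarrow F$ on the operations of arity in $\F\C$, in which initiality within a connected component is detected exactly as in the ambient comma category. (Equivalently, essential surjectivity of $R$ can be phrased via left Kan extension: a finitary $F$ preserves the filtered colimit expressing each $X \in \psh$ as a colimit of finitely presentable presheaves, so $F \cong \Lan_J(FJ)$, exhibiting $\Lan_J(\thg)$ as a pseudo-inverse to $R$.)
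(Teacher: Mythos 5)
Your proposal is correct and is essentially the paper's own argument in expanded form: the paper likewise deduces the lemma from Proposition~\ref{prop:1}, observing that composing $(\thg) \circ J$ with the equivalence $\fampt(\F\C, \psh[\D]) \simeq \el_\D /\!\!/\, \F\C$ ``evidently yields another equivalence'' and concluding by two-out-of-three. Your chain through $\el_\D /\!\!/\, \psh$, the identification of finitary functors with pairs whose exponent lands in the replete subcategory $\F\C$, and the reconstruction computation showing the composite agrees with restriction along $J$ are precisely the details the paper compresses into that ``evidently''.
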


\begin{proof}
  Precomposing the equivalence
  $\fampt(\F \C, \psh[\D]) \rightarrow {\el_\D /\!\!/\, \F \C}$ of
  Proposition~\ref{prop:1} by
  $(\thg) \circ J \colon \fampt^\omega(\psh, \psh[\D]) \rightarrow
  \fampt(\F \C, \psh[\D])$ evidently yields another equivalence;
  whence, by two-out-of-three, $(\thg) \circ J$ is an equivalence.
\end{proof}

Unfortunately, even with the finitariness restriction we are still
unable to construct a strictly terminal familial endofunctor:

\begin{Prop}
  \label{prop:2}
  If $\C \neq 0$ and $\A$ is essentially small, then
  $\fampt(\A, \P\C)$ always has a weakly terminal object, but has a
  terminal object if and only if $\A$ has no non-identity
  automorphisms; consequently, if $\C \neq 0$, then
  $\fampt^\omega(\psh, \psh)$ has a weakly terminal object, but not a
  terminal object.
\end{Prop}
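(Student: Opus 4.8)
The plan is to work throughout in the category $\el_\C /\!\!/\, \A$, which by Proposition~\ref{prop:1} is equivalent to $\fampt(\A, \P\C)$; since equivalences preserve and reflect (weakly) terminal objects, it suffices to settle the question there. Fix a small skeleton $\iota \colon \A_0 \hookrightarrow \A$ (available since $\A$ is essentially small), together with a chosen equivalence-inverse $r \colon \A \to \A_0$ and natural isomorphism $\eta \colon \Id_\A \cong \iota r$ with $r\iota = \Id_{\A_0}$. Writing $\el\yoneda_c$ for the category of elements of the representable $\yoneda_c$ (the slice $\C/c$, which is small as $\C$ is), I would take as candidate object the pair $(T, D)$ with
\[
  T(c) = \ob[\el\yoneda_c, \A_0] \quad\text{and}\quad T(f)(G) = G \circ \el\yoneda_f,
\]
and with $D \colon \el T \to \A$ sending an object $(c, G)$ to $G(\id_c)$ and a morphism over $f \colon c \to c'$—which amounts to an equality $G = G' \circ \el\yoneda_f$—to the value of $G'$ at the canonical morphism $f \to \id_{c'}$ of $\el\yoneda_{c'}$. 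Using actual functors into the skeleton, rather than isomorphism classes, makes both $T$ and $D$ strictly functorial, sidestepping any coherence choices; and $T(c)$ is a set, so $(T, D)$ is small.

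To see that $(T, D)$ is weakly terminal, given any $(S, E)$ I would define $p \colon S \to T$ by $p_c(s) = r \circ E \circ \el\bar s$, where $\bar s \colon \yoneda_c \to S$ is the element $s$ under Yoneda; naturality of $p$ is immediate from $\el\overline{S(f)(s)} = \el\bar s \circ \el\yoneda_f$. A direct computation gives $D \circ \el\overline{p_c(s)} = \iota r E \el\bar s$, whose value at $\id_c$ is $\iota r E(s)$, so that the natural isomorphism $\eta$ furnishes the required isomorphism $\varphi \colon D \circ \el p \cong E$. Thus every object admits a map to $(T, D)$, establishing the first claim.

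For the dichotomy, I would first record that the presheaf component $p$ of any map $(p, \varphi) \colon (S, E) \to (T, D)$ is forced: restricting $\varphi$ along $\el\bar s$ and using the identity $D \circ \el\overline{\xi} = \xi$ for $\xi \in T(c)$ shows that the functor $p_c(s) \colon \el\yoneda_c \to \A_0$ is isomorphic to $E\el\bar s$, which—since $\A_0$ is skeletal with, in the relevant case, no non-identity automorphisms—is pinned down uniquely as $rE\el\bar s$. Granting this, a second map into $(T, D)$ differs only in the choice of $\varphi$, and two isomorphisms $D\el p \cong E$ differ by a natural automorphism of $E$, i.e.\ a coherent family of automorphisms in $\A$. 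Hence if $\A$ has no non-identity automorphisms the map is unique and $(T, D)$ is terminal. Conversely, if $\theta \colon B \to B$ is a non-identity automorphism, pick any $c \in \C$ (possible as $\C \neq 0$) and let $E$ be the functor $\el\yoneda_c \to \A$ constant at $B$; then $\theta$ defines a non-identity natural automorphism of $E$, so post-composing any map $(\yoneda_c, E) \to (T', D')$ with $\theta$ produces a second, distinct map into \emph{any} object $(T', D')$. A terminal object would receive exactly one map from $(\yoneda_c, E)$, so none can exist.

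Finally, the statement about $\fampt^\omega(\psh, \psh)$ follows by taking $\A = \F\C$: this is essentially small, and Lemma~\ref{lem:2} (with $\D = \C$) exhibits an equivalence $\fampt^\omega(\psh, \psh) \simeq \fampt(\F\C, \psh)$. Since $\C \neq 0$, choosing $c \in \C$ gives the object $\yoneda_c + \yoneda_c \in \F\C$, whose component-swap is a non-identity automorphism; so by the dichotomy $\fampt(\F\C, \psh)$ has a weakly terminal but no terminal object, and the equivalence transports this conclusion. The main obstacle is the construction of $(T, D)$ together with the verification that the presheaf component of a map into it is forced; the skeleton device is what makes both the functoriality of $D$ and this rigidity clean.
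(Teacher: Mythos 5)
Your proposal is correct and follows essentially the same route as the paper's proof: you pass to $\el_\C /\!\!/\, \A$ via Proposition~\ref{prop:1}, take the weakly terminal object whose $c$-component consists of functors $\el\yoneda_c \cong \C/c \to \A$ with $p_c(s)$ given by $D \circ \el\bar s$, establish terminality under the no-nontrivial-automorphisms hypothesis by the same rigidity computation, refute it otherwise by twisting the $2$-cell $\varphi$ with the constant natural automorphism $\Delta\theta$, and conclude for $\fampt^\omega(\psh,\psh)$ via Lemma~\ref{lem:2} and the switch automorphism of $\yoneda_c + \yoneda_c$. The only difference is presentational: you carry a fixed skeleton $\A_0$ (with $r$, $\iota$, $\eta$) throughout, where the paper simply replaces $\A$ by an equivalent small, and later skeletal, category so that the comparison isomorphisms become identities.
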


\begin{proof}
  By Proposition~\ref{prop:1} we may prove the stated properties for
  the equivalent category $\el_\C /\!\!/\, \A$; but as $\A \simeq \A'$
  with $\A'$ small and now
  $\el_\C /\!\!/\, \A \simeq \el_\C /\!\!/\, \A'$, we may assume
  without loss of generality that $\A$ is itself small. We construct a
  weakly terminal $(S,E)$ in $\el_\C /\!\!/\, \A$ as follows. The
  presheaf $S \in \P\C$ has:
  \begin{equation*}
    Sc = \{\,\text{functors } F \colon \C/c \rightarrow \A \,\} \qquad \text{and} \qquad S(f \colon d \rightarrow c) \colon F \mapsto F(f \circ \thg)\rlap{ ,}
  \end{equation*}
  while $E \colon \el S \rightarrow \A$ is given by
  $E(F \colon \C / c \rightarrow \A) = F(1_c)$ on objects, and by:
  \begin{align*}
    f \colon F(f\circ \thg) \rightarrow F \qquad &\mapsto \qquad F(f \colon f \rightarrow 1_c) \colon Ff \rightarrow F1_c
  \end{align*}
  on morphisms. To see weak terminality of $(S,E)$, consider some other
  $(T, D)$ in $\el_\C /\!\!/\, \A$. We define
  $p \colon T \rightarrow S$ in $\psh$ by sending $t \in Tc$ to the
  element
  \begin{equation*}
    p(t) \colon \C / c = \el \yoneda_c \xrightarrow{\el t} \el T \xrightarrow{D} \A
  \end{equation*}
  of $Sc$. Naturality of $p$ in $c$ follows because
  $\el (t) \rond \el(\yoneda_f) = \el (t \rond \yoneda_f) = \el(t \cdot
  f)$ for all $t \in T(c')$ and $f\colon c \to c'$ in $\C$.
  Furthermore, from the equality $p(t)(1_c) = Dt$, we deduce
  $E \circ \el p = D \colon \el T \rightarrow \A$ and so
  $(p,1_D) \colon (T,D) \rightarrow (S,E)$ in $\el_\C /\!\!/\, \A$.

  Now let $\A$ have no non-identity automorphisms; replacing it by its
  (equivalent) skeleton, we may assume that in fact it has no
  non-identity isomorphisms, and so that each map~\eqref{eq:4} of
  $\el_\C /\!\!/\, \A$ has $\varphi$ an identity. In this case, we
  claim the weakly terminal $(S,E)$ given above is terminal. Indeed, if
  $(q, 1_D) \colon (T,D) \rightarrow (S,E)$ is any map in
  $\el_\C /\!\!/\, \A$, then for each $t \in Tc$, the functor
  $q(t) \colon \C / c \rightarrow \A$ satisfies
  $q(t)(1_c) = Dt = p(t)(1_c)$; but then
  $q(t)(h) = q(th)(1_d) = p(th)(1_d) = p(t)(h)$ for all
  $h \colon d \rightarrow c$, whence $q = p$ as required.

  Next let $\A$ admit the non-identity automorphism $a \in \A(A,A)$,
  and assume that there is a terminal object $(T,D)$ in
  $\el_\C /\!\!/\, \A$; we derive a contradiction. By terminality of
  $(T,D)$, there is for any $c \in \C$ a unique pair as on the left in
  \begin{equation*}
    \cd[@-1em]{
      {\el \yoneda_c} \ar[rr]^-{\el p} \ar[dr]_-{\Delta A} & \ltwocello{d}{\varphi} &
      {\el T} \ar[dl]^-{D} \\ &
      {\A}
    } \qquad \qquad 
    \cd[@-1em]{
      {\el \yoneda_c} \ar[rr]^-{\el p} \ar[dr]_-{\Delta A} & \ltwocello{d}{\varphi.\Delta a} &
      {\el T} \ar[dl]^-{D} \\ &
      {\A}
    }
  \end{equation*}
  where here $\Delta A$ is the constant functor at $A$. But now the
  triangle on the right also describes a morphism
  $(\yoneda_c, \Delta A) \rightarrow (T,D)$; so we must have
  $\varphi.\Delta a = \varphi$ and so, by invertibility of $\varphi$,
  that $\Delta a = \id_{\Delta_A}$, contradicting $a \neq \id_A$.

  For the final claim, note that we have
  $\fampt^\omega(\psh, \psh) \simeq \fampt(\F\C, \psh)$ by
  Lemma~\ref{lem:2}; now if $\C \neq 0$, then the essentially small
  $\F \C$ certainly contains non-identity automorphisms---for instance,
  the switch map
  $\yoneda_c + \yoneda_c \rightarrow \yoneda_c + \yoneda_c$---and so
  $\fampt^\omega(\psh, \psh)$ has a weakly terminal object, but no
  terminal object.
\end{proof}

\section{Analytic functors and shapeliness}
\label{sec:analyt-funct-shap}

The underlying reason that there is no terminal object among finitary
familial endofunctors of a presheaf category is that linear operations
cannot be fixed by automorphisms of their input arities; this means
that such automorphisms may be propagated up to the level of familial
functors, so obstructing the existence of a terminal object. The next
step in our pursuit of a universal shapely monad will attempt to
resolve this problem by introducing \emph{analytic} functors, whose
generating operations can be fixed by input arity automorphisms.

\subsection{Generic operations and analytic functors}
\label{sec:generic-operations}

The fundamental step in moving from familial to analytic functors is
to generalise from linear to \emph{generic} operations. In what
follows, we write $\aut_X$ for the automorphism group of any object
$X \in \C$.

\begin{Defn}
  \label{def:4}
  An object $X \in \C$ is \emph{Galois} if for each $Y \in \C$, the
  composition action makes $\C(X,Y)$ into a connected $\aut_X$-set. An
  operation $t \colon B \rightarrow FA$ of a functor
  $F \colon \A \rightarrow \B$ is \emph{generic} if it is Galois in
  its connected component of $B \downarrow F$.
\end{Defn}

(Our nomenclature draws on one of the basic examples of a Galois
object: if $k \subset K$ is a Galois field extension, and $\A$ is the
category of intermediate field extensions, then $K$ is Galois in
$\A^\mathrm{op}$.)

An object $X$ is Galois when it admits a map to every other object
(\emph{weak initiality}) and, for any pair of maps
$f, f' \colon X \rightrightarrows Y$, there is an automorphism
$\sigma \in \aut_X$ with $f' = f \sigma$ (\emph{transitivity}); thus,
Galois objects are initial ``up to a group of automorphisms''. In
these terms, a generic $F$-operation can be understood as one which,
like a linear operation, consumes all of its input arity, but which
may now be invariant under certain automorphisms of that arity.

The next result identifies our generic operations with those
of~\cite[Definition~5.2]{Weber2004Generic}, which when
$\A = \B = \cat{Set}$ and $B = 1$ are equally those
of~\cite{Joyal1986Foncteurs}.

\begin{Lemma}
  \label{lem:12}
  An operation $t \colon B \rightarrow FA$ is generic if and only if
  for every square of the following form there exists some
  $\ell \colon A \rightarrow Y$ with $F\ell.t = u$ and $h\ell=k$:
  \begin{equation}\label{eq:25}
    \cd[@C-0.5em]{
      {B} \ar[r]^-{u} \ar[d]_{t} &
      {FY} \ar[d]^{Fh} \\
      {FA} \ar[r]^-{Fk} \ar@{.>}[ur]|-{F\ell}
      &
      {FZ}\rlap{ .}
    }
  \end{equation}
\end{Lemma}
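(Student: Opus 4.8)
The plan is to rephrase the lifting property in~\eqref{eq:25} as a statement about the comma category $B \downarrow F$, and then to match it, piece by piece, against the two defining features of a Galois object---weak initiality and transitivity, as unpacked right after Definition~\ref{def:4}---inside the connected component $\mathcal K$ of $t$. The first observation is pure bookkeeping: a commuting square as in~\eqref{eq:25} is exactly a cospan $(A,t) \xto{k} (Z,w) \xot{h} (Y,u)$ in $B \downarrow F$, where $w = Fk \cdot t = Fh \cdot u$, and the sought diagonal $\ell$ is precisely a morphism $\ell \colon (A,t) \to (Y,u)$ satisfying $h\ell = k$. In particular $(Y,u)$ and $(Z,w)$ then lie in the same component $\mathcal K$ as $(A,t)$, which is what lets the Galois hypothesis bite.

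For the forward implication, I would assume $t$ generic, so that $(A,t)$ is Galois in $\mathcal K$. Given a cospan as above, weak initiality yields some $\ell_0 \colon (A,t) \to (Y,u)$; then $h\ell_0$ and $k$ are parallel arrows $(A,t) \rightrightarrows (Z,w)$, so transitivity provides $\sigma \in \aut_{(A,t)}$ with $k = (h\ell_0)\sigma = h(\ell_0\sigma)$. Setting $\ell = \ell_0 \sigma$ and using $F\sigma \cdot t = t$ gives $F\ell \cdot t = F\ell_0 \cdot t = u$ and $h\ell = k$, as required.

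For the converse I would assume the lifting property and establish weak initiality and transitivity of $(A,t)$ in $\mathcal K$ separately. Weak initiality follows by induction on the length of a zig-zag from $(A,t)$ to an arbitrary $(Y,u) \in \mathcal K$: a forward edge is postcomposed onto the map built so far, while a backward edge $(Y',u') \to (W,w')$ together with the already-constructed $(A,t) \to (W,w')$ forms a cospan whose diagonal, supplied by the lifting property, reverses it into a map $(A,t) \to (Y',u')$. For transitivity, given parallel $f, f' \colon (A,t) \rightrightarrows (Y,u)$, I would apply the lifting property to the square with top map $t$, taking $h := f'$ and $k := f$ (which commutes since $Ff' \cdot t = u = Ff \cdot t$), obtaining an endomorphism $\ell$ of $(A,t)$ with $f'\ell = f$.

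The one genuinely delicate point---and the step I expect to be the main obstacle, since there is no finiteness to exploit---is that this $\ell$ is invertible, i.e.\ that every endomorphism of $(A,t)$ in $B \downarrow F$ is an automorphism. Here the lifting property is exactly what is needed: applied to the square with top map $t$, $h := \ell$ and $k := \id_A$, it produces a right inverse $m$ (again an endomorphism of $(A,t)$) with $\ell m = \id_A$; running the same construction on $m$ yields $n$ with $mn = \id_A$, and then $\ell = \ell(mn) = (\ell m)n = n$ forces $m\ell = \id_A$ too, so $m = \ell^{-1}$ and $\ell^{-1} \in \aut_{(A,t)}$. Consequently $f' = f\ell^{-1}$ exhibits the required $\sigma = \ell^{-1}$, completing transitivity and hence the equivalence.
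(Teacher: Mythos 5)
Your proof is correct and takes essentially the same route as the paper's: the forward direction twists a weakly-initial map by an automorphism supplied by transitivity, and the converse derives transitivity together with invertibility of endomorphisms of $(A,t)$ by the same split-epimorphism argument ($\ell m = \id$, $mn = \id$, hence $\ell = n$ and $m = \ell^{-1}$). The only difference is presentational: the paper phrases the characterisation abstractly (an object is Galois iff it is weakly initial and fills cospans) and leaves the zig-zag induction establishing weak initiality within the connected component implicit, whereas you carry that step out explicitly.
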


\begin{proof}
  We claim that $X$ is Galois if and only if it is weakly initial and
  every diagram as in the solid part of
  \begin{equation}\label{eq:44}
    \cd{
      & Y \ar[d]^-{g} \\
      X \ar[r]^-{f} \ar@{-->}[ur]^-{h} & Z
    }
  \end{equation}
  can be completed to a commuting diagram as displayed. Indeed, if $X$
  is Galois then in the situation of~\eqref{eq:44}, weak initiality
  gives a map $k \colon X \rightarrow Y$, and transitivity gives some
  $\sigma \in \Aut_X$ such that $f = gk\sigma$, so that we may take
  $h = k\sigma$. Conversely, if $X$ satisfies the displayed condition,
  then taking $Y=Z=X$ and $f = 1_X$ shows that each $g \in \C(X,X)$ is
  split epimorphic; whence each $g \in \C(X,X)$ is invertible;
  whereupon taking $Y = X$ in~\eqref{eq:44} gives transitivity.

  Now the condition on $t$ above says that any cospan
  $t \rightarrow v \leftarrow u$ in $B \downarrow F$ can be completed
  to a commuting triangle, which thus says that $t$ is Galois in its
  connected component, as desired.
\end{proof}

\begin{Cor}
  \label{cor:2}
  Any map $h \colon u \rightarrow t$ in $B \downarrow F$ with generic
  codomain is a split epimorphism; in particular, any map between
  generic operations is an isomorphism.
\end{Cor}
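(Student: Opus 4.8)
The plan is to exploit the two halves of the Galois condition---weak initiality and transitivity---as they were isolated in the proof of Lemma~\ref{lem:12}. Write the map as $h \colon u \to t$ with $u \colon B \to FA$ and $t \colon B \to FA'$, so that $h$ is a map $A \to A'$ in $\A$ satisfying $Fh \cdot u = t$; since $h$ connects $u$ and $t$, both lie in the same connected component of $B \downarrow F$, in which $t$ is by hypothesis Galois. First I would record the key consequence of transitivity: applying the Galois condition to $X = t$ with $Y = t$ and $f = \id_t$ produces, for every endomorphism $g$ of $t$ in this component, some $\sigma \in \aut_t$ with $g = \id_t \cdot \sigma = \sigma$; hence $\operatorname{End}(t) = \aut_t$, so that every endomorphism of $t$ is invertible.

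Next, weak initiality of the Galois object $t$ furnishes a map $k \colon t \to u$, that is, a map $k \colon A' \to A$ in $\A$ with $Fk \cdot t = u$. The composite $hk \colon t \to t$ is then an endomorphism of $t$, since $F(hk) \cdot t = Fh \cdot (Fk \cdot t) = Fh \cdot u = t$. By the previous paragraph $hk$ is invertible, and then $k \cdot (hk)^{-1}$ is a section of $h$: indeed $h \cdot \bigl(k(hk)^{-1}\bigr) = (hk)(hk)^{-1} = \id_t$. Hence $h$ is a split epimorphism, establishing the first claim.

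For the final assertion, suppose in addition that $u$ is generic. The same reasoning applied to $u$ gives $\operatorname{End}(u) = \aut_u$, and $kh \colon u \to u$ is an endomorphism of $u$ (because $F(kh) \cdot u = Fk \cdot t = u$), so it too is invertible. Thus $h$ has the right inverse $k(hk)^{-1}$ and the left inverse $(kh)^{-1}k$, and a map possessing both is an isomorphism. I do not expect a real obstacle here; the only point requiring care is the bookkeeping observation that genericity of the \emph{codomain} alone already forces the connecting composite $hk$ to be an endomorphism of a Galois object, so that transitivity---rather than the stronger initiality used for linear operations in Lemma~\ref{lem:5}---suffices to invert it.
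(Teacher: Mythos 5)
Your proof is correct and is in essence the paper's own argument with Lemma~\ref{lem:12} inlined: the paper's proof simply takes $k = 1_A$ in~\eqref{eq:25}, and the filler $\ell$ it obtains is exactly your section $k(hk)^{-1}$, since fillers in the proof of Lemma~\ref{lem:12} are constructed precisely by taking a map supplied by weak initiality and correcting it by an automorphism supplied by transitivity. The ``in particular'' clause then follows for you, as for the paper, by the usual two-sided-inverse bookkeeping, so there is no substantive difference between the two arguments.
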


\begin{proof}
  Take $k = 1_A$ in~\eqref{eq:25}.
\end{proof}

Replacing linear operations with generic ones in the definition of
familial functor yields the notion of \emph{analytic} functor.

\begin{Defn}
  \label{def:10}
  A functor $F \colon \A \rightarrow \B$ is \emph{analytic at stage
    $B \in \B$} if each operation in $B \downarrow F$ is covered by a
  generic one; a transformation $\alpha \colon F \Rightarrow G$ is
  \emph{analytic at stage $B$} if $F$ and $G$ are so, and the induced
  functor $B \downarrow F \rightarrow B \downarrow G$ preserves
  generic operations. We write simply \emph{analytic} to mean
  ``analytic at every stage''.
\end{Defn}

\begin{Rk}
  \label{rk:6}
  Analytic endofunctors of $\cat{Set}$ and \emph{weakly cartesian}
  transformations were introduced by Joyal
  in~\cite{Joyal1986Foncteurs}; by \cite[Theorems~10.10 \&
  10.11]{Weber2004Generic}, they are precisely the filtered-colimit
  preserving analytic endofunctors and transformations of $\cat{Set}$
  in our sense. However, as noted in the introduction, there are other
  possible ways to extend Joyal's notion of analyticity to general
  presheaf categories; two which exist in the literature are the
  \emph{quotient containers} of~\cite{Abbott2004Constructing}, and the
  \emph{generalised species} of~\cite{Fiore2008The-cartesian} (also
  studied in~\cite{Fiore2014Analytic}). Neither of these
  generalisations are adequate for our purposes, since neither have
  familial functors as a special case.
\end{Rk}

Just as familial functors are obtained from coproducts of
representables, so analytic functors arise from coproducts of
\emph{near-representables} in the sense of~\cite{Tambara2015Finite}:

\begin{Defn}
  \label{def:9}
  Let $A \in \A$ and $G \leqslant \aut_A$. A \emph{coinvariant} for
  $G$ is a joint coequaliser $q \colon A \twoheadrightarrow A_{/G}$
  for the set of morphisms
  $\{\sigma \colon A \rightarrow A \mid \sigma \in G\}$; dually, an
  \emph{invariant} for $G$ is a joint equaliser
  $\iota \colon A^{\setminus G} \rightarrowtail A$ for the maps in
  $G$. A functor $F \colon \A \rightarrow \cat{Set}$ is
  \emph{near-representable} if $F \cong \A(A, \thg)_{/G}$ for some
  $A \in \A$ and $G \leqslant \aut_A = \aut_{\A(A, \thg)}$.
\end{Defn}

\begin{Lemma}
  \label{lem:13}
  A functor $F \colon \A \rightarrow \B$ is analytic at stage
  $B \in \B$ if and only if the functor
  $\B(B, F\thg) \colon \A \rightarrow \cat{Set}$ is a (possibly large)
  coproduct of near-representables.
\end{Lemma}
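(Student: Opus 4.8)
The plan is to mirror the proof of the familial analogue, Lemma~\ref{lem:8}, replacing linear operations by generic ones and representables by near-representables. The first observation is that the comma category $B \downarrow F$ is precisely the category of elements $\el P$ of the functor $P = \B(B, F\thg) \colon \A \to \cat{Set}$: an object is an operation $t \colon B \to FA$, i.e.\ an element of $PA$, and a morphism $h$ is a map of arities with $Fh.t = t'$, i.e.\ $(Ph)(t) = t'$. Under this identification, a generic operation is by Definition~\ref{def:4} exactly a Galois object of its connected component of $\el P$, and ``covered by a generic operation'' means receiving a map from such an object. Since a Galois object is weakly initial in its component, it maps to every operation there; conversely any operation covered by a generic one lies in the same component as that generic. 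Hence $F$ is analytic at stage $B$ if and only if every connected component of $\el P$ contains a Galois object.

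It therefore suffices to prove the purely $\cat{Set}$-theoretic statement: for $P \colon \A \to \cat{Set}$, the functor $P$ is a coproduct of near-representables if and only if every connected component of $\el P$ has a Galois object. For the easy direction, I would start from $P = \A(A, \thg)_{/G}$ with $G \leqslant \aut_A$ and check that $\el P$ is connected with Galois object $(A, [1_A])$: weak initiality holds because any representative of an orbit $\xi \in \A(A,X)_{/G}$ gives a map $(A,[1_A]) \to (X,\xi)$, while its automorphism group is exactly $G$, and two maps out of it agree up to precomposition by an element of $G$, giving transitivity. A coproduct of near-representables then yields a coproduct of such connected categories of elements, since $\el(\sum_i P_i) \cong \sum_i \el P_i$.

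For the converse, I would decompose $P$ as the coproduct $\sum_i P_i$ indexed by the connected components of $\el P$, so that each $\el P_i$ is connected with a chosen Galois object $(A_i, a_i)$, where $a_i \in P_i A_i$. Setting $G_i = \Aut_{\el P_i}(A_i, a_i)$, Corollary~\ref{cor:2} (every endomorphism of a generic object is invertible) shows that $G_i$ consists of automorphisms of $A_i$ fixing $a_i$, so $G_i \leqslant \aut_{A_i}$. The assignment $f \mapsto (Pf)(a_i)$ defines a natural transformation $\A(A_i, \thg) \Rightarrow P_i$ which is $G_i$-invariant, hence factors through a map $\A(A_i, \thg)_{/G_i} \Rightarrow P_i$; weak initiality of $(A_i,a_i)$ gives pointwise surjectivity, and transitivity gives injectivity, so this is an isomorphism exhibiting $P_i$ as near-representable.

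Putting the two parts together establishes the lemma. The only real subtlety I anticipate is the bookkeeping in the converse direction---confirming that the group $G_i$ of automorphisms of the Galois object really lands in $\aut_{A_i}$ and that the induced comparison $\A(A_i,\thg)_{/G_i} \Rightarrow P_i$ is genuinely invertible; but both follow cleanly from the characterisation of Galois objects established en route to Lemma~\ref{lem:12} (weak initiality plus transitivity, with endomorphisms automatically invertible) together with Corollary~\ref{cor:2}.
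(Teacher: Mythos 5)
Your proposal is correct and follows essentially the same route as the paper: both reduce the lemma, via the identification $B \downarrow F \cong \el\,\B(B,F\thg)$ and the argument of Lemma~\ref{lem:8}, to the claim that a functor $\A \to \cat{Set}$ is near-representable precisely when its category of elements has a Galois object, which is then verified exactly as you do---the comparison map $\A(A_i,\thg)_{/G_i} \Rightarrow P_i$ induced by the Galois element is surjective by weak initiality and injective by transitivity. The only (immaterial) difference is that you check the easy direction by direct inspection of $\el\bigl(\A(A,\thg)_{/G}\bigr)$, whereas the paper gets both directions at once from the single statement that $\bar x$ is an isomorphism just when $(x,A)$ is Galois; also note $G_i \leqslant \aut_{A_i}$ holds by definition of automorphisms in $\el P_i$, without needing Corollary~\ref{cor:2}.
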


\begin{proof}
  This will follow as in Lemma~\ref{lem:8} once we have proved that:
  $F \in [\A, \cat{Set}]$ is near-representable just when $\el F$
  contains a Galois object. For any $(x, A)$ in $\el F$, let
  $G = \aut_{(x,A)} \leqslant \aut_A$; now
  $x \colon \yoneda_A \rightarrow F$ coequalises
  $\yoneda_\sigma \colon \yoneda_A \rightarrow \yoneda_A$ for each
  $\sigma \in G$, and so descends to a map
  $\bar x \colon {\yoneda_A}_{/G} \rightarrow F$. It suffices to show
  that $\bar x$ is an isomorphism just when $(x,A)$ is Galois.
  Surjectivity of $\bar x$ corresponds to weak initiality of $(x,A)$;
  injectivity requires that, for any
  $f, f' \colon A \rightrightarrows B$ with
  $\bar x(f) = \bar x(f') = y$, we have $f' = f\sigma$ for some
  $\sigma \in G$, or in other words, that for any
  $f, f' \colon (x,A) \rightrightarrows (y,B)$ in $\el F$, there is
  some $\sigma \colon (x,A) \rightarrow (x,A)$ in $G$ with
  $f' = f\sigma$: which is transitivity of $(x,A)$.
\end{proof}

\subsection{Pointwise analyticity}
\label{sec:pointw-analyt}

As before, when we consider endofunctors of presheaf categories, or
more generally functors \emph{into} a presheaf category, the most
appropriate kind of analyticity is pointwise:

\begin{Defn}
  \label{def:17}
  A functor $F \colon \A \rightarrow \psh$ or transformation
  {${\alpha \colon F \Rightarrow G \colon \A \rightarrow \psh}$} is
  \emph{pointwise analytic} if it is analytic at all representable
  stages; $F$ is called \emph{small} if $\yoneda_c \downarrow F$ has a
  mere set of connected components for each $c \in \C$. We write
  $\anpt(\A, \psh)$ for the category of small pointwise analytic
  functors and pointwise analytic transformations.
\end{Defn}

In particular, by Lemma~\ref{lem:13}, a functor
$F \colon \A \rightarrow \psh$ is small pointwise analytic just when
each $(F\thg)c$ is a small coproduct of near-representables; so, for
example, comparing with the formula of Proposition~\ref{prop:22}, we
find---as promised above---that the ``free polycategory'' endofunctor
on the category of \emph{symmetric} polygraphs is pointwise analytic,
though it is not pointwise familial.

As in the familial case, a small pointwise analytic $F$ is determined
by the near-representable summands of each $(F\thg)c$ and how these
transform under maps $(F\thg)f \colon (F\thg)d \rightarrow (F\thg)c$.
We wish to give a representation of these data analogous to
Definition~\ref{def:5}; the new aspect is that, in encoding a
near-representable summand $\A(A, \thg)_{/G}$, we must record not just
the arity $A$ but also the group $G$ of automorphisms which fix it. We do this
using the notion of \emph{orbit category}.

\begin{Defn}
  \label{def:6}
  The \emph{orbit category} $\O(\A)$ of a category $\A$ has as
  objects, pairs $(A,G)$ where $A \in \A$ and $G \leqslant \aut_A$,
  and as morphisms $[f] \colon (A, G) \rightarrow (B, H)$, equivalence
  classes of maps $f \colon A \rightarrow B$ in $\A$ with the property
  that
  \begin{equation}\label{eq:3}
    \text{for all $\tau \in H$, there exists $\sigma \in G$ with
      $\tau f = f \sigma$,}
  \end{equation}
  where $[f] = [f']$ when there exists $\sigma \in G$ with
  $f' = f\sigma$. We write $J \colon \A \rightarrow \O(\A)$ for the
  full embedding sending $A$ to $(A,1)$.
\end{Defn}

Intuitively, we regard the generating operations of
$F \in \anpt(\A, \psh)$ as having input arities drawn not from $\A$
but from $\O(\A)$; we will make this precise by equating such functors
$F$ with pointwise familial ones $F' \in \fampt(\O(\A), \psh)$. First
we describe the passage between functors with domains $\A$ and
$\O(\A)$.

\begin{Prop}
  \label{prop:3}
  $\O(\A)$ admits all group invariants, and for any category $\B$
  admitting group invariants, composition with $J$ induces an
  equivalence
  \begin{equation}\label{eq:7}
    \cat{INVAR}(\O(\A), \B) \xrightarrow{(\thg) \circ J} \cat{CAT}(\A, \B)
  \end{equation}
  with domain the category of invariant-preserving functors and
  transformations.
\end{Prop}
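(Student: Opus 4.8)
The plan is to prove both assertions by a direct analysis of the hom-sets and automorphism groups of $\O(\A)$, and then to build an explicit pseudo-inverse to $(\thg)\circ J$ out of the group invariants of $\B$. I would begin by pinning down the automorphisms and invariants of $\O(\A)$. Given $(A,G)$, an endomorphism $[\phi]$ is invertible precisely when $\phi\in\aut_A$ and $\phi^{-1}G\phi=G$: if $[\psi]$ is a two-sided inverse then $\psi\phi,\phi\psi\in G\subseteq\aut_A$ force $\phi\in\aut_A$, and unwinding the morphism conditions on $\phi$ and on $\psi$ yields both $\phi^{-1}G\phi\subseteq G$ and $\phi G\phi^{-1}\subseteq G$. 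Hence $\phi\mapsto[\phi]$ is a surjective homomorphism $N_{\aut_A}(G)\to\aut_{(A,G)}$ with kernel $G$, so $\aut_{(A,G)}\cong N_{\aut_A}(G)/G$. For a subgroup $K\leqslant\aut_{(A,G)}$, let $\hat K\leqslant N_{\aut_A}(G)$ be its preimage, a subgroup of $\aut_A$ containing $G$ as a normal subgroup. I claim $[\id_A]\colon(A,\hat K)\to(A,G)$ is the joint equaliser of $K$: it is a well-defined morphism since $G\subseteq\hat K$, it equalises each $[\phi]\in K$ because $[\phi]=[\id]$ as maps $(A,\hat K)\to(A,G)$ (witnessed by $\phi^{-1}\in\hat K$), and any $[g]\colon(C,L)\to(A,G)$ equalised by all of $K$ satisfies exactly the condition making $g$ a morphism $(C,L)\to(A,\hat K)$, which is its unique factorisation through $[\id]$. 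This shows $\O(\A)$ admits all group invariants. The special case $G=1$ will be decisive: since $\aut_{(A,1)}\cong\aut_A$, the invariant of $G\leqslant\aut_{(A,1)}$ acting on $JA=(A,1)$ is exactly $(A,G)\xrightarrow{[\id]}(A,1)$.

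Next I would construct $\widetilde{(\thg)}\colon\cat{CAT}(\A,\B)\to\cat{INVAR}(\O(\A),\B)$. For $P\colon\A\to\B$, put $\widetilde P(A,G)=(PA)^{\setminus P(G)}$, the invariant in $\B$ of the subgroup $P(G)\leqslant\aut_{PA}$, with structure map $\iota_{A,G}$. A morphism $[f]\colon(A,G)\to(B,H)$ satisfies $\forall\tau\in H\,\exists\sigma\in G\colon\tau f=f\sigma$; applying $P$ shows $Pf\circ\iota_{A,G}$ equalises $P(H)$, hence factors uniquely through $\iota_{B,H}$ to define $\widetilde P[f]$, with independence of the representative immediate. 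Functoriality follows from uniqueness in the universal property, and $\widetilde P\circ J=P$ on the nose once the invariant of the trivial group is chosen to be the identity. To see that $\widetilde P$ preserves invariants I would invoke the transitivity of group invariants: for $G\trianglelefteq\hat K$ in $\aut_A$, the invariant $(PA)^{\setminus P\hat K}$ is canonically the invariant, inside $(PA)^{\setminus P(G)}$, of the descended group $P(\hat K)/P(G)$, and that descended action is precisely $\{\widetilde P[\phi]:[\phi]\in K\}$; hence $\widetilde P$ carries $(A,\hat K)\to(A,G)$ to the invariant of $\widetilde P(K)$. The assignment extends to natural transformations: a $\theta\colon P\Rightarrow P'$ restricts along each $\iota_{A,G}$ (by naturality of $\theta$ and the equalising property) to a component $\widetilde\theta_{(A,G)}$, giving $\widetilde\theta\colon\widetilde P\Rightarrow\widetilde P'$.

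Finally I would verify that the two composites are naturally isomorphic to identities. One composite is strictly the identity, as $\widetilde P\circ J=P$ and $\widetilde\theta\circ J=\theta$. For the other, given an invariant-preserving $Q\colon\O(\A)\to\B$, the object $\widetilde{QJ}(A,G)=(QJA)^{\setminus(QJ)(G)}$ is the invariant in $\B$ of $\{Q[\sigma]:\sigma\in G\}$ acting on $Q(A,1)$; but by the special case isolated above, $(A,G)$ is the invariant of $G$ acting on $(A,1)=JA$, so since $Q$ preserves invariants, $Q(A,G)$ is the invariant of that same group on that same object. Uniqueness of invariants up to canonical isomorphism then yields a natural isomorphism $\widetilde{QJ}\cong Q$, and thus $(\thg)\circ J$ is an equivalence.

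I expect the main obstacle to be twofold. First, the careful bookkeeping of the morphisms and of the one-sided equivalence relation (right multiplication by the source group) in $\O(\A)$: getting the direction of the normaliser condition and of each factorisation right is where errors would creep in. Second, the transitivity-of-invariants lemma used to land $\widetilde P$ in $\cat{INVAR}$; this is a routine iterated-equaliser argument — the joint equaliser of $\hat K$ is computed by first equalising $G$ and then equalising the descended quotient group $\hat K/G$ — but it is the one place where the hypothesis that $\B$ admits group invariants is genuinely used, and one must check that the descended action agrees with $\widetilde P$ applied to $K$.
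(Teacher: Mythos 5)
Your proof is correct, and it takes a genuinely different route from the paper's. The paper treats the proposition as the assertion that $\O(\A)$ is the \emph{free completion} of $\A$ under group invariants: it embeds $\O(\A)$ fully faithfully into $[\A,\cat{Set}]^{\mathrm{op}}$ by $(A,G) \mapsto \A(A,\thg)_{/G}$ (a Yoneda computation), then cites Tambara for the fact that this subcategory of near-representables admits all group invariants, and Kelly's theorem on free completions to conclude that closing the representables under group invariants yields exactly the universal property~\eqref{eq:7}. You instead do everything by hand: you compute $\aut_{(A,G)} \cong N_{\aut_A}(G)/G$, exhibit the invariant of $K \leqslant \aut_{(A,G)}$ concretely as $[\id_A] \colon (A,\hat K) \to (A,G)$ with $\hat K$ the preimage of $K$ in the normaliser, and construct an explicit pseudo-inverse sending $P$ to $\widetilde P$ with $\widetilde P(A,G) = (PA)^{\setminus PG}$, verifying one triangle strictly and the other via the observation that $[\id_A]\colon (A,G) \to (A,1)$ is itself the invariant of $G$ acting on $JA$, so that any invariant-preserving $Q$ must send it to the invariant of $(QJ)(G)$ acting on $Q(A,1)$. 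Your approach buys self-containedness (no appeal to either cited result) and explicit formulas: in particular it actually \emph{proves} the description $H'(A,G) = HA^{\setminus HG}$ of the extension, which the paper merely records after the proposition, as well as the normaliser description of automorphism groups in $\O(\A)$, which the paper nowhere makes explicit. The paper's approach buys brevity and conceptual placement: the identification of $\O(\A)$ with the near-representables is the same lens used for Lemma~\ref{lem:13}, and the free-completion formulation delivers the equivalence without any hand-verification of functoriality or naturality. The only steps you leave implicit are routine and do go through: naturality in $(A,G)$ of the isomorphism $\widetilde{QJ} \cong Q$, and its naturality in $Q$; both follow by composing with the monic structure maps of the invariants, in the same style as your other verifications.
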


\begin{proof}
  The statement says that $\O(\A)$ is the free completion of $\A$
  under group invariants, and by~\cite[Theorem~5.35]{Kelly1982Basic},
  this completion may be found as the full subcategory of
  $[\A, \cat{Set}]^\mathrm{op}$ obtained by closing the representables
  under group invariants. So it suffices to identify $\O(\A)$ with
  this full subcategory. Direct calculation using the Yoneda lemma
  shows that maps in $\O(\A)$ from $(A,G)$ to $(B,H)$ are in bijection
  with maps $\A(B, \thg)_{/H} \rightarrow \A(A, \thg)_{/G}$ in
  $[\A, \cat{Set}]$; so there is a fully faithful
  $K \colon \O(\A) \rightarrow [\A, \cat{Set}]^\mathrm{op}$ with
  $K(A,G) = \A(A, \thg)_{/G}$. By definition, each $\A(A, \thg)_{/G}$
  in the image of $K$ lies in the closure of the representables in
  $[\A, \cat{Set}]^\mathrm{op}$ under group invariants, and so it
  suffices to show that this subcategory in fact has all group
  invariants---which is~\cite[Proposition~2.2]{Tambara2015Finite}.
\end{proof}

Explicitly, if $\B$ admits group invariants and
$H \colon \A \rightarrow \B$, then the invariant-preserving extension
$H' \colon \O(\A) \rightarrow \B$ is defined by
$H'(A,G) = HA^{\setminus HG}$, where here
$HG = \{H \sigma : \sigma \in G\} \leqslant \aut_{HA}$. In particular,
if $F \colon \A \rightarrow \B$ is any functor between categories,
then applying this construction to $JF \colon \A \rightarrow \O(\B)$
yields an invariant-preserving
$\O(F) \colon \O(\A) \rightarrow \O(\B)$ given by
$\O(F)(A,G) = (FA,FG)$.

We will now show that, when $\B = \psh$, the equivalence~\eqref{eq:7}
restricts back to one between pointwise analytic functors out of $\A$
and pointwise familial ones out of $\O(\A)$. However, under this
equivalence, pointwise analytic transformations correspond not to
familial ones but to \emph{near-familial} ones in the following sense:

\begin{Defn}
  \label{def:8}
  A morphism $[f] \colon (A,G) \rightarrow (B,H)$ in $\O(\A)$ is
  called \emph{vertical} if the underlying map
  $f \colon A \rightarrow B$ is invertible in $\A$. For any
  $F' \colon \O(\A) \rightarrow \B$, an operation $t$ in
  $B \downarrow F'$ is called \emph{near-linear} if it is covered by a
  linear operation via a map which is vertical in $\O(\A)$. If
  $F, G \colon \O(\A) \rightarrow \B$ are familial at stage $B$, then
  a transformation $\alpha \colon F \Rightarrow G$ is
  \emph{near-familial} at stage $B$ if it preserves near-linear
  operations.
\end{Defn}

We now give our equivalence result, after a preparatory lemma:
in the statement of the lemma, we call an object of an orbit category
$\O(\A)$ \emph{near-initial} if it admits a vertical map from an
initial object.

\begin{Lemma}
  \label{lem:15}
  \begin{enumerate}[(i)]
  \item $\A$ has a Galois object if and only if $\O(\A)$ has an
    initial one. $F \colon \A \rightarrow \B$ preserves Galois objects
    if and only if $\O(F) \colon \O(\A) \rightarrow \O(\B)$ preserves
    near-initial objects.\vskip0.25\baselineskip
  \item Let $\B$ admit group invariants and let
    $F \colon \A \rightarrow \B$ have invariant-preserving extension
    $F' \colon \O(\A) \rightarrow \B$. We have
    $B \downarrow F' \cong \O(B \downarrow F)$ naturally in $F$.
  \end{enumerate}
\end{Lemma}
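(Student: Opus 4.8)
The plan is to prove (i) from the ``weak initiality plus transitivity'' characterisation of Galois objects recalled just before Lemma~\ref{lem:12}, and then to deduce the preservation statement from an auxiliary identification of the near-initial objects of an orbit category. For the first equivalence of (i), I would show that if $A \in \A$ is Galois then $(A, \aut_A)$ is initial in $\O(\A)$: given $(B,H)$, weak initiality provides a map $f \colon A \to B$, and for each $\tau \in H$ transitivity applied to the pair $f, \tau f \colon A \rightrightarrows B$ yields $\sigma \in \aut_A$ with $\tau f = f\sigma$, so that $f$ satisfies the compatibility condition~\eqref{eq:3} and $[f]$ is a morphism $(A, \aut_A) \to (B,H)$; transitivity applied once more forces any two representatives to agree modulo $\aut_A$, giving uniqueness. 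Conversely, if $(A,G)$ is initial then $A$ is Galois: the unique maps into the objects $(B,1)$ give weak initiality, and uniqueness of those maps gives transitivity, since two maps $f, f' \colon A \to B$ both represent morphisms $(A,G) \to (B,1)$ and so must be related by some $\sigma \in G \leqslant \aut_A$.

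Next I would record the auxiliary fact that $(A,G)$ is \emph{near-initial} in $\O(\A)$ if and only if $A$ is Galois in $\A$. If $A$ is Galois, then $(A, \aut_A)$ is initial by the previous paragraph and $[\id_A] \colon (A,\aut_A) \to (A,G)$ is vertical, so $(A,G)$ is near-initial; conversely a vertical map from an initial object is an isomorphism in $\A$ onto $A$ from a Galois object, and being Galois is invariant under isomorphism. The preservation half of (i) is then immediate: since $\O(F)(A,G) = (FA, FG)$, the statement ``$\O(F)$ preserves near-initial objects'' unwinds, via this auxiliary fact applied in both $\A$ and $\B$, to ``$FA$ is Galois whenever $A$ is'', which is exactly ``$F$ preserves Galois objects''.

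For (ii) the plan is to exhibit a concrete isomorphism of categories, the engine being the universal property of the invariant $\iota \colon FA^{\setminus FG} \rightarrowtail FA$: postcomposition with $\iota$ is a bijection between maps $B \to FA^{\setminus FG}$ and maps $t \colon B \to FA$ that equalise $\{F\sigma : \sigma \in G\}$, i.e.\ those for which $G \leqslant \aut_{(A,t)}$ in $B \downarrow F$. On objects I would send $((A,t),\Gamma) \in \O(B \downarrow F)$ (so $\Gamma \leqslant \aut_{(A,t)}$) to $((A,\Gamma),s) \in B \downarrow F'$, where $s$ is the factorisation of $t$ through $FA^{\setminus F\Gamma}$; the inverse sends $((A,G),s)$ to $((A, \iota s), G)$, and uniqueness of factorisations through the mono $\iota$ makes these mutually inverse. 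On morphisms, a map $[h] \colon (A,\Gamma) \to (A',\Gamma')$ of $\O(\A)$ with $F'[h]\cdot s = s'$ should match a map $[h] \colon ((A,t),\Gamma) \to ((A',t'),\Gamma')$ of $\O(B \downarrow F)$: composing with the monic $\iota'$ and using $\iota' \cdot F'[h] = Fh \cdot \iota$ (the defining property of the restriction $F'[h]$), the condition $F'[h]\cdot s = s'$ becomes $Fh \cdot t = t'$, which is exactly the requirement that $h$ be a map $(A,t) \to (A',t')$ in $B \downarrow F$.

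The main thing to get right is that the two orbit-category structures carry the same morphism data. I would point out that the compatibility condition~\eqref{eq:3} defining morphisms in $\O(\A)$ relative to $\Gamma, \Gamma'$ is, on underlying maps, literally the condition defining morphisms in $\O(B \downarrow F)$ relative to $\Gamma \leqslant \aut_{(A,t)}$ and $\Gamma' \leqslant \aut_{(A',t')}$ (both read $\tau h = h\sigma$), and that the two quotient relations $[h] = [h\sigma]$ coincide; so morphisms correspond bijectively and compatibly with composition and identities. The functoriality of the restriction $[h] \mapsto F'[h]$ needed here is guaranteed by Proposition~\ref{prop:3}. Finally, for naturality in $F$, a transformation $\alpha \colon F \Rightarrow \bar F$ induces $\alpha' \colon F' \Rightarrow \bar F'$ by Proposition~\ref{prop:3} and $B \downarrow \alpha \colon B \downarrow F \to B \downarrow \bar F$, and I expect the anticipated obstacle to be purely bookkeeping: checking that the displayed isomorphism commutes with these, which reduces to naturality of the invariant factorisations $t \mapsto s$ in $F$, a routine diagram chase.
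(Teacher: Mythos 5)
Your proof is correct and takes essentially the same route as the paper's: part (i) via the identification of a Galois object $A$ with an initial $(A,\aut_A)$ in $\O(\A)$, together with the observation that $(A,G)$ is near-initial precisely when $A$ is Galois, and part (ii) via the universal property of the invariant $FA^{\setminus FG}$, identifying objects of $B \downarrow F'$ with pairs $(t \in B \downarrow F,\, G \leqslant \aut_t)$. The only difference is one of detail: you spell out the morphism-level correspondence (using $\iota' \cdot F'[h] = Fh \cdot \iota$ and monicity of $\iota'$) and the naturality in $F$, which the paper compresses into ``arguing similarly'' and ``straightforward''.
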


\begin{proof}
  For (i), $A \in \A$ is Galois just when each $\A(A,B)$ is a
  connected $\aut_A$-set, which is equivalent to the existence of a
  unique $[u_B] \colon (A, \aut_A) \rightarrow (B,1)$ in $\O(\A)$ for
  each $B \in \A$. Now, $[u_B]$ factors through each
  $[1] \colon (B, H) \rightarrowtail (B,1)$, for, by definition of a
  Galois object, any $h \in H$ yields an automorphism $g \in \aut_A$
  such that $u_Bg = hu_B$. So $A$ is Galois just when $(A, \aut_A)$ is
  initial in $\O(\A)$. The second claim is immediate on observing that
  $(A, G) \in \O(\A)$ is near-initial if and only if
  $(A, \aut_A) \in \O(\A)$ is initial, if and only if $A \in \A$ is
  Galois.

  For (ii), an object of $B \downarrow F'$ comprises
  $(A,G) \in \O(\A)$ and $u \colon B \rightarrow FA^{\setminus FG}$ in
  $\B$. Now, to give $u$ is equally to give a map
  $t \colon B \rightarrow FA$ satisfying $F\sigma . t = t$ for all
  $\sigma \in G$. This condition says that each $\sigma \in G$ lies in
  $\aut_t \leqslant \aut_A$, and so an object of $B \downarrow F'$ is
  equally a pair $(t \in \B \downarrow F,\, G \leqslant \aut_t)$.
  Arguing similarly on morphisms, we conclude that
  $B \downarrow F' \cong \O(B \downarrow F)$; naturality in $F$ is
  straightforward.
\end{proof}

\begin{Prop}
  \label{prop:7}
  Let $\B$ admit group invariants. Under the equivalence~\eqref{eq:7},
  functors and transformations $\A \rightarrow \B$ which are analytic
  at stage $B$ correspond to functors and transformations
  $\O(\A) \rightarrow \B$ which are familial, respectively
  near-familial at stage $B$. When $\B = \psh$, the
  equivalence~\eqref{eq:7} restricts to one
  \begin{equation}
    \label{eq:9}
    \nfampt(\O(\A), \psh) \xrightarrow{(\thg) \circ J} \anpt(\A, \psh) 
  \end{equation}
  with as domain the category of small pointwise familial functors and
  pointwise near-familial transformations $\O(\A) \rightarrow \psh$.
\end{Prop}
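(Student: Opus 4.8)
The plan is to deduce the whole statement from the natural isomorphism $B \downarrow F' \cong \O(B \downarrow F)$ of Lemma~\ref{lem:15}(ii), together with the identification of initial/Galois objects under $\O$ in Lemma~\ref{lem:15}(i). First I would record two reformulations. A functor $F \colon \A \to \B$ is analytic at stage $B$ exactly when every connected component of $B \downarrow F$ contains a Galois (i.e.\ generic) object; dually, $F' \colon \O(\A) \to \B$ is familial at stage $B$ exactly when every connected component of $B \downarrow F'$ contains an initial (i.e.\ linear) object. Both hold because a Galois object is weakly initial in its component and so covers every object there, and likewise for an initial object; thus ``every operation is covered by a generic (resp.\ linear) one'' is equivalent to ``every component contains such an object''.

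Next I would observe that the full embedding $J \colon \D \to \O(\D)$ induces a bijection on connected components: every $(X,G)$ admits the canonical map $[1] \colon (X,G) \to (X,1)=JX$, so lies in the component of some $JX$, while any zig-zag in $\O(\D)$ descends on underlying maps to a zig-zag in $\D$. Since moreover $\O$ carries a disjoint union of categories to the disjoint union of their orbit categories, each component of $\O(\D)$ is the orbit category of the corresponding component of $\D$. Applying Lemma~\ref{lem:15}(i) componentwise to $\D = B \downarrow F$ then shows that every component of $\O(B \downarrow F)$ has an initial object if and only if every component of $B \downarrow F$ has a Galois one; transporting along $B \downarrow F' \cong \O(B \downarrow F)$ gives ``$F'$ familial at $B$'' $\iff$ ``$F$ analytic at $B$'', which is the functor half of the correspondence.

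For transformations, naturality of the isomorphism of Lemma~\ref{lem:15}(ii) in the functor variable identifies the functor $B \downarrow F' \to B \downarrow G'$ induced by the corresponding $\alpha' \colon F' \Rightarrow G'$ with $\O(\Phi) \colon \O(B \downarrow F) \to \O(B \downarrow G)$, where $\Phi \colon B \downarrow F \to B \downarrow G$ is the functor induced by $\alpha$. Under the same isomorphism the near-linear operations of $B \downarrow F'$—those covered by a linear one along a vertical map—correspond precisely to the near-initial objects of $\O(B \downarrow F)$, since the isomorphism matches linear with initial and, by Definition~\ref{def:8}, vertical with vertical. Hence $\alpha'$ is near-familial at $B$ iff $\O(\Phi)$ preserves near-initial objects, which by the preservation clause of Lemma~\ref{lem:15}(i) holds iff $\Phi$ preserves Galois objects, i.e.\ iff $\alpha$ is analytic at $B$; the standing requirements that $F,G$ be analytic (resp.\ $F',G'$ familial) are already matched by the functor case.

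Finally, taking $\B = \psh$ and letting $B$ range over the representables $\yoneda_c$ promotes the stagewise correspondence to a pointwise one. The component bijection also yields a bijection of components from $\yoneda_c \downarrow F' \cong \O(\yoneda_c \downarrow F)$, so $F'$ is small iff $F$ is, and the equivalence~\eqref{eq:7} restricts to the claimed equivalence~\eqref{eq:9}. I expect the main obstacle to be the transformation case: one must check, from the explicit description of the isomorphism in Lemma~\ref{lem:15}(ii), that it carries vertical maps to vertical maps—hence near-linear operations to near-initial objects—and that it is genuinely natural in the functor variable, so that the induced comma functors are identified with $\O(\Phi)$. The functor half, by contrast, is essentially formal once the componentwise bookkeeping around $\O$ is set up.
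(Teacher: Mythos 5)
Your treatment of the first claim matches the paper's own argument: both rest on Lemma~\ref{lem:15}, the observation that $\O(\thg)$ preserves connected components (every $(A,G)$ maps to $(A,1)$, and every morphism of $\O(\A)$ has an underlying morphism in $\A$), and the transport of Galois/initial and near-initial objects across the natural isomorphism $B \downarrow F' \cong \O(B \downarrow F)$. The transformation case, which you flag as the main risk, is handled exactly as you propose and goes through.

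The genuine gap is in the final step, where you assert that ``the equivalence~\eqref{eq:7} restricts to the claimed equivalence~\eqref{eq:9}.'' The domain of~\eqref{eq:9} is the category of \emph{all} small pointwise familial functors $\O(\A) \rightarrow \psh$, whereas the domain of~\eqref{eq:7} is $\cat{INVAR}(\O(\A), \psh)$, the invariant-preserving functors. Your first claim only covers functors of the form $F'$, i.e.\ invariant-preserving extensions; a priori a pointwise familial $G \colon \O(\A) \rightarrow \psh$ that fails to preserve group invariants would lie outside the equivalence~\eqref{eq:7} altogether, and then nothing you have proved shows $G \circ J$ is pointwise analytic, nor that the restricted functor is fully faithful on such objects. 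So one must verify the inclusion $\nfampt(\O(\A), \psh) \subset \cat{INVAR}(\O(\A), \psh)$. This is what the paper checks, and the argument is short: if $G$ is pointwise familial, each $(G\thg)c \colon \O(\A) \rightarrow \cat{Set}$ is a coproduct of representables, hence preserves connected limits and in particular group invariants; since limits in $\psh$ are computed pointwise, $G$ itself preserves group invariants. With this inserted, your proof is complete and agrees with the paper's.
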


\begin{proof}
  Let $F \colon \A \rightarrow \B$ have invariant-preserving extension
  $F' \colon \O(\A) \rightarrow \B$. Because any object $(A,G)$ of
  $\O(\A)$ admits a morphism to $(A,1)$, and because all morphisms of
  $\O(\A)$ have an underlying morphism in $\A$, $\O(\thg)$ preserves
  connected components. So, by Lemma~\ref{lem:15}, each
  $B \downarrow F$ is a coproduct of categories with Galois objects
  just when each $\O(B \downarrow F) \cong B \downarrow F'$ is a
  coproduct of categories with initial objects. Moreover, if
  $\alpha \colon F \Rightarrow G$ is a transformation between functors
  analytic at stage $B$, with invariant-preserving extension
  $\alpha' \colon F' \Rightarrow G'$, then by Lemma~\ref{lem:15}, each
  functor
  $B \downarrow \alpha \colon B \downarrow F \rightarrow B \downarrow
  G$ preserves Galois objects just when each
  $\O(B \downarrow \alpha) \cong B \downarrow \alpha'$ preserves
  near-initial objects. This proves the first claim.

  Now suppose that $\B = \psh$. It is immediate that smallness is
  preserved under the preceding equivalences, and so the only
  additional point to verify is that
  $\nfampt(\O(\A), \psh) \subset \cat{INVAR}(\O(\A), \psh)$. But if
  $G \colon \O(\A) \rightarrow \psh$ is pointwise familial, then each
  $(G\thg)c \colon \O(\A) \rightarrow \cat{Set}$, being a coproduct of
  representables, preserves connected limits and in particular group
  invariants; whence $G$ preserves group invariants, as limits in
  $\psh$ are pointwise.
\end{proof}

Using this result, we may now give the promised analytic analogue of
Definition~\ref{def:5}, describing each small pointwise analytic
$F \colon \A \rightarrow \psh$ in terms of the near-representable
summands of each $(F\thg)c$.

\begin{Defn}
  \label{def:11}
  Let $F \colon \A \rightarrow \psh$ be small pointwise analytic. The
  \emph{spectrum} $S_{F} \in \psh$ and \emph{exponent}
  $E_F \colon \el S_F \rightarrow \O(\A)$ of $F$ are the spectrum and
  exponent of the small pointwise familial
  $F' \colon \O(\A) \rightarrow \psh$ corresponding to $F$
  under~\eqref{eq:9}.
\end{Defn}

\begin{Rk}
  \label{rk:1}
  Let us unpack this definition. Given $F \in \anpt(\A, \psh)$, we
  choose like before a generic operation in each connected component
  of $\yoneda_c \downarrow F$, and write $\tilde t$ for the chosen
  generic cover of $t \in \yoneda_c \downarrow F$. The spectrum of $F$
  is now exactly as in Definition~\ref{def:5}, while the exponent
  $E_F \colon \el S_F \rightarrow \O(\A)$ is given on objects by
  $E_F(t \colon \yoneda_c \rightarrow TA) = (A, \aut_t)$, where
  $\aut_t$ is the automorphism group of
  $t \in \yoneda_c \downarrow T$, or equally the set of all
  $\sigma \in \aut_A$ such that $(T\sigma)(t) = t$. To define $E_F$ on
  a map $f \colon \widetilde{t\yoneda_f} \rightarrow t$ of $\el S_F$,
  we form the square
  \begin{align*}
    \cd{
      {\yoneda_c} \ar[r]^-{\yoneda_f} \ar[d]_{t' =\widetilde{t\yoneda_f}} &
      {\yoneda_d} \ar[d]^{t} \\
      {FA'} \ar[r]^-{Fu_f} &
      {FA}
    }
  \end{align*}
  whose lower edge is \emph{any} map induced by weak initiality of
  $\widetilde{t\yoneda_f}$ in $\yoneda_c \downarrow F$, and take
  $E_F(f) = [u_f] \colon (A', \aut_{t'}) \rightarrow (A, \aut_t)$. Note
  that the mapping $f \mapsto u_f$ is only functorial ``up to
  automorphism groups'', so that $E_F$ may \emph{not} exist as a
  functor $\el S_F \rightarrow \A$.
\end{Rk}

Just as in the familial case, a small pointwise analytic
$\A \rightarrow \psh$ can be recovered from its spectrum and exponent.
We express this in terms of an equivalence between $\anpt(\A, \psh)$
and the following category:

\begin{Defn}
  \label{def:16}
  For any $\A$ and small $\C$, the category
  $\el_\C /\!\!/_{\!v}\,\, \O(\A)$ has as objects, pairs
  ($S \in \psh$, $E \colon \el S \rightarrow \O(\A)$), and as maps
  $(S, E) \rightarrow (T, D)$, pairs of a presheaf map
  $p \colon S \rightarrow T$ and a pointwise vertical transformation
  $\varphi$ of the form:
  \begin{equation}\label{eq:10}
    \cd[@-1em]{
      {\el S} \ar[rr]^-{\el p} \ar[dr]_-{E} & \ltwocello{d}{\varphi} &
      {\el T} \ar[dl]^-{D} \\ &
      {\O(\A)}\rlap{ .}
    }
  \end{equation}
\end{Defn}

\begin{Prop}
  \label{prop:4}
  The assignation $F \mapsto (S_F, E_F)$ is the action on objects of
  an equivalence of categories between $\anpt(\A, \psh)$ and
  $\el_\C /\!\!/_{\!v}\,\, \O(\A)$.
\end{Prop}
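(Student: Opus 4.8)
The plan is to realise the desired equivalence as a composite of two equivalences that are, essentially, already available: the equivalence \eqref{eq:9} of Proposition~\ref{prop:7}, which identifies $\anpt(\A, \psh)$ with $\nfampt(\O(\A), \psh)$, and a \emph{near-familial} variant of Proposition~\ref{prop:1}, applied with $\O(\A)$ in the role of $\A$, identifying $\nfampt(\O(\A), \psh)$ with $\el_\C /\!\!/_{\!v}\,\, \O(\A)$. On objects there is nothing new to do: the objects of $\nfampt(\O(\A), \psh)$ are precisely the small pointwise familial functors $\O(\A) \to \psh$, and Definition~\ref{def:11} \emph{defines} $(S_F, E_F)$ to be the spectrum and exponent of the functor $F' \colon \O(\A) \to \psh$ corresponding to $F$ under \eqref{eq:9}. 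Hence the object-assignment $F \mapsto (S_F, E_F)$ is literally the composite of the object-assignment of \eqref{eq:9} with that of Proposition~\ref{prop:1} read at $\O(\A)$. The whole task is therefore to produce the second equivalence and to check that it behaves correctly on morphisms.

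For that second equivalence I would first re-run Proposition~\ref{prop:1} verbatim with $\O(\A)$ in place of $\A$; since that proposition is stated for an arbitrary category, this at once yields an equivalence $\fampt(\O(\A), \psh) \simeq \el_\C /\!\!/\, \O(\A)$, whose morphism side sends a \emph{familial} transformation $\alpha \colon F' \Rightarrow G'$ to the pair $(p, \varphi)$ of \eqref{eq:8} with $\varphi$ a natural \emph{isomorphism}. Now $\el_\C /\!\!/\, \O(\A)$ and $\el_\C /\!\!/_{\!v}\,\, \O(\A)$ share the same objects and differ only in that $\varphi$ is required to be invertible in the former but merely pointwise vertical in the latter (and an isomorphism of $\O(\A)$ is automatically vertical, so the former sits inside the latter). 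Thus the only genuine content is the morphism correspondence: \emph{a transformation $\alpha \colon F' \Rightarrow G'$ between small pointwise familial functors $\O(\A) \to \psh$ is near-familial if and only if the associated $\varphi$ is pointwise vertical.} To prove this I would follow the argument in Proposition~\ref{prop:1} establishing ``$\alpha_A.t$ linear $\Leftrightarrow$ $\varphi_t$ invertible'', systematically replacing ``linear'' by ``near-linear'' and ``invertible'' by ``vertical''. Concretely, for a chosen linear $t$ the factorisation $\alpha_A.t = G'\varphi_t \circ p(t)$ exhibits $\varphi_t$ as a covering map $p(t) \to \alpha_A.t$ in $\yoneda_c \downarrow G'$ out of the chosen linear cover $p(t)$; preservation of near-linear operations, applied to the (near-linear) $t$, forces $\alpha_A.t$ to be near-linear, i.e.\ covered by a linear operation via a vertical map, and since any two linear covers of $\alpha_A.t$ differ by a vertical isomorphism this is equivalent to $\varphi_t$ itself being vertical. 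The converse direction is the observation that verticality of every $\varphi_t$ sends near-linear operations to near-linear operations.

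The step I expect to be the main obstacle is exactly this last correspondence: one must check that ``near-linear'' is the right analogue of ``linear'' to track along the orbit-category presentation of Definition~\ref{def:8}, and that verticality of the covering map $\varphi_t$ in $\O(\A)$ really coincides with verticality of the defining cover of a near-linear operation. This rests on two facts that should be invoked carefully --- that isomorphisms in $\O(\A)$ are vertical, and that any map between generic (hence, via $J$, linear) covers is an isomorphism, by Corollary~\ref{cor:2} together with closure of linear operations under isomorphism --- which together justify using the single chosen cover $p(t)$ without loss of generality. Once this morphism lemma is in place, composing the two equivalences and noting that the composite object-assignment is $F \mapsto (S_F, E_F)$ completes the proof.
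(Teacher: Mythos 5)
Your proposal is correct and takes essentially the same route as the paper: its proof likewise invokes Proposition~\ref{prop:7} to reduce to an equivalence $\nfampt(\O(\A), \psh) \simeq \el_\C /\!\!/_{\!v}\,\, \O(\A)$, then reruns Proposition~\ref{prop:1} with the single change that, on morphisms, near-familiality of $\alpha'$ corresponds to each $\varphi_t$ in~\eqref{eq:8} being vertical rather than invertible. Your detailed verification of that morphism-level correspondence (using that isomorphisms in $\O(\A)$ are vertical and that any map between linear covers is invertible) is exactly the content the paper compresses into ``by definition''.
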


\begin{proof}
  By Proposition~\ref{prop:7}, it suffices to show that
  $F' \mapsto (S_{F'}, E_{F'})$ underlies an equivalence of categories
  $\nfampt(\O(\A), \psh) \rightarrow \el_\C /\!\!/_{\!v}\,\, \O(\A)$.
  This is almost exactly as in Proposition~\ref{prop:1}, with the only
  difference arising on morphisms. By definition, a transformation
  $\alpha' \colon F' \Rightarrow G'$ between pointwise familial
  functors $F', G' \colon \O(\A) \rightarrow \psh$ is near-familial
  just when each triangle as to the left of~\eqref{eq:8} has
  $\varphi_t$ \emph{vertical}, rather than invertible: this accounts
  for the differing $2$-cell data between~\eqref{eq:4}
  and~\eqref{eq:10}.
\end{proof}

\subsection{Universal analytic endofunctors}
\label{sec:univ-analyt-endof}

Now in seeking a universal shapely monad among analytic endofunctors,
we must as before consider both composability and existence of a
terminal object. This time we deal with terminality first. As in
Proposition~\ref{prop:12}, there is a size obstruction to constructing
a terminal object of the category $\anpt(\psh, \psh)$, and so we must
impose size restrictions.

\begin{Defn}
  \label{def:12}
  If $\A' \subset \A$ and $\B' \subset \B$ are full replete
  subcategories, we say that $F \colon \A \rightarrow \B$ is
  \emph{$(\A',\B')$-analytic} it is analytic at each $B \in \B'$, and
  each generic $t \in B \downarrow F$ has input arity in $\A'$. A
  transformation $\alpha \colon F \Rightarrow G$ between such functors
  is \emph{$(\A', \B')$-analytic} if it is analytic at every stage
  $B \in \B'$.
\end{Defn}

\begin{Defn}
  \label{def:13}
  A pointwise analytic $F \colon \psh \rightarrow \psh[\D]$ is called
  \emph{finitary} if it is $(\F\C, \yoneda\D)$-analytic. We write
  $\anpt^\omega(\psh, \psh[\D])$ for the category of finitary
  pointwise analytic functors and pointwise analytic transformations.
\end{Defn}


This restriction is in fact enough: $\anpt^\omega(\psh, \psh)$---and
more generally, $\anpt^\omega(\psh[\D], \psh)$---\emph{does} have a
terminal object. To see this, we first argue as in Lemma~\ref{lem:2}
to establish an equivalence between $\anpt^\omega(\psh[\D], \psh)$ and
$\anpt(\F \D, \psh)$; the claim will now follow once we show more
generally that $\anpt(\A, \psh)$ has a terminal object whenever $\A$
is essentially small. The key to proving this is the following lemma;
in it we write $[\I, \O(\A)]_v$ for the category of functors $\I
\rightarrow \O(\A)$ and pointwise vertical transformations, where $\I$
is any small category.

\begin{Lemma}
  \label{lem:14}
  Each connected component of $[\I, \O(\A)]_v$ has an initial object.
\end{Lemma}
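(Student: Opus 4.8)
The plan is to work one connected component at a time, fixing a functor $Q$ in the component and exhibiting an initial object built from $Q$ by ``maximising automorphism groups''. Write $Q(I) = (A_I, G_I)$ and choose representatives $q_h \colon A_I \to A_{I'}$ of $Q(h)$ for each $h \colon I \to I'$. Call a family $(K_I)_I$ of subgroups $K_I \leqslant \aut_{A_I}$ a \emph{valid labelling} if $G_I \subseteq K_I$ and $K_{I'} q_h \subseteq q_h K_I$ for every $h$; the latter condition is exactly what makes $(A_\bullet, K_\bullet)$, with structure maps $[q_h]$, a functor $\I \to \O(\A)$ equipped with a vertical transformation to $Q$ whose underlying maps are identities (which forces $K_I \supseteq G_I$). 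First I would show that valid labellings are closed under pointwise join: if each $(K^\lambda_I)_I$ is valid then so is $(\langle \bigcup_\lambda K^\lambda_I\rangle)_I$, the point being that $\{x : x q_h \in q_h N_I\}$ is always a subgroup, so the condition survives products and inverses. Hence there is a largest valid labelling $K^{\max}$, and I set $P \defeq (A_\bullet, K^{\max}_\bullet)$ with $P(h) = [q_h]$; by construction $P$ lies in the component of $Q$.

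It then remains to check that $P$ is initial, and I would treat uniqueness and existence separately, both exploiting the \emph{maximality} of $K^{\max}$. For uniqueness, given two vertical transformations $\psi, \psi' \colon P \Rightarrow R$ with components $[g_I], [g'_I]$, I would form $\delta_I \defeq g_I^{-1}g'_I \in \aut_{A_I}$ and show that $(\langle K^{\max}_I, \delta_I\rangle)_I$ is again a valid labelling; a short computation from the two naturality squares gives $\delta_{I'} q_h \in q_h\langle K^{\max}_I, \delta_I\rangle$, whereupon maximality forces $\delta_I \in K^{\max}_I$, i.e.\ $[g_I] = [g'_I]$. For existence (weak initiality) I would induct along the zig-zag of vertical transformations joining $R$ to $Q$, maintaining a map out of $P$: the forward case is mere composition, while the backward case --- given $P \Rightarrow S$ and $\chi \colon T \Rightarrow S$, produce $P \Rightarrow T$ --- is the substantive one.

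The hard part is exactly this backward lifting, and it is where maximality does the real work. Writing the components of the two transformations as underlying isomorphisms $a_I, c_I$, the natural candidate $\psi_I \defeq [c_I^{-1}a_I]$ need not a priori be a morphism of $\O(\A)$, since the group condition $\,(c_I^{-1}a_I)^{-1} T^{\mathrm{grp}}_I (c_I^{-1}a_I) \subseteq K^{\max}_I$ can fail when $T$ carries larger groups than $S$. The resolution is to transport $T$'s groups into the $A_I$-coordinates to obtain subgroups $L_I \leqslant \aut_{A_I}$, and to show, using the two naturality conditions and the fact that $T$ is itself a functor, that $(\langle K^{\max}_I, L_I\rangle)_I$ is a valid labelling; maximality then gives $L_I \subseteq K^{\max}_I$, which is precisely the missing group condition, and the same bookkeeping yields naturality of $\psi$ modulo $K^{\max}$. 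Combining weak initiality with the earlier uniqueness shows that $P$ is initial in the component, completing the proof. I expect the only real delicacy to be the careful tracking of the coset/group ambiguities inherent in morphisms of $\O(\A)$ --- every equation holds only up to the source automorphism group --- but the recurring principle that ``stabiliser-type sets are subgroups'', together with maximality of $K^{\max}$, handles each instance uniformly.
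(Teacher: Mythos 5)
Your proposal is correct and takes essentially the same route as the paper's own proof: your ``valid labellings'' are the paper's suitable families of subgroups, your $K^{\max}$ is its largest suitable family $\tilde G$, and both the uniqueness step and the backward (cospan) lifting step are obtained exactly as in the paper, by enlarging the maximal family with the discrepancy automorphisms (resp.\ the transported groups of the target) and invoking maximality. The only divergence is bookkeeping: you fix coset representatives and adjoin $\delta_I$ or the conjugated groups $L_I$, whereas the paper avoids choices by working with the subgroups generated by $a^{-1}b$ and $a^{-1}bc^{-1}d$ over all representatives---these generate the same families, so the two arguments coincide.
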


\begin{proof}
  Let $(T,G) \colon \I \rightarrow \O(\A)$ be given on objects by
  $I \mapsto (TI, G_I)$ and on morphisms by
  $f \mapsto [Tf] \colon (TI,G_I) \rightarrow (TJ,G_J)$. Note that the
  family of subgroups $(G_I \leqslant \aut_{TI})_{I \in \I}$ satisfies
  the condition that
  \begin{equation}\tag{$\star$}\label{eq:11}
    \text{for all $f \colon I \rightarrow J$ and $\sigma \in G_J$, there exists $\tau \in G_I$ with $Tf \circ \tau = \sigma \circ Tf$.}
  \end{equation}
  Call a family of subgroups $H = (H_I \leqslant \aut_{TI})_{I \in \I}$
  \emph{suitable} if it satisfies~\eqref{eq:11} with $H_I$ and $H_J$ in
  place of $G_I$ and $G_J$. We claim that, if $H^x$ is a suitable
  family of subgroups for each $x \in X$, then the family of subgroups
  $\bigvee_x H^x = (\bigvee_{x} H^x_I)_{I \in \I}$ is again suitable
  (here the join $\bigvee_{x} H^x_I$ is taken in the lattice of
  subgroups of $\aut_{TI}$). Indeed, if $f \colon I \rightarrow J$ in
  $\I$ and $\sigma \in \bigvee_{x} H^x_J$, then
  $\sigma = \sigma_1 \cdots \sigma_n$ for some
  $\sigma_i \in H_J^{x_i}$; now by suitability of each $H^{x_i}$, there
  are elements $\tau_i \in H_I^{x_i}$ with $Tf. \tau_i = \sigma_i. Tf$
  for each $i$, and so
  $\tau = \tau_1 \cdots \tau_n \in \bigvee_x H_I^x$ is an element with
  $Tf . \tau = \sigma. Tf$.

  It follows that there is a largest suitable family of subgroups given
  by:
  \begin{equation*}
    \tilde G = \textstyle\bigvee \{H : H \text{ is a suitable family of subgroups}\}\rlap{ .}
  \end{equation*}
  By suitability,
  $[Tf] \colon (TI, \tilde G_I) \rightarrow (TJ, \tilde G_J)$ is
  well-defined for each $f \colon I \rightarrow J$; as
  $G_I \leqslant \tilde G_I$, this assignation is functorial in $f$ and
  so we obtain $(T, \tilde G) \colon \I \rightarrow \O(\A)$ and a
  vertical transformation $\xi \colon (T, \tilde G) \rightarrow (T,G)$
  with components $\xi_I = [1_{TI}]$. We claim that $(T,\tilde G)$ is
  in fact initial in its connected component.

  First we show that any pair of vertical transformations
  $\alpha, \beta \colon (T,\tilde G) \rightrightarrows (S, H)$ are
  equal. Each component $\alpha_I$ or $\beta_I$ is an equivalence class
  of maps $TI \rightarrow SI$, and so we may consider the family of
  subgroups
  \begin{equation*}
    (K_I = \spn{a^{-1}b \mid a \in \alpha_I, b \in \beta_I} \leqslant \aut_{TI})_{I \in \I}\rlap{ .}
  \end{equation*}
  We claim this family is suitable: for then $K_I \leqslant \tilde G_I$
  so that $a^{-1}b \in \tilde G_I$ for all $a \in \alpha_I$ and
  $b \in \beta_I$, whence $\alpha_I = \beta_I$ as required. For
  suitability, it suffices to show that, if
  $(a, b) \in \alpha_J \times \beta_J$ and $f \colon I \rightarrow J$,
  then there exists $(c,d) \in \alpha_I \times \beta_I$ with
  $Tf.c^{-1}d = a^{-1}b.Tf$. For \emph{any} $c \in \alpha_I$ we have by
  naturality of $\alpha$ that $[a.Tf] = [Sf.c]$; but then
  $a.Tf = Sf.(c\sigma)$ for some $\sigma \in \tilde G_I$, and so on
  replacing $c$ by $c\sigma \in \alpha_I$ we may take it that in fact
  $a.Tf = Sf.c$. Similarly, we can find $d \in \beta_I$ such that
  $b.Tf = Sf.d$, and now $Tf.c^{-1}d = a^{-1}b.Tf$ as required.

  To show initiality of $(T, \tilde G)$ in its connected component, it
  now suffices to show that, for all cospans
  $\alpha \colon (T, \tilde G) \rightarrow (S, H) \leftarrow (R, K)
  \colon \beta$ in $[\I, \O(\A)]_v$ there is \emph{some} map
  $\gamma \colon (T, \tilde G) \rightarrow (R,K)$. To this end,
  consider the family of subgroups
  \begin{equation*}
    (L_I = \spn{a^{-1}bc^{-1}d \mid a,d \in \alpha_I,\, b,c \in \beta_I} \leqslant \aut_{TI})_{I \in \I}\rlap{ .}
  \end{equation*}
  Repeating the above argument shows this family is suitable, and so
  $L_I \leqslant \tilde G_I$ for all $I \in \I$. Now, choosing any
  $a \in \alpha_I$ and $b \in \beta_I$, we have for each $I \in \I$ a
  well-defined map
  $\gamma_I = [b^{-1}a] \colon (TI, \tilde G_I) \rightarrow (RI, K_I)$.
  Indeed, if $\sigma \in K_I$ then $b \sigma \in \beta_I$ and so
  $\tau = a^{-1}b\sigma b^{-1}a \in L_I \leqslant \tilde G_I$ satisfies
  $b^{-1}a.\tau = \sigma b^{-1} a$ as required. Clearly
  $\beta_I.\gamma_I = \alpha_I$ for each $I$; since each $\beta_I$ is
  vertical, hence monic in $\O(\A)$, we conclude by naturality of
  $\alpha$ that $\gamma \colon (T, \tilde G) \rightarrow (R, K)$ is
  also natural as required.
\end{proof}

Using this, we are finally able to prove:

\begin{Prop}
  \label{prop:9}
  If $\A$ is essentially small, then the category
  $\anpt(\A, \psh)$ has a terminal object; in particular, any
  $\anpt^\omega(\psh[\D], \psh)$ has a terminal object.
\end{Prop}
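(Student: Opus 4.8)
The plan is to reduce, via the equivalence of Proposition~\ref{prop:4}, to constructing a terminal object in the combinatorial category $\el_\C /\!\!/_{\!v}\,\, \O(\A)$, and to build that object out of the connected components furnished by Lemma~\ref{lem:14}. First, exactly as in Lemma~\ref{lem:2}, precomposition with $\F\D \hookrightarrow \psh[\D]$ yields an equivalence $\anpt^\omega(\psh[\D], \psh) \simeq \anpt(\F\D, \psh)$, and since $\F\D$ is essentially small the ``in particular'' clause follows from the main statement. So assume $\A$ is essentially small; as in Proposition~\ref{prop:2} we may replace it by a skeleton and so assume $\A$ is small, whence $\O(\A)$ is small and each $[\C/c, \O(\A)]_v$ is a small category. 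By Proposition~\ref{prop:4} it then suffices to exhibit a terminal $(T,D)$ in $\el_\C /\!\!/_{\!v}\,\, \O(\A)$.

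I construct $(T,D)$ by analogy with the weakly terminal object of Proposition~\ref{prop:2}, rigidified using Lemma~\ref{lem:14}. Identifying $\C/c \cong \el\yoneda_c$, I set $Tc = \pi_0\bigl([\C/c, \O(\A)]_v\bigr)$, with $T(f) \colon Td \to Tc$, for $f \colon c \to d$, induced by precomposition with the slice functor $\Sigma_f = \el\yoneda_f \colon \C/c \to \C/d$, $(g \colon e \to c) \mapsto (fg \colon e \to d)$; this preserves vertical transformations and connected components, so $T \in \psh$. For each component $\kappa \in Tc$, Lemma~\ref{lem:14} supplies a canonical initial representative $F_\kappa \colon \C/c \to \O(\A)$, and I put $D(\kappa) = F_\kappa(1_c)$. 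On a morphism $f \colon \kappa \to \lambda$ of $\el T$ — that is, $f \colon c \to d$ in $\C$ with $\kappa = T(f)(\lambda) = [F_\lambda \circ \Sigma_f]$ — I define $D(f)$ as the composite $F_\kappa(1_c) \xrightarrow{\mu_{1_c}} F_\lambda(\underline f) \xrightarrow{F_\lambda(f)} F_\lambda(1_d)$, where $\mu \colon F_\kappa \Rightarrow F_\lambda \circ \Sigma_f$ is the unique vertical transformation out of the initial object $F_\kappa$, where $\underline f = (f \colon c \to d) \in \C/d$, and where $F_\lambda(f)$ is the image of the slice morphism $f \colon \underline f \to \underline{1_d}$. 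Functoriality of $D$ is then forced by the uniqueness of maps out of initial objects in each component.

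For terminality, given any $(S,E)$ I define $p \colon S \to T$ by $p(s) = [E \circ \el s] \in Tc$ for $s \in Sc$; naturality of $p$ is the identity $\el(s \cdot f) = \el s \circ \el\yoneda_f$ used already in Proposition~\ref{prop:2}. The required vertical $2$-cell $\varphi \colon D \circ \el p \Rightarrow E$ has component at $s$ the evaluation at $1_c$ of the unique vertical transformation $F_{[E \circ \el s]} \Rightarrow E \circ \el s$ out of the chosen initial object of the component $p(s) = [E \circ \el s]$; this component is vertical, and naturality follows from the same uniqueness. For uniqueness of $(p,\varphi)$, any competitor $(q,\psi)$ produces, through naturality of $\psi$ and verticality (hence underlying invertibility in $\A$), a vertical comparison exhibiting $F_{q(s)}$ in the same connected component as $E \circ \el s$, forcing $q(s) = [E \circ \el s] = p(s)$; and then $\psi = \varphi$, since vertical maps into a fixed target are monic and maps out of an initial object are unique.

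The conceptual content is concentrated entirely in Lemma~\ref{lem:14}: the chosen component representatives are rigid precisely because initial objects of components carry no non-identity endomorphisms, and because vertical maps are monic in $\O(\A)$. I therefore expect the main obstacle to be purely bureaucratic, namely verifying that these canonical initial objects assemble into a genuine functor $D$ on $\el T$ and that the pair $(p,\varphi)$ constructed above is uniquely forced; both verifications reduce, at each step, to the uniqueness of maps out of an initial object guaranteed by Lemma~\ref{lem:14}.
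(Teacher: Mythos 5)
Your proposal is correct and follows essentially the same route as the paper's proof: reduce via Proposition~\ref{prop:4} to the category $\el_\C /\!\!/_{\!v}\,\, \O(\A)$, take the spectrum at $c$ to be (representatives of) connected components of $[\C/c, \O(\A)]_v$, use the initial objects supplied by Lemma~\ref{lem:14} to define the exponent by evaluation at $1_c$, and establish terminality and uniqueness via initiality together with monicity of vertical maps. The only (harmless, arguably cleaner) cosmetic difference is that you define the spectrum as $\pi_0\bigl([\C/c,\O(\A)]_v\bigr)$ with functoriality given by precomposition, where the paper instead takes the set of chosen initial functors $F$ with $\tilde F = F$.
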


\begin{proof}
  It suffices by Proposition~\ref{prop:4} to show that the equivalent
  category $\el_\C /\!\!/_{\!v}\,\, \O(\A)$ has a terminal object, and
  as before, we may assume without loss of generality that $\A$ is in
  fact small. For any $c \in \C$, we know by Lemma~\ref{lem:14} that
  each connected component of $[\C / c, \O(\A)]_v$ has an initial
  object; make a choice of such, and for each
  $F \in [\C / c, \O(\A)]_v$, write $\tilde F$ for the chosen initial
  object in its connected component, and
  $u \colon \tilde F \rightarrow F$ for the unique vertical
  transformation. The required terminal
  $(S, E) \in \el_\C /\!\!/_{\!v}\,\, \O(\A)$ now has:
  \begin{equation*}
    Sc = \{ F \in [\C/c, \O(\A)]_v : \tilde F = F \} \qquad \text{and} \qquad S(f \colon d \rightarrow c) \colon F \mapsto \widetilde{F(f \circ \thg)}\rlap{ ,}
  \end{equation*}
  and has $E \colon \el S \rightarrow \O(\A)$ given by
  $E(F, c) = F(1_c)$ on objects, and
  \begin{align*}
    f \colon (\widetilde{F(f\circ \thg)}, d) \rightarrow (F, c) \quad &\mapsto \quad 
    \widetilde{F(f \circ \thg)}(1_d) \xrightarrow{u_{1_d}} F(f \circ \thg)(1_d) = Ff \xrightarrow{Ff} F1_c
  \end{align*}
  on morphisms. To see terminality of $(S,E)$, let $(T, D)$ be another
  object of $\el_\C /\!\!/_{\!v}\,\, \O(\A)$. To define a map
  $f \colon T \rightarrow S$, we form for each $t \in Tc$ the composite
  \begin{equation}\label{eq:12}
    F_t \colon \C / c = \el \yoneda_c \xrightarrow{\el t} \el T \xrightarrow{D} \O(\A)
  \end{equation}
  and now define $p(t) = \widetilde{F_t} \in Sc$. For any
  $f\colon d \to c$, we have
  $$p(t) \cdot f = \widetilde{\widetilde{F_t} \circ \el f} =
  \widetilde{F_t \circ \el f} = \widetilde{F_{t \cdot f}} = p (t \cdot
  f),$$ so that $p$ is natural in $c$. Moreover, we have a pointwise
  vertical transformation
  \begin{equation*}
    \cd[@-1em@C-0.5em]{
      {\el T} \ar[rr]^-{\el p} \ar[dr]_-{D} & \ltwocello{d}{\varphi} &
      {\el S} \ar[dl]^-{E} \\ &
      {\O(\A)}
    }
  \end{equation*}
  whose component at $t \in Tc$ is the map
  $u_{1_c} \colon E(p(t)) = \widetilde{F_t}(1_c) \rightarrow F_t(1_c)
  = Dt$.
  So we have a map $(p, \varphi) \colon (T,D) \rightarrow (S,E)$ and
  to conclude the proof, we must show that any
  $(q, \psi) \colon (T,D) \rightarrow (S, E)$ is equal to
  $(p,\varphi)$. For each $t \in Tc$, consider the composite functor
  \begin{equation*}
    G_t \colon \C / c = \el \yoneda_c \xrightarrow{\el t} \el T \xrightarrow{\el q} \el S \xrightarrow{E} \O(\A)\rlap{ .}
  \end{equation*}
  By naturality of $q$, this functor sends $f \colon d \rightarrow c$
  to $\widetilde{q(t)(f \circ \thg)}(1_d) \in \O(\A)$, and there is now a
  vertical transformation
  $\xi \colon G_t \Rightarrow q(t) \colon \C / c \rightarrow \O(\A)$
  with component
  \begin{equation*}
    G_t(f) = \widetilde{q(t)(f \circ \thg)}(1_d)
    \xrightarrow{u_{1_d}} q(t)(f \circ \thg)(1_d) = q(t)(f)
  \end{equation*}
  at $f \colon d \rightarrow c \in \C / c$. Since $q(t)$ is a (chosen)
  initial object in its connected component of $[\C/c, \O(\A)]_v$, the
  map $\xi \colon G_t \Rightarrow q(t)$ must be a split epimorphism;
  since every map of $[\C/c, \O(\A)]_v$ is (pointwise monomorphic and
  hence) monomorphic, $\xi$ is thus invertible, so that
  $G_t \cong q(t)$. Since the composite vertical transformation
  \begin{equation}\label{eq:5}
    \cd[@-1em@C-0.5em]{
      {\C / c} \ar[rr]^-{\el qt} \ar[dr]_-{F_t = D.\el t} & \ltwocello{d}{\psi.\el t} &
      {\el S} \ar[dl]^-{E} \\ &
      {\O(\A)}
    }
  \end{equation}
  exhibits $G_t$ (the upper composite) as connected to $F_t$ in
  $[\C / c, \O(\A)]_v$, this determines $q(t)$ uniquely as being
  $\tilde F_t = p(t)$; since this holds for all $t \in \el T$, we
  conclude that $p = q$. Moreover, as $G_t \cong q(t)$ is initial in
  its connected component of $[\C / c, \O(\A)]_v$, the $2$-cell
  in~\eqref{eq:5} must be equal to
  $\varphi.\el t \colon G_t \Rightarrow F_t$; as this holds for all
  $t \in \el T$, we have $\varphi = \psi$ as required.
\end{proof}

\begin{Rk}
  \label{rk:2}
  For any small category $\C$, we may view the terminal object $U$ of
  $\anpt^\omega(\psh, \psh)$ as an object in $\anpt(\psh, \psh)$. By
  Proposition~\ref{prop:7} and by near-familiality, any
  $F \in \anpt(\psh, \psh)$ which admits a map to $U$ must itself be
  finitary, so that the map $F \rightarrow U$ is unique if it exists.
  In other words, $U$ is a \emph{subterminal} object in
  $\anpt(\psh, \psh)$; it follows that we can identify
  $\anpt^\omega(\psh, \psh)$ with the slice category
  $\anpt(\psh, \psh) / U$. We will revisit this point in
  Remark~\ref{rk:3} below.
\end{Rk}

\subsection{Composition of analytic functors}
\label{sec:comp-analyt-funct-1}

The passage from familial to analytic functors has thus fixed the
problem we had previously, namely the lack of a terminal object among
such functors. However, we are not in the clear yet, as we must still
show that pointwise analytic functors compose. By modifying
Lemma~\ref{lem:1}(i) to use Lemma~\ref{lem:12} in place of
Lemma~\ref{lem:5} we may show that generic morphisms compose; now
arguing as in Proposition~\ref{prop:10} yields:

\begin{Prop}
  \label{prop:16}
  If $F \colon \A \rightarrow \B$ and $G \colon \B \rightarrow \C$ are
  $(\A', \B')$- and $(\B', \C')$-analytic, then their composite is
  $(\A', \C')$-analytic, and has as generic operations at stage
  $C \in \C'$ precisely the composites
  $Gt.s \colon C \rightarrow GB \rightarrow GFA$ of $G$- and
  $F$-generic operations. The correspondingly analytic transformations
  between these functors are likewise composable; in particular, there
  is a $2$-category $\cat{AN}$ of categories, analytic functors and
  analytic transformations.
\end{Prop}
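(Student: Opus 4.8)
The plan is to follow the recipe indicated just above the statement: first establish the analytic analogue of Lemma~\ref{lem:1}(i), namely that generic operations compose, and then transcribe the proof of Proposition~\ref{prop:10} verbatim, with ``generic'' in place of ``linear'', Lemma~\ref{lem:12} in place of Lemma~\ref{lem:5}, and Corollary~\ref{cor:2} supplying the fact that all generic covers of an operation are mutually isomorphic.

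For the composition of generics, suppose $s \colon C \to GB$ is $G$-generic and $t \colon B \to FA$ is $F$-generic; I would show that $Gt.s \colon C \to GFA$ is $GF$-generic by verifying the lifting condition of Lemma~\ref{lem:12}. Given a commuting square as in~\eqref{eq:25} for the operation $Gt.s$, with top edge $u \colon C \to GFY$, right edge $GFh$ and bottom edge $GFk$ for some $h \colon Y \to Z$, $k \colon A \to Z$ in $\A$, I would first reorganise it into a square for the $G$-operation $s$ whose bottom edge is $G(Fk \circ t) \colon GB \to GFZ$; this commutes because $GFh \circ u = GFk \circ Gt \circ s = G(Fk\circ t)\circ s$. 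Applying genericity of $s$ produces a map $m \colon B \to FY$ with $Gm \circ s = u$ and $Fh \circ m = Fk \circ t$. The latter equation says exactly that the square with top $m$, left $t$, right $Fh$ and bottom $Fk$ commutes, so applying genericity of $t$ yields $\ell \colon A \to Y$ with $F\ell \circ t = m$ and $h\ell = k$. Then $GF\ell \circ (Gt \circ s) = G(F\ell \circ t)\circ s = Gm \circ s = u$ and $h\ell = k$, which are precisely the two conditions demanded by Lemma~\ref{lem:12}. Hence $Gt.s$ is generic.

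With this in hand, the proof of $(\A',\C')$-analyticity runs exactly as in Proposition~\ref{prop:10}. Since all generic covers of an operation are isomorphic (Corollary~\ref{cor:2}) and $\A'$ is replete, it suffices to cover an arbitrary $s \colon C \to GFX$ with $C \in \C'$ by a generic operation with input arity in $\A'$. Applying analyticity of $G$ to $s$ and then analyticity of $F$ to the resulting factor yields generic operations $\tilde s \colon C \to GB$ and $\tilde t \colon B \to FA$ with $B \in \B'$ and $A \in \A'$; by the composition result just proved, $G\tilde t . \tilde s$ is the required generic cover, and Lemma~\ref{lem:12} shows that every generic operation at stage $C$ has this shape. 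Composability of the correspondingly analytic transformations, and hence the existence of the $2$-category $\cat{AN}$, then follow from the previous points together with naturality, exactly as in the familial case.

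The only genuinely new content is the composition of generics; everything else is a faithful copy of the familial argument. The point to watch is the mild asymmetry with that case: Lemma~\ref{lem:5} provides a \emph{unique} filler, whereas Lemma~\ref{lem:12} provides only \emph{existence}. This loss of uniqueness is harmless here, since the argument needs only to \emph{produce} a generic cover rather than a canonical one, and the requisite rigidity---that any map between generic operations is invertible---is instead supplied by Corollary~\ref{cor:2}.
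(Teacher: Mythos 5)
Your proof is correct and takes exactly the paper's route: the paper's own argument for Proposition~\ref{prop:16} is precisely the sketch you execute, namely modifying Lemma~\ref{lem:1}(i) by using Lemma~\ref{lem:12} in place of Lemma~\ref{lem:5} to show that generic operations compose, and then transcribing the proof of Proposition~\ref{prop:10}, with Corollary~\ref{cor:2} replacing uniqueness of fillers. Your explicit two-stage lifting computation, and your closing remark that the loss of uniqueness is compensated by invertibility of maps between generics, simply fill in the details the paper leaves to the reader.
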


However, this does not imply that pointwise analytic functors between
presheaf categories are composable, since we do not know that
pointwise analytic functors are necessarily analytic. In fact, this is
not true, by virtue of:

\begin{Prop}
  \label{prop:6}
  Pointwise analytic functors between presheaf categories are not
  closed under composition.
\end{Prop}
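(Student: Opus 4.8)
The plan is to disprove the statement by an explicit counterexample, and the key simplification is to take the second functor $G$ to be \emph{representable}. Indeed, suppose one can find a pointwise analytic $F \colon \psh[\C] \to \psh[\D]$ together with an object $\alpha \in \psh[\D]$ at which $F$ fails to be analytic. I would then set $G = \psh[\D](\alpha, \thg) \colon \psh[\D] \to \cat{Set}$, viewing $\cat{Set} = \psh[\mathbf 1]$ as a presheaf category. Being representable, $G$ is pointwise familial and hence pointwise analytic; but $GF = \psh[\D](\alpha, F\thg)$, so that by Lemma~\ref{lem:13} the composite $GF$ is pointwise analytic exactly when $\psh[\D](\alpha, F\thg)$ is a coproduct of near-representables, i.e.\ exactly when $F$ is analytic at stage $\alpha$ --- which by hypothesis it is not. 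This reduces the whole proposition to producing a pointwise analytic functor which is not analytic at a single, necessarily non-representable, stage. Conceptually this reflects that, in contrast to the linear operations of Lemma~\ref{lem:1}(ii) and the colimit-closure of familial stages in Proposition~\ref{prop:18}, generic operations are \emph{not} closed under pointwise colimits.

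For the construction I would take $\C = \mathbf 1$, so $\psh[\C] = \cat{Set}$, and let $\D$ be the span category with objects $d_0, d_1, d_2$ and two non-identity arrows $u_1 \colon d_0 \to d_1$ and $u_2 \colon d_0 \to d_2$. A presheaf on $\D$ is a diagram of the form $Y(d_1) \to Y(d_0) \leftarrow Y(d_2)$, and I define $F \colon \cat{Set} \to \psh[\D]$ by letting $FX$ be $X^2 \xrightarrow{\,q\,} \mathrm{Sym}^2 X \xleftarrow{\,q\,} X^2$, where $q \colon X^2 \to \mathrm{Sym}^2 X = X^2_{/(\mathbb{Z}/2)}$ is the quotient by the swap action. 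This is natural in $X$, hence a genuine functor into $\psh[\D]$. By the Yoneda lemma its value at each representable stage $\yoneda_{d_i}$ is one of the $\cat{Set}$-endofunctors $X \mapsto X^2 = \cat{Set}(2, X)$ or $X \mapsto \mathrm{Sym}^2 X = \cat{Set}(2,X)_{/(\mathbb{Z}/2)}$, each of which is a single near-representable; so by Lemma~\ref{lem:13}, $F$ is (small) pointwise analytic.

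Finally I would take $\alpha = \yoneda_{d_1} +_{\yoneda_{d_0}} \yoneda_{d_2}$, the pushout of the two representables. Since maps out of a pushout form a pullback of hom-sets, $GFX = \psh[\D](\alpha, FX) = X^2 \times_{\mathrm{Sym}^2 X} X^2$, which is the endofunctor $W$ of $\cat{Set}$ whose value at $X$ is the set of pairs $((a,b),(c,d))$ with $\{a,b\} = \{c,d\}$. The step I expect to be the real work is verifying that $W$ is \emph{not} a coproduct of near-representables, equivalently by Lemma~\ref{lem:13} that some connected component of $\el W$ has no Galois object. Here I would examine the elements of minimal support: the ``diagonal'' elements $((a,a),(a,a))$, the ``matched'' elements $((a,b),(a,b))$, and the ``swapped'' elements $((a,b),(b,a))$ with $a \neq b$.

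The constraint $\{a,b\} = \{c,d\}$ forces every non-diagonal element to have support of size exactly two, so no element can dominate both a matched and a swapped one; moreover a matched element admits no map to a swapped one (such a map would have to both preserve and reverse an ordered pair), nor conversely. Yet collapsing the two inputs carries each of the matched, swapped and diagonal elements to the \emph{same} diagonal element, so all three lie in one connected component of $\el W$. That component therefore has no weakly initial --- \emph{a fortiori} no Galois --- object, whence $W$ is not analytic and $GF$ is not pointwise analytic. The main obstacle is thus exactly this ``connected-but-without-initial-object'' computation, which pinpoints how a pushout of diagram shapes can glue together generically distinct operations and so destroy pointwise analyticity under composition.
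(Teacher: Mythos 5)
Your proposal is correct and is essentially the paper's own argument: the composite you build is the very same functor $X \mapsto X^2 \times_{X^2/\aut_2} X^2$ that the paper obtains from $FX = (X^2 \rightarrow X^2/\aut_2)$ valued in $\cat{Set}^{\cat 2}$ and $G$ the kernel-pair functor represented at $W = (2 \rightarrow 1)$, your span-shaped index category being only a cosmetic splitting of the arrow category. Your two repackagings are both sound and equivalent to the paper's reasoning: the up-front reduction that for representable $G = \psh[\D](\alpha, \thg)$ the composite $GF$ is pointwise analytic precisely when $F$ is analytic at the (non-representable) stage $\alpha$ is exactly what the paper makes explicit only later, in the proof of Proposition~\ref{prop:32}, via $1 \downarrow GF \cong (\yoneda_1 +_{\yoneda_0} \yoneda_1) \downarrow F$; and your verification that the single connected component of $\el(GF)$ containing the diagonal, matched and swapped elements has no weakly initial (hence no Galois) object carries the same combinatorial content as the paper's unfillable squares from Lemma~\ref{lem:12}.
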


\begin{proof}
  Consider the following two functors:
  \begin{equation}
    \begin{aligned}
      F \colon \cat{Set} &\rightarrow \cat{Set}^{\cat{2}} & \qquad \qquad G \colon \cat{Set}^\cat{2} & \rightarrow \cat{Set} \\
      X &\mapsto (X^2 \rightarrow X^2/\aut_2) & (A \rightarrow B) & \mapsto A \times_B A\rlap{ .}
    \end{aligned}\label{eq:16}
  \end{equation}
  $G$ is representable at $W = (2 \rightarrow 1)$, and so pointwise
  analytic; $F$ is pointwise analytic with spectrum
  $1 \in \cat{Set}^\cat{2}$ and exponent
  $\el 1 = \cat{2} \rightarrow \O(\cat{Set})$ picking out the arrow
  $(2,\aut_2) \rightarrow (2,1)$. The composite
  $GF \colon \cat{Set} \rightarrow \cat{Set}$ sends a set $X$ to
  \begin{equation*}
    X^2 \times_{X^2 / \aut_2} X^2 = \{(a,b,c,d) \in X^4 : (a,b) = (c,d) \text{ or } (a,b) = (d,c)\}\rlap{ .}
  \end{equation*}
  Now, no operation $(a,a,a,a) \colon 1 \rightarrow GFX$ can be
  generic, because the square left below has no filler
  $GFX \to GF\{0,1\}$; while if $a \neq b \in X$, then no $(a,b,a,b)$
  or $(a,b,b,a) \colon 1 \rightarrow GFX$ can be generic because the
  square below right has no filler in either direction.
  \begin{equation*}
    \cd[@-0.25em]{
      {1} \ar[r]^-{(0,1,0,1)} \ar[d]_{(a,a,a,a)} &
      {GF\{0,1\}} \ar[d]^{GF!} & &
      {1} \ar[r]^-{(a,b,b,a)} \ar[d]_{(a,b,a,b)} &
      {GFX} \ar[d]^{GF!} \\
      {GFX} \ar[r]^-{GF!} &
      {GF1} & &
      {GFX} \ar[r]^-{GF!} &
      {GF1}
    } 
  \end{equation*}
  So $GF$ is not pointwise analytic, as there are no generic operations
  in $1 \downarrow GF$.
\end{proof}

\begin{Cor}
  \label{cor:1}
  $\anpt^\omega(\psh, \psh)$ need not be monoidal under composition.
\end{Cor}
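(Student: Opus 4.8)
The plan is to deduce the corollary directly from Proposition~\ref{prop:6}. Being \emph{monoidal under composition} would in particular require $\anpt^\omega(\psh,\psh)$ to be closed under composition of endofunctors, so it suffices to exhibit a single presheaf category $\psh$ carrying two finitary pointwise analytic endofunctors whose composite fails to be pointwise analytic. The obstruction of Proposition~\ref{prop:6} lives between the \emph{distinct} categories $\cat{Set}$ and $\cat{Set}^{\cat 2}$, so the one thing to arrange is a way of promoting that mixed-variance counterexample to genuine endofunctors of a single presheaf category. I would do this by glueing the two categories together into a product.

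Concretely, take $\D = \cat 1 + \cat 2$, the coproduct of the terminal category and the walking arrow; then $\psh[\D] \simeq \psh[\cat 1] \times \psh[\cat 2] \cong \cat{Set} \times \cat{Set}^{\cat 2}$ (using $\cat 2 \cong \op{\cat 2}$). Writing $p_1, p_2$ for the two projections and $i_1, i_2$ for the coproduct injections of presheaf categories, I define an endofunctor $H \colon \psh[\D] \to \psh[\D]$ by $H(X,Y) = (GY, FX)$, i.e.\ $H = \langle G\circ p_2,\ F\circ p_1\rangle$ into the product. Since a functor valued in a product presheaf category is pointwise analytic (resp.\ finitary) exactly when each of its components is, it is enough to treat the two components. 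The first, $G\circ p_2$, equals $\psh[\D]\bigl((0,W),\thg\bigr)$ because $G$ is representable at $W = (2\to 1)$; it is therefore representable at the finitely presentable object $(0,W) = i_2 W$, hence finitary pointwise familial and a fortiori finitary pointwise analytic. The second, $F\circ p_1$, inherits the spectrum $1$ and exponent of $F$ from Proposition~\ref{prop:6}, now with input arity $(\{0,1\},0)$, which is finitely presentable in $\psh[\D]$; so it too is finitary pointwise analytic. Thus $H \in \anpt^\omega(\psh[\D],\psh[\D])$.

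Now $H\circ H(X,Y) = H(GY,FX) = (GFX,\,FGY)$, so the first component of $H^2$ is the functor $(X,Y)\mapsto GFX$. I would then show $H^2$ is not pointwise analytic by exhibiting, at the representable stage $(\yoneda_\star,0)$, an operation with no generic cover: taking $X = \{0,1\}$ and the element $(a,a,a,a)\in GFX$, the very squares used in Proposition~\ref{prop:6}---now read in $\psh[\D]$, with trivial second components---admit no filler, because any morphism of $\psh[\D]$ out of $(\yoneda_\star,0)$ restricts under $p_1$ to the corresponding data in $\cat{Set}$, where Proposition~\ref{prop:6} already forbids the filler. Hence $H^2\notin\anpt^\omega(\psh[\D],\psh[\D])$, so this category is not closed under composition and cannot be monoidal under it.

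The main obstacle is the bookkeeping around the product decomposition: one must verify that pointwise analyticity and finitariness of a functor into (or out of) $\psh[\D]$ are detected componentwise---which rests on the fact that the representables of $\psh[\cat 1 + \cat 2]$ are precisely the injection-images of the representables of the two factors---and, crucially, that allowing a nontrivial $\cat 2$-component does not manufacture a spurious generic cover for $(a,a,a,a)$. This last point is what reduces the genericity test for $H^2$ at the stage $(\yoneda_\star,0)$ back to the genericity test for $GF$ at $1$, letting Proposition~\ref{prop:6}'s computation apply unchanged.
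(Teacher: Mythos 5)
Your argument is correct, and it reaches the corollary by a genuinely different construction from the paper's. The paper stays on the single presheaf category $\cat{Set}^{\atwo}$ already in play: it forms the endofunctors $F\pi_1$ and $\Delta G$ and argues that if their composite $\Delta G F \pi_1$ were pointwise analytic, then so would be the sandwich $\pi_1(\Delta G F \pi_1)\Delta = GF$---a retract-style transfer which implicitly relies on pre- and post-composition with the pointwise familial functors $\Delta$ and $\pi_1$ preserving pointwise analyticity. You instead glue the two categories into $\psh[{\cat{1}+\atwo}] \simeq \cat{Set}\times\cat{Set}^{\atwo}$ and use the swap endofunctor $H = \langle G\circ p_2,\, F\circ p_1\rangle$, exploiting that the representables of presheaves on a coproduct are exactly the injection-images of the representables of the factors, so that finitary pointwise analyticity into the product is detected componentwise. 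Your verification that $H$ is finitary pointwise analytic is sound: $G\circ p_2$ is representable at $(0,W)$ with $W \cong \yoneda_1 +_{\yoneda_0}\yoneda_1$ finitely presentable, and $F\circ p_1$ inherits $F$'s generics with arities $(2,0)$, the second components of all filling conditions being vacuous. What your route buys is that the failure of $H^2$ reduces transparently to Proposition~\ref{prop:6}: the comma category at stage $(\yoneda_\star,0)$ is isomorphic to that of the $\cat{Set}$-valued functor $GF\circ p_1$, with no auxiliary preservation lemma needed; what the paper's buys is brevity and a counterexample living on the same category $\cat{Set}^{\atwo}$ used throughout that section.

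Two small points of care in your final step, both easily repaired. First, when you lift the test squares of Proposition~\ref{prop:6} into $\psh[{\cat{1}+\atwo}]$ against an operation whose arity has second component $Y$, the ``trivial'' second components of the test objects should be taken \emph{terminal} rather than initial (there are no maps $Y \to 0$ when $Y \neq 0$); with that choice, a filler $(\ell_1,\ell_2)$ restricts under $p_1$ to a forbidden filler $\ell_1$ in $\cat{Set}$, exactly as you intend. Second, exhibiting $(a,a,a,a)$ as non-generic is not by itself enough---you need that it admits no generic \emph{cover}---but since your reduction shows that a generic in $(\yoneda_\star,0)\downarrow H^2$ would project to a generic in $1 \downarrow GF$, and Proposition~\ref{prop:6} rules out \emph{all} generics there (via both displayed squares, covering the forms $(a,a,a,a)$, $(a,b,a,b)$ and $(a,b,b,a)$), the nonempty comma category at this stage contains no generics whatsoever, and the conclusion follows.
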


\begin{proof}
  If $F$ and $G$ are as in the preceding proof, then $F\pi_1$ and
  $\Delta G$ lie in $\anpt^\omega(\cat{Set}^\atwo, \cat{Set}^\atwo)$.
  But if their composite $\Delta GF \pi_1$ were pointwise analytic,
  then so too would be $\pi_1(\Delta GF \pi_1)\Delta = GF$.
\end{proof}

\begin{Rk}
  \label{rk:5}
  The preceding argument does not rule out the possibility that the
  composition-powers of the terminal finitary analytic endofunctor $U$
  of a presheaf category happen to be again analytic---which would
  allow for the construction of a monad structure on $U$. However, at
  least for the presheaf categories of our examples, the preceding
  argument may be adapted to show that this is not so.
\end{Rk}

\section{Cellular functors and shapeliness}
\label{sec:cell-analyt-funct}

We have now failed to construct a universal shapely monad on a
presheaf category $\psh$ in two different ways: there was no universal
\emph{familial} monad due to the lack of a terminal familial
endofunctor, while there was no universal \emph{analytic} monad due to
the failure of pointwise analytic functors to be composition-closed.

Our next attempt to produce a universal shapely monad will focus on a
special class of pointwise analytic functors, which we term
\emph{cellular}, that are closed under composition. This is
achieved by way of an additional condition which allows their
pointwise analyticity to be built up to analyticity at more complex
stages, so that Proposition~\ref{prop:16} can then be applied.

Building up this analyticity will require an analogue of
Proposition~\ref{prop:18}, which showed that the stages of familiality
of a functor $\A \rightarrow \B$ are closed under colimits. The reason
this does not carry over unchanged to the analytic setting is that the
analogue of Lemma~\ref{lem:1}(ii) fails to hold, as the following
explicit counterexample shows.

\begin{Prop}
  \label{prop:32}
  The generic operations of a functor $F \colon \A \rightarrow \B$
  need not be closed under pointwise colimits in $\B \downarrow F$.
\end{Prop}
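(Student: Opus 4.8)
The plan is to refute the statement by an explicit counterexample, taking $\A = \B = \cat{Set}$ and letting $F \colon \cat{Set} \to \cat{Set}$ be the symmetric-square functor $FX = X^2/\aut_2$ (featuring the same construction $X^2/\aut_2$ as in Proposition~\ref{prop:6}). First I would locate a generic operation carrying a nontrivial automorphism group, namely the operation $t \colon 1 \to F2$ picking out the off-diagonal class $[(0,1)] \in (2^2)/\aut_2$. A direct check shows that $1 \downarrow F$ is connected with $t$ weakly initial in it: any $u \colon 1 \to FX$ names an unordered pair $\{x,y\}$, and the map $f \colon 2 \to X$ with $f(0)=x$, $f(1)=y$ defines a morphism $t \to u$. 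Transitivity holds as well---if $x \neq y$ the only two such $f$ differ by the transposition $\sigma \in \aut_2$, and if $x = y$ there is a unique such $f$---so that the automorphism group is the full $\aut_t = \aut_2$. Hence $t$ is Galois in its component, i.e.\ generic by Definition~\ref{def:4}; this is exactly the ``analytic but not familial'' situation, since $t$ is fixed by a nonidentity automorphism of its arity.

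Next I would form a pointwise colimit built solely from copies of $t$. Consider the parallel pair of endomorphisms $(\id_1,\id_2),\ (\id_1,\sigma) \colon (1,2,t) \rightrightarrows (1,2,t)$ in $\B \downarrow F$, where $\sigma \in \aut_2$ is the transposition; both are morphisms because $F\sigma \circ t = t$. Since here the $\B$-leg of the projection $\B \downarrow F \to \B \times \A$ is the identity functor, which preserves all colimits, this projection \emph{creates} colimits, so the pointwise colimit of this diagram is simply the operation induced on the coequalisers formed separately in $\B$ and in $\A$. These coequalisers are $1$ and $q \colon 2 \twoheadrightarrow 2/\langle\sigma\rangle \cong 1$, yielding an induced operation $t' \colon 1 \to F1$; and since $F1 = 1^2/\aut_2 \cong 1$, the operation $t'$ is forced to be the unique such map, naming the diagonal class.

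Finally I would show that $t'$ is \emph{not} generic, which completes the proof. As $1 \downarrow F$ is connected, $t'$ lies in the same (unique) component as $t$, and it suffices to show $t'$ is not Galois there, for which it is enough that $t'$ fail weak initiality. A morphism $t' \to u$ in $1 \downarrow F$, for $u \colon 1 \to FX$, is a map $f \colon 1 \to X$ with $Ff \circ t' = u$; but $Ff(t')$ is always a diagonal class $\{f(\ast)\}$, so no such morphism exists when $u$ is off-diagonal---in particular there is no morphism $t' \to t$. Thus $t'$ is not weakly initial, hence not Galois, hence not generic, so the generic operations fail to be closed under the pointwise colimit exhibited. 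The only real bookkeeping lies in verifying genericity of $t$ (weak initiality together with transitivity via $\aut_t = \aut_2$) and confirming that the projection genuinely creates the pointwise colimit; neither of these is a serious obstacle, the essential point being that coequalising by an input-arity automorphism collapses the off-diagonal generic operation to the non-generic diagonal one.
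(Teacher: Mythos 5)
Your proof is correct, but it takes a genuinely different route from the paper's. The paper reuses the functor $F \colon \cat{Set} \rightarrow \cat{Set}^{\atwo}$ of Proposition~\ref{prop:6}, forms a \emph{pushout} $u +_t u$ of a span of generic operations over $\yoneda_1 +_{\yoneda_0} \yoneda_1$, and then deduces non-genericity indirectly: since $G$ is representable at $\yoneda_1 +_{\yoneda_0} \yoneda_1$, one has $\yoneda_1 +_{\yoneda_0} \yoneda_1 \downarrow F \cong 1 \downarrow GF$, which by the proof of Proposition~\ref{prop:6} contains \emph{no} generic operations at all. You instead work with the bare symmetric square $FX = X^2/\aut_2$ on $\cat{Set}$ and a \emph{coequaliser} of the parallel pair $(\id_1,\id_2), (\id_1,\sigma)$ on the off-diagonal generic $t \colon 1 \rightarrow F2$, verifying directly that the resulting diagonal operation $t' \colon 1 \rightarrow F1$ fails weak initiality (no morphism $t' \rightarrow t$ exists, since $Ff \circ t'$ is always diagonal) and hence is not Galois. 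Your verification of genericity of $t$ (weak initiality plus transitivity via $\aut_t = \{\id,\sigma\}$) and of the pointwise colimit are all sound; your argument is more elementary and self-contained, and it isolates the mechanism cleanly---coequalising an input-arity automorphism collapses a generic onto a non-generic operation. What the paper's version buys in exchange is twofold: it establishes the stronger fact that the colimit \emph{stage} admits no generic operations whatsoever (linking failure of closure to failure of analyticity at glued stages), and it exhibits the failure specifically for a pushout, which is the colimit type that the subsequent arrow-genericity and cellularity machinery (Lemma~\ref{lem:7}, Proposition~\ref{prop:17}) is designed to repair. One small caveat: your remark that ``the $\B$-leg of the projection is the identity functor, which preserves all colimits, so this projection creates colimits'' is loosely phrased---creation is not a consequence of that observation---but it is also unnecessary, since the paper \emph{defines} pointwise colimits in $\B \downarrow F$ as those created by the projection to $\B \times \A$, and your explicit computation of the induced operation on the separate coequalisers is exactly that construction.
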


\begin{proof}
  Consider the pointwise analytic functor
  $F \colon \cat{Set} \rightarrow \cat{Set}^\atwo$ from~\eqref{eq:16}.
  By examination of its spectrum, this admits generic operations
  $t \colon \yoneda_0 \rightarrow F2$ and
  $u \colon \yoneda_1 \rightarrow F2$ fitting into a span
  \begin{equation*}
    \cd[@!@-2em]{ 
      & \yoneda_0 \ar[dl]_{} \ar[dd]_-{t} \ar[rr]^{} &&
      \yoneda_1 \ar[dd]^{u} \\ 
      \yoneda_1 \ar[dd]_{u} \\
      & F2 \ar@{=}[rr] \ar@{=}[dl] && 
      F2 \\
      F2
    }
  \end{equation*}
  in $\cat{Set}^\atwo \downarrow F$. We claim that the pushout
  $u +_t u \colon \yoneda_1 +_{\yoneda_0} \yoneda_1 \rightarrow F2$ of
  this span is not generic; in fact, we claim that there are no generic
  operations at all in $\yoneda_1 +_{\yoneda_0} \yoneda_1 \downarrow F$. Indeed,
  the functor $G$ in~\eqref{eq:16} is representable at
  $\yoneda_1 +_{\yoneda_0} \yoneda_1$, and so
  $1 \downarrow GF \cong \yoneda_1 +_{\yoneda_0} \yoneda_1 \downarrow F$;
  but since by the proof of Proposition~\ref{prop:6}, the former
  category contains no generic operations, neither does the latter.
\end{proof}

In light of this negative result, our first objective in this section
will be to describe certain good colimit types under which generic
operations \emph{are} closed, and to show that for these colimit
types, we do have an analogue of Proposition~\ref{prop:18}. We then
introduce the notion of cellular functor, this being a pointwise
analytic functor whose generic operations have input arities that can
be constructed from representables using only these good
colimit types. With this in place, it is then reasonably
straightforward to show that cellular functors are closed under
composition.

\subsection{Arrow-genericity and arrow-analyticity}
\label{sec:analyt-at-morph}

By the same argument as for linear operations, generic operations are
closed under coproducts in $\B\downarrow F$, and at first this may
appear to be all that we can salvage from Proposition~\ref{prop:18} in
the analytic case. But in fact, there is a class of morphisms in
$\B \downarrow F$ along which generic operations are closed under
pushout; we now introduce this class.

\begin{Defn}
  \label{def:20}
  Let $F \colon \A \rightarrow \B$. We say that a map
  $(b,a) \colon t_1 \rightarrow t_2$ in $\B \downarrow F$ as below is
  \emph{arrow-generic} if $t_1$ and $t_2$ are generic for $F$ and
  $(t_1, t_2) \colon b \rightarrow F^\atwo(a)$ is generic for the
  functor $F^\atwo \colon \A^\atwo \rightarrow \B^\atwo$.
  \begin{equation}\label{eq:24}
    \cd{
      {B_1} \ar[r]^-{b} \ar[d]_{t_1} &
      {B_2} \ar[d]^{t_2} \\
      {FA_1} \ar[r]^-{Fa} &
      {FA_2}
    }
  \end{equation}
\end{Defn}

Just as with linear and generic operations, there is a
characterisation of arrow-genericity as a diagonal filling property:

\begin{Lemma}
  \label{lem:4}
  Let $t_1, t_2$ be $F$-generic operations. A map
  $(b,a) \colon t_1 \rightarrow t_2$ in $\B \downarrow F$ is
  arrow-generic just when for every commuting diagram as below (with
  $hj = ka$), there exists $\ell$ as shown with $h\ell = k$ and
  $F\ell.t_2 = u$ and $\ell a = j$.
  \begin{equation}\label{eq:15}
    \cd[@R-1em]{
      {B_1} \ar[r]^-{b} \ar[dd]_{t_1} &
      {B_2} \ar[dd]_(0.27){t_2} \ar[r]^-{u} &
      {FY} \ar[dd]^-{Fh}\\ & {} \\
      {FA_1} \ar[r]_-{Fa} \ar'[ur]^-{Fj}[uurr] &
      {FA_2} \ar[r]_-{Fk} \ar@{-->}[uur]_-{F\ell}& {FZ}
    }
  \end{equation}
  It follows that the class of arrow-generic maps in $\B \downarrow F$
  contains the isomorphisms and is composition-closed.
\end{Lemma}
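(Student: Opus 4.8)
The plan is to read off the characterisation straight from Lemma~\ref{lem:12}, applied now to the functor $F^\atwo \colon \A^\atwo \rightarrow \B^\atwo$. Indeed, by Definition~\ref{def:20} and the standing hypothesis that $t_1,t_2$ are generic, arrow-genericity of $(b,a)$ is exactly the statement that the operation $(t_1,t_2) \colon b \rightarrow F^\atwo a$ is generic for $F^\atwo$. A test square for this involves arrow-objects $\mathbf{y} = (Y_1 \rightarrow Y_2)$ and $\mathbf{z} = (Z_1 \rightarrow Z_2)$ of $\A^\atwo$, so that Lemma~\ref{lem:12} for $F^\atwo$ unpacks into a pair of filling problems, one at each level, linked by a commutativity constraint. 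The content of Lemma~\ref{lem:4} is that, since $t_1$ is itself generic, the level-one problem can always be solved; the genuine obligation is then the level-two lift compatible with a prescribed level-one datum, which is precisely the single diagonal filling~\eqref{eq:15}.

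For the forward direction I would feed~\eqref{eq:15} into Lemma~\ref{lem:12} via a carefully engineered test square: take $\mathbf{y} = (j \colon A_1 \rightarrow Y)$ and cospan leg $\mathbf{u} = (t_1, u)$, and — the crucial device — rigidify the codomain as $\mathbf{z} = (hj \colon A_1 \rightarrow Z)$ with $\mathbf{h} = (\id_{A_1}, h)$ and $\mathbf{k} = (\id_{A_1}, k)$. The hypotheses $hj = ka$, $Fj.t_1 = u.b$ and $Fh.u = Fk.t_2$ make this a legitimate test square, and because the level-one components of $\mathbf{h}$ and $\mathbf{k}$ are both $\id_{A_1}$, the equation $\mathbf{h}\bm{\ell} = \mathbf{k}$ forces the generic filler $\bm{\ell} = (\ell_1,\ell_2)$ to have $\ell_1 = \id_{A_1}$; then $\ell := \ell_2$ satisfies $\ell a = j$, $h\ell = k$ and $F\ell.t_2 = u$ on the nose. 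Conversely, given~\eqref{eq:15}, I would establish full $F^\atwo$-genericity by first solving the level-one square using the generic $t_1$ to obtain $\ell_1$, and then applying~\eqref{eq:15} with $j := y\ell_1$ to obtain a level-two lift $\ell_2$ with $\ell_2 a = y\ell_1$; the pair $(\ell_1,\ell_2)$ is then a morphism of $\A^\atwo$ filling the original square.

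The two final assertions follow formally from this characterisation. For an isomorphism $(b,a)$ — where $a,b$ are invertible and $t_2$ is generic since genericity is invariant under isomorphism in $\B \downarrow F$ — the explicit map $\ell := ja^{-1}$ verifies~\eqref{eq:15}: one checks $h\ell = hja^{-1} = kaa^{-1} = k$, $\ell a = j$, and $F\ell.t_2 = Fj.Fa^{-1}.t_2 = Fj.t_1.b^{-1} = u$, using the defining square $Fa.t_1 = t_2.b$ and the top face of~\eqref{eq:15}. For composition-closure, given arrow-generic $(b,a)\colon t_1 \rightarrow t_2$ and $(b',a')\colon t_2 \rightarrow t_3$, I would fill the test~\eqref{eq:15} for the composite $(b'b,a'a)$ in two stages: apply~\eqref{eq:15} for $(b,a)$, with level-two data $ub'$ and $ka'$, to lift the prescribed $j$ to some $m \colon A_2 \rightarrow Y$ with $ma = j$, $hm = ka'$ and $Fm.t_2 = ub'$; then apply~\eqref{eq:15} for $(b',a')$ with $j := m$, to lift $m$ to $\ell \colon A_3 \rightarrow Y$ with $\ell a' = m$. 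The resulting $\ell$ satisfies $\ell(a'a) = (\ell a')a = ma = j$, together with $h\ell = k$ and $F\ell.t_3 = u$, which is exactly~\eqref{eq:15} for the composite, so the composite is arrow-generic.

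The main obstacle is the forward direction: a naive test only pins down the level-one filler up to an automorphism in $\aut_{t_1}$ (by Corollary~\ref{cor:2}), which yields $\ell a = j\ell_1$ rather than $\ell a = j$, and this discrepancy does not cancel under the two-stage lifting needed for composition-closure. Securing the condition $\ell a = j$ literally is precisely what forces the rigidified choice of $\mathbf{z}$ above; once that device is in place, everything else is routine bookkeeping with Lemma~\ref{lem:12}.
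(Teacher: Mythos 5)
Your proposal is correct and takes essentially the same route as the paper: both directions reduce to Lemma~\ref{lem:12} applied to $F^\atwo$, and your converse (solve the level-one square by genericity of $t_1$, then invoke~\eqref{eq:15} with $j := y\ell_1$) is exactly the paper's argument. The only cosmetic difference is the degenerate test square in the forward direction — you rigidify with $\mathbf{y} = j$, $\mathbf{z} = hj$ and identity first components to force $\ell_1 = \id_{A_1}$, whereas the paper takes the cube with $\mathbf{y} = 1_Y$ and $\mathbf{z} = h$, forcing the front filler to equal $j$ — and you additionally write out the isomorphism and composition-closure verifications that the paper leaves as an immediate consequence of the characterisation.
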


\begin{proof}
  First assume the condition in the statement. We must show that for
  any cube as below left, there are diagonal fillers
  $j_1 \colon A_1 \rightarrow Y_1$ and
  $j_2 \colon A_2 \rightarrow Y_2$ with $h_i j_i = k_i$ and
  $Fj_i.t_i = u_i$ and $j_2 a = yj_1$. Applying genericity of $t_1$ to
  the front face yields the required $j_1$; now the left and back
  faces give the solid part of a diagram as in~\eqref{eq:15} with the
  composite $yj_1$ as its diagonal, and applying the stated condition
  to this yields a filler $j_2$ for the back face satisfying the
  required equations.
  \begin{equation*}
    \cd[@!@-1.4em@R-0.8em]{ 
      & B_2 \ar'[d][dd]_(0.35){t_2} \ar[rr]^{u_2} && 
      FY_2 \ar[dd]^{Fh_2} \\ 
      B_1 \ar[ur]^{b} \ar[dd]_{t_1} \ar[rr]^(0.65){u_1} && 
      FY_1 \ar[ur]_-{Fy} \ar[dd]^(0.3){Fh_1} \\
      & FA_2 \ar'[r]^(0.6){Fk_2}[rr] && 
      FZ_2 \\
      FA_1 \ar[rr]_{Fk_1} \ar[ur]^{Fa} & & 
      FZ_1 \ar[ur]_{Fz}
    } \qquad \qquad 
    \cd[@!@-1.4em@R-0.8em]{ 
      & B_2 \ar'[d][dd]_(0.35){t_2} \ar[rr]^{u} && 
      FY \ar[dd]^{Fh} \\ 
      B_1 \ar[ur]^{b} \ar[dd]_{t_1} \ar[rr]^(0.65){ub} && 
      FY \ar[ur]_-{F1} \ar[dd]^(0.3){F1} \\
      & FA_2 \ar'[r]^(0.6){Fk}[rr] && 
      FZ \\
      FA_1 \ar[rr]_{Fj} \ar[ur]^{Fa} & & 
      FY \ar[ur]_{Fh}
    }
  \end{equation*}
  Suppose conversely that $(b,a) \colon t_1 \rightarrow t_2$ is
  arrow-generic. Given a diagram as in the solid part of~\eqref{eq:15},
  we apply arrow-genericity to the cube above right to obtain fillers
  for the front and back faces making everything commute. The front
  filler is necessarily $j$, and so the back filler is the
  $\ell \colon A_2 \rightarrow Y$ required for~\eqref{eq:15}.
\end{proof}

As mentioned above, the reason for introducing arrow-generic maps is
that generic operations in $\B \downarrow F$ are closed under pushout
along them; we show this in the next section, but first let us introduce
the associated notion of analyticity.

\begin{Defn}
  \label{def:24}
  A functor $F \colon \A \rightarrow \B$ is \emph{arrow-analytic at
    stage $b \in \B(B_1,B_2)$} if $F$ is analytic at stages
  $B_1, B_2 \in \B$ and $F^\atwo$ is analytic at stage
  $b \in \B^\atwo$; we define arrow-analyticity of a transformation
  $\alpha \colon F \Rightarrow G$ correspondingly.
\end{Defn}

It should not yet be clear whether arrow-analyticity is a property
that will be fulfilled in examples of interest. We will see that this
is the case in Lemma~\ref{lem:16} below, where we characterise
arrow-generic morphisms $t_1 \rightarrow t_2$ in terms of an
easily-satisfied relation between the automorphism groups of
$t_1 \in B_1 \downarrow F$ and $t_2 \in B_2 \downarrow F$. Combining
this with the following lemma will allow us to find many examples of
arrow-analytic functors and transformations.

\begin{Lemma}
  \label{lem:6}
  A functor $F \colon \A \rightarrow \B$ is arrow-analytic at 
  $b \in \B(B_1, B_2)$ if and only if it is analytic at
  $B_1, B_2 \in \B$ and every $(b,a) \colon t_1 \rightarrow t_2$ in
  $\B \downarrow F$ between generic operations is arrow-generic. A
  transformation $\alpha \colon F \Rightarrow G$ between two such
  functors is arrow-analytic at $b$ if and only if it is
  analytic at $B_1, B_2 \in \B$.
\end{Lemma}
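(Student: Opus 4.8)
The plan is to strip Definition~\ref{def:24} down to its content: arrow-analyticity of $F$ at $b$ means exactly that $F$ is analytic at $B_1,B_2$ and $F^\atwo$ is analytic at $b$, and since the first conjunct is common to both sides of the claimed equivalence, I only have to show, \emph{assuming} $F$ analytic at $B_1,B_2$, that $F^\atwo$ is analytic at $b$ if and only if every $(b,a)\colon t_1\to t_2$ with $t_1,t_2$ generic is arrow-generic. Throughout I use the identification that a morphism $(b,a)\colon t_1\to t_2$ in $\B\downarrow F$, i.e.\ a square as in~\eqref{eq:24}, is the same datum as an $F^\atwo$-operation $(t_1,t_2)\colon b\to F^\atwo a$ at stage $b$; under it ``arrow-generic'' means ``both legs $F$-generic and the operation $F^\atwo$-generic''.

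For the ``if'' direction I would argue by covering. Given an arbitrary $F^\atwo$-operation $(s_1,s_2)\colon b\to F^\atwo a$, analyticity of $F$ at $B_1,B_2$ supplies generic covers with $t_i$ generic and $s_i=Fd_i\cdot t_i$. The commutativity $Fa\cdot s_1=s_2\cdot b$ rewrites as $F(a d_1)\cdot t_1 = Fd_2\cdot(t_2 b)$, so the generic filling property of $t_1$ (Lemma~\ref{lem:12}) produces $a'\colon A_1'\to A_2'$ with $Fa'\cdot t_1 = t_2 b$ and $d_2 a' = a d_1$. The first equation makes $(t_1,t_2)$ a square---a generic-legged $F^\atwo$-operation---and the second makes $(d_1,d_2)$ a map $(t_1,t_2)\to(s_1,s_2)$ in $b\downarrow F^\atwo$. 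By hypothesis $(t_1,t_2)$ is arrow-generic, hence an $F^\atwo$-generic cover of $(s_1,s_2)$, so $F^\atwo$ is analytic at $b$.

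The ``only if'' direction is the delicate part, resting on one asymmetric observation: the codomain leg of any $F^\atwo$-generic operation is $F$-generic. This I would prove directly from Lemma~\ref{lem:12}, transporting a test square for $v_2$ up to an $F^\atwo$-test square for $(v_1,v_2)$ built from the identity-arrow objects $(Y\xrightarrow{\id}Y)$, $(Z\xrightarrow{\id}Z)$ and the map $(k a', k)$; the resulting $F^\atwo$-filler has codomain component exactly the filler required for $v_2$. The analogous construction fails for the domain leg, and this is the crux of the asymmetry. Granting the observation, suppose $F^\atwo$ is analytic at $b$ and $(t_1,t_2)$ has generic legs; cover it by an $F^\atwo$-generic $(v_1,v_2)$ via $(c_1,c_2)$. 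Then $v_2$ is generic, so $c_2$ is a map of generic operations, hence invertible by Corollary~\ref{cor:2}. For $c_1$ I would use an automorphism trick: $c_1$ is split epi with section $s$ (Corollary~\ref{cor:2} again), and the identity $a'=c_2^{-1}a c_1$ forces $(s c_1,\id)$ to be an endomorphism of $(v_1,v_2)$ in $b\downarrow F^\atwo$; being an endomorphism of a generic operation it is invertible, so $s c_1$ is iso, and with $c_1 s=\id$ this makes $c_1$ iso. Thus $(c_1,c_2)$ is invertible and $(t_1,t_2)\iso(v_1,v_2)$ is $F^\atwo$-generic, i.e.\ arrow-generic.

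For the transformation clause I would first upgrade the object-level result, under arrow-analyticity of $F$, to the full characterisation that $(v_1,v_2)$ is $F^\atwo$-generic precisely when both legs are $F$-generic. The codomain leg is handled above; for the domain leg I cover $v_1$ by a generic $w_1$ with $v_1=Fe\cdot w_1$, observe that $(w_1,v_2)$ is a generic-legged square, hence $F^\atwo$-generic by the object-level result, and that $(e,\id)$ is then a map of $F^\atwo$-generic operations, so invertible, forcing $v_1$ generic. Granting this, let $\alpha\colon F\Rightarrow G$ be analytic at $B_1,B_2$ with $F,G$ arrow-analytic at $b$; for $F^\atwo$-generic $(v_1,v_2)$ both legs are then generic, $\alpha$ carries them to $G$-generic legs, and the object-level result for $G$ makes $(\alpha_{A_1'}v_1,\alpha_{A_2'}v_2)$ $G^\atwo$-generic. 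Hence $\alpha^\atwo$ preserves generic operations, i.e.\ is analytic at $b$; the reverse implication is immediate from Definition~\ref{def:24}. The main obstacle, as signalled, is the domain leg: there is no direct analogue of the codomain computation, and one must instead bootstrap from the codomain case through the object-level genericity characterisation.
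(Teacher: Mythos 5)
Your proposal is correct, and while your ``if'' direction coincides with the paper's---take generic covers $t_i$ of the two legs and apply the Lemma~\ref{lem:12} filler to $t_1$ to get the comparison arrow, so that a generic-legged square covers the given operation---your ``only if'' direction takes a genuinely different route. The paper sandwiches the generic-legged square $(s_1,s_2)$ between two covers $(\widetilde t_1,\widetilde t_2) \xrightarrow{(u_1,u_2)} (t_1,t_2) \xrightarrow{(v_1,v_2)} (s_1,s_2)$, where $(t_1,t_2)$ is an $F^\atwo$-generic cover and $(\widetilde t_1,\widetilde t_2)$ is a generic-legged cover of that, and then squeezes via Corollary~\ref{cor:2}: the composites $v_iu_i$ are maps between $F$-generics, hence invertible, making each $u_i$ split monic, while $(u_1,u_2)$ has $F^\atwo$-generic codomain and so is pointwise split epic; everything is therefore invertible and $(s_1,s_2)$ is $F^\atwo$-generic. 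You instead use a single cover $(c_1,c_2)\colon (v_1,v_2)\to(t_1,t_2)$ together with a standalone observation---the codomain leg of any $F^\atwo$-generic is $F$-generic---whose identity-arrow test square I checked: with objects $\id_Y$, $\id_Z$ and the map $(ka',k)$, the $F^\atwo$-filler's codomain component is exactly the filler $v_2$ needs, while the first component of the commutativity condition holds automatically since $Fh\cdot u\cdot b = Fk\cdot Fa'\cdot v_1$. This makes $c_2$ invertible outright, and your endomorphism trick disposes of $c_1$ correctly: $a''sc_1 = a''$ follows from $c_2a'' = ac_1$ once $c_2$ is invertible, so $(sc_1,\id)$ is an endomorphism of the generic $(v_1,v_2)$, hence invertible, whence $c_1$ is too. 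What each approach buys: the paper's squeeze is shorter and treats the two legs symmetrically, never needing a leg-by-leg analysis; your route isolates the asymmetry explicitly (and correctly diagnoses why the domain leg resists the direct computation---a test operation at $B_1$ cannot in general be extended along $b$), and this pays off in the transformation clause, where you prove in full the characterisation that, under arrow-analyticity, the $F^\atwo$-generic operations at stage $b$ are precisely the generic-legged squares (recovering the domain leg by covering $v_1$ with a generic $w_1$ and inverting $(e,\id)$). That characterisation is exactly what the paper compresses into ``the statement about arrow-analytic transformations follows directly from this'', so your version, though longer, makes explicit the bootstrapping the paper leaves implicit.
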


\begin{proof}
  If $F$ is analytic at stages $B_1, B_2 \in \B$, then every square as
  to the front of the diagram below left can be factorised through the
  back faces as displayed:
  \begin{equation*}
    \cd[@!C@-0.8em@C-0.5em]{
      & B_1 \ar[rr]^-{b} \ar[dd]_(0.7){t_1} \ar[dl]_-{\tilde t_1} & & B_2 \ar[dd]^-{t_2} \ar[dl]_-{\tilde t_2} \\
      FA_1 \ar'[r][rr]^-{Fa} \ar[dr]_-{Fu_1} & & FA_2 \ar[dr]_-{Fu_2} \\
      & FX_1 \ar[rr]^(0.4){Fx} & & FX_2
    } \qquad \qquad 
    \cd{
      {B_1} \ar[r]^-{b} \ar[d]_{s_1} &
      {B_2} \ar[d]^{s_2} \\
      {FY_1} \ar[r]_-{Fy} &
      {FY_2}\rlap{ .}
    }
  \end{equation*}
  Here, $u_1 \colon \widetilde t_1 \rightarrow t_1$ and
  $u_2 \colon \widetilde t_2 \rightarrow t_2$ are generic covers
  obtained from analyticity at $B_1$ and $B_2$, and $a$ is induced by
  applying Lemma~\ref{lem:12} to the generic $\widetilde t_1$. Now if
  the hypotheses in the statement hold, then
  $(\widetilde t_1, \widetilde t_2)$ is $F^\atwo$-generic and so each
  $(t_1, t_2)$ in $b \downarrow F^\atwo$ admits a generic cover, as
  required for $F^\atwo$ to be analytic at $b$. Suppose conversely that
  $F^\atwo$ is analytic at $b$, and consider a square as right above
  with generic sides; we must show that it is arrow-generic. So
  construct covers
  \begin{equation*}
    (\widetilde t_1, \widetilde t_2) \xrightarrow{(u_1, u_2)} (t_1,
    t_2) \xrightarrow{(v_1, v_2)} (s_1, s_2)
  \end{equation*}
  in $b \downarrow F^\atwo$, where $(t_1, t_2)$ is an $F^\atwo$-generic
  cover, and where $(\widetilde t_1, \widetilde t_2)$ is obtained as
  above left using analyticity of $F$ at $B_1, B_2$. Since
  $\widetilde t_1$ and $s_1$ are both $F$-generic operations at stage
  $B_1$, $v_1u_1$ is invertible by Corollary~\ref{cor:2} and so $u_1$
  is a split monomorphism; similarly $u_2$ is split monic. On the other
  hand, since $(t_1, t_2)$ is $F^\atwo$-generic, the map $(u_1, u_2)$
  must---by Corollary~\ref{cor:2} again---be a pointwise split
  epimorphism: whence $u_1, u_2, v_1$ and $v_2$ are invertible, so that
  $(s_1, s_2)$, like $(t_1, t_2)$, is arrow-generic as required. It
  follows that, if $F$ is arrow-analytic at $b$, then the generic
  operations in $b \downarrow F^\atwo$ are precisely the squares with
  generic sides; the statement about arrow-analytic transformations
  follows directly from this.
\end{proof}

In the sequel, we will make use of this characterisation of
arrow-analyticity without further comment. We conclude this section by
recording the analogue of Definitions~\ref{def:18} and~\ref{def:12}
for arrow-analytic functors:

\begin{Defn}
  \label{def:29}
  If $\I \subset \A^\atwo$ and $\J \subset \B^\atwo$ are full replete
  subcategories, we say that $F \colon \A \rightarrow \B$ is
  \emph{$(\I,\J)$-arrow-analytic} if it is arrow-analytic at each
  $b \in \J$, and each arrow-generic
  $(t_1, t_2) \in b \downarrow F^\atwo$ as in~\eqref{eq:24} has its
  input arity $a$ in $\I$. A transformation
  $\alpha \colon F \Rightarrow G$ between such functors is
  \emph{$(\I, \J)$-arrow-analytic} if it is arrow-analytic at every
  stage $b \in \J$.
\end{Defn}

\begin{Prop}
  \label{prop:19}
  If $F \colon \A \rightarrow \B$ and $G \colon \B \rightarrow \C$ are
  $(\I, \J)$- and $(\J, \K)$-arrow-analytic, then their composite is
  $(\I, \K)$-arrow-analytic, and correspondingly for the
  transformations between such functors.
\end{Prop}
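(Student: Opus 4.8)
The plan is to deduce this from Proposition~\ref{prop:16} applied at two levels: once to the plain functors $F$, $G$, to get analyticity of $GF$ at the relevant \emph{object}-stages, and once to the arrow-functors $F^\atwo$, $G^\atwo$, to get analyticity of $(GF)^\atwo$ at the relevant \emph{arrow}-stages. The glue between the two applications is the identity $(GF)^\atwo = G^\atwo F^\atwo$, which holds because $(\thg)^\atwo$ preserves composition of functors, together with Lemma~\ref{lem:6}, which lets us read off the generic operations of an arrow-analytic functor as exactly the squares with generic sides. Writing $\I_0 \subset \A$, $\J_0 \subset \B$, $\K_0 \subset \C$ for the full replete subcategories on those objects occurring as a domain or codomain of an arrow of $\I$, $\J$, $\K$, I would verify, for each $b \colon C_1 \to C_2$ in $\K$, the two clauses of Definition~\ref{def:24} for $GF$ together with the input-arity condition of Definition~\ref{def:29}.

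\emph{The arrow level.} First I would observe that $F$ being $(\I,\J)$-arrow-analytic makes $F^\atwo$ an \emph{$(\I,\J)$-analytic} functor in the sense of Definition~\ref{def:12}: it is analytic at each $b \in \J$ by hypothesis, and by Lemma~\ref{lem:6} the generic operations of $b \downarrow F^\atwo$ are precisely the arrow-generic squares~\eqref{eq:24}, whose input arities $a$ lie in $\I$ by Definition~\ref{def:29}. Likewise $G^\atwo$ is $(\J,\K)$-analytic. Proposition~\ref{prop:16} then gives that $G^\atwo F^\atwo = (GF)^\atwo$ is $(\I,\K)$-analytic; in particular it is analytic at each $b \in \K$, and its generic operations there have input arity in $\I$.

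\emph{The endpoint level.} It remains to see that $GF$ is analytic at $C_1$ and $C_2$. Here I would apply Proposition~\ref{prop:16} to the plain functors with $\A' = \I_0$, $\B' = \J_0$ and $\C' = \K_0$, which requires $F$ to be $(\I_0,\J_0)$-analytic and $G$ to be $(\J_0,\K_0)$-analytic. That $F$ is analytic at each object of $\J_0$ with generic arities in $\I_0$ is the content of arrow-analyticity \emph{at identity stages}: for any $F$-generic $t_1 \colon B_1 \to FA_1$, the identity square $(t_1,t_1) \colon \id_{B_1} \to F^\atwo(\id_{A_1})$ is arrow-generic, since by Lemma~\ref{lem:4} the arrow-generic maps contain the isomorphisms, so that $A_1$ is an endpoint of the arity $\id_{A_1} \in \I$, i.e.\ $A_1 \in \I_0$. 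I expect this matching of object-parts to be the main obstacle: one must check that the arities of $G$-generic operations at stages of $\K_0$ really do land in $\J_0$, so that the analyticity of $F$ there can be invoked — which is where the bookkeeping between the plain and arrow hypotheses must be done, and which will hold once $\I$, $\J$, $\K$ are taken closed under the identities of their object-parts (as one may arrange without loss of generality).

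Combining the two applications, $GF$ satisfies both clauses of Definition~\ref{def:24} at each $b \in \K$, hence is arrow-analytic there; and being now arrow-analytic at $b$, Lemma~\ref{lem:6} identifies the generic operations of $(GF)^\atwo$ at $b$ with the arrow-generic squares, whose arities lie in $\I$ by the arrow-level computation. This is exactly $(\I,\K)$-arrow-analyticity of $GF$. For the transformation statement I would argue more directly: by the final clause of Lemma~\ref{lem:6}, a transformation is arrow-analytic at $b$ precisely when it is analytic at the two endpoints $C_1, C_2$, so the composability of arrow-analytic transformations reduces to the composability of analytic transformations already supplied by Proposition~\ref{prop:16} at those endpoints.
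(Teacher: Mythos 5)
Your strategy coincides with the paper's: the proof in the paper is literally ``Direct from the definitions and Proposition~\ref{prop:16}'', and your two applications of that proposition---at the arrow level, via $(GF)^\atwo = G^\atwo F^\atwo$ and the identification in Lemma~\ref{lem:6} of the generic operations of $F^\atwo$ at $b \in \J$ with the arrow-generic squares, and at the endpoint level---are exactly what that one-liner compresses. Your arrow-level paragraph is correct as written, and you have put your finger on the only point of real content: to get $GF$ analytic at the endpoints $C_1, C_2$ of $c \in \K$, the arities of the $G$-generic operations at those stages must be stages of analyticity of $F$, i.e.\ must occur among the endpoints of arrows of $\J$.

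But your disposal of that point---that one may assume ``without loss of generality'' that $\I, \J, \K$ contain the identities of their endpoint objects---is a genuine gap. Enlarging $\K$ strictly strengthens the hypothesis on $G$: by Proposition~\ref{prop:17}(i), arrow-analyticity of $G$ at the added identities is indeed free, but the arity clause of Definition~\ref{def:29} at $1_{C_1}$ asserts precisely that every $G$-generic at $C_1$ has arity an endpoint of $\J$, which is the very claim you need; as a reduction it is circular. Nor does the claim hold for arbitrary full replete classes: squares~\eqref{eq:24} over $c$ arise only as generic covers of restrictions along $c$ of codomain-stage generics, so connected components of $C_1 \downarrow G$ missing the image of restriction along $c$ are entirely unconstrained by the hypotheses. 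Concretely, let $\C$ be the poset $C_1 < C_2$ with $\K = \{c\}$; let $\B$ be the poset $B_1 < B_2$ together with an isolated object $B_\star$, with $\J = \{b\}$ and $G$ given by $G(b) = c$ and $G(B_\star) = C_1$; and let $\A$ be the poset $A_1 < A_2$ together with a copy of $(\mathbb{Z}, \leq)$, with $\I = \A^\atwo$ and $F$ given by $F(\alpha) = b$ and $F$ constant at $B_\star$ on the $\mathbb{Z}$-part. Then $F$ and $G$ satisfy the stated hypotheses (no square over $c$ can involve $B_\star$, since $\C(C_2, C_1) = \emptyset$, so the generic $(B_\star, \id_{C_1})$ at stage $C_1$ is invisible to Definition~\ref{def:29}), yet $C_1 \downarrow GF$ contains a connected component isomorphic to $(\mathbb{Z}, \leq)$, which has no Galois object; so $GF$ is not analytic at $C_1$, hence not arrow-analytic at $c$. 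The closure condition is therefore a substantive extra hypothesis, under which your identity-square mechanism does complete the proof---and it is satisfied in the paper's one application of this proposition, namely Proposition~\ref{prop:13}, since the classes $\mathbf{Cx}^\atwo(I)$ contain all isomorphisms and in particular all endpoint identities. The repair is to state the closure assumption explicitly rather than claim it costs nothing (the paper's own one-line proof is subject to the same caveat); your treatment of the transformation clause then goes through unchanged, as it rests on the same endpoint bookkeeping.
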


\begin{proof}
  Direct from the definitions and Proposition~\ref{prop:16}.
\end{proof}

\subsection{Building up stages of analyticity}
\label{sec:building-up-stages}

We are now ready to see what the purpose of arrow-genericity and
arrow-analyticity really is. We begin with a lemma which provides an
analytic analogue of Lemma~\ref{lem:1}(ii) above.
 
\begin{Lemma}
  \label{lem:7}
  Let $F \colon \A \rightarrow \B$ and consider a pointwise pushout
  square in $\B \downarrow F$ as below. If $s_1$,
  $s_2$ and $t_1$ are generic and $(d,c)$ is arrow-generic, then $t_2$ is also
  generic and $(b,a)$ is also arrow-generic.
  \begin{equation}\label{eq:26}
    \cd[@C+1em@!]{
      {s_1} \ar[r]^-{(g,f)} \ar[d]_{(d,c)} &
      {t_1} \ar[d]^{(b,a)} \\
      {s_2} \ar[r]^-{(m,n)} &
      {t_2}
    }
  \end{equation}
\end{Lemma}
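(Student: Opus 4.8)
The plan is to establish both conclusions through the diagonal-filling characterisations of genericity (Lemma~\ref{lem:12}) and of arrow-genericity (Lemma~\ref{lem:4}), fed into the universal properties of the two underlying pushouts. Write $s_i \colon C_i \to F X_i$ and $t_i \colon B_i \to F A_i$. Since \eqref{eq:26} is a pointwise pushout, the induced square of stages exhibits $B_2$ as $B_1 +_{C_1} C_2$ in $\B$ (with $b g = m d$), and the square of arities exhibits $A_2$ as $A_1 +_{X_1} X_2$ in $\A$ (with $a f = n c$), while $t_2$ is the comparison map determined by $t_2 b = F a \cdot t_1$ and $t_2 m = F n \cdot s_2$.

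First I would prove that $t_2$ is generic. Given a filling problem $F h \cdot u = F k \cdot t_2$ for $t_2$ as in \eqref{eq:25}, restricting $u$ along $b$ and using genericity of $t_1$ produces $\ell_1 \colon A_1 \to Y$ with $F \ell_1 \cdot t_1 = u b$ and $h \ell_1 = k a$. I would then apply the arrow-genericity of $(d,c)$, in the filling form of Lemma~\ref{lem:4} and with \emph{prescribed} diagonal $j' := \ell_1 f$, to obtain $\ell_2 \colon X_2 \to Y$ with $F \ell_2 \cdot s_2 = u m$, $h \ell_2 = k n$, and --- crucially --- $\ell_2 c = \ell_1 f$. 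This last equation is precisely the compatibility needed to amalgamate $\ell_1$ and $\ell_2$ over the pushout $A_2 = A_1 +_{X_1} X_2$ into a single $\ell \colon A_2 \to Y$; the equalities $h \ell = k$ and $F \ell \cdot t_2 = u$ then follow from the uniqueness clauses of the two pushouts, since they already hold after precomposition with $a, n$ and with $b, m$ respectively.

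With $t_1$ and $t_2$ both generic, I would next show $(b,a)$ is arrow-generic by verifying the filling property of Lemma~\ref{lem:4}. Given data $u, h, k, j$ with $h j = k a$ and $F j \cdot t_1 = u b$, I apply the arrow-genericity of $(d,c)$ once more, now with prescribed diagonal $j' := j f$, to get $\ell_2 \colon X_2 \to Y$ with $\ell_2 c = j f$ and the analogous equations over $s_2$. Gluing $j$ and $\ell_2$ across $A_2 = A_1 +_{X_1} X_2$ delivers the required $\ell$ with $\ell a = j$, $h \ell = k$, and $F \ell \cdot t_2 = u$. All the intervening verifications are routine chases using $b g = m d$, $a f = n c$, and the square equations of the maps in $\B \downarrow F$.

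The one delicate point --- and the reason the naive analogue of Lemma~\ref{lem:1}(ii) fails in the analytic setting (cf.\ Proposition~\ref{prop:32}) --- is the gluing step. Genericity of $s_1$ and $s_2$ by itself would only furnish fillers on $A_1$ and on $X_2$ \emph{up to} the automorphisms that genericity permits, so these need not agree over $X_1$ and cannot be combined through the pushout. What makes the argument work is the extension clause ``$\ell' c = j'$'' that arrow-genericity of $(d,c)$ supplies via Lemma~\ref{lem:4}: it lets me solve the filling problem on $X_2$ subject to a prescribed restriction along $c$, and so manufacture exactly the compatibility that the universal property of $A_2 = A_1 +_{X_1} X_2$ requires. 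Isolating this as the sole obstacle, and channelling everything else through the two pushouts, is what carries the proof.
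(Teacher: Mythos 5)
Your proof is correct and takes essentially the same route as the paper: both paste the filling problem of Lemma~\ref{lem:4} across the pushout, solve it at $(d,c)$ using precisely the prescribed-diagonal clause ($\ell' c = j'$) of arrow-genericity, and then amalgamate through the universal property of the pushout --- the paper merely packages this as a left lifting property in $\B \downarrow F$ against the maps $(1,h) \colon 1_{FY} \rightarrow Fh$, which is stable under pushout by the standard argument. The only (cosmetic) difference is that your first pass establishing genericity of $t_2$ is the special case of your second pass in which $j$ is manufactured by genericity of $t_1$, which is why the paper runs the filler argument just once and observes that genericity of $t_2$, and hence arrow-genericity of $(b,a)$ via Lemma~\ref{lem:4}, follow immediately.
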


\begin{proof}
  We will show that every diagram as in the solid part
  of~\eqref{eq:15} admits a dashed filler; since $t_1$ is generic,
  this immediately implies that $t_2$ is generic, and so by
  Lemma~\ref{lem:4} that $(b,a)$ is arrow-generic. To prove the claim,
  observe that the stated filling condition can be described as a
  \emph{left lifting property}: it says that, for each
  $h \colon Y \rightarrow Z$ in $\A$, each square in $\B \downarrow F$
  as left below has a diagonal filler.
  \begin{equation}\label{eq:18}
    \cd[@+0.5em]{
      {t_1} \ar[r]^-{(ub,j)} \ar[d]_{(b,a)} &
      {1_{FY}} \ar[d]^{(1, h)} \\
      {t_2} \ar[r]_-{(u,k)} \ar@{-->}[ur]|-{(u,\ell)} &
      {Fh}
    } \qquad \qquad 
    \cd[@+0.5em]{
      {s_1} \ar[r]^-{(ubg,jf)} \ar[d]_{(d,c)} &
      {1_{FY}} \ar[d]^{(1, h)} \\
      {s_2} \ar[r]_-{(um,kn)} \ar@{-->}[ur]|-{(um,\ell')} &
      {Fh}
    }
  \end{equation}
  Pasting the given square with the pushout~\eqref{eq:26} gives a
  square as right above; since $(d,c)$ is arrow-generic, we induce a
  filler for this square as indicated and so by the universal property
  of pushout the required filler $(u,\ell)$ as left above.
\end{proof}

In fact, we can do better than this: the characterisation of
arrow-generic maps by a left lifting property allows us to show that
they are also closed under \emph{transfinite
  composition}~\cite[Definition~2.1.1]{Hovey1999Model}. As we do not
need this further fact, we leave its verification to the interested
reader.

We now use the preceding lemma to give the promised analytic analogue
of Proposition~\ref{prop:18}; we reiterate that, in light of
Proposition~\ref{prop:32}, the assumption of arrow-analyticity of $b$
in part (ii) cannot be dropped.

\begin{Prop}
  \label{prop:17}
  Let $\A$ be cocomplete and let $F \colon \A \rightarrow \B$.
  \begin{enumerate}[(i)]
  \item If $F$ is analytic at $B$, then it is arrow-analytic at $1_B$;
    if $F$ is arrow-analytic at composable maps $b$ and $c$, then it
    is also arrow-analytic at $cb$.\vskip0.25\baselineskip
  \item \label{item:arrow:analyticity:pushout} For any pushout as
    below in $\B$, if $F$ is analytic at $B_1$, $B_2$ and $C_1$ and
    arrow-analytic at $b$, then it is also analytic at $C_2$ and
    arrow-analytic at $c$.
    \begin{equation}\label{eq:27}
      \cd{
        {B_1} \ar[r]^-{f_1} \ar[d]_{b} &
        {C_1} \ar[d]^{c} \\
        {B_2} \ar[r]^-{f_2} &
        {C_2}
      }
    \end{equation}
  \end{enumerate}
  The analogous results hold for natural transformations
  $\alpha \colon F \Rightarrow G \colon \A \rightarrow \B$.
\end{Prop}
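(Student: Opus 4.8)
The plan is to prove both parts using the diagonal-filling characterisations of genericity and arrow-genericity (Lemmas~\ref{lem:12} and~\ref{lem:4}), together with the closure properties recorded in Corollary~\ref{cor:2} and Lemma~\ref{lem:4}, and, for part~(ii), the pushout-stability of generic operations from Lemma~\ref{lem:7}. Throughout I will exploit the characterisation of arrow-analyticity from Lemma~\ref{lem:6}, so that to prove arrow-analyticity of $F$ at a map it suffices to check analyticity of $F$ at its two endpoints and that every square with generic sides lying over that map is arrow-generic.

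For part~(i), the first claim is immediate: if $t_1, t_2$ are generic and $(1_B, a) \colon t_1 \to t_2$ is a morphism of $\B \downarrow F$ lying over $1_B$, then by Corollary~\ref{cor:2} it is an isomorphism, and isomorphisms are arrow-generic by Lemma~\ref{lem:4}; analyticity at $B$ then gives arrow-analyticity at $1_B$ via Lemma~\ref{lem:6}. For the second claim, suppose $b \colon B_1 \to B_2$ and $c \colon B_2 \to B_3$ are composable with $F$ arrow-analytic at each. Given generic $t_1, t_3$ and a map $(cb, a) \colon t_1 \to t_3$, I would first cover the operation $t_3 c$ at the middle stage $B_2$ by a generic $s$ (possible since arrow-analyticity at $b$ and $c$ supplies analyticity at $B_2$). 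Weak initiality of $t_1$ then produces $(b, a_1) \colon t_1 \to s$, and the generic cover gives $(c, a_2) \colon s \to t_3$; these are arrow-generic by arrow-analyticity at $b$ and $c$, so their composite $(cb, a_2 a_1)$ is arrow-generic by the composition-closure in Lemma~\ref{lem:4}. Since $a$ and $a_2 a_1$ both witness maps $t_1 \to t_3$ over $cb$, transitivity of the generic $t_1$ gives an automorphism $\sigma$ of $t_1$ with $a = a_2 a_1 \sigma$; as $\sigma$ is an isomorphism, hence arrow-generic, $(cb, a)$ is again a composite of arrow-generic maps, and Lemma~\ref{lem:6} yields arrow-analyticity at $cb$.

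The heart of the argument is part~(ii), which adapts the proof of Proposition~\ref{prop:18} to the analytic setting using Lemma~\ref{lem:7} in place of Lemma~\ref{lem:1}(ii). Given $s \colon C_2 \to FX$, I would cover the three restrictions $s f_2 b = s c f_1$, $s f_2$ and $s c$ by generic operations $s_1$ (at $B_1$), $s_2$ (at $B_2$) and $t_1$ (at $C_1$), using analyticity at $B_1, B_2, C_1$. Weak initiality of $s_1$ then yields comparison maps $(f_1, g) \colon s_1 \to t_1$ and $(b, c') \colon s_1 \to s_2$; crucially, arrow-analyticity at $b$ forces $(b, c')$ to be arrow-generic. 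Since $\A$ is cocomplete and the base square~\eqref{eq:27} is a pushout, I can form the pointwise pushout in $\B \downarrow F$ of the span $s_2 \leftarrow s_1 \to t_1$ along~\eqref{eq:27}, obtaining $t_2 \colon C_2 \to FA_2$; Lemma~\ref{lem:7} then shows $t_2$ is generic and that the induced leg $t_1 \to t_2$ over $c$ is arrow-generic.

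It remains to check that $t_2$ covers $s$, and this is the step I expect to be the main obstacle. The cover maps of $s_2$ and $t_1$ give $w_2 \colon A_2' \to X$ and $w_1 \colon A_1' \to X$, and a map $W \colon A_2 \to X$ gluing them over the pushout $A_2 = A_2' +_{A_0} A_1'$ exists precisely when $w_2 c' = w_1 g$. Both sides give maps $s_1 \to s f_2 b$ in $B_1 \downarrow F$, so they agree only up to an automorphism $\sigma$ of the generic $s_1$; I would absorb this by replacing $c'$ with $c'\sigma^{-1}$ (still arrow-generic, as it differs from $c'$ by an isomorphism), after which the glued $W$ satisfies $FW.t_2 = s$ by the pushout uniqueness in $\B$. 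This is exactly the new difficulty relative to the familial case, where the uniqueness of maps out of a linear operation makes the compatibility automatic. With analyticity at $C_2$ established, arrow-analyticity at $c$ follows by rerunning the construction with $s = t_2^\ast$ for an arbitrary generic-to-generic map $(c, a^\ast) \colon t_1^\ast \to t_2^\ast$: the canonical $t_2$ so produced is isomorphic to $t_2^\ast$ over $C_2$, so $(c, a^\ast)$ is, up to an automorphism of $t_1^\ast$, the arrow-generic leg of the pushout, hence itself arrow-generic. The statements for transformations $\alpha$ follow by the same reasoning, naturality of $\alpha$ ensuring that it commutes with the pushout constructions above, so that preservation of generics and arrow-generics propagates from the hypothesised stages to $C_2$ and $c$.
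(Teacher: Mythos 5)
Your proposal is correct and takes essentially the same approach as the paper: part~(i) via the diagonal-filler/Galois characterisations of genericity and arrow-genericity (Lemmas~\ref{lem:12}, \ref{lem:4} and~\ref{lem:6}), and part~(ii) by forming the pointwise pushout in $\B \downarrow F$ of a span of generic covers over the base square~\eqref{eq:27}, applying Lemma~\ref{lem:7}, and then reusing that construction to get arrow-analyticity at $c$ via invertibility of maps between generics (Corollary~\ref{cor:2}). The only divergence is local: the paper avoids your automorphism corrections by taking its cover at $B_1$ to be a generic cover of $t_2 b$ and producing the remaining face of the cube from the two-equation filler of Lemma~\ref{lem:12}, so the gluing compatibility holds on the nose, whereas you restore it using transitivity of the generic $s_1$ — which is equally valid, provided (as you implicitly do) one re-forms the pushout over the corrected span, whose leg differs from an arrow-generic map by an isomorphism and hence is still arrow-generic, so that Lemma~\ref{lem:7} applies unchanged.
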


Again, we could add an additional clause to this proposition showing
closure of stages of arrow-analyticity under transfinite composition,
but we refrain from doing so as we have no use for it in what follows.

\begin{proof}
  Let $F$ be analytic at $B$. Any map $(1_B, a) \colon t_1 \rightarrow
  t_2$ between generic operations in $\B \downarrow F$ is a map in $B
  \downarrow F$, so that $f$ is invertible by Corollary~\ref{cor:2}
  and $(1_B, f)$ is arrow-generic by Lemma~\ref{lem:4}. This shows
  that $B$ is arrow-analytic at $1_B$ by Lemma~\ref{lem:6}. Suppose
  next that $F$ is arrow-analytic at $b \colon B_1 \rightarrow B_2$
  and $c \colon B_2 \rightarrow B_3$.  Given a square as below left
  with $t_1$ and $t_3$ generic, let $e \colon t_2 \rightarrow t_3c$ be
  a generic cover in $B_2 \downarrow F$ and let $d$ be induced by
  genericity as centre below; this yields a factorisation of the left
  square as to the far right.
  \begin{equation*}
    \cd{
      {B_1} \ar[r]^-{cb} \ar[d]_{t_1} &
      {B_3} \ar[d]^{t_3} & 
      {B_1} \ar[r]^-{t_2b} \ar[d]_{t_1} &
      {FA_2} \ar[d]^{Fe} & 
      {B_1} \ar[r]^-{b} \ar[d]_{t_1} &
      {B_2} \ar[r]^-{c} \ar[d]_{t_2} &
      {B_3} \ar[d]^{t_3} \\
      {FA_1} \ar[r]^-{Fa} &
      {FA_3} &
      {FA_1} \ar[r]_-{Fa} \ar@{-->}[ur]|-{Fd}&
      {FA_3} &
      {FA_1} \ar[r]^-{Fd} &
      {FA_2} \ar[r]^-{Fe} &
      {FA_3}
    }
  \end{equation*}
  By arrow-analyticity at $b$ and $c$, both small squares are
  arrow-generic, whence also their composite by Lemma~\ref{lem:4}; this
  shows that $B$ is arrow-analytic at $cb$ as required for (i). Now
  suppose the hypotheses of (ii). We first show that $F$ is analytic at
  $C_2$. Let $z \colon C_2 \rightarrow FZ$, and consider the left cube
  in:
  \begin{equation}\label{eq:19}
    \cd[@!@-1.4em@R-0.8em]{ 
      & B_1 \ar[dl]_{b} \ar'[d][dd]_(0.35){t_1} \ar[rr]^{f_1} && 
      C_1 \ar[dl]^-c \ar[dd]^{u_1} \\ 
      B_2 \ar[dd]_{t_2} \ar[rr]^(0.6){f_2} && 
      C_2 \ar[dd]^(0.3){z} \\
      & FA_1 \ar'[r]^(0.6){Fg_1}[rr] \ar[dl]_{Fa\!} && 
      FY_1\ar[dl]^{Fn} \\
      FA_2 \ar[rr]_{Fm} & & 
      FZ } \qquad \qquad 
    \cd[@!@-1.4em@R-0.8em]{ 
      & B_1 \ar[dl]_{b} \ar'[d][dd]_(0.35){t_1} \ar[rr]^{f_1} && 
      C_1 \ar[dl]^-c \ar[dd]^{u_1} \\ 
      B_2 \ar[dd]_{t_2} \ar[rr]^(0.6){f_2} && 
      C_2 \ar[dd]^(0.3){u_2} \\
      & FA_1 \ar'[r]^(0.6){Fg_1}[rr] \ar[dl]_{Fa\!} && 
      FY_1\ar[dl]^{Fy} \\
      FA_2 \ar[rr]_{Fg_2} & & 
      FY_2 
    }
  \end{equation}
  The front, left and right faces arise from generic covers
  $m \colon t_2 \rightarrow zf_2$, $a \colon t_1 \rightarrow t_2b$ and
  $n \colon u_1 \rightarrow zc$, while the map $g_1$ across the back
  face is obtained as in (i) using genericity of $t_1$. Since the top
  face is a pushout, and $\A$ is cocomplete, the back and left faces
  admit a pushout in $\B \downarrow F$ which may be taken to be as
  right above. Since $t_1$ and $t_2$ are generic and $F$ is
  arrow-analytic at $b$, the map $(b,a) \colon t_1 \rightarrow t_2$ is
  arrow-generic; since $u_1$ is also generic, we conclude by
  Lemma~\ref{lem:7} that $u_2$ is generic and
  $(c,y) \colon u_1 \rightarrow u_2$ is arrow-generic.

  Now taking $w \colon Y_2 \rightarrow Z$ to be the unique map with
  $wy = n$ and $wg_2 = m$, we see that $w \colon u_2 \rightarrow z$
  provides a generic cover of $z$ in $C_2 \downarrow F$, so that $F$ is
  analytic at $C_2$ as required. For arrow-analyticity at $c$, suppose
  that $(c,n) \colon u_1 \rightarrow z$ is a map between generic
  operations in $\B \downarrow F$. We may complete this to a cube as
  left above and form the generic pushout $u_2$ as to the right. Now
  since $z$ is generic, the induced map $u_2 \rightarrow z$ in
  $C_2 \downarrow F$ is invertible by Corollary~\ref{cor:2}. So the
  left cube above is also a pushout; as
  $(b,a) \colon t_1 \rightarrow t_2$ is arrow-generic, so too is
  $(c,n) \colon u_1 \rightarrow z$ by Lemma~\ref{lem:7}.
\end{proof}

\subsection{Cellular analytic functors}
\label{sec:cell-analyt-funct-1}

By using Proposition~\ref{prop:17}, we can now build up the
analyticity of a pointwise analytic functor between presheaf
categories to analyticity at more complex stages by assuming suitable
instances of arrow-analyticity. In order to specify what these more
complex stages are, we borrow some ideas from algebraic topology, in
particular the theory of \emph{cell complexes} in model categories;
see~\cite[\sec 2.1.2]{Hovey1999Model}, for example.

\begin{Defn}
  \label{def:26}
  Let $I$ be a class of maps in a category $\C$ with an initial
  object. A map $f \colon X \rightarrow Y$ is called a \emph{finite
    relative $I$-complex} if either it is an isomorphism, or it can be
  written as a finite composite
  \begin{equation}\label{eq:14}
    X = X_0 \xrightarrow{f_1} X_1 \rightarrow \cdots \xrightarrow{f_m} X_m = Y
  \end{equation}
  where each $f_i$ is a pushout of a map in $I$. An object $Y \in \C$
  is called a \emph{finite $I$-complex} if the unique map
  $0 \rightarrow Y$ is a finite relative $I$-complex. We write
  $\mathbf{Cx}(I) \subset \C$ for the full subcategory on the finite
  $I$-complexes, and $\mathbf{Cx}^\atwo(I) \subset \C^\atwo$ for the
  full subcategory on the relative finite $I$-complexes between finite
  $I$-complexes.
\end{Defn}

The modifier ``finite'' here comes from the fact that in~\eqref{eq:14}
we allow only finite compositions; the general notion of cell complex
in topology also allows for \emph{transfinite} ones, and everything
that follows could be adapted to this greater generality; however, like
before, we refrain from giving this as we will not need it.

\begin{Defn}
  \label{def:22}
  Let $\C$ be a small category. A \emph{bordage} on $\psh$ is a
  filtered family
  $\emptyset = I_0 \subset I_1 \subset \cdots \subset \bigcup_n I_n =
  I$ of maps in $\psh$ such that:
  \begin{enumerate}[(i)]
  \item Each $g \in I_{n+1}$ has representable codomain and domain a
    finite $I_{n}$-complex;
  \item Each representable object of $\psh$ is a finite $I$-complex.
  \end{enumerate}
\end{Defn}

Note that condition (i) for a bordage ensures that each map in $I$ has
domain a finite $I$-complex and, as such, is an object of
$\mathbf{Cx}^\atwo(I)$; this is something which need not be true for a
general class of maps $I$. Condition (ii) is much less important than
(i) and will only play a role in Proposition~\ref{prop:13} below.

\begin{Exs}
  \label{ex:2}
  \begin{enumerate}[(i)]
  \item Any presheaf category $\psh$ has a bordage given by
    $I = I_1 = \{\,0 \rightarrow \yoneda_c : c \in \C\,\}$. The finite
    $I$-complexes are the finite coproducts of representables, and the
    finite relative $I$-complexes are coproduct injections with
    complement a finite $I$-complex.\vskip0.25\baselineskip
  \item Let $\atwo$ be the arrow category $f \colon 0 \rightarrow 1$.
    The presheaf category $\psh[\atwo]$ has a bordage given by
    $I_1 = \{\, 0 \rightarrow \yoneda_0 \}$ and
    $I \setminus I_1 = \{\yoneda_f \colon \yoneda_0 \rightarrow
    \yoneda_1\}$. The finite $I$-complexes are all finitely presentable
    presheaves, and the finite relative $I$-complexes are the
    monomorphisms with cofinite image.\vskip0.25\baselineskip
  \item Let $\G$ be the category $s,t\colon 0 \rightrightarrows 1$. The
    presheaf category $\psh[\G]$ has a bordage given by
    $I_1 = \{\, 0 \rightarrow \yoneda_0 \}$ and
    $I \setminus I_1 = \{\spn{\yoneda_s, \yoneda_t} \colon \yoneda_0 +
    \yoneda_0 \rightarrow \yoneda_1\}$, whose finite $I$-complexes and
    finite relative $I$-complexes are as in (ii).\vskip0.25\baselineskip
  \item Changing $I \setminus I_1$ in the preceding example to be
    $\{\yoneda_t \colon \yoneda_0 \rightarrow \yoneda_1\}$ yields
    another bordage on $\psh[\G]$ whose finite $I$-complexes are now
    finite \emph{forests} whose edges are all directed towards the
    roots. Changing $I\setminus I_1$ to be
    $\{\yoneda_s \colon \yoneda_0 \rightarrow \yoneda_1\}$ yields finite
    forests with edges directed away from the roots, while taking
    $I\setminus I_1$ to be $\{\yoneda_s, \yoneda_t\}$ yields finite
    forests whose edges may be oriented arbitrarily.
  \end{enumerate}
\end{Exs}

We will see further examples of bordages when we revisit the
motivating examples of polycategories, properads and \textsc{prop}s in
Section~\ref{sec:free-shapely-monads} below.

\begin{Defn}
  \label{def:28}
  Let $I$ be a bordage on $\psh$. A pointwise analytic
  $F \colon \A \rightarrow \psh$ is \emph{$I$-cellular} if any square
  as below with $b \in I$ and $t_1, t_2$ generic is arrow-generic.
  \begin{equation}\label{eq:28}
    \cd{
      {S} \ar[r]^-{b} \ar[d]_{t_1} &
      {\yoneda_{c}} \ar[d]^{t_2} \\
      {FA_1} \ar[r]^-{Fa} &
      {FA_2}
    }
  \end{equation}
\end{Defn}

We will see in the following section that the cellularity condition is
very easy to check in practice. Note that cellularity \emph{almost}
says that $F$ is arrow-analytic at $b \colon S \rightarrow \yoneda_c$
for each $b \in I$, except that we do not assume that $F$ is analytic
at the domain object $S$. In fact, this is true by virtue of:

\begin{Prop}
  \label{prop:15}
  Let $I$ be a bordage on $\psh$ and let $\A$ be a cocomplete
  category. Any $I$-cellular $F \colon \A \rightarrow \psh$ is
  analytic at all $B \in \mathbf{Cx}(I)$ and arrow-analytic at all
  $b \in \mathbf{Cx}^\atwo(I)$. The same holds for pointwise analytic
  transformations $\alpha \colon F \Rightarrow G$ between $I$-cellular
  functors.
\end{Prop}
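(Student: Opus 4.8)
The plan is to prove both assertions simultaneously by a double induction: an outer induction on the filtration level $n$ of the bordage, and an inner induction on the number of cells. The first thing I would record is the reduction that makes this work. Since every finite $I$-complex is built from finitely many cells, each lying in some $I_{n_j}$, and since the $I_n$ are increasing, every finite $I$-complex is in fact a finite $I_n$-complex for $n = \max_j n_j$; likewise every object of $\mathbf{Cx}^\atwo(I)$ is a relative finite $I_n$-complex for some $n$. It therefore suffices to prove, for each $n$, that $F$ is analytic at every finite $I_n$-complex and arrow-analytic at every relative finite $I_n$-complex between such. The engine throughout will be Proposition~\ref{prop:17}, whose two clauses propagate (arrow-)analyticity across cell-attachment pushouts, together with Lemma~\ref{lem:6}, which converts the cellularity hypothesis into arrow-analyticity at the generating maps of the bordage.

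For the base case $n = 0$ we have $I_0 = \emptyset$, so the finite $I_0$-complexes are the objects isomorphic to the initial presheaf $0$, and the relative ones are the isomorphisms. Since $\A$ is cocomplete it has an initial object $0_\A$, which is Galois because each $\A(0_\A, Y)$ is a singleton, hence a connected $\aut_{0_\A}$-set; as $0 \downarrow F \cong \A$ is then connected with Galois object $0_\A$, the functor $F$ is analytic at $0$. Arrow-analyticity at an isomorphism $b$ follows because any $(b,a) \colon t_1 \rightarrow t_2$ between generic operations has $a$ invertible by Corollary~\ref{cor:2}, so that $(b,a)$ is an isomorphism in $\psh \downarrow F$ and hence arrow-generic by Lemma~\ref{lem:4}; Lemma~\ref{lem:6} then yields arrow-analyticity at $b$.

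Now assume the claim for $n$. The crucial observation is that a generating map $g \colon S \rightarrow \yoneda_c$ in $I_{n+1}$ has domain $S$ a finite $I_n$-complex by clause~(i) of Definition~\ref{def:22}, so $F$ is analytic at $S$ by the outer hypothesis and analytic at $\yoneda_c$ by pointwise analyticity; since $g \in I$, cellularity (Definition~\ref{def:28}) makes every square over $g$ with generic sides arrow-generic, so Lemma~\ref{lem:6} gives $F$ arrow-analytic at $g$. I would then establish analyticity at a finite $I_{n+1}$-complex $X = X_0 \rightarrow \dots \rightarrow X_m$ by inner induction on $m$: the base $X_0 = 0$ is covered above, and in the step the attachment of a cell is a pushout of some $g \in I_{n+1}$ along a map $S \rightarrow X_k$, whence Proposition~\ref{prop:17}(ii)---applied with analyticity at $S$, at $\yoneda_c$ and at $X_k$ (inner hypothesis), and arrow-analyticity at $g$---gives analyticity at $X_{k+1}$ together with arrow-analyticity at the attachment map $X_k \rightarrow X_{k+1}$. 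Thus every cell-attachment map is arrow-analytic, and an arbitrary relative finite $I_{n+1}$-complex, being by Definition~\ref{def:26} either an isomorphism or a finite composite of such attachment maps, is arrow-analytic by the isomorphism case together with the composition clause of Proposition~\ref{prop:17}(i). This closes the outer induction.

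The transformation case runs by exactly the same double induction, using the natural-transformation analogues provided by Proposition~\ref{prop:17} and Lemma~\ref{lem:6}; here the simplification is that cellularity is a hypothesis on the functors $F$ and $G$ only---already analytic and arrow-analytic at the relevant stages by the first part---so that arrow-analyticity of $\alpha$ at each $g \in I$ follows from Lemma~\ref{lem:6} purely from analyticity of $\alpha$ at the endpoints $S$ and $\yoneda_c$, with no extra condition on $\alpha$. I expect the main difficulty to be bookkeeping rather than conceptual: one must ensure the two inductions are correctly interleaved, so that analyticity at the domain $S$ of an $I_{n+1}$-cell is genuinely supplied by the level-$n$ hypothesis rather than circularly assumed, and that the orientation of the cell-attachment pushout matches the hypotheses of Proposition~\ref{prop:17}(ii).
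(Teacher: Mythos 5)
Your proof is correct and follows essentially the same route as the paper's: induction on the filtration level $n$, with the base case handled by initiality of $0$, arrow-analyticity at the generating maps of $I_{n+1}$ extracted from cellularity via Lemma~\ref{lem:6} (the domain being a finite $I_n$-complex supplied by the level-$n$ hypothesis), and Proposition~\ref{prop:17} applied repeatedly to propagate analyticity along cell attachments and composites, with the transformation case running in parallel. Your explicit preliminary reduction $\mathbf{Cx}(I) = \bigcup_n \mathbf{Cx}(I_n)$ and the inner induction on the number of cells are left implicit in the paper, but they are precisely what its phrase ``applying Proposition~\ref{prop:17} finitely many times'' encodes.
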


\begin{proof}
  Let $F$ be $I$-cellular. We prove by induction on $n$ that $F$ is
  analytic at every $B \in \mathbf{Cx}(I_n)$ and
  $b \in \mathbf{Cx}^\atwo(I_n)$. For the base case $n = 0$, every
  $B \in \mathbf{Cx}(I_0)$ is initial: thus $B \downarrow F \cong \A$,
  and so as $\A$ has an initial object, $F$ is analytic at $B$. Since
  any $b \in \mathbf{Cx}^\atwo(I_0)$ is invertible, $F$ is analytic at
  $b$ by Proposition~\ref{prop:17}.

  Now assume the result for $n$. Each map
  $b \colon S \rightarrow \yoneda_c$ in $I_{n+1}$ has domain in
  $\mathbf{Cx}(I_n)$, and so $F$ is analytic at $S$; thus
  $I$-cellularity implies that $F$ is arrow-analytic at every
  $b \in I_{n+1}$. Applying Proposition~\ref{prop:17} finitely many
  times shows that, if $f \colon X \rightarrow Y$ is a finite relative
  $I_{n+1}$-cell complex for which $F$ is analytic at $X$, then $F$ is
  also analytic at $Y$ and arrow-analytic at $f$. Taking $X$ to be
  initial and using the base case, shows that $F$ is analytic at every
  $B \in \mathbf{Cx}(I_{n+1})$; while taking $X$ to be an arbitrary
  finite $I_{n+1}$-complex shows that $F$ is arrow-analytic at every
  $b \in \mathbf{Cx}^\atwo(I_{n+1})$. The case of transformations is
  similar, and so omitted.
\end{proof}

The preceding proposition shows us that the pointwise analyticity of
functors $F \colon \psh \rightarrow \psh[\D]$ and
$G \colon \psh[\D] \rightarrow \psh[\E]$ is stable under composition
if there is a bordage $I$ on $\psh[\D]$ such that $F$ is
$I$-cellular and the input arities of $G$'s generic operations are  
$I$-cell complexes. However, $GF$ need not then satisfy any
cellularity conditions allowing it to compose further; the following
definition ensures this.

\begin{Defn}
  \label{def:25}
  Let $I$ and $J$ be bordages on $\psh$ and $\psh[\D]$. A pointwise
  analytic functor $F \colon \psh \rightarrow \psh[\D]$ is called
  \emph{$(I, J)$-cellular} if every square~\eqref{eq:28} with
  $b \in J$ and $t_1, t_2$ generic is arrow-generic and has $a$ a
  finite relative $I$-complex. We write $\cat{CELL}((\C, I), (\D, J))$
  for the category of $(I,J)$-cellular functors and pointwise analytic
  transformations.
\end{Defn}

\begin{Prop}
  \label{prop:13}
  Let $I$ and $J$ be bordages on $\psh$ and $\psh[\D]$. A functor
  $F \colon \psh \rightarrow \psh[\D]$ is $(I, J)$-cellular if and
  only if it is
  $(\mathbf{Cx}^\atwo(I), \mathbf{Cx}^\atwo(J))$-arrow-analytic. In
  particular, there is a $2$-category $\mathbf{CELL}$ of presheaf
  categories equipped with bordages, cellular analytic functors, and
  pointwise analytic transformations.
\end{Prop}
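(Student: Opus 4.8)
The plan is to prove Proposition~\ref{prop:13} by reducing the cellularity condition to arrow-analyticity via the machinery built up in the preceding two subsections, and then to assemble the $2$-category $\mathbf{CELL}$ from the composition result of Proposition~\ref{prop:19}. First I would establish the forward implication: assuming $F$ is $(I,J)$-cellular, I must show it is $(\mathbf{Cx}^\atwo(I), \mathbf{Cx}^\atwo(J))$-arrow-analytic. Here Proposition~\ref{prop:15} does almost all the work. Since $J$ is a bordage on $\psh[\D]$ and $F$ is in particular $J$-cellular (the $(I,J)$-cellularity condition includes the arrow-genericity requirement of $I$-cellularity), Proposition~\ref{prop:15} tells us that $F$ is analytic at every $B \in \mathbf{Cx}(J)$ and arrow-analytic at every $b \in \mathbf{Cx}^\atwo(J)$. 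It then remains to check the input-arity condition: that each arrow-generic $(t_1, t_2) \in b \downarrow F^\atwo$ has its input arity $a$ in $\mathbf{Cx}^\atwo(I)$. This I would verify by tracing through the inductive construction in the proof of Proposition~\ref{prop:15}, observing that each application of Proposition~\ref{prop:17}(\ref{item:arrow:analyticity:pushout}) builds the new input arity as a pushout of a previous one along the input arity $a$ of a square~\eqref{eq:28} with $b \in J$; by $(I,J)$-cellularity each such $a$ is a finite relative $I$-complex, and finite relative $I$-complexes are closed under the finitely many pushouts and compositions involved, yielding an element of $\mathbf{Cx}^\atwo(I)$.

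For the reverse implication, assuming $F$ is $(\mathbf{Cx}^\atwo(I), \mathbf{Cx}^\atwo(J))$-arrow-analytic, I would simply restrict attention to the generating maps. Each $b \in J$ lies in $\mathbf{Cx}^\atwo(J)$ by condition (i) of the bordage definition (its domain is a finite $I_n$-complex, hence a finite $J$-complex, and it is a pushout of a map in $J$, hence a finite relative $J$-complex). Thus $F$ is arrow-analytic at each $b \in J$, which by Lemma~\ref{lem:6} means precisely that every square~\eqref{eq:28} with $b \in J$ and $t_1, t_2$ generic is arrow-generic; and the arrow-analyticity input-arity constraint forces the arity $a$ to lie in $\mathbf{Cx}^\atwo(I)$, in particular to be a finite relative $I$-complex. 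This recovers exactly the $(I,J)$-cellularity condition of Definition~\ref{def:25}.

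Having established the equivalence, constructing the $2$-category $\mathbf{CELL}$ is straightforward. Objects are presheaf categories equipped with bordages $(\C, I)$; $1$-cells $(\C, I) \to (\D, J)$ are $(I,J)$-cellular functors; $2$-cells are pointwise analytic transformations. Composition of $1$-cells is inherited from $\cat{AN}$ via the equivalence just proved: if $F$ is $(I,J)$-cellular and $G$ is $(J,K)$-cellular, then they are $(\mathbf{Cx}^\atwo(I), \mathbf{Cx}^\atwo(J))$- and $(\mathbf{Cx}^\atwo(J), \mathbf{Cx}^\atwo(K))$-arrow-analytic respectively, so by Proposition~\ref{prop:19} their composite is $(\mathbf{Cx}^\atwo(I), \mathbf{Cx}^\atwo(K))$-arrow-analytic, hence $(I,K)$-cellular again by the equivalence. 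The identity $1$-cell on $(\C, I)$ is the identity functor, which is trivially cellular, and the $2$-categorical axioms follow from those of $\cat{AN}$.

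The main obstacle I anticipate is the bookkeeping in the forward direction: verifying that the input arities produced by the inductive analyticity-building of Proposition~\ref{prop:15} genuinely land in $\mathbf{Cx}^\atwo(I)$ rather than merely being relative $I$-complexes between arbitrary objects. One must confirm that in part~(\ref{item:arrow:analyticity:pushout}) of Proposition~\ref{prop:17} the arrow-generic map $(c,y) \colon u_1 \to u_2$ produced by the generic pushout has its arity $y$ expressible as a pushout of the $I$-complex arity coming from cellularity, and that this closure is genuinely finite. Since the bordage structure guarantees everything stays within finite complexes and the relevant pushouts are taken in the cocomplete $\A$, this should go through, but it is where the argument requires care rather than invocation of a black-box lemma. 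I expect the proof itself to be short, essentially citing Proposition~\ref{prop:15}, Lemma~\ref{lem:6}, and Proposition~\ref{prop:19}, with the arity-tracking verification being the only genuinely new content.
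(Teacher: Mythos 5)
Your overall route coincides with the paper's. For the ``only if'' direction the paper also invokes Proposition~\ref{prop:15} and then tracks arities through the proof of Proposition~\ref{prop:17}; it just packages your ``trace through the induction'' more tidily, by letting $K \subset \mathbf{Cx}^\atwo(J)$ be the class of arrows $b$ for which every square~\eqref{eq:24} with generic sides has arity a finite relative $I$-complex, noting $J \subset K$ by hypothesis and that $K$ is stable under the pushouts and compositions occurring in Proposition~\ref{prop:17}, whence $K = \mathbf{Cx}^\atwo(J)$. For the endpoint condition---that the arities really land in $\mathbf{Cx}^\atwo(I)$ rather than merely being relative $I$-complexes, which you correctly single out as the delicate point---your inductive bookkeeping would work (arities start at $0$ and stay finite $I$-complexes under the pushout steps), but the paper has a one-line trick worth knowing: for $B \in \mathbf{Cx}(J)$ and generic $t \colon B \rightarrow FA$, the square with top edge $0 \rightarrow B$ (an object of $\mathbf{Cx}^\atwo(J)$), left edge the generic $0 \rightarrow F0$ and bottom edge $F(0 \rightarrow A)$ has generic sides, so its arity $0 \rightarrow A$ is a finite relative $I$-complex, i.e.\ $A \in \mathbf{Cx}(I)$. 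The assembly of $\mathbf{CELL}$ from Proposition~\ref{prop:19} is exactly as in the paper.

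There is, however, one genuine omission, in your ``if'' direction. Definition~\ref{def:25} makes $(I,J)$-cellularity a property of \emph{pointwise analytic} functors, so before restricting attention to the generating maps $b \in J$ you must show that a $(\mathbf{Cx}^\atwo(I), \mathbf{Cx}^\atwo(J))$-arrow-analytic functor is analytic at every representable stage of $\psh[\D]$. This is precisely where condition (ii) of Definition~\ref{def:22} enters---the paper remarks just after that definition that (ii) ``will only play a role in Proposition~\ref{prop:13}'': each representable $\yoneda_d$ is a finite $J$-complex, so $0 \rightarrow \yoneda_d$ lies in $\mathbf{Cx}^\atwo(J)$, and arrow-analyticity at that map includes analyticity at $\yoneda_d$. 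Your argument uses only condition (i) (correctly, to see $J \subset \mathbf{Cx}^\atwo(J)$---though note your parenthetical also tacitly needs the codomain $\yoneda_d$ of $b \in J$ to be a finite $J$-complex, which follows since $b$ is itself trivially a finite relative $J$-complex out of a finite $J$-complex), and so never establishes pointwise analyticity at all. The fix is a single sentence, but as written your ``if'' direction does not deliver the statement, since the conclusion $(I,J)$-cellular presupposes pointwise analyticity.
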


\begin{proof}
  By condition (ii) for a bordage, each representable in $\psh[\D]$ is
  a finite $J$-complex, whence any $(\mathbf{Cx}^\atwo(I),
  \mathbf{Cx}^\atwo(J))$-arrow-analytic functor is pointwise
  analytic. By condition (i), we have $J \subset
  \mathbf{Cx}^\atwo(J)$, and so any $(\mathbf{Cx}^\atwo(I),
  \mathbf{Cx}^\atwo(J))$-arrow-analytic functor is
  $(I,J)$-cellular. This proves the ``if'' direction. For the ``only
  if'', if $F$ is $(I, J)$-cellular, then it is certainly
  $J$-cellular, and so arrow-analytic at each $b \in
  \mathbf{Cx}^\atwo(J)$; it remains to show that each
  square~\eqref{eq:24} with $b \in \mathbf{Cx}^\atwo(J)$ and $t_1,
  t_2$ generic has $a \in \mathbf{Cx}^\atwo(I)$. Let $K$ denote the
  collection of all arrows $b \in \mathbf{Cx}^\atwo(J)$ for which each
  square~\eqref{eq:24} with $t_1, t_2$ generic has $a$ a finite
  relative $I$-complex. By assumption $J \subset K$, and by examining
  the proof of Proposition~\ref{prop:17} we see that $K$ is stable
  under pushouts along maps into $J$-complexes and closed under
  composition, and so must comprise all of $\mathbf{Cx}^\atwo(J)$. So
  each~\eqref{eq:24} with $b \in \mathbf{Cx}^\atwo(J)$ and $t_1, t_2$
  generic has $a$ a relative finite $I$-complex, and it remains to
  show that the domain and codomain of $a$ are in fact finite
  $I$-complexes. But whenever $B \in \mathbf{Cx}(J)$ and $t \colon B
  \rightarrow FA$ is generic, the following square has top edge in
  $\mathbf{Cx}^\atwo(J)$ and generic sides:
  \begin{equation*}
    \cd{
      {0} \ar[r]^-{!} \ar[d]_{!} &
      {B} \ar[d]^{t} \\
      {F0} \ar[r]^-{F!} &
      {FA}
    }
  \end{equation*}
  and so has bottom map a finite relative $I$-complex; whence
  $A \in \mathbf{Cx}(I)$ as required. The existence of the $2$-category
  $\cat{CELL}$ now follows from this together with
  Proposition~\ref{prop:19}.
\end{proof}

\subsection{A combinatorial characterisation of arrow-genericity}
\label{sec:comb-char-cell}

We now know that cellular pointwise analytic functors are closed under
composition; what we have not yet seen is that there \emph{are} any
cellular functors. In this section, we give a result which will allow
us to verify that a given pointwise analytic functor is indeed
cellular. The key concept required is that of a \emph{minimal
  extension}.

\begin{Defn}
  \label{def:15}
  Let $f \colon A \rightarrow B$ and let $\sigma \in \aut_A$. An
  \emph{extension} of $\sigma$ along $f$ is some $\tau \in \aut_B$ for
  which $\tau f = f \sigma$. An extension is called \emph{minimal} if
  whenever $g \colon B \rightarrow C$ satisfies $gf\sigma = gf$, also
  $g\tau = g$; equivalently, if
  $f, f\sigma \colon A \rightrightarrows B$ admit a coequaliser $q$,
  then $\tau$ is minimal just when $q \tau = q$.
\end{Defn}

The name is motivated by the case of $f \colon A \rightarrowtail B$ an
injection in $\cat{Set}$; for such an $f$, any $\sigma \in \aut_A$ has
a unique minimal extension $\tau \in \aut_B$ with
\begin{equation}\label{eq:17}
  \tau(x) =
  \begin{cases}
    \sigma(x) & \text{ if $x \in \im(f)$;}\\
    x & \text{otherwise.}
  \end{cases}
\end{equation}
So $\tau$ is minimal among extensions of $\sigma$ in that it permutes
the smallest possible part of $B$. This intuition works for
monomorphisms in any presheaf category:

\begin{Prop}
  \label{prop:8}
  If $f \colon A \rightarrowtail B$ is monic in $\psh$ then each
  $\sigma \in \aut_A$ admits at most one minimal extension along $f$;
  this extension exists just when
  \begin{equation}\label{eq:13}
    x \notin f(A) \text{ and } xh \in f(A) \qquad \Longrightarrow \qquad xh \in f(A^{\setminus \sigma})
  \end{equation}
  for all $x \in Bc$ and $h \colon d \rightarrow c$, and is then given
  componentwise as in~\eqref{eq:17}.
\end{Prop}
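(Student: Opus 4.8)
The plan is to exploit two standard facts about the presheaf category $\psh$: monomorphisms are pointwise injective, so $f$ exhibits $A$ as a subpresheaf of $B$ with image $f(A)$; and coequalisers are computed pointwise. First I would pin down \emph{any} minimal extension by a coequaliser argument. Since $\psh$ is cocomplete, the coequaliser $q$ of $f, f\sigma \colon A \rightrightarrows B$ exists, so by Definition~\ref{def:15} an extension $\tau$ is minimal precisely when $q\tau = q$. Computing $q$ componentwise, $q_c$ merely identifies $f(a)$ with $f(\sigma a)$ for each $a \in Ac$, so the generated equivalence relation on $Bc$ relates elements only within $f(Ac)$; hence every $x \notin f(Ac)$ has a singleton class. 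Thus $q\tau = q$ forces $\tau(x) = x$ for all $x \notin f(A)$, while the extension identity $\tau f = f\sigma$ forces $\tau(f(a)) = f(\sigma a)$. This already proves there is at most one minimal extension and that it must be given componentwise by~\eqref{eq:17}.

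It then remains to decide when the componentwise formula~\eqref{eq:17} actually defines a natural automorphism of $B$, and the crux is naturality. So I would take $\tau$ defined by~\eqref{eq:17} (well-defined since $f$ is monic) and test commutation with each restriction $B(h)$ for $h \colon d \to c$. For $x = f(a) \in f(Ac)$ both sides reduce, using naturality of $f$ and of $\sigma$, to $f(\sigma(a)h)$, so naturality is automatic here. For $x \notin f(Ac)$ we have $(\tau x)h = xh$, and the only possible obstruction is the subcase $xh \in f(Ad)$, say $xh = f(a')$: here $\tau(xh) = f(\sigma a')$, which equals $xh = f(a')$ if and only if $\sigma(a') = a'$, that is, if and only if $xh \in f(A^{\setminus \sigma})$. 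This is exactly condition~\eqref{eq:13}, so $\tau$ is natural if and only if~\eqref{eq:13} holds. Read in both directions, this single computation simultaneously yields the ``only if'' (a minimal extension, being natural, must satisfy~\eqref{eq:13}) and supplies naturality of the candidate whenever~\eqref{eq:13} is assumed.

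Finally, granting~\eqref{eq:13}, I would record the routine verifications that this natural $\tau$ is a minimal extension. It satisfies $\tau f = f\sigma$ by construction; it is invertible, its inverse being the map built from $\sigma^{-1}$ by the same formula, which is natural because $A^{\setminus \sigma} = A^{\setminus \sigma^{-1}}$ makes~\eqref{eq:13} hold verbatim for $\sigma^{-1}$, and a componentwise check gives $\tau\tau^{-1} = \tau^{-1}\tau = \id$; and it is minimal, since $qf = qf\sigma$ gives $q(\tau x) = q(x)$ on the image and trivially off it, whence $q\tau = q$. The main obstacle is none of these bookkeeping steps but the naturality analysis of the preceding paragraph, where the precise shape of~\eqref{eq:13} must be extracted from the failure of the off-image branch of~\eqref{eq:17} to commute with restriction; once the correct condition is isolated, everything else is immediate.
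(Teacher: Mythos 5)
Your proof is correct and takes essentially the same route as the paper's: both reduce minimality to the pointwise coequaliser condition $q\tau = q$ (which, since elements outside $f(A)$ have singleton classes, forces any minimal extension to be given componentwise by~\eqref{eq:17}), and then identify~\eqref{eq:13} as precisely the condition for that componentwise formula to be natural in $c$. The paper's argument is simply a compressed version of yours, leaving the naturality case analysis and the closing verifications (extension, invertibility, minimality) implicit.
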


\begin{proof}
  If $q$ coequalises $f$ and $f \sigma$, then $\tau \in \aut_B$ is a
  minimal extension of $\sigma$ just when $q \tau = \tau$, just when
  $q_c \tau_c = \tau_c$ for all $c$, just when each $\tau_c$ is a
  minimal extension of $\sigma_c$ (as colimits in $\psh$ are
  pointwise). So any minimal extension of $\sigma$ must be given
  componentwise by~\eqref{eq:17}, with \eqref{eq:13} being just what
  is needed to ensure naturality of this definition in $c$.
\end{proof}

We now use the notion of minimal extension to give a combinatorial
characterisation of arrow-generic morphisms. As in Remark~\ref{rk:1},
we use $\aut_{t_1}$ and $\aut_{t_2}$ to denote the automorphism groups
of $t_1 \in B_1 \downarrow F$ and $t_2 \in B_2 \downarrow F$.

\begin{Lemma}
  \label{lem:16}
  Let $\A$ have coequalisers and $F \colon \A \rightarrow \B$. A map
  $(b,a) \colon t_1 \rightarrow t_2$ between generic operations in
  $\B \downarrow F$ is arrow-generic if and only if each
  $\sigma \in \aut_{t_1}$ admits a minimal extension
  $\tau \in \aut_{t_2}$ along $a \colon A_1 \rightarrow A_2$.
\end{Lemma}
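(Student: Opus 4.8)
The plan is to prove both implications through the diagonal-filling characterisation of arrow-genericity supplied by Lemma~\ref{lem:4}, combined with the genericity of $t_1$ and $t_2$ and with Corollary~\ref{cor:2}, which guarantees that any endomorphism of a generic operation is an automorphism. Throughout I write $\aut_{t_1} \leqslant \aut_{A_1}$ and $\aut_{t_2} \leqslant \aut_{A_2}$ for the automorphism groups of the two operations.

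For the forward direction, I would fix $\sigma \in \aut_{t_1}$ and use the hypothesis that $\A$ has coequalisers to form $q \colon A_2 \to Q$, the coequaliser of $a, a\sigma \colon A_1 \rightrightarrows A_2$. I would then instantiate the filling diagram of Lemma~\ref{lem:4} by taking $Y = A_2$, $Z = Q$, $h = k = q$, $j = a\sigma$ and $u = t_2$. Its hypotheses hold at once: $F(a\sigma).t_1 = Fa.t_1 = t_2 b$ since $\sigma$ fixes $t_1$ and $(b,a)$ is a map of operations, and $hj = q(a\sigma) = qa = ka$ by the coequaliser property. Arrow-genericity then produces $\ell \colon A_2 \to A_2$ with $F\ell.t_2 = t_2$, $q\ell = q$ and $\ell a = a\sigma$. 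Because $\ell$ fixes the generic $t_2$, Corollary~\ref{cor:2} makes it an automorphism, so $\ell \in \aut_{t_2}$; the equation $\ell a = a\sigma$ says $\ell$ extends $\sigma$, and $q\ell = q$ is exactly minimality in the sense of Definition~\ref{def:15}. Thus $\tau = \ell$ is the desired minimal extension.

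For the converse I would verify the filling condition of Lemma~\ref{lem:4} directly. Given a solid diagram as in~\eqref{eq:15}, I would first apply genericity of $t_2$ through Lemma~\ref{lem:12} to the commuting square with sides $u$, $t_2$, $Fh$, $Fk$, obtaining $\ell_0 \colon A_2 \to Y$ with $F\ell_0.t_2 = u$ and $h\ell_0 = k$; this gives two of the three required equations, but with $\ell_0 a$ possibly differing from $j$. Next I would note that $j$ and $\ell_0 a$ are parallel maps from $t_1$ to the operation $ub \colon B_1 \to FY$ in $B_1 \downarrow F$, since $Fj.t_1 = ub$ by hypothesis while $F(\ell_0 a).t_1 = F\ell_0.t_2 b = ub$. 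Transitivity of the Galois object $t_1$ then supplies $\sigma \in \aut_{t_1}$ with $j = \ell_0(a\sigma)$. Feeding this $\sigma$ into the hypothesis yields a minimal extension $\tau \in \aut_{t_2}$ of $\sigma$ along $a$, and I would set $\ell = \ell_0 \tau$.

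It then remains to check the three equations for $\ell$, and this is where I expect the only subtlety to lie. Two are routine: $F\ell.t_2 = F\ell_0.F\tau.t_2 = F\ell_0.t_2 = u$ because $\tau$ fixes $t_2$, and $\ell a = \ell_0 \tau a = \ell_0(a\sigma) = j$ because $\tau$ extends $\sigma$. The delicate equation is $h\ell = k$: since $h\ell = h\ell_0 \tau = k\tau$, I must show $k\tau = k$, and this is precisely what minimality delivers. Indeed $k(a\sigma) = h\ell_0(a\sigma) = hj = ka$, so $k$ coequalises $a$ and $a\sigma$; minimality of $\tau$ (Definition~\ref{def:15}) then forces $k\tau = k$, completing the fill and hence establishing arrow-genericity. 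The crux of the whole argument is this final observation that $k$ equalises $a$ and $a\sigma$ and is therefore fixed by $\tau$; everything else is bookkeeping with the genericity and arrow-genericity filling properties.
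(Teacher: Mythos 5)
Your proof is correct and follows essentially the same route as the paper's: both directions instantiate the filling characterisation of Lemma~\ref{lem:4}, with the coequaliser of $a$ and $a\sigma$ providing the filler data in one direction and the factorisation $\ell = \ell_0\tau$ (using genericity of $t_2$, transitivity of $t_1$, and minimality to get $k\tau = k$) in the other. The only difference is the order in which you present the two implications; the substance is identical to the paper's argument.
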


\begin{proof}
  We use the alternate characterisation of arrow-genericity of
  Lemma~\ref{lem:4}. For the ``if'' direction, suppose given a diagram
  as in the solid part of~\eqref{eq:15}. As $t_2$ is generic, there
  exists a map $g \colon A_2 \rightarrow Y$ with $hg = k$ and
  $Fg.t_2 = u$; now both $ga$ and $j$ are maps $t_1 \rightarrow ub$ in
  $B_1 \downarrow F$ and so by genericity of $t_1$ there is some
  $\sigma \in \aut_{t_1}$ with $j = ga\sigma$. By the assumption on
  $a$, there is some minimal $\tau \in \aut_{t_2}$ with
  $\tau a = a \sigma$; letting $\ell = g \tau$, we have
  $\ell a = g\tau a = ga \sigma = j$ and
  $F\ell.t_2 = Fg.F\tau.t_2 = Fg.t_2 = u$. Now
  $ka\sigma = hga\sigma = hj = ka$, whence $k \tau = k$ by minimality
  of $\tau$, and so $h\ell = hg\tau = k\tau = k$ as required for
  $\ell$ to be a filler.

  For the ``only if'' direction, let
  $(b,a) \colon t_1 \rightarrow t_2$ be arrow-generic and let
  $\sigma \in \aut_{t_1}$; we must find a minimal extension
  $\tau \in \aut_{t_2}$ along $a$. Let $q \colon A_2 \rightarrow Q$ be
  a coequaliser of $a$ and $a\sigma$, and consider the diagram
  \begin{equation*}
    \cd[@R-1em]{
      {B_1} \ar[r]^-{b} \ar[dd]_{t_1} &
      {B_2} \ar[dd]_(0.27){t_2} \ar[r]^-{t_2} &
      {FA_2} \ar[dd]^-{Fq}\\ & {} \\
      {FA_1} \ar[r]_-{Fa} \ar'[ur]^-{F(a\sigma)}[uurr] &
      {FA_2} \ar[r]_-{Fq} \ar@{-->}[uur]_-{F\tau}& {FQ}\rlap{ .}
    }
  \end{equation*}
  The solid part clearly commutes, and so we induce a map $\tau$ as
  displayed making everything commute. Since $t_2$ is generic, $\tau$
  is invertible by Corollary~\ref{cor:2}; since $F\tau . t_2 = t_2$, we
  have $\tau \in \aut_{t_2}$. Moreover, $\tau a = a \sigma$, so $\tau$
  is an extension of $\sigma$; while $q \tau = q$ so that $\tau$ is
  minimal.
\end{proof}

This result allows us to check in a concrete fashion the
$(I,J)$-cellularity of a pointwise analytic
$F \colon \psh \rightarrow \psh[\D]$. For this, it suffices to check
$(I, J_n)$-cellularity for each $n$. This is trivial for $n=0$ since
$J_0 = \emptyset$; so suppose now that we have verified it up to $n$.
To check $(I, J_{n+1})$-cellularity, we must show that, for any
$b \colon S \rightarrow \yoneda_d$ in $J_{n+1} \setminus J_n$, each
square~\eqref{eq:24} with generic sides is arrow-generic with $a$ a
finite relative $I$-complex. We can do this using the previous
result so long as we can compute all such squares~\eqref{eq:24}.

Now, as $F$ is $(I, J_n)$-cellular, it is by Proposition~\ref{prop:15}
analytic at the finite $J_n$-complex $S$. Thus, each
square~\eqref{eq:24} is obtained from a generic
$t_2 \colon \yoneda_d \rightarrow FA_2$---which we can classify by
pointwise analyticity of $F$---upon  forming a generic cover
$a \colon t_1 \rightarrow t_2 b$ in $S \downarrow F$, which can be calculated
explicitly by applying the algorithm of
Proposition~\ref{prop:17} to some presentation of $S$ as a finite
$J_n$-complex.

\begin{Ex}
  \label{ex:1}
  Let $J$ be the bordage of Examples~\ref{ex:2}(ii) on $\psh[\atwo]$,
  let $I$ be any bordage on $\psh$, and let
  $F \colon \psh \rightarrow \psh[\atwo]$ be pointwise analytic.
  Recalling that $J_1 = \{0 \rightarrow \yoneda_0\}$, the condition
  for $F$ to be $(I,J_1)$-cellular is that, for each generic
  $t \colon \yoneda_0 \rightarrow FA$, the induced square
  \begin{equation*}
    \cd{
      {0} \ar[r]^-{!} \ar[d]_{!} &
      {\yoneda_0} \ar[d]^{t} \\
      {F0} \ar[r]^-{Fa} &
      {FA}
    }
  \end{equation*}
  with generic sides is arrow-generic with $a$ a finite relative
  $I$-complex. Arrow-genericity is trivial in this case, and so the
  condition is simply that $A$ is a finite $I$-complex. Now, since
  $J_2 \setminus J_1 = \{\yoneda_f \colon \yoneda_0 \rightarrow
  \yoneda_1\}$, we see that $F$ will be $(I, J_2) = (I,J)$-cellular
  when, for each generic $t_2 \colon \yoneda_0 \rightarrow FB$, the
  induced square
  \begin{equation*}
    \cd{
      {\yoneda_0} \ar[r]^-{\yoneda_f} \ar[d]_{t_1} &
      {\yoneda_1} \ar[d]^{t_2} \\
      {FA} \ar[r]^-{Fa} &
      {FB}
    }
  \end{equation*}
  with generic sides is arrow-generic and has $a$ a finite relative
  $J$-complex. Let us see what this says relative to an explicit
  presentation of $F$ as a pointwise coproduct of near-representables:
  \begin{equation*}
    F  \qquad = \qquad \textstyle\sum_{u \in U} \psh(B_u, \thg)_{/ G_u}
    \xrightarrow{\alpha} \sum_{v \in V} \psh(A_v, \thg) _{/ H_v}\rlap{ .}
  \end{equation*}
  Here, the map $\alpha$ is determined by a function
  $h \colon U \rightarrow V$ together with natural transformations
  $\alpha_u \colon \psh(B_u, \thg) _{/G_u} \rightarrow \psh(A_{hu},
  \thg)_{/H_{hu}}$---which, as in the proof of
  Proposition~\ref{prop:3}, correspond to maps
  $[a_u] \colon (A_{hu}, H_{hu}) \rightarrow (B_u, G_u)$ in $\O(\psh)$
  (note that this is really just an explicit description of the
  exponent of $F$). In these terms, the necessary conditions for $F$ to
  be $(I, J)$-cellular are that: each $A_v$ should be a finite
  $I$-complex; each $a_u \colon A_{hu} \rightarrow B_u$ should be a
  finite relative $I$-complex; and each $\sigma \in H_{hu}$ should have
  a minimal extension $\tau \in G_u$ along $a_u$.
\end{Ex}

\subsection{Universal cellular functors}
\label{sec:univ-cell-funct}

We have now achieved what we set out to do in this section, by
exhibiting a class of pointwise analytic functors which is closed
under composition. Our broader objective, recall, is to construct a
universal shapely monad as a terminal object among a suitable class of
composable endofunctors; and so it remains to check the existence of a
terminal object among cellular functors. Unfortunately, we have:

\begin{Prop}\label{prop:no-terminal-object}
  $\cat{CELL}((\C, I), (\D, J))$ need not admit a terminal object.
\end{Prop}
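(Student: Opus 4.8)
The plan is to produce an explicit choice of small categories $\C,\D$ and bordages $I,J$ for which $\cat{CELL}((\C,I),(\D,J))$ fails to have even a weakly terminal object. The guiding principle is to contrast the situation with Proposition~\ref{prop:9}, whose terminal object was built, via Lemma~\ref{lem:14}, by assigning to each generic operation the \emph{largest} compatible family of automorphism groups; that construction worked precisely because the relevant ``suitability'' condition is closed under joins. I would first record that, since a pointwise analytic transformation is the only kind of morphism in sight, $\cat{CELL}((\C,I),(\D,J))$ is a full subcategory of $\anpt^\omega(\psh,\psh[\D])$: a cellular functor has each generic arity $A$ a finite $I$-complex (take $a\colon 0\to A$ in Definition~\ref{def:25}, as in the proof of Proposition~\ref{prop:13}), hence is finitary. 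This ambient category has a terminal object $U$ by Proposition~\ref{prop:9}. Via the equivalence of Proposition~\ref{prop:4}, a morphism $F\to T$ amounts to a map of spectra together with a \emph{vertical} comparison of exponents, so that a map exists only when the automorphism group attached to each operation of $F$ is \emph{contained} in that of the matching operation of $T$. Consequently a terminal object of $\cat{CELL}$ would have to be a cellular functor whose operation-groups \emph{dominate} those of every other cellular functor.

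Next I would translate cellularity into the language of groups. By Proposition~\ref{prop:13} a functor is $(I,J)$-cellular exactly when it is arrow-analytic on the relevant cell complexes, and by Lemma~\ref{lem:16} arrow-genericity of $(b,a)\colon t_1\to t_2$ is equivalent to the existence, for each $\sigma\in\aut_{t_1}$, of a \emph{minimal extension} $\tau\in\aut_{t_2}$ along the arity map $a$. Thus cellularity couples the groups at the domain and codomain stages of each bordage map. The crucial observation is that, in contrast to the suitability condition of Lemma~\ref{lem:14}, admitting minimal extensions is \emph{not} closed under joins: along a monomorphism minimal extensions are unique and computed componentwise by~\eqref{eq:17} (Proposition~\ref{prop:8}), so they compose and the minimally-extendable $\sigma$ form a subgroup; but along a non-monic arity map two automorphisms may each admit a minimal extension while their product admits none, since the coequalisers controlling minimality (Definition~\ref{def:15}) behave incoherently under composition.

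The counterexample therefore takes a bordage $I$ (and $J$) containing a non-monic generator---for instance a fold map $\yoneda_c+\yoneda_c\to\yoneda_c$, whose domain is a finite $I_1$-complex and whose codomain is representable, so that the axioms of Definition~\ref{def:22} hold, and whose pushouts furnish non-monic relative $I$-complexes. In this setting I would exhibit a single generic codomain operation $t_2$, together with two subgroups $G,G'\le\aut_{t_1}$ of the induced cover, such that each of $(t_1,G)$ and $(t_1,G')$ extends minimally into $\aut_{t_2}$ along $a$ but their join $G\vee G'$ does not. These data assemble into genuinely cellular functors $F_G$ and $F_{G'}$; any cellular $T$ admitting maps from both would, by the dominance property above, need an operation whose group contains $G\vee G'$, and this fails the minimal-extension test, so $T$ could not be cellular. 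Hence no cellular functor dominates both $F_G$ and $F_{G'}$, and $\cat{CELL}((\C,I),(\D,J))$ has no weakly terminal---hence no terminal---object.

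I expect the main obstacle to be exactly the explicit construction of the last paragraph: one must choose the non-monic generator and the arity map $a$ so that the coequaliser computation underlying minimality genuinely separates $G$, $G'$ and $G\vee G'$, and then verify that the resulting $F_G,F_{G'}$ are honestly pointwise analytic and $(I,J)$-cellular \emph{everywhere}, not merely at the offending operation. Checking that the absence of a common cellular upper bound cannot be circumvented by passing to a coarser spectrum---equivalently, that the failure of closure under joins is robust---is where the real work lies, and it is the precise analogue of why the join-closure argument of Lemma~\ref{lem:14} could be run in the analytic case but breaks here.
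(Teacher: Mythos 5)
Your translation of the problem is accurate up to a point---the dominance reading of maps via Proposition~\ref{prop:4}, and the reduction of cellularity to minimal extensions via Proposition~\ref{prop:13} and Lemma~\ref{lem:16}, are both correct---but the final step of your plan fails, and it fails for a structural reason. You want two cellular functors $F_G$, $F_{G'}$ whose groups $G, G' \leqslant \aut_{t_1}$ each extend minimally while $G \vee G'$ does not, and you then assert that any cellular $T$ receiving maps from both ``would need an operation whose group contains $G \vee G'$''. Nothing forces this: the spectrum maps underlying the two transformations $F_G \rightarrow T$ and $F_{G'} \rightarrow T$ may send the two distinguished operations to \emph{different} elements of $T$'s spectrum, so $T$ only needs two separate operations with groups containing (conjugates of) $G$ and $G'$ respectively, and the join never has to be realised in a single operation. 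Concretely, the pointwise coproduct $F_G + F_{G'}$ is again pointwise analytic and $(I,J)$-cellular---generic operations at a representable stage land in a single summand, and the lifting problems of Lemma~\ref{lem:4} split along the summands---and it receives maps from both. Indeed, since finitary cellular shapes form essentially a set, a large coproduct over all of them should give a \emph{weakly} terminal cellular functor, exactly as in Proposition~\ref{prop:2}, where weak terminality survives and only strict terminality fails. So refuting weak terminality is the wrong target; what a counterexample must attack is \emph{uniqueness} of maps into a candidate terminal object.

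This also exposes your second misdiagnosis: you correctly observe that along a monomorphism the minimally-extendable automorphisms form a subgroup (so join-closure is not the obstruction there) and conclude that the bordage must contain a non-monic generator such as a fold map. But the paper's counterexample uses the purely monic bordage $\{0 \rightarrow \yoneda_0,\ \yoneda_f \colon \yoneda_0 \rightarrow \yoneda_1\}$ on $\psh[\atwo]$, with both $\C$ and $\D$ equal to $\atwo$. The relevant failure along a mono is not of join-closure but of \emph{existence} of a minimal extension for a single automorphism: by Proposition~\ref{prop:8}, condition~\eqref{eq:13} fails for the switch of $\yoneda_0 + \yoneda_0$ along $\yoneda_f + \yoneda_f$, because the switch has no fixed points and hence empty invariants. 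The paper then takes the single cellular functor $F$ sending $f$ to $f \times f$, notes that transformations $F \Rightarrow G$ correspond to squares~\eqref{eq:29} with generic sides, and argues that for any cellular $G$ such a square must be arrow-generic, forcing $\aut_{t_1} = \aut_{t_2} = 1$; composing with $G\sigma_{12}$ then yields a second, distinct square, so no cellular $G$ can receive a \emph{unique} map from $F$. In short: cellularity trivialises the automorphism groups at the relevant operations, so the symmetry that analytic functors would otherwise absorb (the mechanism of Lemma~\ref{lem:14}) survives as a genuine ambiguity of morphisms---that, not failure of joins along non-monic maps, is what destroys the terminal object, and your construction as planned cannot reach it.
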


\begin{proof}
  Let $\C = \D = \atwo$ and let both $I$ and $J$ be the bordage
  $\{0 \rightarrow \yoneda_0, \yoneda_0 \rightarrow \yoneda_1\}$ of
  Examples~\ref{ex:2}(ii). Consider the endofunctor
  $F \colon \psh[\atwo] \rightarrow \psh[\atwo]$ sending
  $f \colon A \rightarrow B$ to
  $f \times f \colon A \times A \rightarrow B \times B$; this is
  pointwise analytic, with explicit presentation
  \begin{equation*}
    F \qquad = \qquad \psh[\atwo](\yoneda_1 + \yoneda_1, \thg)_{/1}
    \xrightarrow{\psh[\atwo](\yoneda_f +\yoneda_f , \thg)} \psh[\atwo](\yoneda_0 + \yoneda_0, \thg)_{/1}\rlap{ .}
  \end{equation*}
  To see that $F$ is $(I, I)$-cellular, we observe that
  $\yoneda_0 + \yoneda_0$ is a finite $I$-complex (= finitely
  presentable presheaf), that $\yoneda_f + \yoneda_f$ is a relative
  finite $I$-complex (= monomorphism with cofinite image), and that the
  minimal extension condition is trivially satisfied: this verifies the
  three conditions of Example~\ref{ex:1}, as required.

  In the terminology of Definition~\ref{def:11}, $F$ has spectrum
  $S_F = 1 \in \psh[\atwo]$---so that $\el S_F \cong \atwo$---and
  exponent $E_F \colon \atwo \rightarrow \O(\psh[\atwo])$ picking out
  the arrow
  \begin{equation*}
    [\yoneda_f + \yoneda_f] \colon (\yoneda_0 + \yoneda_0, 1)
    \rightarrow (\yoneda_1 + \yoneda_1, 1)
  \end{equation*}
  of $\O(\psh[\atwo])$. It follows using Proposition~\ref{prop:4} that
  for each $G \in \anpt(\psh[\atwo], \psh[\atwo])$, pointwise analytic
  transformations $\alpha \colon F \Rightarrow G$ correspond
  bijectively with squares
  \begin{equation}\label{eq:29}
    \cd[@C+1.5em]{
      {\yoneda_0} \ar[r]^-{\yoneda_f} \ar[d]_{t_1} &
      {\yoneda_1} \ar[d]^{t_2} \\
      {G(\yoneda_0 + \yoneda_0)} \ar[r]^-{G(\yoneda_f + \yoneda_f)} &
      {G(\yoneda_1 + \yoneda_1)}
    }
  \end{equation}
  in $\psh[\atwo]$ with generic sides. We claim that whenever $G$ is
  $(I, I)$-cellular, there are always two distinct such squares, so
  that $G$ cannot be terminal in $\cat{CELL}((\atwo, I), (\atwo, I))$.
  Since $G$ was arbitrary, this concludes the proof. Now, if $G$ is
  $(I,I)$-cellular, then any square as displayed above must be
  arrow-generic; by Lemma~\ref{lem:16}, this means that each
  $\sigma \in \aut_{t_1}$ admits a minimal extension
  $\tau \in \aut_{t_2}$ along $\yoneda_f + \yoneda_f$. Using
  Proposition~\ref{prop:8}, we see that the switch isomorphism
  $\sigma_{12} \colon \yoneda_0 + \yoneda_0 \rightarrow \yoneda_0 +
  \yoneda_0$ has no minimal extension along $\yoneda_f + \yoneda_f$,
  and so we must have $\aut_{t_1} = 1$. Since
  $ [\yoneda_f + \yoneda_f] \colon (\yoneda_0 + \yoneda_0, \aut_{t_1})
  \rightarrow (\yoneda_1 + \yoneda_1, \aut_{t_2})$ in
  $\O(\psh[\atwo])$, it follows that $\aut_{t_2} = 1$ too; whence the
  square
  \begin{equation*}
    \cd[@C+1.5em]{
      {\yoneda_0} \ar[r]^-{\yoneda_f} \ar[d]_{G\sigma_{12}.t_1} &
      {\yoneda_1} \ar[d]^{G\sigma_{12}.t_2} \\
      {G(\yoneda_0 + \yoneda_0)} \ar[r]^-{G(\yoneda_f + \yoneda_f)} &
      {G(\yoneda_1 + \yoneda_1)}
    }
  \end{equation*}
  is a second, \emph{distinct} instance of~\eqref{eq:29}. This proves
  the claim.
\end{proof}

\section{Shapeliness in context}
\label{sec:free-shapely-monads}

We have now failed for a third time to exhibit a notion of universal
shapely monad: the imposition of cellularity, which fixed the failure
of general analytic functors to compose, did so at the cost of
destroying the terminal object existing among them. At this point, we
prefer to leave for future work the problem of finding a general
notion of shapeliness, and concentrate instead on giving a solution
for the particular motivating examples from
Section~\ref{sec:motivating-examples}.

For these examples, the notion of cellularity turns out to be almost
sufficient: some simple \emph{ad hoc} additional conditions will be
enough to obtain the desired universal shapely monad $\mathsf U$. With
this in place, we can define a general shapely monad to be any
pointwise analytic submonad of $\mathsf U$, and then provide an
inductive construction of the free shapely monad on a generating set
of operations. Finally, we apply this construction to our motivating
examples, and thereby realise the main goal of this paper by
exhibiting the monads for polycategories, properads and \textsc{prop}s
as free shapely monads on the basic wiring operations.

\subsection{Universal shapely monads on (symmetric) polygraphs}
\label{sec:two-level-categories}

Our first goal is to construct universal shapely monads on the
presheaf categories of polygraphs and symmetric polygraphs from
Section~\ref{sec:motivating-examples} which are suitable for analysing
polycategories, properads and \textsc{prop}s. We will find these
universal monads among the class of cellular analytic endofunctors of
the previous section; but in order to neutralise the counterargument
of Proposition~\ref{prop:no-terminal-object}, we must further restrict
the functors under consideration. We build on the observation that the
monads in our examples act trivially on the set of objects of a
(symmetric) polygraph.

\begin{Defn}\label{def:37}
  An endofunctor $F$ of either $\psh[\PP]$ or $\psh[\PS]$ is called
  \emph{framed} if $FX(\star) \cong A \times X(\star)$ for some fixed
  set $A$.
\end{Defn}

Of course, ``acting trivially'' on objects is only the special case
$A = 1$ of this definition. The reason for allowing the more general
definition is to ensure that any pointwise analytic functor admitting
a pointwise analytic transformation to a framed one is itself framed;
see Remark~\ref{rk:3} below.

Since framed functors are clearly stable under composition, we have
for any bordage $I$ on $\psh[\PP]$ or $\psh[\PS]$ a monoidal category
of framed $(I,I)$-cellular endo\-functors. In both the symmetric and
non-symmetric cases, we are free to choose $I$ in any way which
ensures that the monads for polycategories, properads and
\textsc{prop}s are all in fact $(I,I)$-cellular; we now describe
suitable such choices, and check by hand that corresponding monoidal
categories of framed $(I,I)$-cellular endofunctors have a terminal
object---so giving the desired universal shapely monads. Let us begin
with the case of $\psh[\PP]$. The bordage $I_\PP$ we take has
$(I_\PP)_1 = \{0 \rightarrow \yoneda_\star\}$ and
$I_\PP \setminus (I_\PP)_1$ the set of the following maps for all
$n,m \in \mathbb{N}$:
\begin{equation*}
  \textstyle\spn{\yoneda_{\sigma_1}, \dots, \yoneda_{\sigma_n}} \colon 
  \yoneda_\star + \dots + \yoneda_\star \rightarrow \yoneda_{(n,m)}
  \text{ and } 
  \spn{\yoneda_{\tau_1}, \dots, \yoneda_{\tau_m}} \colon 
  \yoneda_\star + \dots + \yoneda_\star \rightarrow \yoneda_{(n,m)} \text{ .}
\end{equation*}

We now show that there is a universal framed $(I_\PP, I_\PP)$-cellular
endofunctor. We make use of the sets $\L(n,m)$ of $(n,m)$-labelled
finite polygraphs from Definition~\ref{def:43} above. We will call
$X \in \L(n,m)$ \emph{well-labelled} if the maps
\begin{equation*}
  \spn{\ell^X_1, \dots, \ell^X_n} \colon 
  \yoneda_\star + \dots + \yoneda_\star \rightarrow \abs{X} \quad
  \text{and} \quad 
  \spn{r^X_1, \dots, r^X_m} \colon 
  \yoneda_\star + \dots + \yoneda_\star \rightarrow \abs{X}
\end{equation*}
are both relative finite $I_\PP$-complexes; and, extending the
notation of Proposition~\ref{prop:22}, we write $\aut_X$ for the set
of label-preserving automorphisms of $\abs X$.

\begin{Prop}
  \label{prop:23}
  The monoidal category of framed $(I_\PP, I_\PP)$-cellular
  endofunctors of $\psh[\PP]$ has a terminal object $U_\PP$, which
  thus underlies a monad on $\psh[\PP]$, the \emph{universal shapely
    monad}. The spectrum $S \in \psh[\PP]$ of $U_\PP $ may be taken to
  be:
  \begin{equation*}
    S(\star) = \{u\} \quad \text{and} \quad S(n,m) = \{X \in \L(n,m) : \text{$X$ is well-labelled}\}\rlap{ ,}
  \end{equation*}
  and the exponent $E \colon \el S \rightarrow \O(\psh[\PP])$ to have
  $E(u) = \yoneda_\star$ and
  \begin{equation*}
    E(X) = (\abs X,
    \aut_X)\text,\qquad E(\sigma_i \colon u \rightarrow X) = [\ell^X_i]\text,\qquad 
    E(\tau_j \colon u \rightarrow X) = [r^X_j]\rlap{ .}
  \end{equation*}
\end{Prop}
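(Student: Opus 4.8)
The plan is to pass through the equivalence of Proposition~\ref{prop:4}, which presents $\anpt(\psh[\PP], \psh[\PP])$ as the category $\el_\PP /\!\!/_{\!v}\,\, \O(\psh[\PP])$ of spectrum--exponent pairs. Terminality of $U_\PP$ among framed $(I_\PP, I_\PP)$-cellular endofunctors then reduces to exhibiting a terminal object in the full subcategory of $\el_\PP /\!\!/_{\!v}\,\, \O(\psh[\PP])$ cut out by the framedness and cellularity constraints, and to identifying it with the stated $(S,E)$. First I would record what these constraints say at the level of $(S,E)$: framedness forces $(F\thg)(\star) \cong \coprod_A \psh[\PP](\yoneda_\star, \thg)$, so the spectrum over $\star$ is the fixed set $A$ with each element carrying exponent $(\yoneda_\star, 1)$; while $(I_\PP, I_\PP)$-cellularity, read through Lemma~\ref{lem:16} and the explicit generators of $I_\PP$, says that for every generic $t \colon \yoneda_{(n,m)} \to FA$ the leaf and root labellings $\yoneda_\star^{+n} \to A$ and $\yoneda_\star^{+m} \to A$ (extracted from $t\yoneda_{\sigma_i}$ and $t\yoneda_{\tau_j}$ by genericity) are relative finite $I_\PP$-complexes, and each element of the automorphism group $\aut_{t_1}$ over the leaf/root coproducts admits a minimal extension along them.

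Next I would check that the stated $(S,E)$ is a legitimate object of $\el_\PP /\!\!/_{\!v}\,\, \O(\psh[\PP])$ and is framed and cellular. That $S$ is a presheaf and $E$ a functor is immediate, since the only non-identity maps of $\PP$ are $\sigma_i, \tau_j \colon \star \to (n,m)$; the one point to verify is that $[\ell^X_i], [r^X_j]$ are genuine maps $(\yoneda_\star, 1) \to (\abs X, \aut_X)$ of $\O(\psh[\PP])$, which holds precisely because the elements of $\aut_X$ are label-preserving and hence fix each $\ell^X_i$ and $r^X_j$. Framedness is clear from $S(\star) = \{u\}$ and $E(u) = \yoneda_\star$. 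For cellularity I would run the concrete test following Lemma~\ref{lem:16} (as in Example~\ref{ex:1}): well-labelledness of each $X \in S(n,m)$ supplies the relative finite $I_\PP$-complex arities, while the minimal-extension condition is vacuous because the generic operation over a coproduct $\yoneda_\star^{+k}$ has trivial automorphism group, the $k$ vertices being rigid and distinguishable under the framing $A = 1$.

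The substance is terminality. Given any framed $(I_\PP, I_\PP)$-cellular $G$ with spectrum--exponent $(T, D)$, I would define $p \colon T \to S$ by sending the unique vertex-element to $u$ and a generic $t \in T(n,m)$, with input arity $A$ and exponent $(A, \aut_t)$, to the polygraph $\abs X = A$ carrying the leaf/root labelling read off from $t$ by the cellularity analysis above; well-labelledness of this $X$ is exactly the content of $G$'s cellularity, so $X \in S(n,m)$, and naturality of $p$ is automatic since every non-identity map of $\PP$ restricts elements to $u$. The accompanying vertical $2$-cell has component $[\id_A] \colon E(p(t)) = (A, \aut_X) \to (A, \aut_t) = D(t)$; this is a well-defined map of $\O(\psh[\PP])$ because $\aut_t \leqslant \aut_X$---any automorphism fixing $t$ fixes each precomposite $t\yoneda_{\sigma_i}, t\yoneda_{\tau_j}$, hence fixes the labels---and it is vertical since its underlying map is the identity. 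For uniqueness, naturality over the label maps $u \to t$ forces the underlying object and labelling of $p(t)$, hence $p(t) = X$, and then forces the $2$-cell to be $[\id_A]$, so the comparison is unique.

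Finally, terminality upgrades $U_\PP$ to a monad exactly as in the introduction: the category of framed $(I_\PP, I_\PP)$-cellular endofunctors is monoidal under composition (by Proposition~\ref{prop:13} and the evident closure of framedness under composition, with monoidal unit the identity functor), so the unique maps $\Id \to U_\PP$ and $U_\PP U_\PP \to U_\PP$ supplied by terminality are a unit and multiplication obeying the monad laws by uniqueness. I expect the main obstacle to be the cellularity-and-terminality verification carried out with the \emph{full} label-preserving groups $\aut_X$: this is exactly the place where the obstruction of Proposition~\ref{prop:no-terminal-object} must be seen to dissolve, and it does so through the twin facts that the leaf/root coproduct operations have trivial automorphism groups---making minimal extension vacuous rather than contradictory---and that label-preserving automorphisms fix the labelling maps---making the orbit conditions on $[\ell^X_i], [r^X_j]$ and the comparison $2$-cells well-defined. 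Getting this bookkeeping exactly right, in both the construction of $(S,E)$ and the uniqueness of the comparison, is the crux.
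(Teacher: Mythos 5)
Your proposal is correct in substance and follows the same overall strategy as the paper's proof: both pass through the equivalence of Proposition~\ref{prop:4}, both translate framedness and $(I_\PP,I_\PP)$-cellularity into combinatorial conditions on spectrum--exponent pairs via Lemma~\ref{lem:16} and Example~\ref{ex:1} (the paper packages this as \emph{acceptability} of the functors $F_t \colon \PP/x \to \O(\psh[\PP])$), and both locate the dissolution of the obstruction of Proposition~\ref{prop:no-terminal-object} in exactly the two facts you name: framedness forces the generics over coproducts of copies of $\yoneda_\star$ to have trivial automorphism group (so the minimal-extension condition never bites), while the passage to the full label-preserving groups $\aut_X$ is what makes the comparison $2$-cells well-defined and unique. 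Where you diverge is in the terminality step: the paper does not argue by hand, but observes that acceptable functors are closed under the initialisation operation $F \mapsto \tilde F$ of Lemma~\ref{lem:14} (which concretely just replaces $(X,G)$ by $(X,\aut_X)$ at $\id_{(n,m)}$) and then runs the proof of Proposition~\ref{prop:9} verbatim on the subcategory cut out by acceptability. Your direct construction of $(p,\varphi)$ and explicit uniqueness computation re-derive, in this special case, what that general machinery supplies; the paper's route is more modular, yours more self-contained, and the mathematical content is the same.

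Two points in your write-up need patching, neither fatal. First, Definition~\ref{def:37} allows $FX(\star) \cong A \times X(\star)$ for an \emph{arbitrary} fixed set $A$, so $T(\star)$ need not be a singleton as your phrase ``the unique vertex-element'' suggests; the comparison must send all of $T(\star)$ to $u$, which is harmless since every non-identity map of $\PP$ has domain $\star$. Second, at the crux you yourself flag: the input arity $A$ of a generic $t \in T(n,m)$ is not literally an element of $\L(n,m)$, which is a set of chosen isomorphism-class representatives, so $p(t)$ must be the representative $X$ of the class of $A$ with the labelling read off from $t$, and the $2$-cell component is then $[w]$ for a label-preserving isomorphism $w \colon \abs X \to A$ rather than literally $[\id_A]$. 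This class is independent of the choice of $w$ precisely because any two labelled isomorphisms differ by an element of $\aut_X$, the group being quotiented by in $\O(\psh[\PP])$; and the same observation completes your uniqueness step, since naturality over the $\sigma_i$ and $\tau_j$ forces any competing $(q,\psi)$ to have $q(t)$ labelled-isomorphic to the arity of $t$, hence equal to the representative $X$, with $\psi_t$ equal to the single class of labelled isomorphisms. With these repairs your argument is complete and agrees with the paper's.
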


\begin{proof}
  By Proposition~\ref{prop:4}, we have
  $\anpt(\psh[\PP], \psh[\PP]) \simeq \el_\PP /\!\!/_{\!v}\,\,
  \O(\psh[\PP])$; if we define
  $\left(\el_\PP /\!\!/_{\!v}\,\, \O(\psh[\PP])\right)' \subset
  \el_\PP /\!\!/_{\!v}\,\, \O(\psh[\PP])$ to be the full subcategory
  corresponding under this equivalence to the full subcategory of
  framed $(I_\PP, I_\PP)$-cellular endofunctors, then it suffices to
  show that $(S,E)$ as defined above is terminal in this category.
  
  First, let us call a functor
  $\PP / \star \cong 1 \rightarrow \O(\psh[\PP])$ \emph{acceptable} if
  it picks out the object $(\yoneda_\star, 1)$, and a functor
  $\PP / (n,m) \rightarrow \O(\psh[\PP])$ \emph{acceptable} if it
  takes the form
  \begin{equation}\label{eq:23}
    \cd[@C-2.3em]{
      \sigma_1 \ar[drrr]_-{\sigma_1} & \cdots & \sigma_n
      \ar[dr]|-{\sigma_n} & &
      \tau_1 \ar[dl]|-{\tau_1} & \cdots & \tau_m
      \ar[dlll]^-{\tau_m} \\
      & & & \id_{(n,m)}
    } \ \ \mapsto \quad 
    \cd[@C-2.9em]{
      (\yoneda_\star, 1) \ar[drrr]_-{[\ell_1]} & \sh{r}{0.25em}{\cdots} & \sh{r}{0.5em}{(\yoneda_\star, 1)}
      \ar[dr]|-{[\ell_n]} & &
      \sh{l}{0.5em}{(\yoneda_\star, 1)} \ar[dl]|-{[r_1]} & \sh{l}{0.25em}{\cdots} & (\yoneda_\star, 1)
      \ar[dlll]^-{[r_m]} \\
      & & & (X,G)
    }  \end{equation}
  with $(X, \ell, r)$ a well-labelled polygraph. By an argument
  like Example~\ref{ex:1} above, 
  a pointwise analytic $F \colon \psh[\PP] \rightarrow \psh[\PP]$
  is framed $(I_\PP, I_\PP)$-cellular just when, for each element $t \in S_F(x)$ of its
  spectrum, the composite
  \begin{equation*}
    F_t \colon \PP / x = \el
    \yoneda_x \xrightarrow{\el t} \el S_F \xrightarrow{E_F} \O(\psh[\PP])
  \end{equation*}
  is acceptable. Noting that this $F_t$ is the same as~\eqref{eq:12}
  appearing in the proof of Proposition~\ref{prop:9}, we thus continue
  by emulating the rest of that proof.

  Recall the key Lemma~\ref{lem:14} stating that, for each $x \in \PP$,
  the connected component of any $F \in [\PP/x, \O(\psh[\PP])]_v$
  contains an initial object $\tilde F$. We claim that, if $F$ is
  acceptable, then so too is $\tilde F$. This is trivial when
  $x = \star$, while if $x = (n,m)$, then $\tilde F$ is obtained from
  $F$ as in~\eqref{eq:23} simply by changing its value at $\id_{(n,m)}$
  from $(X,G)$ to $(X, \aut_X)$. So the analogue of Lemma~\ref{lem:14}
  holds for acceptable functors; it follows that we can define a
  terminal object $(S', E')$ for
  $\left(\el_\PP /\!\!/_{\!v}\,\, \O(\psh[\PP])\right)'$ by taking
  \begin{equation}\label{eq:30}
    S'(x) = \{\,F \in [\PP/x, \O(\psh[\PP])]_v : \tilde F = F \text{
      is acceptable}\, \}\rlap{ ,}
  \end{equation}
  with the remaining data defined exactly as in
  Proposition~\ref{prop:9} above. All that remains is to identify this
  $(S', E')$ with the $(S, E)$ in the statement. Once again, this is
  trivial at stage $\star$, while at stage $(n,m)$, any acceptable $F$
  by definition has the form~\eqref{eq:23}; but the further requirement
  that $F = \tilde F$ means that $G = \aut_X$, so that $F$ determines
  and is determined by the well-labelled polygraph $(X, \ell, r)$.
\end{proof}

\begin{Rk}
  \label{rk:3}
  As in Remark~\ref{rk:2}, if we view the terminal framed cellular
  endofunctor of $\psh[\PP]$ as an object
  $U \in \anpt(\psh[\PP], \psh[\PP])$, then any pointwise analytic $F$
  which admits a map to $U$ in this category must itself be framed
  $(I_\PP, I_\PP)$-cellular. So $U$ is subterminal in
  $\anpt(\psh[\PP], \psh[\PP])$, and the slice category
  $\anpt(\psh[\PP], \psh[\PP]) / U$ may be identified with the
  monoidal category of framed $(I_\PP, I_\PP)$-cellular endofunctors.
\end{Rk}

The case of the presheaf category $\psh[\PS]$ of symmetric polygraphs
is very similar: the maps in the bordage $I_{\PS}$ are identical in
form to those of $I_\PP$---though now living on a different
category---and we now obtain:

\begin{Prop}\label{prop:31}
  The monoidal category of framed $(I_{\PS}, I_{\PS})$-cellular
  endofunctors of $\psh[\PS]$ has a terminal object $U_{\PS}$, which
  thus underlies a monad on $\psh[\PS]$, the \emph{universal shapely
    monad}. The spectrum $S \in \psh[\PS]$ of $U_{\PS}$ may be taken
  to be:
  \begin{equation*}
    S(\star) = \{u\} \quad \text{and} \quad S(n,m) = \{X \in \L_s(n,m) : \text{$X$ is well-labelled}\}\rlap{ ,}
  \end{equation*}
  with symmetric actions on $S(n,m)$ given by
  $X \mapsto \psi \cdot X \cdot \varphi$ as in
  Definition~\ref{def:43}(c); the exponent
  $E \colon \el S \rightarrow \O(\psh[\PS])$ now has
  $E(u) = \yoneda_\star$, $E(X) = (\abs X, \aut_X)$ and
  \begin{equation*}
    E(\sigma_i \colon u \rightarrow X) = [\ell^X_i]\text,\quad 
    E(\tau_j \colon u \rightarrow X) = [r^X_i]\text,\quad 
    E(\xi_{\varphi, \psi} \colon \psi \cdot X \cdot \varphi
    \rightarrow X) = [1_{\abs X}]\text{ .}
  \end{equation*}
\end{Prop}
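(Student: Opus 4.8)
The plan is to mirror the proof of Proposition~\ref{prop:23} step for step, the only genuinely new content being the bookkeeping of the symmetry morphisms $\xi_{\varphi,\psi}$ of $\PS$ and the verification that they induce exactly the relabelling action of Definition~\ref{def:43}(c). First I would invoke Proposition~\ref{prop:4} to replace $\anpt(\psh[\PS], \psh[\PS])$ by the equivalent $\el_{\PS} /\!\!/_{\!v}\,\, \O(\psh[\PS])$, and let $(\el_{\PS} /\!\!/_{\!v}\,\, \O(\psh[\PS]))'$ be the full subcategory corresponding to the framed $(I_{\PS}, I_{\PS})$-cellular endofunctors; it then suffices to show that the stated $(S,E)$ is terminal there. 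As before, I would call a functor $\PS/\star \cong 1 \to \O(\psh[\PS])$ \emph{acceptable} when it picks out $(\yoneda_\star, 1)$, and a functor $F \colon \PS/(n,m) \to \O(\psh[\PS])$ acceptable when it sends each $\sigma_i,\tau_j$ to $(\yoneda_\star,1)$ with structure maps $[\ell_i]$ and $[r_j]$, and each object $\xi_{\varphi,\psi}$ (in particular $\id_{(n,m)}$) to an object $(X,G)$ with $X$ a well-labelled symmetric polygraph — where well-labelledness is defined exactly as before, now using the bordage $I_{\PS}$. An argument like Example~\ref{ex:1} then shows that a pointwise analytic $F$ is framed $(I_{\PS}, I_{\PS})$-cellular precisely when each of the composites $F_t$ from~\eqref{eq:12} is acceptable.

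Next I would appeal to Lemma~\ref{lem:14}, which applies verbatim with $\I = \PS/(n,m)$ and $\A$ a skeleton of $\psh[\PS]$, to produce in each connected component of $[\PS/x, \O(\psh[\PS])]_v$ an initial object $\tilde F$. The key point, proved just as in Proposition~\ref{prop:23}, is that acceptability of $F$ entails acceptability of $\tilde F$: passing to $\tilde F$ merely enlarges the group $G$ at the $\xi$-objects to the full label-preserving automorphism group $\aut_X$. This furnishes a terminal object $(S', E')$ for $(\el_{\PS} /\!\!/_{\!v}\,\, \O(\psh[\PS]))'$ via the formula~\eqref{eq:30}, and it remains only to identify $(S', E')$ with the $(S,E)$ of the statement.

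The crux of the symmetric case is this identification at stage $(n,m)$, and in particular the description of the presheaf action and of the exponent on the morphisms $\xi_{\varphi,\psi}$. An acceptable, already-initial $F$ at stage $(n,m)$ is by definition determined by a well-labelled symmetric polygraph $(X,\ell,r)$ with $G = \aut_X$, giving the bijection $S'(n,m) \cong \{X \in \L_s(n,m) : X \text{ well-labelled}\}$. To compute the action of $\xi_{\varphi,\psi} \colon (n,m) \to (n,m)$, I would evaluate the precomposed functor $F(\xi_{\varphi,\psi} \circ \thg)$ on generators: using the relations $\xi_{\varphi,\psi}\circ\sigma_i = \sigma_{\varphi(i)}$ and $\xi_{\varphi,\psi}\circ\tau_j = \tau_{\psi^{-1}(j)}$, its structure maps at $\sigma_i$ and $\tau_j$ become $[\ell^X_{\varphi(i)}]$ and $[r^X_{\psi^{-1}(j)}]$, which are exactly the leaf and root labellings of $\psi\cdot X\cdot\varphi$ as in Definition~\ref{def:43}(c). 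Hence $S(\xi_{\varphi,\psi})\colon X \mapsto \psi\cdot X\cdot\varphi$ as claimed, and since this operation leaves the underlying polygraph $\abs X$ untouched, the induced exponent map is $E(\xi_{\varphi,\psi}\colon \psi\cdot X\cdot\varphi \to X) = [1_{\abs X}]$.

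The part I expect to require the most care is confirming that these assignments are well-defined and functorial — that is, that $(S,E)$ really is an object of $\el_{\PS} /\!\!/_{\!v}\,\, \O(\psh[\PS])$ and that the relabelling rule is a genuine $\PS$-presheaf action. Here the freeness of the $\aut_n \times \aut_m^{\mathrm{op}}$-action on each $\yoneda_{(n,m)}$ — and hence on the well-labelled $X \in \L_s(n,m)$ — is what guarantees that $\psi\cdot X\cdot\varphi$ depends functorially on $(\varphi,\psi)$ and that $[1_{\abs X}]$ is a legitimate vertical morphism of $\O(\psh[\PS])$ compatible with the $[\ell_i]$ and $[r_j]$. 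Beyond this, every step is a transcription of the corresponding step for $\psh[\PP]$, so I would present the argument as a brief indication of the differences rather than a full repetition.
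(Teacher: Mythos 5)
Your proposal is correct and coincides with the proof the paper intends: the paper gives no separate argument for this proposition, stating only that the symmetric case is ``very similar'' to Proposition~\ref{prop:23}, and your write-up is exactly that adaptation, transcribing the acceptable-functor/Lemma~\ref{lem:14} argument and correctly working out the genuinely new bookkeeping --- the relations $\xi_{\varphi,\psi}\circ\sigma_i = \sigma_{\varphi(i)}$ and $\xi_{\varphi,\psi}\circ\tau_j = \tau_{\psi^{-1}(j)}$ yielding the presheaf action $X \mapsto \psi\cdot X\cdot\varphi$ of Definition~\ref{def:43}(c) and the exponent value $[1_{\abs X}]$ on the $\xi$-morphisms. (One small quibble: functoriality of the relabelling action and legitimacy of $[1_{\abs X}]$ follow from a direct computation with the relations of Definition~\ref{def:35} and the equality $\aut_X = \aut_{\psi\cdot X\cdot\varphi}$, rather than from freeness of the $\aut_n \times \aut_m^{\mathrm{op}}$-action, but this does not affect the argument.)
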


\subsection{Free shapely monads}

Now that we have universal shapely monads on the presheaf categories
of polygraphs and symmetric polygraphs, we are finally in a position
to define more general shapely monads. It will be convenient to abstract
away from the particularities of our examples as follows.

\begin{Defn}
  \label{def:rshapely}
  Let $U \in \anpt (\psh,\psh)$ be subterminal. We write $\uan$ for
  the full subcategory of $\anpt (\psh,\psh)$ on the
  \emph{$U$-analytic} endofunctors: those admitting a map to $U$. We
  call $U$ \emph{nice} if $\uan$ is closed in $\cat{CAT}(\psh,\psh)$
  under the composition monoidal structure, and in this case we write
  $\umd$ for the category of \emph{$U$-analytic monads}: monoids in
  $\uan$.
\end{Defn}

Clearly, the subterminal $U$ in $\anpt(\psh, \psh)$ becomes terminal
in $\uan$; when $U$ is nice, this terminal object has a unique monoid
structure making it into a terminal object $\mathsf{U}$ in $\umd$. The
universal shapely monads of Propositions~\ref{prop:23} and
\ref{prop:31} arise in this way from the nice subterminal objects
$U_\PP$ and $U_{\PS}$ in the categories of pointwise analytic
endofunctors of $\psh[\PP]$ and $\psh[\PS]$; here ``$U_\PP$-analytic''
means ``framed $(I_\PP, I_\PP)$-cellular'' and likewise for $\PS$.

\begin{Defn}
  \label{def:40}
  Let $U \in \anpt(\psh, \psh)$ be subterminal. A $U$-analytic
  endofunctor $F$ is \emph{shapely} if the unique pointwise analytic
  $F \rightarrow U$ is pointwise monic. If $U$ is nice, then a
  $U$-analytic monad is called \emph{shapely} if its underlying
  endofunctor is so. We write $\shp \subset \uan$ and
  $\shpmon \subset \umd$ for the full subcategories on the shapely
  endofunctors and monads.
\end{Defn}

By the \emph{free shapely monad} on a shapely endofunctor $F$, we mean
the value at $F$ of a left adjoint to the forgetful functor
$\shpmon \rightarrow \shp$. To construct free shapely monads we will
first need to analyse more closely the structure of shapely
endofunctors. The following two results are the key to doing so.

\begin{Prop}
  \label{prop:29}
  For any $\A$ and $\C$ (with $\C$ small), the category
  $\anpt(\A, \psh)$ admits a factorisation system (pointwise epi,
  pointwise mono).
\end{Prop}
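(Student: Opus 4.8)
The plan is to establish the two orthogonal factorisation classes concretely using the representation of $\anpt(\A,\psh)$ by the category $\el_\C /\!\!/_{\!v}\,\, \O(\A)$ from Proposition~\ref{prop:4}, and to verify the factorisation axioms there. Recall that a pointwise analytic transformation $\alpha \colon F \Rightarrow G$ corresponds under this equivalence to a pair $(p, \varphi)$ consisting of a presheaf map $p \colon S_F \rightarrow S_G$ together with a pointwise vertical natural transformation $\varphi \colon E_G \circ \el p \Rightarrow E_F$. First I would determine what ``pointwise epi'' and ``pointwise mono'' mean for $\alpha$ in these terms. Since colimits (and in particular the formation of connected components and near-representable summands) in $\psh$ are computed pointwise, I expect that $\alpha$ is pointwise monic precisely when $p \colon S_F \rightarrow S_G$ is monic in $\psh$, and pointwise epic precisely when $p$ is epic; the vertical $2$-cell data $\varphi$ should play no role in either surjectivity or injectivity at the level of operations, because each component is already an isomorphism in $\A$ (verticality) and hence neither adds nor removes summands.

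**Next I would** construct the factorisation of a given $\alpha \leftrightarrow (p, \varphi)$. The natural candidate is to factor the presheaf map $p \colon S_F \rightarrow S_G$ through its image as $p = (S_F \twoheadrightarrow \im p \hookrightarrow S_G)$, using the (epi, mono) factorisation system that $\psh$ carries pointwise from $\cat{Set}$. This produces an intermediate object $(\im p, E')$ in $\el_\C /\!\!/_{\!v}\,\, \O(\A)$, where the exponent $E' \colon \el(\im p) \rightarrow \O(\A)$ must be defined on an element $s \in (\im p)(c)$ by choosing a preimage $t \in S_F(c)$ and setting $E'(s) = E_F(t)$, transported along the relevant component of $\varphi$; the point to check is that this is well-defined up to the vertical isomorphisms recorded by $\varphi$, so that $E'$ is independent of the chosen preimage and the two legs $(S_F, E_F) \twoheadrightarrow (\im p, E') \hookrightarrow (S_G, E_G)$ are genuine morphisms whose composite is $(p, \varphi)$. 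The left leg is then pointwise epic and the right leg pointwise monic by the first paragraph.

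**The main obstacle**, and the step I would spend the most care on, is the unique diagonal lifting property. Given a commuting square of pointwise analytic transformations with a pointwise epi $e$ on the left and a pointwise mono $m$ on the right, I must produce a unique diagonal filler. At the level of spectra this is just the orthogonality of (epi, mono) in $\psh$, which is automatic; the delicate part is the accompanying vertical $2$-cell data. I would argue that the diagonal's vertical transformation is \emph{forced}: because $m$ corresponds to a monic presheaf map, its vertical $2$-cell component is (pointwise) monic in $\O(\A)$ in the sense used in the proof of Proposition~\ref{prop:9} (vertical maps are monic in $\O(\A)$), and this monicity lets me cancel on the right to pin down the filler's $2$-cell uniquely, while the epi on the left guarantees existence by letting me transport the required data along surjective components. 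The verticality of all the $2$-cells is essential here and is exactly what makes $\el_\C /\!\!/_{\!v}\,\, \O(\A)$ better behaved than a general comma-type category; I would lean on the fact, already exploited in Proposition~\ref{prop:9}, that every map of the relevant vertical-transformation categories is monomorphic.

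**Finally**, I would record that the two classes are each closed under composition and contain the isomorphisms --- immediate from the corresponding closure properties of (epi, mono) in $\psh$ together with the observation that composites of vertical transformations are vertical --- which completes the verification that $(\text{pointwise epi}, \text{pointwise mono})$ is an orthogonal factorisation system on $\anpt(\A, \psh)$. Throughout, the equivalence of Proposition~\ref{prop:4} reduces every claim to an elementary statement about the pointwise (epi, mono) structure of the presheaf category $\psh$ and the monicity of vertical maps in $\O(\A)$, so no genuinely new combinatorics beyond Section~\ref{sec:pointw-analyt} should be needed.
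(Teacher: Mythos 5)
There is a genuine gap, and it sits exactly where you predicted you would need the most care --- but it appears earlier than you think, in your combinatorial characterisation of the two classes. You claim that under the equivalence of Proposition~\ref{prop:4}, a transformation $\alpha \leftrightarrow (p,\varphi)$ is pointwise monic precisely when $p$ is monic, the vertical $2$-cell ``playing no role'' because its components are isomorphisms in $\A$. This conflates \emph{vertical} in $\O(\A)$ (underlying map invertible in $\A$) with \emph{invertible} in $\O(\A)$. A vertical map $[f] \colon (A,G) \rightarrow (A,H)$ with $G$ a proper subgroup of $H$ (up to conjugation by $f$) induces on near-representables the map $\A(A,\thg)_{/G} \rightarrow \A(A,\thg)_{/H}$, which is a surjection that is \emph{not} injective. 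Concretely, the quotient transformation $\A(A,\thg) \Rightarrow \A(A,\thg)_{/G}$ has identity spectrum map and a vertical, non-invertible $2$-cell, and it is pointwise \emph{epi}, not mono. The correct statement --- which the paper establishes separately as Proposition~\ref{prop:30} --- is that pointwise monos correspond to pairs with $p$ monic \emph{and} $\varphi$ invertible; only the epi class is as you describe. This error then infects your construction of the factorisation: your intermediate exponent $E'(s) := E_F(t)$ for a chosen preimage $t$ of $s$ is not well-defined, since two preimages $t \neq t'$ may carry genuinely non-isomorphic exponents $(A, G_t)$, $(A, G_{t'})$ with no vertical comparison map between them (verticality of $\varphi_t$, $\varphi_{t'}$ only gives maps \emph{from} $E_G(s)$, which cannot be inverted); and even where it is defined, the resulting right-hand leg would carry the non-invertible $\varphi$ and so fail to be pointwise monic. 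The correct intermediate object restricts the \emph{codomain} exponent: it is $(\im p,\, E_G \circ \el m)$, with the entire $2$-cell $\varphi$ absorbed into the epi leg and the mono leg given identity $2$-cell data.

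It is also worth noting that the paper's proof of this proposition does not go through the combinatorial representation at all: it works directly in $\anpt(\A, \psh)$, obtaining the filler and the factorisation in $\cat{CAT}(\A, \psh)$ and then verifying pointwise analyticity via the comma categories $\yoneda_c \downarrow F$ --- for a pointwise mono $\beta$, the functor $\yoneda_c \downarrow \beta$ is fully faithful and hence \emph{reflects} generic operations, which shows the induced filler preserves generics; for a pointwise epi $\varepsilon$, the functor $\yoneda_c \downarrow \varepsilon$ is surjective on objects, so generic covers in the image lift from the domain. Your instinct that the combinatorial side should work is sound (the paper's Proposition~\ref{prop:30} confirms it), but executing the proof there requires first getting the description of $\M$ right, including the subtle point --- present in the paper's proof of Proposition~\ref{prop:30} --- that surjectivity on connected components of $\yoneda_c \downarrow (\thg)$ already implies surjectivity on objects, via genericity. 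As it stands, your proposed classes are not orthogonal (the quotient example above admits no diagonal filler against itself), so the argument cannot be repaired without revising the characterisation of the mono class.
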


Here, and subsequently, the term ``factorisation system'' refers to an
orthogonal factorisation system in the sense of
Freyd--Kelly\cite{Freyd1972Categories}.

\begin{proof}
  First we show that pointwise epimorphic and pointwise monomorphic
  transformations are orthogonal in $\anpt(\A, \psh)$: this says that
  any square
  \begin{equation*}
    \cd{
      {F} \ar[r]^-{\alpha} \ar[d]_{\gamma} &
      {G} \ar[d]^{\delta} \ar@{-->}[dl]_-{\varepsilon} \\
      {H} \ar[r]_-{\beta} &
      {K}
    }
  \end{equation*}
  in $\anpt(\A, \psh)$ with $\alpha$ pointwise epimorphic and $\beta$
  pointwise monomorphic admits a unique diagonal filler $\varepsilon$
  as displayed. As pointwise epimorphic and monomorphic transformations
  are orthogonal in $\cat{CAT}(\A, \psh)$, there is certainly a unique
  transformation $\varepsilon \colon G \Rightarrow H$; we must show it
  is pointwise analytic. For each $c \in \C$ we have the factorisation
  \begin{equation*}
    \yoneda_c \downarrow \delta \qquad = \qquad 
    \yoneda_c \downarrow G
    \xrightarrow{\yoneda_c \downarrow \varepsilon} \yoneda_c \downarrow H
    \xrightarrow{\yoneda_c \downarrow \beta} \yoneda_c \downarrow K\rlap{ .}
  \end{equation*}
  Now since $\beta$ is pointwise monomorphic,
  $\yoneda_c \downarrow \beta$ is fully faithful and so reflects
  generic operations; since $\yoneda_c \downarrow \delta$ preserves
  them, we conclude that $\yoneda_c \downarrow \varepsilon$ preserves
  generics, whence $\varepsilon$ is pointwise analytic as required.

  It remains to show that any $\delta \colon G \Rightarrow K$ in
  $\anpt(\psh, \psh)$ has a pointwise (epi, mono) factorisation. Let
  $\delta = \beta \varepsilon \colon G \Rightarrow H \Rightarrow K$ be
  such a factorisation in $\cat{CAT}(\A, \psh)$; we must show that $H$,
  $\beta$ and $\varepsilon$ are pointwise analytic. We argue as before
  to see that each $\yoneda_c \downarrow \varepsilon$ preserves
  generics, but since $\varepsilon$ is pointwise epimorphic,
  $\yoneda_c \downarrow \varepsilon$ is also surjective on objects;
  whence each $t \in \yoneda_c \downarrow H$ has a generic cover
  obtained by lifting along $\yoneda_c \downarrow \varepsilon$, taking
  a generic cover there, and then applying the generic-preserving
  $\yoneda_c \downarrow \varepsilon$. This shows that $H$ and
  $\varepsilon$ are pointwise analytic. Moreover, as genericity is
  stable under isomorphism, each generic operation in
  $\yoneda_c \downarrow H$ is the image of a generic operation in
  $\yoneda_c \downarrow G$. Since $\yoneda_c \downarrow \delta$
  preserves generics, so does $\yoneda_c \downarrow \beta$, and so
  $\beta$ is also pointwise analytic.
\end{proof}

\begin{Prop}
  \label{prop:30}
  The (pointwise epi, pointwise mono) factorisation system on
  $\anpt(\A, \psh)$ corresponds under Proposition~\ref{prop:4} to the
  factorisation system $(\E, \M)$ on $\el_\C /\!\!/_{\!v}\,\, \O(\A)$
  for which $\E$ and $\M$ comprise those maps $(p, \varphi)$ as
  in~\eqref{eq:10} for which $p$ is epimorphic, respectively $p$ is
  monic and $\varphi$ is invertible.
\end{Prop}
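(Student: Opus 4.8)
The plan is to transport the (pointwise epi, pointwise mono) factorisation system established in Proposition~\ref{prop:29} across the equivalence $\anpt(\A, \psh) \simeq \el_\C /\!\!/_{\!v}\,\, \O(\A)$ of Proposition~\ref{prop:4}, and to identify explicitly the two resulting classes of maps on the combinatorial side. Since an equivalence of categories preserves and reflects orthogonal factorisation systems (it preserves orthogonality, isomorphisms, and the existence of factorisations), it suffices to compute, for a map $(p, \varphi)$ in $\el_\C /\!\!/_{\!v}\,\, \O(\A)$ corresponding to a pointwise analytic $\alpha \colon F \Rightarrow G$, exactly when $\alpha$ is pointwise epimorphic and exactly when it is pointwise monomorphic. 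Granting these two computations, the claimed description of $(\E, \M)$ follows immediately.

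First I would recall from the proof of Proposition~\ref{prop:4} (and the analysis in~\eqref{eq:8}, adapted to the near-familial/vertical setting) how the transformation $\alpha \colon F \Rightarrow G$ is reconstructed from $(p, \varphi)$: on each representable stage, $(F\thg)c$ and $(G\thg)c$ are coproducts of near-representables indexed by $S_F(c)$ and $S_G(c)$, the map $p \colon S_F \rightarrow S_G$ records how the summands are sent, and $\varphi$ records the accompanying (vertical) maps of exponents. The key observation is then that, on each stage $c$, the component $\alpha_c \colon (F\thg)c \Rightarrow (G\thg)c$ is a coproduct of maps $\A(Et, \thg)_{/G_t} \rightarrow \A(E(pt), \thg)_{/G_{pt}}$ between near-representables, reindexed along $p_c \colon S_F(c) \rightarrow S_G(c)$. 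I would argue that $\alpha$ is pointwise epimorphic precisely when each $p_c$ is surjective—that is, $p$ is epimorphic in $\psh$—since a coproduct of near-representables surjects onto another exactly when its indexing map is surjective (the component maps between near-representables are automatically epimorphic because $\varphi$ is vertical, hence induces isomorphisms on the underlying arities up to the relevant automorphism groups). Dually, I would argue that $\alpha$ is pointwise monomorphic precisely when $p$ is monic \emph{and} each component map of near-representables is injective, which happens exactly when each $\varphi_t$ is invertible (a vertical map that is not an isomorphism identifies distinct elements and so fails injectivity on the relevant near-representable summand).

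The main obstacle I anticipate is the careful handling of the near-representable summands under quotients by automorphism groups: verifying that monomorphicity of $\alpha_c$ forces $\varphi$ to be \emph{invertible} rather than merely vertical requires showing that a genuinely non-invertible vertical map $[f] \colon (A, G) \rightarrow (A, H)$ with $G \lneq H$ induces a non-injective transformation $\A(A, \thg)_{/H} \rightarrow \A(A, \thg)_{/G}$ of near-representables. This is where the distinction between $\E$ and $\M$ on the combinatorial side genuinely lives, and it is the one step that is not purely formal transport along the equivalence; everything else reduces to the standard fact that in a coproduct-indexed situation, epi/mono status is controlled stage-wise by the indexing function together with the summand maps. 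Once this injectivity computation is in hand, orthogonality and existence of factorisations are inherited from Proposition~\ref{prop:29} via the equivalence, completing the identification.
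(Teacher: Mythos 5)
Your proposal is correct, but it takes a genuinely different route from the paper's on the second half of the argument. The paper does not transport-and-identify both classes: it first verifies directly that $(\E, \M)$ is an orthogonal factorisation system on $\el_\C /\!\!/_{\!v}\,\, \O(\A)$ (factorising $p$ through its (epi, mono) factorisation in $\psh$, and checking orthogonality by hand, where invertibility of the $2$-cell $\gamma$ forces the filler's $2$-cell to be $\kappa \circ (\gamma^{-1}\, \el\, j)$), and then matches only the \emph{left} classes, i.e.\ shows that pointwise epimorphisms correspond to maps in $\E$; since in an orthogonal factorisation system either class determines the other, the mono/$\M$ correspondence then comes for free, and the paper never performs your key computation that pointwise monomorphy forces $\varphi$ invertible. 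Moreover, even the epi-matching is done differently: rather than your summand-by-summand analysis of coproducts of near-representables, the paper works with the comma categories $\yoneda_c \downarrow \alpha$, observing that $p$ epimorphic says exactly that each such functor is surjective on \emph{connected components}, and then upgrading this to surjectivity on objects via genericity (a generic $s$ with $\alpha_B . s$ in the component of $t$ admits a map $\alpha_B . s \rightarrow t$, exhibiting $t$ in the image)---the same analyticity input that, in your version, appears as ``vertical $\varphi_t$ induces surjections of near-representables''. Your route buys an explicit, self-contained description of both classes, and your flagged crux does go through: for a vertical $[f]$, the induced map of near-representables is injective at all stages if and only if $[f]$ is invertible in $\O(\A)$ (test at stage $X = B$ with $g = \id$, which forces $fGf^{-1} \subseteq H$ on top of the morphism condition $f^{-1}Hf \subseteq G$). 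One caution on variance: a morphism $(A,G) \rightarrow (B,H)$ in $\O(\A)$ induces a map $\A(B, \thg)_{/H} \rightarrow \A(A, \thg)_{/G}$, so your illustrative non-invertible map ``$(A,G) \rightarrow (A,H)$ with $G \lneq H$'' has the inclusion running the wrong way---the correct statement is that a vertical map whose conjugation-inclusion of groups is \emph{strict} fails injectivity. The paper's route buys economy (one class-computation instead of two) at the price of the direct orthogonality check on the combinatorial side, which your transport along the equivalence of Proposition~\ref{prop:4} obtains for free from Proposition~\ref{prop:29}.
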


\begin{proof}
  We begin by showing that $(\E, \M)$ is a factorisation system on
  $\el_\C /\!\!/_{\!v}\,\, \O(\A)$. First, any map
  $(p, \varphi) \colon (S,E) \rightarrow (T,D)$ therein admits the
  $(\E, \M)$-factorisation
  \begin{equation*}
    (S,E) \xrightarrow{(e,\varphi)} (R, E.\el\ m) \xrightarrow{(m,1)} (T,D)
  \end{equation*}
  where $p = me \colon S \rightarrow R \rightarrow T$ is a (pointwise
  epi, pointwise mono) factorisation in $\psh$. It remains to show that
  any square
  \begin{equation*}
    \cd{
      {(S,E)} \ar[r]^-{(f,\varphi)} \ar[d]_{(h, \theta)} &
      {(T,D)} \ar[d]^{(k, \kappa)} \ar@{-->}[dl]_-{(j, \delta)} \\
      {(V,C)} \ar[r]_-{(g, \gamma)} &
      {(W,B)}
    }
  \end{equation*}
  with $f$ epimorphic, $g$ monomorphic, and $\gamma$ invertible, admits
  a unique diagonal filler $(j, \delta)$ as indicated. By the
  orthogonality of epimorphic and monomorphic maps in $\psh$, there is
  a unique $j \colon T \rightarrow V$ such that $jf = h$ and $gj = k$.
  But as $\gamma$ is invertible, the unique $\delta$ making the lower
  triangle commute is given by $\kappa \circ (\gamma^{-1} \el\,j)$; a
  short calculation shows that the top triangle then also commutes.

  So $(\E, \M)$ is a factorisation system, and to complete the proof,
  it suffices to show that under the equivalence
  $\anpt(\A, \psh) \rightarrow \el_\C /\!\!/_{\!v}\,\, \O(\A)$ of
  Proposition~\ref{prop:4}, pointwise epimorphic maps correspond to
  maps in $\E$; it then follows by orthogonality that pointwise
  monomorphic maps correspond to ones in $\M$. Now,
  $\alpha \colon F \Rightarrow G$ in $\anpt(\A, \psh)$ is pointwise
  epimorphic just when each functor
  \begin{equation}\label{eq:45}
    \yoneda_c \downarrow \alpha \colon \yoneda_c \downarrow F
    \rightarrow \yoneda_c \downarrow G
  \end{equation}
  is surjective on objects. On the other hand, the $c$-component of the
  induced transformation $S_\alpha \colon S_F \Rightarrow S_G$ on
  spectra can be identified with the action on connected components
  of~\eqref{eq:45}, so that $\alpha$ corresponds to a map in $\E$ just
  when each~\eqref{eq:45} is surjective on connected components. This
  is certainly so if it is surjective on objects; it remains to show
  that, conversely, if~\eqref{eq:45} is surjective on connected
  components, then it is surjective on objects. By our assumption, for
  each $t \colon \yoneda_c \rightarrow GA$, we can find a generic
  operation $s \colon \yoneda_c \rightarrow FB$ such that $\alpha_B. s$
  is in the connected component of $t$. Since $\alpha$ is pointwise
  generic, $\alpha_B. s$ is generic in $\yoneda_c \downarrow G$, and so
  admits a map $f \colon \alpha_B.s \rightarrow t$. This says that
  $t = Gf.\alpha_B.s = \alpha_A.Ff.s$ so that $t$ is the image
  under~\eqref{eq:45} of $Ff.s$, as required.
\end{proof}

\begin{Cor}
  \label{cor:8}
  Let $U \in \anpt(\psh, \psh)$ be subterminal. The category $\shp$ of
  shapely $U$-analytic functors is equivalent to the poset of
  subfunctors of $U$'s spectrum $S_U \in \psh$; in particular, $\shp$
  is a complete preorder, whose joins are given by unions of
  subfunctors of $U \in \cat{CAT}(\psh, \psh)$.
\end{Cor}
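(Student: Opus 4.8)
The plan is to transport everything across the equivalence $\anpt(\psh, \psh) \simeq \el_\C /\!\!/_{\!v}\,\, \O(\psh)$ of Proposition~\ref{prop:4}, under which $U$ corresponds to its spectrum--exponent pair $(S_U, E_U)$. Two consequences of subterminality of $U$ drive the argument. First, a shapely $F$ is in particular $U$-analytic, so admits a unique map $F \Rightarrow U$, which is pointwise monic by Definition~\ref{def:40}. Second, for any $\alpha \colon F \Rightarrow G$ between shapely functors the composite $F \Rightarrow G \Rightarrow U$ must, by subterminality, equal this unique pointwise monic map, so $\alpha$ is itself pointwise monic. Hence every morphism of $\shp$ is pointwise monic, and by Proposition~\ref{prop:30} the equivalence carries $\shp$ onto the full subcategory of $\el_\C /\!\!/_{\!v}\,\, \O(\psh)$ on those $(S_F, E_F)$ admitting an $\M$-map to $(S_U, E_U)$, with every morphism between them lying in $\M$; recall that an $\M$-map $(p, \varphi)$ has $p$ monic and $\varphi$ invertible.

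I would then define $\Phi \colon \shp \to \mathrm{Sub}(S_U)$ by sending a shapely $F$ to the subobject $p_F \colon S_F \rightarrowtail S_U$ underlying its unique $\M$-map. Essential surjectivity is immediate: a subobject $\iota \colon T \rightarrowtail S_U$ yields the object $(T, E_U \circ \el \iota)$, which carries an $\M$-map to $(S_U, E_U)$ with identity $2$-cell, hence is shapely with $\Phi$-image $T$. For the rest, note that subterminality forces $p_G q = p_F$ for any $(q, \psi) \colon (S_F, E_F) \to (S_G, E_G)$ over $(S_U, E_U)$; since $p_G$ is monic this pins down $q$, and since $\varphi_F$ and $\varphi_G$ are invertible it pins down $\psi$. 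Thus there is at most one morphism between any two objects, and one exists precisely when $p_F$ factors through $p_G$, i.e., when $S_F \leqslant S_G$ in $\mathrm{Sub}(S_U)$. So $\shp$ is a preorder and $\Phi$ an equivalence onto $\mathrm{Sub}(S_U)$; as subobject lattices in a presheaf category are complete, $\shp$ is a complete preorder.

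It remains to identify the joins with unions of subfunctors. Working with the isomorphic normal forms $(T_i, E_U \circ \el \iota_i)$ for $T_i = \Phi(F_i) \rightarrowtail S_U$, Lemma~\ref{lem:13} gives $(F_i X)(c) = \coprod_{t \in T_i(c)} \psh(A_t, X)_{/ G_t}$ with $(A_t, G_t) = E_U(t)$, and the inclusion $F_i \hookrightarrow U$ is pointwise the inclusion of the summands indexed by $T_i(c) \subseteq S_U(c)$, the near-representable factors matching because the $2$-cell is an identity. Forming the union in $\CAT(\psh, \psh)$ therefore produces, pointwise, $\coprod_{t \in \bigcup_i T_i(c)} \psh(A_t, X)_{/ G_t}$: a coproduct of near-representables, hence pointwise analytic, and a pointwise subfunctor of $U$, hence shapely, with spectrum $\bigcup_i T_i = \bigvee_i T_i$, the join in $\mathrm{Sub}(S_U)$. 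Transporting back along $\Phi$, the join in $\shp$ is computed by the union of the $F_i$ as subfunctors of $U$ in $\CAT(\psh, \psh)$, as claimed.

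The main obstacle I anticipate is precisely this final matching: one must verify that the abstract join taken in $\mathrm{Sub}(S_U)$ coincides with the concrete pointwise union formed in the ambient functor category, and in particular that this union does not escape the class of pointwise analytic (hence shapely) functors. The near-representable description of Lemma~\ref{lem:13}, combined with the identity-$2$-cell normal form, is what reduces this to the bookkeeping of summands sketched above.
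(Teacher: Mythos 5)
Your proposal is correct and follows essentially the same route as the paper: identify shapely functors with $\M$-subobjects of $(S_U,E_U)$ via Proposition~\ref{prop:30}, normalise each to the representative $(p,1)$ with $p \colon S \hookrightarrow S_U$ a subfunctor inclusion, and transport joins across the equivalence of Proposition~\ref{prop:4}. The only difference is one of detail: where the paper asserts that the form of joins ``follows by transporting across the equivalence'', you carry out that transport explicitly using the near-representable decomposition of Lemma~\ref{lem:13} to check that the pointwise union of subfunctors of $U$ stays pointwise analytic with spectrum the union of spectra --- a worthwhile verification, and your bookkeeping of summands is sound.
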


\begin{proof}
  By the preceding result, the shapely $U$-analytic endofunctors
  correspond under Proposition~\ref{prop:4} to the $\M$-subobjects of
  $(S_U, E_U)$ in $\el_\C /\!\!/_{\!v}\,\, \O(\psh)$. Any such
  subobject has a unique representative of the form
  $(p, 1) \colon (S,E_U.\el p) \rightarrow (S_U, E_U)$ for
  $p \colon S \hookrightarrow S_U$ a subfunctor inclusion. This proves
  the first claim; the stated form of joins in $\shp$ follows by
  transporting across the equivalence
  $\el_\C /\!\!/_{\!v}\,\, \O(\psh) \simeq \anpt(\psh, \psh)$.
\end{proof}

\begin{Ex}
  \label{ex:4}
  Consider the subterminal $U_\PP \in \anpt(\psh[\PP], \psh[\PP])$
  which classifies framed $(I_\PP, I_\PP)$-cellular endofunctors.
  By~Proposition~\ref{prop:23}, the spectrum $S$ of $U_\PP$ has
  $S(\star) = \{u\}$ and $S(n,m)$ the set of well-labelled elements in
  $\L(n,m)$. We will say that a subpresheaf of $S$ is
  \emph{non-degenerate} if it contains $u \in S(\star)$. Clearly, a
  non-degenerate subpresheaf is given by selecting arbitrary subsets
  $\F(n,m)$ of well-labelled elements from each $\L(n,m)$; the
  corresponding shapely endofunctor
  $F \colon \psh[\PP] \rightarrow \psh[\PP]$---which we also call
  non-degenerate---satisfies $FA(\star) = A(\star)$ and
  \begin{equation*}
    FA(n,m) = \textstyle \sum_{X \in \F(n,m)} \psh[\PP](\abs X, A)_{/ \aut_X}\rlap{ .}
  \end{equation*}
  We may express this subsequently by saying that the non-degenerate
  $F$ \emph{contains} the well-labelled polygraphs in each $\F(n,m)$.
  For example, the identity endofunctor of $\psh[\PP]$ contains
  precisely each of the well-labelled polygraphs $\spn{n,m}$ of
  Definition~\ref{def:43}(d).
\end{Ex}

Returning to the general situation, when $U$ is a nice subterminal
object of $\anpt(\psh, \psh)$, the composition monoidal
structure on $\uan$ induces by way of Proposition~\ref{prop:29} 
the following binary operation on $\shp$.

\begin{Defn}
  \label{def:39}
  Let $U \in \anpt(\psh, \psh)$ be nice. For any $F,G \in \shp$, we
  let $F \cdot G \in \shp$ be the pointwise monic image of the unique
  $u \colon FG \rightarrow U$ in $\uan$:
  \begin{equation*}
    \cd[@!@-1.5em]{
      & {F \cdot G} \ar@{<<-}[dl]_-{} \ar@{ >->}[dr]^-{} \\ 
      {FG} \ar[rr]_-{u} & &
      {U}\rlap{ .}
    }
  \end{equation*}
\end{Defn}

The following lemma describes the basic properties of this operation.

\begin{Lemma}
  \label{lem:20}
  Let $U \in \anpt(\psh, \psh)$ be nice. The assignation
  $F, G \mapsto F \cdot G$ defines a monotone map
  $\shp \times \shp \rightarrow \shp$ which satisfies:
  \begin{equation*}
    F \cdot \id \cong F \qquad \id \cdot G \cong G
    \quad\text{and}\quad (F \cdot G) \cdot H
    \leqslant F \cdot (G \cdot H)\rlap{ .}
  \end{equation*}
  Moreover, each $(\thg) \cdot G \colon \shp \rightarrow \shp$
  preserves joins, and if $U$ is finitary, then each $F \cdot (\thg)$
  preserves directed joins.
\end{Lemma}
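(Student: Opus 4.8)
The plan is to reduce all four assertions to a single clean description of the operation. Since $U$ is nice, $\uan$ is closed under composition (Definition~\ref{def:rshapely}), so for shapely $F,G$ the composite $FG$ lies in $\uan$ and admits a unique map $u_{FG}\colon FG \Rightarrow U$ by subterminality; by Definition~\ref{def:39}, $F\cdot G$ is the pointwise-monic part of its (pointwise epi, pointwise mono) factorisation from Proposition~\ref{prop:29}. Writing $\iota_F\colon F\rightarrowtail U$ for the shapeliness inclusion and $w_G\colon UG\Rightarrow U$ for the unique map, uniqueness of maps to $U$ forces $u_{FG}=w_G\circ(\iota_F G)$, so that $F\cdot G$ is exactly the direct image $w_G(FG)$ of the subfunctor $FG\rightarrowtail UG$ along $w_G$, computed pointwise in $\cat{Set}$. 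I would establish this reformulation first, since all four clauses flow from it.

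Monotonicity and the unit laws are then routine. For monotonicity, inclusions $F\leqslant F'$ and $G\leqslant G'$ induce $FG\Rightarrow F'G'$ over $U$, and functoriality of the image through $U$ gives $F\cdot G\leqslant F'\cdot G'$. The unit laws follow because $\id$ is shapely (its unique map to $U$ being pointwise monic) and $F\cdot\id=\im(u_F)=F$, $\id\cdot G=\im(u_G)=G$, as $u_F=\iota_F$ and $u_G=\iota_G$ are already monic, so their images are $F$ and $G$.

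The associativity clause is where the asymmetry of $\cdot$ appears, and I would treat it carefully. Right whiskering the pointwise epi $\epsilon\colon FG\twoheadrightarrow F\cdot G$ by $H$ stays pointwise epi (its value at $X$ is the epi $\epsilon_{HX}$), so the image of $(F\cdot G)H\to U$ coincides with that of $FGH\to U$; hence $(F\cdot G)\cdot H=\im(u_{FGH})$. On the other side, left whiskering the epi $\eta\colon GH\twoheadrightarrow G\cdot H$ by $F$ yields only a map $F\eta\colon FGH\to F(G\cdot H)$, \emph{not} in general epi, since an analytic $F$ need not preserve epimorphisms, through which $u_{FGH}$ factors. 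Thus $u_{FGH}$ factors through $F(G\cdot H)\to U$ and so through its image, giving $(F\cdot G)\cdot H=\im(u_{FGH})\leqslant F\cdot(G\cdot H)$. The failure of $F\eta$ to be epi is exactly why only an inequality is available.

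For the join clauses I would use that joins in $\shp$ are unions of subfunctors (Corollary~\ref{cor:8}) and that these, like precomposition and images, are all computed pointwise. In the left variable, precomposition with $G$ gives $(\bigvee_i F_i)G=\bigvee_i (F_iG)$ as subfunctors of $UG$, and since direct image along $w_G$ preserves arbitrary unions, $(\bigvee_i F_i)\cdot G=\bigvee_i (F_i\cdot G)$. The right variable is the main obstacle and the only place finitariness enters: I would first note that a shapely $F$ is finitary whenever $U$ is, because by Corollary~\ref{cor:8} $F$ is a subfunctor of $U$'s spectrum, so its generic arities are among those of $U$ and hence lie in $\F\C$; being by Lemma~\ref{lem:13} a pointwise coproduct of near-representables on finitely presentable arities, $F$ then preserves filtered colimits. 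For a directed family $\{G_j\}$, the join $\bigvee_j G_j$ is pointwise the directed union $\colim_j G_jX$, so $F(\bigvee_j G_j)=\colim_j (FG_j)$ as a filtered colimit; taking the image through $U$ and using that the image of a filtered colimit is the union of the images (every element factors through some stage) yields $F\cdot(\bigvee_j G_j)=\bigvee_j (F\cdot G_j)$. The restriction to directed joins is essential here, as a merely finitary $F$ need not preserve non-directed joins, which is why $F\cdot(\thg)$ is only claimed to preserve directed ones.
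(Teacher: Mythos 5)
Your proposal is correct and follows essentially the same route as the paper: your identification of $F \cdot G$ with the direct image of the subfunctor $FG \rightarrowtail UG$ along the unique map $UG \Rightarrow U$, and your pointwise-image computations (right whiskering of a pointwise epi stays epi; $F\eta$ need not be epi, whence only an inequality for associativity; finitariness makes the comparison $\Sigma_j FG_j \rightarrow F(\bigvee_j G_j)$ pointwise epi), are exactly the pointwise translation of the paper's hexagon arguments via the (pointwise epi, pointwise mono) orthogonality of Proposition~\ref{prop:29}. The only cosmetic difference is that you prove directly---via Corollary~\ref{cor:8}, the spectrum of $U$, and Lemma~\ref{lem:13}---that every shapely functor is finitary when $U$ is, where the paper simply cites Remark~\ref{rk:2}.
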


\begin{proof}
  Monotonicity and the first two displayed equations are obvious. For
  the third, consider the hexagon left below in $\anpt(\psh, \psh)$;
  the indicated arrows are pointwise epimorphic or pointwise
  monomorphic, whence by orthogonality there is a filler as displayed:
  \begin{equation*}
    \cd[@-0.7em@R-0.3em]{
      & FGH \ar@{->>}[dl]_-{} \ar[dr]^-{} \\ 
      (F \cdot G)H \ar@{->>}[d]_-{} & & 
      F(G \cdot H) \ar@{->>}[d]_-{} \\
      (F \cdot G) \cdot H \ar@{>->}[dr]^-{} \ar@{-->}[rr] & & 
      F \cdot (G \cdot H) \ar@{>->}[dl]^-{} \\
      & U
    }\quad 
    \cd[@-0.8em@R-0.3em]{
      & {\Sigma_i F_iG} \ar@{->>}[dl]_-{} \ar@{->>}[dr]^-{} \\ 
      (\bigvee_i F_i)G \ar@{->>}[d]_-{} & & 
      \Sigma_i(F_i \cdot G) \ar@{->>}[d]_-{} \\
      (\bigvee_i F_i) \cdot G \ar@{>->}[dr]^-{} \ar@{-->}[rr]^{\cong} & & 
      \bigvee_i(F_i \cdot G)\rlap{ .} \ar@{>->}[dl]^-{} \\
      & U
    }
  \end{equation*}

  Next we show that $(\thg) \cdot G$ preserves joins of shapely
  functors. Let $\bigvee_i F_i$ be any such join; since it is computed
  as a union of subfunctors of $U$, the induced transformation
  $\Sigma_i F_i \rightarrow \bigvee_i F_i$ in $\cat{CAT}(\psh, \psh)$
  is epimorphic, whence also its precomposition
  $\Sigma_i F_i G \rightarrow (\bigvee_i F_i)G$. Thus in the hexagon
  right above, each edge is pointwise epi or mono as indicated, so that
  by orthogonality we induce an isomorphism
  $(\bigvee_i F_i) \cdot G \cong \bigvee_i (F_i \cdot G)$ as indicated.

  Suppose now that $U$ is finitary; by Remark~\ref{rk:2}, any
  $F \in \shp$ is then also finitary. Now any directed join
  $\bigvee_i G_i$ in $\shp$, being a union of subfunctors of $U$, may
  be computed as the colimit in $\cat{CAT}(\psh, \psh)$ of the filtered
  diagram of subfunctor inclusions. Because any $F \in \shp$ is
  finitary, it will preserve this colimit, so that the induced map
  $\Sigma_i FG_i \rightarrow F(\bigvee_i G_i)$ in
  $\cat{CAT}(\psh, \psh)$ is pointwise epimorphic. The argument of the
  previous paragraph now carries over \emph{mutatis mutandis} to show
  that $F \cdot (\bigvee_i G_i) \cong \bigvee_i (F \cdot G_i)$ as
  required.
\end{proof}

\begin{Prop}
  \label{prop:28}
  Let $U \in \anpt(\psh, \psh)$ be finitary and nice. The
  forgetful $\shpmon \rightarrow \shp$ is a reflective inclusion of
  preorders, whose image comprises those $F \in \shp$ with
  $\id \leqslant F$ and $F \cdot F \leqslant F$. The left adjoint,
  giving the free shapely monad on $F \in \shp$ is defined by:
  \begin{equation*}
    F \qquad \mapsto \qquad \bar F = \textstyle\bigvee_{n \in \mathbb N} (\id \vee F)^{\cdot n}
  \end{equation*}
  where here $F^{\cdot 0} = \id$ and $F^{\cdot n+1} = F \cdot F^{\cdot n}$.
\end{Prop}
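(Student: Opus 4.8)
The plan is to verify the three claimed assertions in turn: (1) that the image of $\shpmon \hookrightarrow \shp$ consists precisely of the $F$ with $\id \leqslant F$ and $F \cdot F \leqslant F$; (2) that $\bar F$ is well-defined and is the least such monad above $F$; and (3) that this exhibits the inclusion as reflective. Since $\shp$ is a preorder (Corollary~\ref{cor:8}), a "monoid" in the relevant sense is just an $F$ with a unit $\id \to F$ and multiplication $F \cdot F \to F$; because there is at most one map between any two objects, the monoid axioms are automatic, and asking for these maps is exactly asking that $\id \leqslant F$ and $F \cdot F \leqslant F$. So the first claim is essentially a matter of unwinding definitions, using that $F \cdot \id \cong F \cong \id \cdot F$ from Lemma~\ref{lem:20} to see that a monoid structure is the same data as these two inequalities. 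The $U$-analytic monads are then exactly the shapely $F$ lying in this sub-preorder.

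\textbf{Construction of $\bar F$.} First I would observe that $\id \vee F \in \shp$ exists by the completeness of $\shp$ (Corollary~\ref{cor:8}), and satisfies $\id \leqslant \id \vee F$ and $F \leqslant \id \vee F$. Writing $G = \id \vee F$, I form the chain of powers $G^{\cdot n}$ under the (non-associative, but unital) operation of Definition~\ref{def:39}. The key monotonicity fact I need is that the sequence $G^{\cdot n}$ is \emph{increasing}: since $\id \leqslant G$ and $(\thg) \cdot G^{\cdot n}$ preserves the order (and using $\id \cdot G^{\cdot n} \cong G^{\cdot n}$), one gets $G^{\cdot n} \cong \id \cdot G^{\cdot n} \leqslant G \cdot G^{\cdot n} = G^{\cdot n+1}$. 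Hence $\bar F = \bigvee_n G^{\cdot n}$ is a \emph{directed} join. I then check $\id \leqslant \bar F$ (as $\id = G^{\cdot 0}$) and $\bar F \cdot \bar F \leqslant \bar F$, so that $\bar F$ is a shapely monad. For the latter, I would expand using the join-preservation properties from Lemma~\ref{lem:20}:
\begin{equation*}
  \bar F \cdot \bar F = \textstyle\bigvee_n G^{\cdot n} \cdot \bigvee_m G^{\cdot m}
  = \bigvee_{n,m} G^{\cdot n} \cdot G^{\cdot m},
\end{equation*}
where the first equality uses that $(\thg) \cdot \bar F$ preserves all joins and $\bar F \cdot (\thg)$ preserves the \emph{directed} join $\bigvee_m G^{\cdot m}$ (here finitariness of $U$, hence of each $G^{\cdot n}$, is essential). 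The remaining point is that each $G^{\cdot n} \cdot G^{\cdot m} \leqslant G^{\cdot (n+m)} \leqslant \bar F$; this follows by induction on $n$ from the lax associativity $(F \cdot G) \cdot H \leqslant F \cdot (G \cdot H)$ of Lemma~\ref{lem:20}, together with unitality and monotonicity.

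\textbf{Minimality/reflectivity.} For the universal property, I must show $\bar F$ is the \emph{least} shapely monad containing $F$. So suppose $M \in \shpmon$ with $F \leqslant M$; then also $\id \leqslant M$, so $G = \id \vee F \leqslant M$. By monotonicity and $M \cdot M \leqslant M$, an induction gives $G^{\cdot n} \leqslant M^{\cdot n} \leqslant M$ for every $n$ (the base case is $G^{\cdot 0} = \id \leqslant M$; the step uses $G^{\cdot n+1} = G \cdot G^{\cdot n} \leqslant M \cdot M \leqslant M$). Taking the join yields $\bar F \leqslant M$. Since $\shp$ is a preorder, this least-upper-bound property is exactly the statement that the forgetful $\shpmon \to \shp$ has left adjoint $F \mapsto \bar F$, completing the proof that the inclusion is reflective.

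\textbf{The main obstacle} I anticipate is the associativity step $\bar F \cdot \bar F \leqslant \bar F$, because the operation $\cdot$ is only \emph{laxly} associative and \emph{laxly} right-continuous. One must be careful to apply $\bar F \cdot (\thg)$ only to the directed join (invoking finitariness) while freely using full join-preservation in the first argument, and then to manage the bookkeeping $G^{\cdot n} \cdot G^{\cdot m} \leqslant G^{\cdot(n+m)}$ through the lax associativity inequality, whose direction must be tracked carefully so that the inequalities compose in the correct sense. Everything else is a routine consequence of Lemma~\ref{lem:20} and the preorder structure of $\shp$ from Corollary~\ref{cor:8}.
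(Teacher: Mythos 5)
Your proof is correct and takes essentially the same route as the paper's: the image characterisation is dispatched as routine, $\bar F \cdot \bar F \leqslant \bar F$ is obtained by expanding $\bar F \cdot \bar F \cong \bigvee_{n,m} (\id \vee F)^{\cdot n} \cdot (\id \vee F)^{\cdot m}$ via join-preservation in the first variable and directed cocontinuity (from finitariness) in the second, then bounding $(\id\vee F)^{\cdot n} \cdot (\id\vee F)^{\cdot m} \leqslant (\id\vee F)^{\cdot(n+m)}$ by iterated lax associativity, and minimality follows by the same induction. Your explicit verification that the chain $(\id \vee F)^{\cdot n}$ is increasing---hence that the join is directed---is a point the paper only notes parenthetically, and is a welcome addition.
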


\begin{proof}
  The only non-trivial point is the verification that $\bar F$ is
  indeed a reflection of $F$ into $\shpmon$. First, we have
  $\id = (\id \vee F)^{\cdot 0} \leqslant \bar F$ and
  $\bar F \cdot \bar F \leqslant \bar F$, since
  \begin{equation*}
    \bar F \cdot \bar F \cong \bigvee_n (\id \vee F)^{\cdot n} \cdot \bar F \cong \bigvee_{n,m} (\id \vee F)^{\cdot n} \cdot (\id
    \vee F)^{\cdot m} \leqslant \bigvee_{n,m} (\id \vee F)^{\cdot
      (n+m)} \leqslant \bar F
  \end{equation*}
  where the first two equalities use cocontinuity of
  $(\thg) \cdot \bar F$ and directed cocontinuity of each
  $(\id \vee F)^{\cdot n} \cdot (\thg)$ (noting that the join defining
  $\bar F$ is indeed directed) and the third inequality uses repeatedly
  $(F \cdot G) \cdot H \leqslant F \cdot (G \cdot H)$. So
  $\bar F \in \shpmon$; moreover, if $G \in \shpmon$ satisfies
  $F \leqslant G$, then since $\id \leqslant G$ we have
  $(\id \vee F) \leqslant G$; furthermore, if
  $(\id \vee F)^{\cdot n} \leq G$, then
  $$(\id \vee F)^{\cdot (n+1)} = (\id \vee F) \cdot (\id \vee F)^{\cdot n} \leq G \cdot G \leq G$$
  so that by induction on $n$ we have $(\id \vee F)^{\cdot n} \leq G$ for
  all $n$ and so, finally, that $\bar F \leqslant G$. This proves that
  $\bar F$ is a reflection of $F$ into $\shpmon$ as desired.
\end{proof}

\subsection{Polycategories, properads and PROPs}
\label{sec:first-appl-polyc}

We are now ready to apply the preceding theory to our motivating
examples. We concentrate on exhibiting the ``free polycategory''
monads on $\psh[\PP]$ and $\psh[\PS]$ as free shapely monads, but also
indicate how this extends to the cases of properads and
\textsc{prop}s.

We begin in the non-symmetric case $\psh[\PP]$ by describing a
non-degenerate shapely endofunctor $\Sigma_\PP$ which encodes the basic
polycategorical wiring operations; for this, it suffices by
Example~\ref{ex:4} to describe which well-labelled polygraphs
$\Sigma_\PP$ will contain. We make use of the operations on polygraphs
of Definition~\ref{def:43} above. The elements $\id \in \L(1,1)$ and
$\spn{n,m} \in \L(n,m)$ in parts (a) and (d) of this definition are
well-labelled, and that the operations $(\thg) \ifcomp{j}{i} (\thg)$
and $\psi \cdot (\thg) \cdot \varphi$ of parts (b) and (c) preserve
well-labelledness; so it makes sense to give:

\begin{Defn}\label{def:41}
  Let $\Sigma_\PP$ be the non-degenerate shapely $U_\PP$-analytic
  endofunctor of $\psh[\PP]$ which contains the following
  well-labelled polygraphs:
  \begin{enumerate}[(i)]
  \item  $\id \in \L(1,1)$;
  \item $\psi \cdot \spn{n,m} \cdot \varphi$ for each $n,m$ and
    permutations $\varphi \in \aut_n$ and $\psi \in \aut_m$;
  \item  $\spn{p,q} \ifcomp{j}{i} \spn{n,m} \in \L(n+p-1,m+q-1)$ for
    all $n,m,p,q$ and all suitable indices $i,j$.
  \end{enumerate}
\end{Defn}

\begin{Thm}\label{thm:polycats}
  The free shapely monad on the shapely $U_\PP$-analytic endofunctor
  $\Sigma_\PP$ is the ``free polycategory'' monad on $\psh[\PP]$;
  similarly, the monads for properads and \textsc{prop}s are free
  shapely monads on $\psh[\PP]$.
\end{Thm}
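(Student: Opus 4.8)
The plan is to show that the free shapely monad $\overline{\Sigma_\PP}$ coincides with the free polycategory monad $T_\PP$ described explicitly in Proposition~\ref{prop:multicat}. By Proposition~\ref{prop:28}, the free shapely monad is computed as $\overline{\Sigma_\PP} = \bigvee_{n} (\id \vee \Sigma_\PP)^{\cdot n}$, a join of subfunctors of $U_\PP$, so by Corollary~\ref{cor:8} everything reduces to identifying which well-labelled polygraphs appear in the spectrum of $\overline{\Sigma_\PP}$. First I would observe that, by Lemma~\ref{lem:21}, the free polycategory monad $T_\PP$ is itself a framed $(I_\PP,I_\PP)$-cellular shapely endofunctor whose spectrum at $(n,m)$ is exactly the set $\T(n,m)$ of labelled polycategorical trees; indeed formula~\eqref{eq:1} exhibits $(T_\PP X)(n,m)$ as a coproduct of representables $\psh[\PP](\abs T, X)$ indexed by $T \in \T(n,m)$, with no nontrivial automorphisms since a polycategorical tree has no label-preserving automorphisms. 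So $T_\PP$ corresponds to the subfunctor of $S_{U_\PP}$ selecting precisely the trees $\T(n,m)$.

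Next I would establish the two inclusions of subfunctors of $S_{U_\PP}$. For $\overline{\Sigma_\PP} \leqslant T_\PP$, it suffices to check that $T_\PP$ is a shapely monad containing $\Sigma_\PP$, since $\overline{\Sigma_\PP}$ is the least such by Proposition~\ref{prop:28}. That $\Sigma_\PP \leqslant T_\PP$ is immediate: the generating polygraphs of Definition~\ref{def:41}(i)--(iii) are all labelled polycategorical trees, being $\id$, the symmetric relabellings of the coraras $\spn{n,m}$, and single composites $\spn{p,q} \ifcomp{j}{i} \spn{n,m}$, each of which is visibly acyclic and connected. That $T_\PP$ is a monad (hence $\id \leqslant T_\PP$ and $T_\PP \cdot T_\PP \leqslant T_\PP$) is exactly its construction as a monad via Proposition~\ref{prop:multicat}. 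For the reverse inclusion $T_\PP \leqslant \overline{\Sigma_\PP}$, I would show that every labelled polycategorical tree lies in the spectrum of $\overline{\Sigma_\PP}$, by induction on the number of edges, reusing the decomposition in the proof of Lemma~\ref{lem:21}: any tree with at least one edge is a composite $T' \ifcomp{j}{i} T''$ of smaller trees, and since composition lands in $\Sigma_\PP \cdot \overline{\Sigma_\PP} \leqslant \overline{\Sigma_\PP} \cdot \overline{\Sigma_\PP} \leqslant \overline{\Sigma_\PP}$, the tree appears. The base cases $\id$ and the $\spn{n,m}$ (up to symmetry) are generators.

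The main obstacle will be matching the combinatorial join-of-composites description of $\overline{\Sigma_\PP}$ against the pushout-of-representables description of $T_\PP$ at the level of \emph{operations}, not merely spectra. The operation $F \cdot G$ of Definition~\ref{def:39} takes the pointwise monic image of $FG \to U_\PP$, and I must verify that this image-formation agrees on the nose with the composition of wiring operations encoded by pushouts~\eqref{eq:37}: that is, that the generic operations of $\Sigma_\PP \cdot \overline{\Sigma_\PP}$ correspond, via Proposition~\ref{prop:16}, to the composites $\Sigma_\PP t.s$ whose input arities are the expected pushouts of tree-shaped polygraphs. The care needed is to confirm that such pushouts remain well-labelled polycategorical trees and that passing to the monic image does not collapse distinct trees---which holds precisely because distinct polycategorical trees are non-isomorphic and carry trivial automorphism groups, so the $(\text{epi},\text{mono})$ factorisation of Proposition~\ref{prop:29} introduces no identifications. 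Once this identification of $\overline{\Sigma_\PP}$ with $T_\PP$ is in hand, the properad and \textsc{prop} cases follow \emph{mutatis mutandis}: one defines the analogous generating endofunctors (adding the generators~\eqref{eq:38} for properads, and additionally $\spn{0,0} \ifcomp{\emptyset}{\emptyset} \spn{0,0}$ and the wider composites for \textsc{prop}s), and the same inductive argument---now using the characterisations of properadic graphs and graphs for \textsc{prop}s sketched after Lemma~\ref{lem:21}---identifies the free shapely monad with the corresponding free-structure monad, the quotients by $\aut_T$ being exactly what the analytic (rather than merely familial) setting supplies.
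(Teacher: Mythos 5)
Your overall strategy is sound and genuinely differently organised from the paper's. The paper computes, for an \emph{arbitrary} non-degenerate shapely $F$, exactly which well-labelled polygraphs $\Sigma_\PP \cdot F$ contains (using Proposition~\ref{prop:16} to present generics of the composite as composites of generics, and the pushout argument of Proposition~\ref{prop:17} to identify the input arity of the $F$-generic at the composite stage $\yoneda_{(p,q)} \ifcomp{j}{i} \yoneda_{(n,m)}$ as the pushout~\eqref{eq:37}); the join $\bigvee_n \Sigma_\PP^{\cdot n}$ is then \emph{by definition} the closure of the corollas under the operations of Definition~\ref{def:43}(a)--(c), i.e.\ $\T(n,m)$, and one merely compares formulas with Proposition~\ref{prop:multicat}. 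Your sandwich argument---minimality of $\overline{\Sigma_\PP}$ from Proposition~\ref{prop:28} for one inclusion, induction on edges via the decomposition in the proof of Lemma~\ref{lem:21} for the other---has a cleaner high-level shape, but both of your inclusions still funnel through the same central computation that you defer as ``the main obstacle'': that the spectrum of a shapely composite contains $Y \ifcomp{j}{i} X$, with the correct labellings, whenever the factors contain $X$ and $Y$.

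The one genuine gap is in your first inclusion. ``That $T_\PP$ is a monad \dots is exactly its construction via Proposition~\ref{prop:multicat}'' conflates being a monad in $\cat{CAT}(\psh[\PP], \psh[\PP])$ with satisfying $\id \leqslant T_\PP$ and $T_\PP \cdot T_\PP \leqslant T_\PP$ in $\shp$, which is what Proposition~\ref{prop:28} actually requires. For the latter you must check: (i) that $T_\PP$ is shapely at all, i.e.\ framed $(I_\PP, I_\PP)$-cellular with pointwise monic comparison to $U_\PP$---as in the proof of Proposition~\ref{prop:23} this reduces to every $T \in \T(n,m)$ being \emph{well-labelled}, which follows not from Lemma~\ref{lem:21} (a purely combinatorial statement) but from the observation preceding Definition~\ref{def:41} that the operations (a)--(c) preserve well-labelledness; and (ii) that the multiplication $\mu \colon T_\PP T_\PP \Rightarrow T_\PP$ is \emph{pointwise analytic}: only then does $\mu$ followed by the mono $T_\PP \rightarrowtail U_\PP$ agree with the unique map $T_\PP T_\PP \rightarrow U_\PP$ in $\uan$, forcing the image $T_\PP \cdot T_\PP$ to land inside $T_\PP$. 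Point (ii) says that flattening a tree of trees is a tree compatibly with generic operations, and proving it is essentially the same work as the composite computation you postponed---so your route needs that computation twice, once per inclusion, whereas the paper's bottom-up calculation does it once and never verifies directly that the free polycategory monad is shapely. One tacit point worth making explicit in your favour: since maps in $\uan$ into a shapely object are unique (compose with the pointwise mono to $U_\PP$ and use subterminality), a monoid structure on a shapely endofunctor is a property rather than structure, so identifying underlying endofunctors does suffice to identify the monads. With (i) and (ii) supplied your proof goes through, and your remarks on the $(\text{epi}, \text{mono})$ factorisation introducing no identifications (distinct trees being non-isomorphic with $\aut_T = 1$) and on the properad and \textsc{prop} cases are correct.
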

\begin{proof}
  Since $\Sigma_\PP$ contains each of the shapes $\spn{n, m}$, we have
  by Example~\ref{ex:4} that $\id \subset \Sigma_\PP$; so by the
  formula of Proposition~\ref{prop:28}, the free shapely monad on
  $\Sigma_\PP$ is given by $\bigvee_n (\Sigma_\PP)^{\cdot n}$. To
  compute this, we first calculate for any non-degenerate shapely
  $U_\PP$-analytic endofunctor $F$ the composite $\Sigma_\PP \cdot F$.
  Since $F$ is non-degenerate, it is by Example~\ref{ex:4} specified
  by families of well-labelled polygraphs $\F(n,m) \subset \L(n,m)$;
  $\Sigma_\PP \cdot F$ is then also non-degenerate, and so it will
  suffice to determine the well-labelled polygraphs which it contains.
  These polygraphs correspond to generic operations of
  $\Sigma_\PP \cdot F$ at stage $\yoneda_{(n,m)}$, and by
  Definition~\ref{def:39} and Proposition~\ref{prop:30}, such
  operations are precisely the images of the generic operations of
  $\Sigma_\PP F$ at stage $\yoneda_{(n,m)}$ under the unique pointwise
  analytic $\Sigma_\PP F \rightarrow U_\PP$; so it will suffice to
  compute these.

  Now, by Proposition~\ref{prop:16}, any generic operation
  $v \colon \yoneda_{(n,m)} \rightarrow \Sigma_\PP FA$ is the
  composite of a $\Sigma_\PP$-generic operation
  $s \colon \yoneda_{(n,m)} \rightarrow \Sigma_\PP B$ and an
  $F$-generic operation $t \colon B \rightarrow FA$. The first
  possibility is that
  \begin{equation*}
    v \qquad = \qquad \yoneda_{(1,1)} \xrightarrow{s} \Sigma_\PP(\yoneda_\star) \xrightarrow{\Sigma_\PP(t)}
    \Sigma_\PP F(\yoneda_\star)
  \end{equation*}
  where $s$ corresponds to $\id \in \L(1,1)$. This $v$ is sent by
  $\Sigma_\PP F \rightarrow U_\PP$ to a well-labelled polygraph
  $X \in \L(1,1)$ with $\abs X = \yoneda_\star$, which clearly forces
  $X = \id$. The next possibility is that
  \begin{equation*}
    v \qquad = \qquad \yoneda_{(n,m)} \xrightarrow{s} \Sigma_\PP(\yoneda_{(n,m)}) \xrightarrow{\Sigma_\PP(t)}
    \Sigma_\PP F(\abs X)
  \end{equation*}
  where $s$ corresponds to
  $\psi \cdot \spn{n,m} \cdot \varphi \in \L(n,m)$, and $t$
  corresponds to some well-labelled $X \in \F(n,m)$. The composite $v$
  is sent by $\Sigma_\PP F \rightarrow U_\PP$ to a well-labelled
  $Y \in \L(n,m)$ with underlying polygraph $\abs X$; to calculate the
  leaf labellings $\ell^Y_1, \dots, \ell^Y_n$, we apply~\eqref{eq:12}
  and Remark~\ref{rk:1}, which tell us that they arise by taking
  generic covers as to the left in:
  \begin{equation*}
    \cd[@C+2em]{
      {\yoneda_\star} \ar[r]^-{\yoneda_{\sigma_i}} \ar[d]_{\widetilde{v.\yoneda_{\sigma_i}}} &
      {\yoneda_{(n,m)}} \ar[d]^{v} \\
      {\Sigma_\PP F(\yoneda_\star)} \ar[r]^-{\Sigma_\PP F(\ell^Y_i)} &
      {\Sigma_\PP F(\abs X)}
    } \qquad 
    \cd[@C+2em]{
      {\yoneda_\star} \ar[r]^-{\yoneda_{\sigma_i}} \ar[d]_{\widetilde{s.\yoneda_{\sigma_i}}} &
      {\yoneda_\star} \ar[d]^{s} \\
      {\Sigma_\PP(\yoneda_{\star})}
      \ar[r]^-{\Sigma_\PP(\yoneda_{\sigma_{\varphi(i)}})} 
      \ar[d]_{\Sigma_\PP(\widetilde{t.\yoneda_{\sigma_{\varphi(i)}}})} &
      {\Sigma_\PP(\yoneda_{(n,m)})} \ar[d]^-{\Sigma_\PP(t)}\\
      {\Sigma_\PP F(\yoneda_\star)} \ar[r]^-{\Sigma_\PP F(\ell^X_{\varphi(i)})} &
      {\Sigma_\PP F(\abs X)}\rlap{ .} &
    }
  \end{equation*}
  But from the given forms of $s$ and $t$, we have generic covers as
  to the right, and so must have that
  $\ell^Y_i = \ell^X_{\varphi(i)}$. The same argument shows that
  $r^Y_j = r^X_{\psi^{-1}(j)}$, and so we conclude that in fact
  $Y = \psi \cdot X \cdot \varphi$. The final possibility is that
  \begin{equation*}
    v = \yoneda_{(n+p-1,m+q-1)} \xrightarrow{s} \Sigma_\PP(\yoneda_{(p,q)} \ifcomp{j}{i} \yoneda_{(n,m)}) \xrightarrow{\Sigma_\PP(t)}
    \Sigma_\PP FA
  \end{equation*}
  where $s$ corresponds to
  $\spn{p,q} \ifcomp{j}{i} \spn{n,m} \in \L(n+p-1,m+q-1)$. As for the
  $F$-generic $t$, note that the map
  $\yoneda_{\sigma_j} \colon \yoneda_\star \rightarrow
  \yoneda_{(p,q)}$ is a relative finite $I_\mathsf{P}$-complex; whence
  by virtue of the pushout~\eqref{eq:36} and
  Proposition~\ref{prop:17}, $t$ must arise from a pointwise pushout
  in $\psh[\PP] \downarrow F$ of the form:
  \begin{equation*}
    \cd[@!@-5.1em@R-1.1em]{ 
      & \yoneda_\star \ar[dl]_{\yoneda_{\sigma_j}} \ar'[d][dd]_(0.35){t_0} \ar[rr]^{\yoneda_{\tau_i}} && 
      \yoneda_{(n,m)} \ar[dl]^-{} \ar[dd]^{t_1} \\ 
      \yoneda_{(p,q)} \ar[dd]_{t_2} \ar[rr]^(0.4){} && 
      \yoneda_{(p,q)} \ifcomp{j}{i} \yoneda_{(n,m)} \ar[dd]^(0.3){t} \\
      & F\yoneda_\star \ar'[r]^(0.6){Fr^X_i}[rr] \ar[dl]_{F\ell^Y_j\!} && 
      F\abs{X}\ar[dl]^{} \\
      F\abs{Y} \ar[rr]_{} & & 
      FA }
  \end{equation*}
   where $t_0$ is the unique $F$-generic operation at stage
  $\yoneda_\star$, and $t_1$ and $t_2$ are $F$-generic operations
  corresponding to well-labelled polygraphs $X \in \F(n,m)$ and
  $Y \in \F(p,q)$. Since the bottom face is a pushout, we conclude
  that the generic $v$ must correspond to a well-labelled polygraph
  $Z \in \L(n+p-1, m+q-1)$ with $\abs Z = \abs{Y \ifcomp{j}{i} X}$;
  now a similar calculation to before shows that the labellings of $Z$
  are such that, in fact, we have $Z = Y \ifcomp{j}{i} X$.

  In sum, we have now shown that, for any non-trivial shapely
  $U_\PP$-analytic endofunctor $F$ containing the well-labelled
  polygraphs $\F(n,m)$, the well-labelled polygraphs contained in the
  shapely composite $\Sigma_\PP \cdot F$ are given by:
  \begin{enumerate}[(i)]
  \item $\id \in \L(1,1)$;
  \item $\psi \cdot X \cdot \varphi \in \L(n,m)$ for all
    $X \in \F(n,m)$, $\varphi \in \aut_n$ and $\psi \in \aut_m$;
  \item $Y \ifcomp{j}{i} X \in \L(n+p-1, m+q-1)$ for all
    $X \in \F(n,m)$, $Y \in \F(p,q)$ and suitable indices $i,j$.
  \end{enumerate}

  Consequently, the well-labelled polygraphs contained in the free
  shapely monad $\bigvee_n (\Sigma_\PP)^{\cdot n}$ are those obtained
  by closing the $\spn{n,m}$'s under the operations (a)--(c) of
  Definition~\ref{def:43}, and by definition, these are precisely the
  finite labelled polycategorical trees $\T(n,m)$. It follows from
  Example~\ref{ex:4} that the free shapely monad $T$ on $\Sigma_\PP$
  is given by $TX(\star) = X(\star)$ and
  \begin{equation*}
    TX(n,m) = \textstyle \sum_{T \in \T(n,m)} \psh[\PP](\abs T, X)_{/
      \aut_T} = \textstyle \sum_{T \in \T(n,m)} \psh[\PP](\abs T, X)
  \end{equation*}
  where the second equality follows from the observation that a
  labelled polycategorical tree has \emph{no} non-trivial
  label-preserving automorphisms. Comparing this with the formula of
  Proposition~\ref{prop:multicat} gives the result.

  Adapting this result to the cases of properads and \textsc{prop}s is
  almost trivial in our framework; all we need to do is to replace the
  closure operations of Definition~\ref{def:43} which defined the
  class of polycategorical trees with the corresponding closure
  operations defining the properadic graphs or the graphs for
  \textsc{prop}s. Thus, for example, the monad for properads on the
  category $\psh[\PP]$ arises as the free shapely monad on the shapely
  endofunctor specified by the well-labelled polygraphs
  \begin{equation*}
    \id\text, \qquad \psi\cdot\spn{n,m}\cdot\varphi \quad \text{and} \quad \spn{p,q}
    \ifcomp{I}{J} \spn{n,m}\rlap{ .}
  \end{equation*}
  The case of \textsc{prop}s proceeds similarly.
\end{proof}

The argument just given for the free polycategory monad on $\psh[\PP]$
applies equally well to the free polycategory monad on $\psh[\PS]$. By
adapting Example~\ref{ex:4}, we see that the non-trivial shapely
endofunctors of $\psh[\PS]$ are specified by giving subsets
$\F_s(n,m) \subset \L_s(n,m)$ of well-labelled finite symmetric trees;
so we can define a shapely endofunctor $\Sigma_{\PS}$ by requiring it
to contain $\id \in \L_s(1,1)$ and each
$\spn{p,q} \ifcomp{j}{i} \spn{n,m} \in \L_s(n+p-1, m+q-1)$. Now
following the precise same argument as in Theorem~\ref{thm:polycats}
gives:

\begin{Thm}\label{thm:2}
  The free shapely monad on the shapely $U_{\PS}$-analytic endofunctor
  $\Sigma_{\PS}$ is the ``free polycategory'' monad on $\psh[\PS]$;
  similarly, the monads for properads and \textsc{prop}s are free
  shapely monads on $\psh[\PS]$.
\end{Thm}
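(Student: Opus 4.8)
The plan is to observe that Theorem~\ref{thm:2} is the exact symmetric analogue of Theorem~\ref{thm:polycats}, and to reduce it to the latter by re-running the same argument over $\psh[\PS]$ rather than $\psh[\PP]$. All the infrastructure carries over verbatim: by Proposition~\ref{prop:31} the monoidal category of framed $(I_{\PS}, I_{\PS})$-cellular endofunctors has a terminal object $U_{\PS}$ whose spectrum consists of well-labelled elements of $\L_s(n,m)$; by the adaptation of Example~\ref{ex:4} to $\psh[\PS]$, non-degenerate shapely $U_{\PS}$-analytic endofunctors correspond to families of subsets $\F_s(n,m) \subset \L_s(n,m)$ of well-labelled finite \emph{symmetric} trees. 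Since $\Sigma_{\PS}$ is defined to contain $\id \in \L_s(1,1)$, we have $\id \subset \Sigma_{\PS}$, so Proposition~\ref{prop:28} again gives the free shapely monad as $\bigvee_n (\Sigma_{\PS})^{\cdot n}$.

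\textbf{First I would} compute the shapely composite $\Sigma_{\PS} \cdot F$ for an arbitrary non-degenerate shapely endofunctor $F$ specified by families $\F_s(n,m)$, exactly as in the proof of Theorem~\ref{thm:polycats}. By Proposition~\ref{prop:16}, the generic operations of $\Sigma_{\PS} F$ at stage $\yoneda_{(n,m)}$ are composites $\Sigma_{\PS}(t) \circ s$ of a $\Sigma_{\PS}$-generic $s$ and an $F$-generic $t$; one then enumerates the three cases according to whether $s$ corresponds to $\id$, or to an instance of $\spn{p,q} \ifcomp{j}{i} \spn{n,m}$ (I note that, unlike the non-symmetric case, the symmetric $\Sigma_{\PS}$ need only list $\id$ and the composites $\spn{p,q} \ifcomp{j}{i} \spn{n,m}$, since the exchange arrows $\xi_{\varphi,\psi}$ are now internal to $\psh[\PS]$ and the permuted shapes $\psi \cdot \spn{n,m} \cdot \varphi$ are already identified with $\spn{n,m}$ in $\L_s$). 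Tracing the leaf and root labellings through the pushout~\eqref{eq:36} and Remark~\ref{rk:1}, exactly as in the non-symmetric proof, shows that the well-labelled symmetric polygraphs contained in $\Sigma_{\PS} \cdot F$ are $\id$ together with all composites $Y \ifcomp{j}{i} X$ for $X \in \F_s(n,m)$ and $Y \in \F_s(p,q)$. Iterating, the free shapely monad contains precisely the closure of the $\spn{n,m}$'s under operations (a)--(c) of Definition~\ref{def:43} computed in $\psh[\PS]$, which is by definition $\T_s(n,m)$.

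\textbf{I would then conclude} by comparing the resulting formula with that of Proposition~\ref{prop:22}. By the adaptation of Example~\ref{ex:4} to $\psh[\PS]$, the free shapely monad $T$ on $\Sigma_{\PS}$ satisfies $TX(\star) = X(\star)$ and
\begin{equation*}
  TX(n,m) = \textstyle\sum_{T \in \T_s(n,m)} \psh[\PS](\abs T, X)_{/ \aut_T}\rlap{ ,}
\end{equation*}
which is exactly the expression for $F_{\PS} X(v_1, \dots, v_n; w_1, \dots, w_m)$ in Proposition~\ref{prop:22}. The cases of properads and \textsc{prop}s follow by the same \emph{mutatis mutandis} substitution used at the end of the proof of Theorem~\ref{thm:polycats}: one replaces the closure operation~(b) of Definition~\ref{def:43} by the properadic composition of Definition~\ref{def:45} (respectively by the \textsc{prop} operations of Definition~\ref{def:45} including $\icomp{\emptyset}{\emptyset}$), obtaining the symmetric properadic graphs or graphs for \textsc{prop}s, and the corresponding free-monad formula.

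\textbf{The main obstacle} I anticipate is \emph{not} in the combinatorics—which is genuinely a line-by-line transcription of Theorem~\ref{thm:polycats}—but in a point that is automatic in the non-symmetric case yet requires care here: the quotient by $\aut_T$ must survive the computation. In $\psh[\PP]$ the final formula simplified because polycategorical trees carry no non-trivial label-preserving automorphisms, so the $/\aut_T$ disappeared; but over $\psh[\PS]$, by the discussion preceding Proposition~\ref{prop:22}, the symmetric trees $T \in \T_s(n,m)$ \emph{may} admit non-trivial $\aut_T$ (arising from permuting unlabelled isomorphic connected components), and these automorphism groups are precisely what the exponent $E(X) = (\abs X, \aut_X)$ of Proposition~\ref{prop:31} records. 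The key thing to verify is therefore that the shapely composition $\Sigma_{\PS} \cdot (\thg)$, which forms pointwise monic images inside $U_{\PS}$ via Definition~\ref{def:39} and Proposition~\ref{prop:30}, correctly propagates these automorphism groups—i.e.\ that the generic cover computed through the pushout assigns to $Y \ifcomp{j}{i} X$ exactly the group $\aut_{Y \ifcomp{j}{i} X}$ of label-preserving automorphisms of its underlying symmetric polygraph. This is guaranteed by the construction of the exponent in Remark~\ref{rk:1} together with the fact that $E_F(t) = (A, \aut_t)$ reads off the full stabiliser, so the resulting monad formula retains the $/\aut_T$ and matches Proposition~\ref{prop:22} on the nose.
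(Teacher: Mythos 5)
Your overall route is the paper's own: rerun the argument of Theorem~\ref{thm:polycats} over $\psh[\PS]$ and match the outcome against Proposition~\ref{prop:22}, and your central observation---that the quotients by $\aut_T$ now survive in the final formula because labelled symmetric trees, unlike non-symmetric ones, may have non-trivial label-preserving automorphisms---is exactly the point the comparison with Proposition~\ref{prop:22} turns on. However, there is one genuine error at the start. You claim that ``since $\Sigma_{\PS}$ is defined to contain $\id \in \L_s(1,1)$, we have $\id \subset \Sigma_{\PS}$'', and hence that the free shapely monad is $\bigvee_n (\Sigma_{\PS})^{\cdot n}$. This conflates the bare-wire \emph{shape} $\id \in \L_s(1,1)$ (the labelled polygraph $\yoneda_\star$) with the identity \emph{endofunctor}: by Example~\ref{ex:4}, $\id \subset F$ holds precisely when $F$ contains all the one-box shapes $\spn{n,m}$, and this is how the corresponding step is justified for $\Sigma_\PP$ in Theorem~\ref{thm:polycats}. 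The symmetric $\Sigma_{\PS}$, as you yourself describe it, contains only the wire and the two-box composites $\spn{p,q} \ifcomp{j}{i} \spn{n,m}$, so $\id \not\subset \Sigma_{\PS}$.

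This is not a cosmetic slip: the pure powers really do compute the wrong monad. By your own case analysis, the shapes of $\Sigma_{\PS} \cdot F$ are the wire together with the composites $Y \ifcomp{j}{i} X$ for $X, Y \in \F_s$; there is no case returning $X$ itself, because that case arose from the one-box generics of $\Sigma_\PP$ (case~(ii) in the proof of Theorem~\ref{thm:polycats}), which $\Sigma_{\PS}$ lacks---incidentally, this is why your ``three cases'' are really only two here. Since a single-vertex gluing adds box-counts, every shape of $\Sigma_{\PS}^{\cdot n}$ for $n \geqslant 1$ has an even number of boxes, so $\bigvee_n \Sigma_{\PS}^{\cdot n}$ contains no three-box tree at all, and not even every even tree (a six-box ``star'' with five satellite boxes around a central one only ever splits as $1+5$ along a vertex). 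The repair is simply to use Proposition~\ref{prop:28} as stated: the free shapely monad is $\bigvee_n (\id \vee \Sigma_{\PS})^{\cdot n}$, and $\id \vee \Sigma_{\PS}$ does contain each $\spn{n,m}$ by Example~\ref{ex:4}, reinstating the missing case of the analysis (for $s$ corresponding to $\spn{n,m}$, the composite shape is $X$ for each $X \in \F_s(n,m)$) and letting the iteration close up to exactly $\T_s(n,m)$. With this correction, the rest of your argument---the tracing of labellings through the pushouts, the retention of $/\aut_T$ matching Proposition~\ref{prop:22}, and the \emph{mutatis mutandis} treatment of properads and \textsc{prop}s---goes through as written.
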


\bibliographystyle{acm}
\bibliography{./bib}

\ifx\abbrevbib\undefined \def\abbrev#1#2{#1} \else \def\abbrev#1#2{#2} \fi
\begin{thebibliography}{10}

\bibitem{Abbott2004Constructing}
{\sc Abbott, M., Altenkirch, T., Ghani, N., and McBride, C.}
\newblock Constructing polymorphic programs with quotient types.
\newblock In {\em Mathematics of program construction}, vol.~3125 of {\em
  Lecture Notes in Computer Science}. Springer, 2004, pp.~2--15.

\bibitem{Blute1996Natural}
{\sc Blute, R.~F., Cockett, J. R.~B., Seely, R. A.~G., and Trimble, T.~H.}
\newblock Natural deduction and coherence for weakly distributive categories.
\newblock {\em Journal of Pure and Applied Algebra 113\/} (1996), 229--296.

\bibitem{Bonchi2015}
{\sc Bonchi, F., Sobocinski, P., and Zanasi, F.}
\newblock Full abstraction for signal flow graphs.
\newblock In {\em Proceedings of the 42nd Annual ACM SIGPLAN-SIGACT Symposium
  on Principles of Programming Languages\/} (New York, NY, USA, 2015), POPL
  '15, ACM, pp.~515--526.

\bibitem{Burroni1993Higher-dimensional}
{\sc Burroni, A.}
\newblock Higher-dimensional word problems with applications to equational
  logic.
\newblock {\em Theoretical Computer Science 115}, 1 (1993), 43--62.

\bibitem{Carboni1995Connected}
{\sc Carboni, A., and Johnstone, P.}
\newblock Connected limits, familial representability and {A}rtin glueing.
\newblock {\em Mathematical Structures in Computer Science 5}, 4 (1995),
  441--459.

\bibitem{Diers1978Spectres}
{\sc Diers, Y.}
\newblock Spectres et localisations relatifs {\`a} un foncteur.
\newblock {\em Comptes rendus hebdomadaires des s{\'e}ances de l'Acad{\'e}mie
  des sciences 287}, 15 (1978), 985--988.

\bibitem{EckmannHilton}
{\sc Eckmann, B., and P., H.}
\newblock Group-like structures in categories.
\newblock {\em Math. Ann. 145\/} (1962), 227--255.

\bibitem{Fiore2014Analytic}
{\sc Fiore, M.}
\newblock Analytic functors between presheaf categories over groupoids.
\newblock {\em Theoretical Computer Science 546\/} (2014), 120--131.

\bibitem{Fiore2008The-cartesian}
{\sc Fiore, M., Gambino, N., Hyland, M., and Winskel, G.}
\newblock The cartesian closed bicategory of generalised species of structures.
\newblock {\em Journal of the London Mathematical Society 77}, 1 (2008),
  203--220.

\bibitem{Causaltheories}
{\sc Fong, B.}
\newblock Causal theories: A categorical perspective on bayesian networks.
\newblock Master's thesis, University of Oxford, 2012.

\bibitem{Freyd1972Categories}
{\sc Freyd, P.~J., and Kelly, G.~M.}
\newblock Categories of continuous functors {I}.
\newblock {\em Journal of Pure and Applied Algebra 2}, 3 (1972), 169--191.

\bibitem{Girard1987Linear}
{\sc Girard, J.-Y.}
\newblock Linear logic.
\newblock {\em Theoretical Computer Science 50}, 1 (1987), 1--101.

\bibitem{Hasegawa2002Two-applications}
{\sc Hasegawa, R.}
\newblock Two applications of analytic functors.
\newblock {\em Theoretical Computer Science 272}, 1-2 (2002), 113--175.

\bibitem{Hovey1999Model}
{\sc Hovey, M.}
\newblock {\em Model categories}, vol.~63 of {\em Mathematical Surveys and
  Monographs}.
\newblock American Mathematical Society, 1999.

\bibitem{Milner:bigraphs}
{\sc Jensen, O.~H., and Milner, R.}
\newblock Bigraphs and mobile processes (revised).
\newblock Technical report TR580, University of Cambridge, 2004.

\bibitem{Johnson1992Algebra}
{\sc Johnson, M., and Walters, R. F.~C.}
\newblock Algebra objects and algebra families for finite limit theories.
\newblock {\em Journal of Pure and Applied Algebra 83}, 3 (1992), 283--293.

\bibitem{Joyal1986Foncteurs}
{\sc Joyal, A.}
\newblock Foncteurs analytiques et esp{\`e}ces de structures.
\newblock In {\em Combinatoire {\'e}num{\'e}rative ({M}ontreal, 1985)},
  vol.~1234 of {\em Lecture Notes in Mathematics}. Springer, 1986,
  pp.~126--159.

\bibitem{Joyal1991The-geometry}
{\sc Joyal, A., and Street, R.}
\newblock The geometry of tensor calculus.
\newblock {\em Advances in Mathematics 88}, 1 (1991), 55--112.

\bibitem{Kelly1982Basic}
{\sc Kelly, G.~M.}
\newblock {\em Basic concepts of enriched category theory}, vol.~64 of {\em
  London Mathematical Society Lecture Note Series}.
\newblock Cambridge University Press, 1982.
\newblock Republished as: \textit{Reprints in Theory and Applications of
  Categories 10} (2005).

\bibitem{Kelly1971Coherence}
{\sc Kelly, G.~M., and Mac~Lane, S.}
\newblock Coherence in closed categories.
\newblock {\em Journal of Pure and Applied Algebra 1}, 1 (1971), 97--140.

\bibitem{Kock2011Polynomial}
{\sc Kock, J.}
\newblock Polynomial functors and trees.
\newblock {\em International Mathematics Research Notices}, 3 (2011), 609--673.

\bibitem{Kock2016}
{\sc Kock, J.}
\newblock Graphs, hypergraphs, and properads.
\newblock {\em Collectanea Mathematica 67}, 2 (2016), 155--190.

\bibitem{Lafont90}
{\sc Lafont, Y.}
\newblock Interaction nets.
\newblock In {\em Principles of Programming Languages\/} (1990), ACM.

\bibitem{Lambek1969Deductive}
{\sc Lambek, J.}
\newblock Deductive systems and categories. {II}. {S}tandard constructions and
  closed categories.
\newblock In {\em Category {T}heory, {H}omology {T}heory and their
  {A}pplications, {I}}, no.~42 in Lecture Notes in Mathematics. Springer, 1969,
  pp.~76--122.

\bibitem{Lambek1993From}
{\sc Lambek, J.}
\newblock From categorial grammar to bilinear logic.
\newblock In {\em Substructural logics ({T}{\"u}bingen, 1990)}, vol.~2 of {\em
  Studies in Logic and Computation}. Oxford University Press, 1993,
  pp.~207--237.

\bibitem{Lawvere1963Functorial}
{\sc Lawvere, F.~W.}
\newblock {\em Functorial semantics of algebraic theories}.
\newblock PhD thesis, Columbia University, 1963.
\newblock Republished as: \textit{Reprints in Theory and Applications of
  Categories 5} (2004).

\bibitem{Leinster2004Higher}
{\sc Leinster, T.}
\newblock {\em Higher operads, higher categories}, vol.~298 of {\em London
  Mathematical Society Lecture Note Series}.
\newblock Cambridge University Press, 2004.

\bibitem{Mac-Lane1965Categorical}
{\sc Mac~Lane, S.}
\newblock Categorical algebra.
\newblock {\em Bulletin of the American Mathematical Society 71\/} (1965),
  40--106.

\bibitem{Markl2002Operads}
{\sc Markl, M., Shnider, S., and Stasheff, J.}
\newblock {\em Operads in algebra, topology and physics}, vol.~96 of {\em
  Mathematical Surveys and Monographs}.
\newblock American Mathematical Society, Providence, RI, 2002.

\bibitem{May1972The-geometry}
{\sc May, J.~P.}
\newblock {\em The geometry of iterated loop spaces}, vol.~271 of {\em Lecture
  Notes in Mathematics}.
\newblock Springer, 1972.

\bibitem{McClure2002A-solution}
{\sc McClure, J.~E., and Smith, J.~H.}
\newblock A solution of {D}eligne's {H}ochschild cohomology conjecture.
\newblock In {\em Recent progress in homotopy theory ({B}altimore, 2000)},
  vol.~293 of {\em Contemporary Mathematics}. American Mathematical Society,
  2002, pp.~153--193.

\bibitem{Selinger2011A-survey}
{\sc Selinger, P.}
\newblock A survey of graphical languages for monoidal categories.
\newblock In {\em New structures for physics}, vol.~813 of {\em Lecture Notes
  in Physics}. Springer, Heidelberg, 2011, pp.~289--355.

\bibitem{Szabo1975Polycategories}
{\sc Szabo, M.~E.}
\newblock Polycategories.
\newblock {\em Communications in Algebra 3}, 8 (1975), 663--689.

\bibitem{Szabo1977The-logic}
{\sc Szabo, M.~E.}
\newblock The logic of closed categories.
\newblock {\em Notre Dame Journal of Formal Logic 18}, 3 (1977), 441--457.

\bibitem{Tambara2015Finite}
{\sc Tambara, D.}
\newblock Finite categories with pushouts.
\newblock {\em Theory and Applications of Categories 30\/} (2015), 1017--1031.

\bibitem{Vallette2007A-Koszul}
{\sc Vallette, B.}
\newblock A {K}oszul duality for {PROP}s.
\newblock {\em Transactions of the American Mathematical Society 359}, 10
  (2007), 4865--4943.

\bibitem{Weber2001Symmetric}
{\sc Weber, M.}
\newblock {\em Symmetric Operads for Globular Sets}.
\newblock PhD thesis, Macquarie University, 2001.

\bibitem{Weber2004Generic}
{\sc Weber, M.}
\newblock Generic morphisms, parametric representations and weakly {C}artesian
  monads.
\newblock {\em Theory and Applications of Categories 13\/} (2004), 191--234.

\end{thebibliography}

\end{document}
